\documentclass[11pt,letterpaper]{article}

\usepackage[margin=1in]{geometry}
\usepackage{setspace}
\usepackage{amsmath}
\usepackage{amsfonts}
\usepackage{amssymb}
\usepackage{amsthm}
\usepackage{bm}
\usepackage{bbm}
\usepackage{mathtools}

\usepackage{graphicx}
\usepackage{booktabs}
\usepackage{tabularx}
\usepackage{threeparttable}
\usepackage{multirow}
\usepackage{float}
\usepackage{caption}
\usepackage{subcaption}
\usepackage{enumitem}
\usepackage{colortbl}
\usepackage{lmodern}
\usepackage{microtype}
\usepackage[T1]{fontenc}
\usepackage[utf8]{inputenc}

\usepackage{algorithm}
\usepackage{algpseudocode}

\usepackage[square]{natbib}

\usepackage{xcolor}
\usepackage{hyperref}
\hypersetup{
    colorlinks=true,
    linkcolor=blue!60!black,
    citecolor=blue!60!black,
    urlcolor=blue!60!black
}

\usepackage{tikz}
\usetikzlibrary{shapes,arrows,positioning,fit,backgrounds,calc,decorations.markings,patterns}
\usepackage{pgfplots}
\pgfplotsset{compat=1.18}

\usepackage{adjustbox}

\usepackage[section]{placeins}
\usepackage{mathrsfs}   % \mathscr for script letters (e.g., \mathscr{I}, \mathscr{H})
\usepackage{braket}     % Dirac notation: \ket{}, \bra{}, \braket{}
\usepackage{dsfont}     % Double-stroke fonts: \mathds{1} for identity operator

\usepackage{etoolbox}

\usepackage{listings}
\usepackage{inconsolata}

\definecolor{codebg}{RGB}{248,248,248}
\definecolor{kw}{RGB}{0,0,150}
\definecolor{str}{RGB}{120,0,0}
\definecolor{com}{RGB}{0,110,0}
\lstdefinestyle{python}{
  language=Python,
  basicstyle=\linespread{1.0}\ttfamily\small,
  keywordstyle=\color{kw}\bfseries,
  stringstyle=\color{str},
  commentstyle=\color{com}\itshape,
  showstringspaces=false,
  frame=single,
  backgroundcolor=\color{codebg},
  rulecolor=\color{black!30},
  tabsize=2,
  numbers=left,
  numberstyle=\tiny\color{black!50},
  breaklines=true,
  columns=fullflexible
}

\theoremstyle{plain}
\newtheorem{theorem}{Theorem}[section]
\newtheorem{proposition}[theorem]{Proposition}
\newtheorem{lemma}[theorem]{Lemma}
\newtheorem{corollary}[theorem]{Corollary}

\theoremstyle{definition}
\newtheorem{definition}[theorem]{Definition}

\newtheorem{example}[theorem]{Example}

\theoremstyle{remark}
\newtheorem{remark}[theorem]{Remark}

\newcommand{\Ocal}{\mathcal{O}}

\newcommand{\E}{\mathbb{E}}

\newcommand{\R}{\mathbb{R}}
\newcommand{\N}{\mathbb{N}}

\newcommand{\bSigma}{\bm{\Sigma}}
\newcommand{\bmu}{\bm{\mu}}

\DeclareMathOperator{\tr}{tr}
\DeclareMathOperator{\diag}{diag}

\begin{document}

\title{\Large \textbf{Beyond De~Prado and Cotton} \\ \large Hierarchical and Iterative Methods for General Mean--Variance Portfolios}

\author{
    Bernd J.\ Wuebben\\
    AllianceBernstein, New York\\ \texttt{bernd.wuebben@alliancebernstein.com}
}

\date{April 11, 2026}

\maketitle

\begin{abstract}
\noindent Hierarchical Risk Parity \citep{lopezdeprado2016} and the Schur-complement generalisation of \citet{cotton2024} are among the most widely adopted regularised portfolio construction methods, yet both are signal-blind: they solve only the minimum-variance problem and cannot accommodate an arbitrary expected-return forecast. This paper introduces three methods that incorporate alpha signals into hierarchical and regularised portfolio construction.

\smallskip
\noindent \textbf{HRP-$\mu$} is a hierarchical allocator that accepts an arbitrary signal $\mu$ and nests standard HRP when $\gamma = 0$ and $\mu=\mathbf{1}$. It preserves the tree-based structure of HRP while extending it beyond the minimum-variance setting. \textbf{HRP-$\Sigma\mu$} strengthens this construction by replacing inverse-variance representatives with recursive local mean--variance optima, thereby using richer within-cluster covariance information at the same $O(N^2)$ asymptotic cost. 

\smallskip
\noindent \textbf{CRISP} (Correlation-Regularised Iterative Shrinkage Portfolios) is an iterative solver for $P_\gamma w = \mu$ with $P_\gamma = (1-\gamma)\operatorname{diag}(\Sigma) + \gamma \Sigma$, so that $\gamma$ interpolates between a diagonal portfolio rule and full Markowitz. At convergence, CRISP is Markowitz applied to a variance-preserving shrunk covariance---diagonal variances unchanged, off-diagonal correlations shrunk---with $\gamma$ tuned for out-of-sample Sharpe rather than covariance-estimation loss.

\smallskip
\noindent In Monte Carlo experiments across multiple covariance regimes and estimation ratios, HRP-$\mu$ and HRP-$\Sigma\mu$ both outperform plain HRP---not only on general expected-return forecasts, where HRP is signal-blind, but also on the minimum-variance problem $\mu = \mathbf{1}$ that is HRP's home territory---with HRP-$\Sigma\mu$ consistently improving on HRP-$\mu$. CRISP at intermediate $\gamma$ is the dominant method in both regimes, outperforming HRP, Cotton, Ledoit--Wolf shrinkage, direct Markowitz, and the signal-aware hierarchical methods at every sample size tested. 

\bigskip
\noindent \textbf{Keywords:} portfolio optimization, hierarchical risk parity, Schur complement, signal-dependent allocation, covariance shrinkage, out-of-sample evaluation

\bigskip
\noindent \textbf{JEL Classification:} G11, C61, C63, C55

\bigskip
\noindent \textbf{Code:} All code, data, and reproduction scripts are available at \url{https://github.com/bwuebben/beyond_hrp_and_cotton}.
\end{abstract}

\newpage
\tableofcontents
\newpage

% ============================================================
% section_01_introduction.tex
% ============================================================
\section{Introduction}\label{sec01:introduction}

\subsection{Motivation}\label{sec01:motivation}

Hierarchical Risk Parity \citep{lopezdeprado2016} and the Schur-complement allocator of \citet{cotton2024} have become standard regularised portfolio constructions. Both are \emph{signal-blind}. They solve only the minimum-variance problem $\bmu = \mathbf{1}$ and have no mechanism for an expected-return forecast. Cotton's allocator connects HRP to exact minimum variance through a single continuous parameter $\gamma \in [0,1]$ and is in that sense a strict generalisation of \citeauthor{lopezdeprado2016}, but it inherits the $\bmu = \mathbf{1}$ restriction and costs $O(N^3/6)$ because a sibling covariance block must be inverted at every internal tree node.

This restriction collides with three features of real portfolio construction. The first is \emph{signal integration}. Quantitative equity, factor investing, tactical asset allocation, and Black--Litterman view blending \citep{blacklitterman1992,grinoldkahn1999,barrososantaclara2015} all deliver a heterogeneous $\bmu$, and that signal is the object that distinguishes an active portfolio from a risk-only allocation. Setting $\bmu = \mathbf{1}$ is active discarding, not a neutral default. The second is \emph{computational tractability}. Universes with $N$ between $10^3$ and $10^4$, re-optimised on rolling windows, make $O(N^3)$ operations costly; $O(N^2)$ is the budget a realistic backtest can absorb. The third is \emph{estimation error}. Direct Markowitz inversion \citep{markowitz1952} amplifies noise on low-variance eigenvectors and produces the extreme weights documented by \citet{michaud1989}, and the standard remedy is structural shrinkage of the covariance estimator \citep{stein1956,ledoitwolf2003,ledoitwolf2004,ledoitwolf2017,ledoitwolf2020,chen2010}.

This paper asks a single question. Can the hierarchical-and-shrinkage structure that makes HRP and the Cotton allocator useful be extended to the general mean--variance problem with arbitrary $\bmu$ without giving up its regularisation benefits, and what is the natural iterative counterpart? The answer is three methods, each controlled by a shrinkage parameter $\gamma \in [0,1]$ that plays the same role \citet{cotton2024} introduced---governing how much cross-asset covariance enters the allocation---and each accepting an arbitrary signal.

\subsection{Three methodological contributions}\label{sec01:contributions}

We present the three methods in order of empirical strength---weakest to strongest---and use the display names HRP-$\mu$, HRP-$\Sigma\mu$, and CRISP throughout.

\paragraph{Contribution 1: HRP-$\mu$, a transparent signal-aware hierarchical allocator.}
HRP-$\mu$ walks the same correlation dendrogram as classical HRP. At each internal node its representative portfolio is the \emph{signed inverse-variance portfolio} $\hat w_{L,i}^{\text{signed}} \propto \operatorname{sign}(\mu_i)/\sigma_{ii}^2$, and between-child budgets are set by the same $2\times 2$ mean--variance system that underlies \citeauthor{cotton2024}'s allocator. Because the representative depends only on the diagonal variances $\{\sigma_{ii}^2\}$ and the signal signs $\{\operatorname{sign}(\mu_i)\}$ inside a block---never on within-cluster correlations---every leaf weight factors as a sign times a root-to-leaf product of non-negative budgets, so the hierarchical audit trail that practitioners prize in classical HRP is preserved. Proposition~\ref{prop:a3_recovery} shows that HRP-$\mu$ nests De Prado's HRP exactly at $\gamma = 0$, $\bmu = \mathbf{1}$, and the cost is $O(N^2)$, a factor $N$ cheaper than Cotton. Even on Cotton's own $\bmu = \mathbf{1}$ problem, HRP-$\mu$ at $\gamma > 0$ improves on classical HRP out of sample and matches Cotton's Sharpe at roughly half Cotton's $\gamma$, while avoiding the Schur-complement instability Cotton exhibits at small $T$ (Proposition~\ref{prop:cotton_instability}).

\paragraph{Contribution 2: HRP-$\Sigma\mu$, the strongest tree-based method.}
HRP-$\Sigma\mu$ keeps the HRP tree and the $2\times 2$ between-child split but replaces HRP-$\mu$'s signed inverse-variance representative with the \emph{local mean--variance optimum} computed recursively on the sub-tree, using the full within-cluster covariance $\bSigma_{LL}$ at every node. The representative is richer than HRP-$\mu$'s---it captures within-cluster hedging that the signed IVP cannot see---at the cost of the closed-form auditability. The load-bearing algorithmic detail is the between-child normalisation: sum-to-one normalisation of the same $2\times 2$ system produces a sign-flip pathology (documented as a cautionary negative result in Appendix~\ref{app:a1_pathology}), whereas the $L^1$ normalisation $\alpha_k \leftarrow \alpha_k^{\text{raw}}/(|\alpha_L^{\text{raw}}| + |\alpha_R^{\text{raw}}|)$ is ray-invariant and sign-preserving (Lemmas~\ref{lem:hrpsm_scale} and~\ref{lem:hrpsm_sign}). The cost is $O(N^2)$, matching HRP-$\mu$; classical HRP is recovered to cosine $\approx 0.992$ at $\gamma = 0$, $\bmu = \mathbf{1}$, and exactly at tree depth $\le 2$ (Proposition~\ref{prop:hrpsm_recovery}). In the synthetic Monte Carlo panels of Section~\ref{sec10:experiments} HRP-$\Sigma\mu$ delivers $20$--$35\%$ higher out-of-sample Sharpe than HRP-$\mu$ on random signals and up to $180\%$ higher on structural signals with sample estimation, reaching roughly $90\%$ of CRISP's performance while staying strictly $O(N^2)$.

\paragraph{Contribution 3: CRISP, an iterative shrinkage solver (headline).}
CRISP runs scalar Gauss--Seidel on the shrunk mean--variance system $P_\gamma w = \bmu$, where $P_\gamma = (1-\gamma)D + \gamma\bSigma$ and $D = \diag(\bSigma)$. The shrinkage parameter interpolates between a pure diagonal solve at $\gamma = 0$ (signed inverse-variance weighting) and exact Markowitz at $\gamma = 1$. Gauss--Seidel on $P_\gamma$ converges unconditionally for every symmetric positive definite $\bSigma$ and every $\gamma \in [0,1]$ (Theorem~\ref{thm:gs_convergence}). The convergence-rate result is new: at $\gamma = 1$ the sweep count required to reach relative direction error $\varepsilon$ is $p = O(\kappa(C)\log(1/\varepsilon))$, controlled by the condition number of the \emph{correlation} matrix rather than of the covariance matrix itself, and in general $p = O(\kappa(D^{-1}P_\gamma)\log(1/\varepsilon))$ with $\kappa(D^{-1}P_\gamma) = [(1-\gamma) + \gamma\lambda_1(C)]/[(1-\gamma) + \gamma\lambda_N(C)]$ interpolating monotonically from $1$ at $\gamma = 0$ to $\kappa(C)$ at $\gamma = 1$; volatility dispersion does \emph{not} enter the rate (Theorem~\ref{thm:gs_rate}). Correlation conditioning is bounded in realistic equity universes, so this is what makes CRISP tractable at scale.

At convergence CRISP produces $P_\gamma^{-1}\bmu$: Markowitz on the shrunk covariance $P_\gamma$. This places CRISP in the linear-shrinkage family of \citet{ledoitwolf2003,ledoitwolf2004,ledoitwolf2017}, with two structural differences. First, its target is the diagonal $D$, so $P_\gamma$ reproduces the variances of $\bSigma$ exactly for every $\gamma$ and only correlations are shrunk---the reliable part of the covariance (per-asset variances, each estimated from a single time series) is left alone and the unreliable part (pairwise correlations) is regularised. Classical Ledoit--Wolf targets (scaled identity, single factor, constant correlation) perturb the diagonal. Second, the shrinkage intensity $\gamma$ is chosen to maximise out-of-sample portfolio Sharpe rather than to minimise Frobenius loss on the estimator; Section~\ref{sec06:perturbation} shows these are different problems with different optima. In the synthetic Monte Carlo panels of Section~\ref{sec10:experiments} CRISP at $\gamma \in [0.3, 0.7]$ with $100$ sweeps delivers $80$--$94\%$ of oracle Sharpe uniformly across four covariance regimes and $T/N \in [0.6, 5]$, and dominates every benchmark on every panel.

\subsection{Three supporting contributions}\label{sec01:supporting}

\paragraph{Adaptive $\gamma^\star$ rule.} A bias--variance decomposition of expected OOS Sharpe yields the closed-form rule
\[
\gamma^\star \;\approx\; \frac{1}{1 + c\,\kappa(C)^2\,N/(T\,\mathrm{IC}^2)},
\]
whose comparative statics (Proposition~\ref{prop:gamma_star_comparative}) are qualitatively correct: $\gamma^\star$ rises with the sample length $T$ and the information coefficient $\mathrm{IC}$, and falls with conditioning difficulty $\kappa(C)$ and universe size $N$. Calibration on a $48$-cell synthetic cube shows that the OOS Sharpe surface is nearly flat in $\gamma$: the plateau of near-optimal values has median width $0.38$ on $[0,1]$. The formula's practical value is therefore explanatory---it rationalises why a fixed default $\gamma \approx 0.5$ is robust and why a single knob suffices---rather than prescriptive.

\paragraph{Three regularisation channels.} CRISP regularises through three distinct channels (Remark~\ref{rem:three_channels}). Channel~1 is operator shrinkage via $\gamma$: the statistical channel that does most of the work. Channel~2 is convergence slack from a finite sweep count, which is a computational artefact rather than a design choice. Channel~3 is implicit spectral truncation through early stopping, which is statistically beneficial near $\gamma = 1$ because the low-eigenvalue directions of $D^{-1}P_\gamma$ converge last and carry the most estimation noise; in that regime CRISP is an iterative cousin of ridge regression on the Markowitz system. At the recommended operating point $(\gamma \approx 0.5, p = 100)$ only Channel~1 is materially active.

\paragraph{Out-of-sample evaluation.} Every method comparison is out-of-sample. We run a synthetic Monte Carlo study across four covariance regimes (factor, block, spiked, equicorrelation) and $T/N \in [0.6, 5]$, with both oracle and sample-mean signal treatments, against $1/N$ \citep{demiguel2009}, HRP, Cotton at several $\gamma$, direct Markowitz, and linear and non-linear Ledoit--Wolf \citep{ledoitwolf2004,ledoitwolf2020}, in addition to our three methods. Direction error appears only as an in-sample tracking diagnostic; OOS Sharpe is the primary metric.

\subsection{Memory, not speed}\label{sec01:memory}

It is worth stating early what CRISP does and does not offer at the machine level. On tuned BLAS---Apple M4 with Accelerate and a Numba-JIT CRISP---direct Cholesky is $48\times$ faster than CRISP at $N = 500$ and remains between $9\times$ and $48\times$ faster for $N \in \{500, \dots, 5{,}000\}$ (Remark~\ref{rem:wall_clock}). Extrapolating the trend on this hardware places the single-solve wall-clock crossover near $N \approx 45{,}000$, well outside realistic portfolio sizes. We do not claim a wall-clock advantage for CRISP.

What CRISP offers structurally is memory. A factor-streaming implementation (Algorithm~\ref{alg05:methodb_stream}) that never materialises $\bSigma$ reduces working memory from $O(N^2)$ to $O(NK)$, where $K$ is the number of factors. At $N = 30{,}000$ and $K = 20$ the dense $\bSigma$ occupies $7.2$~GB; the factor stream uses $4.8$~MB. Cholesky cannot exploit this reduction because its factorisation needs random access to the full matrix. The honest case for CRISP is therefore stronger OOS Sharpe together with memory scalability, not speed.

\subsection{Road map}\label{sec01:roadmap}

Section~\ref{sec02:preliminaries} fixes notation, sets up the ray view of mean--variance, and introduces the direction-error diagnostic together with its sign-invariance caveat. Section~\ref{sec03:background} reviews HRP and the Cotton Schur allocator and places both in the method landscape, including Cotton's intermediate-$\gamma$ instability under estimation noise. Section~\ref{sec04:tree_methods} develops the two tree-based methods in a single frame---one tree, two representatives, two methods: HRP-$\mu$ with the signed inverse-variance representative, HRP-$\Sigma\mu$ with the recursive local MVO. Section~\ref{sec05:crisp} is the headline theoretical section: it presents CRISP, proves unconditional Gauss--Seidel convergence and the correlation-conditioning rate theorem, records the early-stopping-as-spectral-truncation remark, describes the factor-streaming variant, and locates CRISP within the Ledoit--Wolf family. Section~\ref{sec06:perturbation} develops the perturbation theory of the shrinkage trajectory, including a non-monotone-trajectory counter-example and an interior-$\gamma^\star$ proposition under explicit hypotheses. Section~\ref{sec07:adaptive} derives the adaptive $\gamma^\star$ rule and consolidates the three-channel taxonomy. Section~\ref{sec08:comparative} presents the five-method comparative landscape and a practitioner's guide. Section~\ref{sec09:constraints} treats linear constraints through a projected CRISP. Section~\ref{sec10:experiments} reports the synthetic Monte Carlo study in three parts: in-sample direction-error diagnostics, out-of-sample Sharpe panels, and sweep-count and adaptive-$\gamma^\star$ calibration. Section~\ref{sec11:discussion} discusses limitations, open problems, and extensions, and Section~\ref{sec12:conclusion} concludes. Readers primarily interested in using the methods can skip to Sections~\ref{sec05:crisp}, \ref{sec04:tree_methods}, and~\ref{sec10:experiments}.

\subsection{Relation to the literature}\label{sec01:literature}

The paper connects six strands of research.

\paragraph{Hierarchical portfolio methods.} \citet{lopezdeprado2016} introduces HRP, and \citet{lopezdeprado2018} develops companion tools for false-strategy detection. \citet{raffinot2018} studies broader hierarchical-clustering allocation. \citet{cotton2024} provides the Schur-complement unification that connects HRP to exact minimum variance through a single continuous parameter. Our HRP-$\mu$ and HRP-$\Sigma\mu$ extend this hierarchical framework to arbitrary $\bmu$ at the same $O(N^2)$ cost, and strictly dominate HRP on its own $\bmu = \mathbf{1}$ problem whenever estimation error is present.

\paragraph{Covariance shrinkage.} The Stein estimator \citep{stein1956} and its portfolio-theoretic descendants \citep{ledoitwolf2003,ledoitwolf2004,ledoitwolf2017,ledoitwolf2020,chen2010} shrink the sample covariance toward a structured target to control low-eigenvalue estimation error. CRISP is a member of this family with target $D = \diag(\bSigma)$, a choice that preserves variances identically and shrinks only correlations, and with shrinkage intensity tuned for portfolio Sharpe rather than Frobenius loss on the estimator. Both differences are structural; we develop them in Section~\ref{sec05:crisp}.

\paragraph{Iterative and parametric portfolio methods.} \citet{brandt2009} parameterise the policy directly as a linear function of asset characteristics. \citet{kozak2020} shrink the cross-section of factor returns through ridge and Bayesian regularisation. \citet{demiguel2009} establish the $1/N$ benchmark that any proposed method must outperform out of sample. CRISP differs from these methods by solving a shrunk Markowitz system whose variances are preserved identically at every $\gamma$, so no asset is implicitly up- or down-volatilised by the regulariser.

\paragraph{Signal integration and Bayesian portfolios.} \citet{blacklitterman1992} combine an equilibrium prior with investor views through a Bayesian posterior; \citet{grinoldkahn1999} formalise the information-ratio decomposition of active management; and \citet{barrososantaclara2015} study signal-conditioned portfolios in currency markets. We compare CRISP with Black--Litterman as a candidate Bayesian reading in Section~\ref{sec05:crisp}: both produce shrunk posterior weights, but CRISP is driven by the single explicit $\gamma$ knob rather than by a specified view covariance.

\paragraph{Numerical analysis of iterative methods.} The classical iterative-solver literature \citep{hackbusch2016,saad2003,varga2000,young1971,ostrowski1954} supplies the unconditional-convergence and rate results for symmetric positive definite Gauss--Seidel on which Theorems~\ref{thm:gs_convergence} and~\ref{thm:gs_rate} rest. The contribution of Theorem~\ref{thm:gs_rate} is to extract the correlation-conditioning constant specific to the portfolio setting and to translate the resulting rate into a concrete sweep count that is independent of the universe size.

\paragraph{Factor models for covariance.} \citet{famafrench1993} establish the pricing-factor representation. \citet{fan2013} develop POET for structured covariance estimation, \citet{marchenko1967} provide the spectral baseline for sample covariances, and \citet{ledoitwolf2020} give the analytical non-linear-shrinkage counterpart. Our factor-streaming variant of CRISP exploits a factor structure $\bSigma = B\Lambda B^\top + D_{\text{idio}}$ to hold working memory at $O(NK)$ without ever forming $\bSigma$, which is what makes the method usable in universes where Cholesky's random-access requirement is the binding constraint.

We do not replace any of these literatures. We supply an $O(\kappa(C)\,N^2 \log \varepsilon^{-1})$ iterative solver that inherits HRP's regularisation philosophy through its $\gamma$ parameter, two signal-aware hierarchical alternatives for tree-interpretable use cases, and a synthetic out-of-sample evaluation against the standard benchmarks.

% ============================================================
% section_02_preliminaries.tex
% ============================================================
% !TEX root = ../paper.tex
% Section 2: Notation and the direction view
% Namespace: sec02, eq02, tab02

\section{Notation and the direction view}\label{sec02:preliminaries}

This section does four things. It fixes the symbols that appear in the rest of the paper; it replaces the usual relative-error metric with a scale-invariant \emph{direction} error and flags the one way in which that replacement is dangerous; it isolates the correlation conditioning $\kappa(C)$ as the spectral invariant that will control both the in-sample direction gap of the cheapest baseline and the sweep count of CRISP; and it records a parallel-direction identity that exposes a family of signal vectors on which every method in the paper returns the same ray. The last point is a pruning lemma: it cuts off an entire class of otherwise plausible worst-case constructions.

\subsection{Notation}\label{sec02:symbols}

Let $N$ be the number of assets. The inputs to every method are a symmetric positive definite (SPD) covariance matrix $\Sigma\in\R^{N\times N}$ and an expected-return vector $\mu\in\R^N$. The Markowitz portfolio is
\begin{equation}\label{eq02:markowitz}
  w^{\star} \;=\; \Sigma^{-1}\mu.
\end{equation}
Write $D=\diag(\Sigma)$ for the diagonal matrix of variances and $E=\Sigma-D$ for the off-diagonal part, so $\Sigma=D+E$. The \emph{correlation matrix} is
\begin{equation}\label{eq02:corr}
  C \;=\; D^{-1/2}\,\Sigma\,D^{-1/2},
\end{equation}
which is SPD with $\tr(C)=N$ and eigenvalues $\lambda_1\ge\cdots\ge\lambda_N>0$. Throughout the paper, interpolation between the diagonal $D$ and the full covariance $\Sigma$ is carried out by the \emph{shrinkage operator}
\begin{equation}\label{eq02:pgamma}
  P_\gamma \;=\; (1-\gamma)\,D \;+\; \gamma\,\Sigma, \qquad \gamma\in[0,1].
\end{equation}
Entrywise, $(P_\gamma)_{ii}=\sigma_{ii}$ and $(P_\gamma)_{ij}=\gamma\,\sigma_{ij}$ for $i\ne j$: the diagonal is preserved unchanged at every $\gamma$, and only the off-diagonal is shrunk. The endpoints are $P_0=D$ and $P_1=\Sigma$. The asymmetry between diagonal and off-diagonal is deliberate; it is what makes the preconditioned condition number $\kappa(D^{-1}P_\gamma)$ tractable in Section~\ref{sec05:crisp} and what the adaptive rule of Section~\ref{sec07:adaptive} exploits.

\subsection{Markowitz as a ray}\label{sec02:ray}

Equation~\eqref{eq02:markowitz} fixes a direction, not a point. Standard normalisations---budget ($\sum_i w_i=1$), gross ($\|w\|_1=1$), fixed-volatility ($w^\top\Sigma w=\sigma_0^2$)---each pick a different representative of the ray
\begin{equation}\label{eq02:ray}
  \{\alpha\,\Sigma^{-1}\mu : \alpha>0\},
\end{equation}
and for almost every downstream question---Sharpe, hedge ratio, relative ranking---it is the ray, not the representative, that carries the content. A consequence is that the relative error $\|\hat w - w^\star\|/\|w^\star\|$, the textbook numerical-analysis summary, conflates geometric fidelity to the Markowitz direction with the incidental choice of normaliser: two estimators that agree perfectly as rays can receive arbitrarily different relative-error scores because they are rescaled differently. We therefore use a metric that discards the scale entirely.

\subsection{Direction error}\label{sec02:direrr}

\begin{definition}[Direction error]\label{def:dir_error}
For nonzero $\hat w, w^{\star}\in\R^N$, let
\begin{equation}\label{eq02:dir_def}
  \operatorname{dir}(\hat w, w^{\star}) \;=\; 1 \;-\; \frac{(\hat w^{\top}w^{\star})^2}{\|\hat w\|^2\,\|w^{\star}\|^2} \;=\; \sin^2\theta,
\end{equation}
where $\theta\in[0,\pi/2]$ is the angle between the lines through $\hat w$ and $w^\star$.
\end{definition}

By construction $\operatorname{dir}(\hat w,w^\star)\in[0,1]$: zero iff $\hat w$ and $w^\star$ are collinear, one iff they are orthogonal, and invariant under rescaling of either argument by any nonzero scalar. Scale invariance is what we want: it lets us compare methods whose native outputs are budget-normalised, gross-normalised, or left raw without having to stage an artificial normalisation step first.

\subsection{Warning: direction error is sign-invariant}\label{sec02:sign_warning}

The same property that makes $\operatorname{dir}$ robust to normalisation makes it blind to sign. The squared inner product in~\eqref{eq02:dir_def} gives
\begin{equation}\label{eq02:sign_invariance}
  \operatorname{dir}(\hat w, w^{\star}) \;=\; \operatorname{dir}(-\hat w, w^{\star}),
\end{equation}
so any method whose output can flip sign relative to the Markowitz direction will be flattered by $\operatorname{dir}$: a portfolio that points exactly opposite to $w^\star$ receives the same score as one that points exactly along it. For methods whose sign relative to $w^\star$ is pinned by construction---HRP, Cotton, HRP-$\mu$, HRP-$\Sigma\mu$, and CRISP all fall in this category---this is harmless. For methods without that guarantee it is not. The canonical example is the sum-normalised recursive MVO (the Method A1 of Appendix~\ref{app:a1_pathology}): on the noiseless recovery problem it appears, on $\operatorname{dir}$, to be tracking $w^\star$ well, while in fact half the time it is delivering systematically negative out-of-sample Sharpe.

For this reason we treat $\operatorname{dir}$ as a \emph{tracking diagnostic}, not a quality metric. It is the right tool for asking ``does my method stay near the Markowitz ray as $\gamma$ moves or as $T$ grows?'' It is the wrong tool for ranking methods. The primary quality metric for ranking is out-of-sample Sharpe, and Section~\ref{sec10:experiments} is organised around it for this reason.

\subsection{Correlation conditioning as a difficulty diagnostic}\label{sec02:kappa}

The condition number $\kappa(\Sigma)=\lambda_{\max}(\Sigma)/\lambda_{\min}(\Sigma)$ is the standard numerical summary of an SPD matrix, but it is the wrong summary for the directional problem, because it mixes together two effects of very different character: volatility dispersion (the spread of the diagonal of $\Sigma$), and correlation conditioning (the spread of the eigenvalues of $C$). Direction recovery is only sensitive to the second.

\begin{proposition}[Correlation similarity]\label{prop:corr_similarity}
$D^{-1}\Sigma$ is similar to $C$ via the change of basis $D^{-1/2}$. In particular,
\begin{equation}\label{eq02:kappa_id}
  \kappa(D^{-1}\Sigma) \;=\; \kappa(C).
\end{equation}
\end{proposition}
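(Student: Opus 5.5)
The plan is to exhibit the similarity explicitly and then read off the spectrum. Since $\Sigma$ is SPD, every diagonal entry $\sigma_{ii}$ is strictly positive, so $D$ is invertible and has a well-defined SPD square root $D^{1/2}$ with inverse $D^{-1/2}$. The first step is the one-line factorisation
\[
D^{-1}\Sigma \;=\; D^{-1/2}\bigl(D^{-1/2}\Sigma D^{-1/2}\bigr)D^{1/2} \;=\; D^{-1/2}\,C\,\bigl(D^{-1/2}\bigr)^{-1},
\]
using the definition~\eqref{eq02:corr} of $C$ and the fact that the diagonal factors commute. This shows that $D^{-1}\Sigma = S C S^{-1}$ with $S = D^{-1/2}$, which is precisely the claimed similarity with change of basis $D^{-1/2}$.

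The second step draws the spectral consequences. Similar matrices have the same characteristic polynomial, so $D^{-1}\Sigma$ has exactly the eigenvalues $\lambda_1\ge\cdots\ge\lambda_N>0$ of $C$, counted with multiplicity. Here we use that $C$ is SPD: it is a congruence $D^{-1/2}\Sigma D^{-1/2}$ of the SPD matrix $\Sigma$, so it is symmetric with strictly positive eigenvalues. In particular the spectrum of $D^{-1}\Sigma$ is real and positive, and $D^{-1}\Sigma$ is diagonalisable, since it is similar to a symmetric matrix.

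The only point that needs care is what $\kappa$ means for the non-symmetric matrix $D^{-1}\Sigma$. The spectral condition number $\|A\|_2\|A^{-1}\|_2$ is \emph{not} invariant under similarity, so I will state explicitly that, as in the rest of the paper, $\kappa(D^{-1}\Sigma)$ means the eigenvalue ratio $\lambda_{\max}/\lambda_{\min}$. This is the quantity that governs preconditioned iterations. It coincides with the spectral condition number of the symmetrically preconditioned matrix $D^{-1/2}\Sigma D^{-1/2}=C$. With that convention, $\kappa(D^{-1}\Sigma)=\lambda_1/\lambda_N=\kappa(C)$, which is~\eqref{eq02:kappa_id}.

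A closing remark will note that the same argument, applied with $P_\gamma$ in place of $\Sigma$, shows $D^{-1}P_\gamma$ is similar to $(1-\gamma)I+\gamma C$. This anticipates the interpolation formula used later for $\kappa(D^{-1}P_\gamma)$.
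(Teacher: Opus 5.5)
Your proposal is correct and matches the paper's proof: the factorisation $D^{-1}\Sigma = D^{-1/2}\,C\,D^{1/2}$, followed by equality of spectra. You also state explicitly that $\kappa$ of the non-symmetric $D^{-1}\Sigma$ must be read as the eigenvalue ratio $\lambda_{\max}/\lambda_{\min}$, since $\|A\|_2\|A^{-1}\|_2$ is not similarity-invariant; this makes precise a convention the paper uses only implicitly.
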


\begin{proof}[Proof sketch]
A one-line substitution: $D^{-1/2}(D^{-1}\Sigma)D^{1/2} = D^{-1/2}\Sigma D^{-1/2} = C$. Similar matrices share their spectrum, and $C$ is SPD with strictly positive eigenvalues, so the (nonsingular) condition numbers of $D^{-1}\Sigma$ and $C$ coincide. The full argument, including the check that both matrices are nonsingular and a proper handling of the non-symmetric $D^{-1}\Sigma$, is given in Appendix~\ref{app:proofs}.
\end{proof}

Two downstream uses of Proposition~\ref{prop:corr_similarity} are worth flagging now. First, the in-sample direction gap between the cheapest baseline $\mu/D$ (entrywise division of $\mu$ by the diagonal of $\Sigma$, equivalently $P_0^{-1}\mu$) and the Markowitz target $\Sigma^{-1}\mu$ is controlled by $\kappa(C)$ through the spectrum of $D^{-1}\Sigma$. Second, the Gauss--Seidel sweep count of CRISP at $\gamma=1$ scales as $\kappa(C)$ up to logarithmic factors (Theorem~\ref{thm:gs_rate}), not as $\kappa(\Sigma)$. In both cases the volatility dispersion in $D$ is irrelevant to the hardness of the problem.

This has a practical bite. Consider a regime with one high-volatility outlier asset: $\kappa(\Sigma)$ can be inflated by orders of magnitude while $\kappa(C)$ is unchanged. Any reader who conflates the two will place ``hard'' instances in the wrong part of the parameter space and will be surprised when CRISP converges on them as fast as it does on nominally better-conditioned ones. The experiments of Section~\ref{sec10:experiments} bake this distinction into the regime menagerie explicitly.

To stratify synthetic experiments by actual difficulty rather than by nominal conditioning, we define a dimensionless instance-level diagnostic:
\begin{equation}\label{eq02:dir_diag}
  \operatorname{dir}_{\mathrm{diag}}(\Sigma,\mu) \;=\; \operatorname{dir}\!\bigl(\mu/D,\ \Sigma^{-1}\mu\bigr),
\end{equation}
where $\mu/D$ is the entrywise quotient. $\operatorname{dir}_{\mathrm{diag}}$ measures how far the cheapest baseline---ignore all off-diagonal information---sits from the Markowitz direction on the instance $(\Sigma,\mu)$. Small values mean the instance is essentially diagonal-solvable and every method will agree; large values mean the off-diagonal carries most of the signal and the choice of method matters. Section~\ref{sec10:experiments} uses $\operatorname{dir}_{\mathrm{diag}}$ to stratify the synthetic Monte Carlo so that headline numbers are not dominated by easy cases.

\subsection{The parallel-direction identity}\label{sec02:parallel}

For generic $(\Sigma,\mu)$ the curve $\gamma\mapsto P_\gamma^{-1}\mu$ traces out a genuine path in $\R^N$: different values of $\gamma$ produce different directions, and one of the main questions of the paper is where on $[0,1]$ to evaluate the curve. On one distinguished family of inputs, however, the curve collapses to a single ray. The following lemma records this; the corollary draws out what it rules out.

\begin{lemma}[Parallel-direction identity]\label{lem:parallel_direction}
Let $v\in\R^N$ be an eigenvector of $C$ with eigenvalue $\lambda>0$, and set $\mu=D^{1/2}v$. Then for every $\gamma\in[0,1]$,
\begin{equation}\label{eq02:parallel}
  P_\gamma^{-1}\mu \;=\; \frac{1}{(1-\gamma) + \gamma\,\lambda}\,D^{-1}\mu.
\end{equation}
\end{lemma}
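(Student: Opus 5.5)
The plan is to factor the shrinkage operator through the correlation matrix and then verify \eqref{eq02:parallel} by multiplying by $P_\gamma$, rather than computing $P_\gamma^{-1}$ directly.

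\textbf{Factorisation.} Since $\Sigma = D^{1/2} C D^{1/2}$ by \eqref{eq02:corr}, and $D = D^{1/2} I D^{1/2}$, we have
\[
P_\gamma \;=\; D^{1/2}\bigl((1-\gamma)I + \gamma C\bigr) D^{1/2}.
\]
This is the same change of basis that underlies Proposition~\ref{prop:corr_similarity}.

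\textbf{Eigenvector identity.} Because $Cv = \lambda v$, the middle factor acts on $v$ by a scalar:
\[
\bigl((1-\gamma)I+\gamma C\bigr)v = s\,v, \qquad s = (1-\gamma)+\gamma\lambda.
\]

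\textbf{Verification.} Set $w = s^{-1}D^{-1}\mu$. With $\mu = D^{1/2}v$ this gives $D^{-1}\mu = D^{-1/2}v$, so $w = s^{-1}D^{-1/2}v$. Then
\[
P_\gamma w \;=\; s^{-1}D^{1/2}\bigl((1-\gamma)I+\gamma C\bigr)D^{1/2}D^{-1/2}v \;=\; s^{-1}D^{1/2}(s v) \;=\; D^{1/2}v \;=\; \mu.
\]

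\textbf{Invertibility and well-definedness.} Two facts are needed for $w$ to equal $P_\gamma^{-1}\mu$.

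First, $P_\gamma$ must be invertible for every $\gamma\in[0,1]$, so that $w$ is the unique solution of $P_\gamma w = \mu$. The matrix $(1-\gamma)I+\gamma C$ has eigenvalues $(1-\gamma)+\gamma\lambda_i$. Each is a convex combination of $1$ and $\lambda_i>0$, hence positive. So this matrix is SPD, and $P_\gamma$ is SPD as a congruence of it by the nonsingular $D^{1/2}$.

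Second, the scalar $s$ must be nonzero. This follows the same way, because $\lambda>0$ by hypothesis, so $s$ is a convex combination of positive numbers.

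With these in place, $w = P_\gamma^{-1}\mu$, which is \eqref{eq02:parallel}.

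The argument is essentially routine. The only step needing care is the positivity argument for invertibility, which must hold uniformly over the whole of $[0,1]$, including the endpoints $P_0 = D$ and $P_1 = \Sigma$.
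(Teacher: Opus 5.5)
Your proof is correct and follows essentially the same route as the paper. Both arguments check that $P_\gamma$ sends $D^{-1}\mu = D^{-1/2}v$ to $[(1-\gamma)+\gamma\lambda]\mu$, using $\Sigma D^{-1/2} = D^{1/2}C$ and $Cv=\lambda v$, and then divide through using invertibility of the SPD matrix $P_\gamma$ and positivity of the scalar. The only cosmetic difference is that you justify invertibility through the congruence $P_\gamma = D^{1/2}[(1-\gamma)I+\gamma C]D^{1/2}$, whereas the paper simply cites that $P_\gamma$ is SPD (a convex combination of $D$ and $\Sigma$).
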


The proof is a direct calculation from the factorisation $P_\gamma = D^{1/2}\bigl[(1-\gamma)I + \gamma C\bigr]D^{1/2}$ combined with $Cv=\lambda v$, and is carried out in Appendix~\ref{app:proofs}. The coefficient $[(1-\gamma)+\gamma\lambda]^{-1}$ is strictly positive, so the entire trajectory $\{P_\gamma^{-1}\mu:\gamma\in[0,1]\}$ lies on the common ray through $D^{-1}\mu$.

\begin{corollary}[Invariant rays]\label{cor:invariant_rays}
On inputs of the form $\mu=D^{1/2}v$ with $v$ an eigenvector of $C$, every allocation method studied in this paper---HRP, Cotton at every $\gamma\in[0,1]$, HRP-$\mu$, HRP-$\Sigma\mu$, CRISP at every $\gamma\in[0,1]$, and the diagonal solver $P_0^{-1}\mu$---produces a portfolio on the common ray $\{\alpha\,D^{-1}\mu:\alpha>0\}$.
\end{corollary}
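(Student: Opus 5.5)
The plan is to reduce every method to one algebraic identity and then check, method by method, that the allocation rule is consistent with it. Put $x = D^{-1}\mu$. Then $\mu = D^{1/2}v$ with $Cv=\lambda v$ gives $D^{-1/2}\Sigma D^{-1/2}D^{-1/2}\mu = \lambda D^{-1/2}\mu$, that is,
\[
\Sigma x = \lambda\,\mu, \qquad P_\gamma x = \bigl[(1-\gamma)+\gamma\lambda\bigr]\mu \quad\text{for every }\gamma\in[0,1].
\]
This is exactly Lemma~\ref{lem:parallel_direction} read backwards. Two cases follow at once. The diagonal solver $P_0^{-1}\mu = x$ is trivial. For CRISP at convergence the output is $P_\gamma^{-1}\mu$, which lies on the ray by Lemma~\ref{lem:parallel_direction}; Theorem~\ref{thm:gs_convergence} identifies the Gauss--Seidel limit with $P_\gamma^{-1}\mu$. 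The positive coefficient $[(1-\gamma)+\gamma\lambda]^{-1}$ gives $\alpha>0$ rather than merely collinearity, which matters given the sign warning of Section~\ref{sec02:sign_warning}.

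For the tree methods I will use a block version of the same identity. Split the index set at any internal node into children $L,R$ and write $P_\gamma$ in blocks. The identity reads
\[
(P_\gamma)_{LL}x_L + \gamma\Sigma_{LR}x_R = c_\gamma\,\mu_L, \qquad \gamma\Sigma_{RL}x_L + (P_\gamma)_{RR}x_R = c_\gamma\,\mu_R,
\]
with $c_\gamma = (1-\gamma)+\gamma\lambda$. Suppose the representatives are $r_L\propto x_L$ and $r_R\propto x_R$ with positive constants. The $2\times 2$ between-child system has Gram matrix $G_{kl}=r_k^\top (P_\gamma)_{kl} r_l$ and signal $m_k = r_k^\top\mu_k$. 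Pairing the block identities with $r_L$ and $r_R$ gives $G\,(1,1)^\top = c_\gamma\, m$, after absorbing the positive scalings of $r_L, r_R$ into the budget vector. So the budget vector is a positive multiple of the vector that reassembles $x$. Any normalisation then keeps the result on the ray: sum-to-one, or the $L^1$ normalisation of Lemmas~\ref{lem:hrpsm_scale}--\ref{lem:hrpsm_sign}, or Cotton's Schur-complement form, which is the same $2\times2$ solve written in eliminated form. Nonnegativity of the budgets and $\alpha>0$ follow from positivity of $c_\gamma$ and of $m_k = \sum_{i\in k} \mu_i^2/\sigma_{ii}$ (up to the positive scaling of $r_k$).

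The induction then runs top-down through the dendrogram. If every node's representative is proportional to the restriction of $x$, the leaf weights, being products of budgets times representatives, equal $x$ up to one global positive factor. For HRP and Cotton the relevant signal is $\mu=\mathbf{1}$: the hypothesis then says $D^{-1/2}\mathbf{1}$ is an eigenvector of $C$, i.e.\ $\Sigma D^{-1}\mathbf{1} = \lambda\mathbf{1}$. HRP's inverse-variance representatives are then literally $x_L, x_R$. For HRP-$\mu$ the signed IVP $\operatorname{sign}(\mu_i)/\sigma_{ii}$ coincides with $x$ on a block only when $|\mu_i|$ is constant there, so I will first check the precise definition in Section~\ref{sec04:tree_methods}. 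I expect the claim to need either that reading or the $\mu=\mathbf{1}$ nesting case.

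The main obstacle is the representative step for methods whose representative is a \emph{local} optimum: HRP-$\Sigma\mu$'s recursive local MVO, and Cotton's block inverse. The local problem sees only $(P_\gamma)_{LL}$ and $\mu_L$, while the identity above couples $x_L$ to $x_R$ through $\Sigma_{LR}$. It is therefore not automatic that $(P_\gamma)_{LL}^{-1}\mu_L \propto x_L$.

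My first attempt is to exploit the recursion itself rather than the closed-form local inverse. The representative of $L$ is assembled from the representatives of $L$'s children by the same $2\times2$ rule, so the block identity is only needed in a restricted form at each level. I will try an induction hypothesis stated on the whole sub-tree: within each cluster $S$, $(P_\gamma)_{SS}x_S + (P_\gamma)_{S,S^c}x_{S^c} = c_\gamma\mu_S$, carried by the global eigen-identity. I will check whether the $2\times2$ solves propagate it. If they do not, I will fall back on Cotton's Schur-complement formulation: the Schur-complemented signal at a node is $\mu_L - \gamma\Sigma_{LR}(P_\gamma)_{RR}^{-1}\mu_R$, and I will show that this adjusted signal is proportional to $(P_\gamma)_{LL}x_L$, which restores proportionality of the representative to $x_L$.
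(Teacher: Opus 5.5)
Your treatment of the diagonal solver and of CRISP at its fixed point is correct. It is exactly what Lemma~\ref{lem:parallel_direction} gives, and that lemma is the only justification the paper offers for this corollary. You are also right to use the converged iterate. Started from $x=D^{-1}\mu$, one Gauss--Seidel sweep gives $w_1=(1-\gamma(\lambda-1))x_1$ but $w_2=(1-\gamma(\lambda-1))x_2+\gamma^2(\lambda-1)\sigma_{21}x_1/\sigma_{22}$, so finite-sweep CRISP generically leaves the ray.

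The tree part breaks at the step $G(1,1)^\top=c_\gamma m$. The node system \eqref{eq04:nodal_mvo} is not the Gram matrix of $P_\gamma$: its diagonal entries $v_L=\hat w_L^\top\Sigma_{LL}\hat w_L$ use the \emph{unshrunk} block $\Sigma_{LL}$, and only the cross term carries $\gamma$. Take the true identity $\Sigma_{LL}x_L+\Sigma_{LR}x_R=\lambda\mu_L$ and its $R$-analogue, and pair them with representatives proportional to $x_L,x_R$. This gives
\[
M\begin{pmatrix}1\\1\end{pmatrix}=\lambda\begin{pmatrix}m_L\\m_R\end{pmatrix}-(1-\gamma)\,c^0\begin{pmatrix}1\\1\end{pmatrix},\qquad c^0=x_L^\top\Sigma_{LR}x_R,\quad m_k=x_k^\top\mu_k .
\]
This is parallel to $m$ only if $(1-\gamma)\,c^0\,(m_L-m_R)=0$. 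So for $\gamma<1$ the budgets leave the ray even with ideal representatives.

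At $\gamma=1$ your root step is valid, but the representative step fails, as you feared. Inside a root child $L$ the identity reads $\Sigma_{L_1L_1}x_{L_1}+\Sigma_{L_1L_2}x_{L_2}=\lambda\mu_{L_1}-\Sigma_{L_1R}x_R$. The local $2\times2$ solve therefore returns budgets parallel to $(1,1)$ only if $x_{L_1}^\top\Sigma_{L_1R}x_R/m_{L_1}=x_{L_2}^\top\Sigma_{L_2R}x_R/m_{L_2}$, which fails generically. Neither Algorithm~\ref{alg04b:hrpsm} nor the signed IVP of HRP-$\mu$ applies the Schur-type correction your fallback needs, so that fallback analyses a different algorithm. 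Cotton is also not the same $2\times2$ solve in eliminated form; it is a block recursion that never reads $\mu$.

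No sharper induction can close this, because beyond $P_\gamma^{-1}\mu$ the corollary is false. HRP and Cotton never read $\mu$. A single output would then have to lie on the rays $D^{-1/2}v$ for all $N$ orthonormal eigenvectors $v$ of $C$ at once, which is impossible for $N\ge2$.

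Even in your $\mu=\mathbf{1}$ reading it fails. Take $N=3$ and $\Sigma=(1-\rho)I+\rho\mathbf{1}\mathbf{1}^\top$ with $0\ne\rho\in(-\tfrac12,1)$. Then $D=I$, $\mathbf{1}$ is an eigenvector of $C$, and the target ray is that of $\mathbf{1}$. Every binary tree is $\{\{1,2\},\{3\}\}$ up to relabelling. HRP, Cotton at $\gamma=0$, and HRP-$\mu$ and HRP-$\Sigma\mu$ at $\gamma=0$ all return $(1,1,1+\rho)/(3+\rho)$. At $\gamma=\tfrac12$, HRP-$\mu$ and HRP-$\Sigma\mu$ split the root in the ratio $(2-\rho):1$ instead of the required $2:1$.

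What is provable is the restriction to $P_\gamma^{-1}\mu$, i.e.\ the diagonal solver and converged CRISP. That is all your first step, and the paper's lemma, establish. The tree-based and signal-blind methods have to be dropped from the statement, not proved.
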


Two caveats deserve underlining. First, Corollary~\ref{cor:invariant_rays} is about \emph{direction} only: the methods agree on the ray in $\R^N$ along which the portfolio points, not on its magnitude or normalisation. They disagree on where on that ray the output sits, and for Sharpe comparisons where budget or gross normalisation matters, the ray agreement is not the end of the story. Second---and this is what we will use repeatedly---the corollary forecloses a specific class of worst-case constructions. It is natural, when looking for inputs on which a shrinkage method fails, to align $\mu$ with an eigenvector of $C$ under the slogan ``$\Sigma^{-1}$ amplifies small eigenvectors.'' But that slogan is a statement about \emph{scale}, not direction: Lemma~\ref{lem:parallel_direction} says every single-eigenvector $\mu$ gives every method in the family the same direction, so any direction-based worst case must excite at least two eigenspaces of $C$. We call the family $\{D^{1/2}v : v\text{ eigenvector of }C\}$ the \emph{invariant rays of the shrinkage family}; they are precisely the directions on which $\gamma$ has no effect at all.

\subsection{Notation summary}\label{sec02:summary}

Table~\ref{tab02:notation} collects the symbols introduced above. The spectral objects $C$, $\kappa(C)$, $P_\gamma$, $\operatorname{dir}$, and $\operatorname{dir}_{\mathrm{diag}}$ recur in every subsequent section; Lemma~\ref{lem:parallel_direction} is invoked in Section~\ref{sec06:perturbation} to exclude single-eigenvector constructions from the trajectory analysis and in Section~\ref{sec10:experiments} to generate signal vectors that do not trivially collapse.

\begin{table}[t]
\centering
\caption{Notation used throughout the paper.}
\label{tab02:notation}
\begin{adjustbox}{max width=\textwidth, center}
\begin{tabular}{@{}l l l@{}}
\toprule
Symbol & Definition & Role \\
\midrule
$N$                                              & number of assets                                                      & problem size \\
$\Sigma$                                         & SPD covariance matrix, $N\times N$                                    & risk model \\
$D=\diag(\Sigma)$                                & diagonal of $\Sigma$ (variances)                                      & idiosyncratic risk \\
$E=\Sigma-D$                                     & off-diagonal part of $\Sigma$                                         & cross-asset dependence \\
$C=D^{-1/2}\Sigma D^{-1/2}$                      & correlation matrix, $\tr(C)=N$                                        & scale-free risk structure \\
$\kappa(C)$                                      & $\lambda_1(C)/\lambda_N(C)$                                           & correlation conditioning (difficulty) \\
$\mu\in\R^N$                                     & expected-return (signal) vector                                       & input \\
$w^{\star}=\Sigma^{-1}\mu$                       & Markowitz portfolio                                                   & target direction \\
$P_\gamma$                                       & $(1-\gamma)D+\gamma\Sigma$,\; $\gamma\in[0,1]$                        & shrinkage operator \\
$\operatorname{dir}(\hat w, w^{\star})$          & $1-(\hat w^{\top}w^{\star})^2/(\|\hat w\|^2\|w^{\star}\|^2)$          & direction error (tracking) \\
$\operatorname{dir}_{\mathrm{diag}}(\Sigma,\mu)$ & $\operatorname{dir}(\mu/D,\ \Sigma^{-1}\mu)$                          & instance difficulty diagnostic \\
\bottomrule
\end{tabular}
\end{adjustbox}
\end{table}

% ============================================================
% section_03_background.tex
% ============================================================
% !TEX root = ../paper.tex
% Section 3: Background -- HRP and Cotton
% Namespace: sec03, eq03, tab03

\section{Background: HRP and Cotton}\label{sec03:background}

Every construction in the remainder of this paper is benchmarked against two prior allocators: the Hierarchical Risk Parity (HRP) algorithm of \citet{lopezdeprado2016} and the Schur-complementary allocation of \citet{cotton2024}. Each is a minimum-variance construction pinned to the signal $\mu = \mathbf{1}$. We re-present them in the notation of Section~\ref{sec02:preliminaries}, isolate the distinctions that tend to get blurred in applied treatments, and then document a concrete numerical fragility of Cotton's $\gamma$-continuum on sample covariances. A short motivation for the general-$\mu$ lift and a preview of the method landscape close the section.

\subsection{Hierarchical Risk Parity}\label{sec03:hrp}

\paragraph{Inputs and outputs.} HRP consumes an estimated covariance $\Sigma \in \R^{N \times N}$ and returns a long-only, fully-invested weight vector $w \in \R^N$ with $w_i > 0$ and $\sum_i w_i = 1$. No matrix inversion is performed, no expected-return vector is consulted, and no hyperparameter is tuned. The dominant cost is the one-off clustering pass; the recursion on the dendrogram itself is $O(N^2)$.

\paragraph{Three steps.} The construction factors into a clustering stage, a cosmetic re-ordering, and a recursive budget split.
\begin{enumerate}[leftmargin=*,itemsep=2pt]
  \item \emph{Correlation-distance clustering.} Form the correlation matrix $C = D^{-1/2}\,\Sigma\,D^{-1/2}$ with $D = \diag(\Sigma)$, set
  \begin{equation}\label{eq03:corr_distance}
  d_{ij} \;=\; \sqrt{\tfrac{1}{2}\bigl(1 - C_{ij}\bigr)},
  \end{equation}
  and apply an agglomerative linkage (Ward is our default) to produce a binary dendrogram $\mathcal{T}$.
  \item \emph{Quasi-diagonalisation.} Re-index the assets so that leaves co-located in $\mathcal{T}$ are contiguous. The re-ordering leaves the spectrum of $\Sigma$ unchanged but arranges correlated names into block-diagonal-like neighbourhoods.
  \item \emph{Recursive bisection.} Descend $\mathcal{T}$ from the root. At an internal node with children $L$ and $R$, form the flat \emph{inverse-variance portfolio} (IVP) on each child,
  \begin{equation}\label{eq03:flat_ivp}
  \widehat w_i \;=\; \frac{1/\Sigma_{ii}}{\sum_{j \in L} 1/\Sigma_{jj}}, \qquad i \in L,
  \end{equation}
  and compute the scalar \emph{cluster variance}
  \begin{equation}\label{eq03:cluster_var}
  v_L \;=\; \widehat w_L^{\top}\, \Sigma_{LL}\, \widehat w_L,
  \end{equation}
  where $\Sigma_{LL}$ is the $|L|\times|L|$ diagonal block. The sibling $v_R$ is defined identically. The between-branch budget split is inverse cluster variance,
  \begin{equation}\label{eq03:hrp_split}
  \alpha_L \;=\; \frac{1/v_L}{1/v_L + 1/v_R}, \qquad \alpha_R \;=\; 1 - \alpha_L,
  \end{equation}
  and the final weight of leaf $i$ is the product of the $\alpha$-factors encountered on its unique root-to-leaf path, starting from unit budget at the root.
\end{enumerate}

A fully worked four-asset example---explicit $v_L$, $v_R$, $\alpha_L$, and leaf weights---is collected in Appendix~\ref{app:hrp_example}.

\paragraph{The flat IVP is a scoring device, not an output.} A persistent misreading of HRP identifies the flat weights~\eqref{eq03:flat_ivp} with the allocator itself inside each sub-tree. They are not. Equation~\eqref{eq03:flat_ivp} exists solely to manufacture the scalar $v_L$ that feeds~\eqref{eq03:hrp_split}; it does not appear in the final portfolio. The weights that do appear are the products of $\alpha$-factors along root-to-leaf paths, which differ in general from any flat IVP applied at a single cluster scale. This distinction is load-bearing: the signal-aware tree methods of Section~\ref{sec04:tree_methods} retain HRP's between-branch split verbatim and intervene exclusively on the representative portfolio used to score a cluster.

\paragraph{What HRP is not.} Four negatives are worth stating explicitly, because each one motivates a later contribution.
\begin{itemize}[leftmargin=*,itemsep=2pt]
  \item \emph{Not an optimiser.} No explicit quadratic form is minimised. The split rule~\eqref{eq03:hrp_split} is a regularisation-flavoured heuristic; there is no first-order condition it satisfies.
  \item \emph{Not signal-aware.} $\mu$ does not appear anywhere in steps 1--3. A practitioner who carries a non-trivial return forecast must discard it to use HRP.
  \item \emph{Not cross-branch aware.} The cluster variance $v_L$ depends only on the diagonal block $\Sigma_{LL}$. The off-diagonal block $\Sigma_{LR}$ coupling the two children is simply ignored. This is the information \citet{cotton2024} sets out to reinstate.
  \item \emph{Not an approximation to Markowitz.} Even when the correct signal is $\mathbf{1}$, HRP does not approximate $\Sigma^{-1}\mathbf{1}$. The gap is structural, not small: on the four-asset reference covariance of Appendix~\ref{app:hrp_example} the two portfolios differ by roughly $49\%$ in relative direction.
\end{itemize}

\subsection{Cotton's Schur-complement allocation}\label{sec03:cotton}

\citet{cotton2024} rebuilds HRP's recursion from the block-inversion identity and parameterises the result by a single scalar $\gamma \in [0,1]$. The construction closes HRP's cross-branch blindness at $\gamma=1$; the intermediate values are where the trouble lives.

\paragraph{Block inversion.} Partition the quasi-diagonalised covariance according to the first binary split of the dendrogram,
\begin{equation}\label{eq03:block_partition}
\Sigma \;=\; \begin{pmatrix} A & B \\ B^{\top} & D \end{pmatrix}, \qquad A \in \R^{n_L \times n_L}, \quad D \in \R^{n_R \times n_R}.
\end{equation}
The Schur identity factors $\Sigma^{-1}\mathbf{1}$ block-by-block,
\begin{equation}\label{eq03:block_inverse}
\Sigma^{-1}\mathbf{1}
\;=\;
\begin{pmatrix} (A^c)^{-1}\, b_A \\[2pt] (D^c)^{-1}\, b_D \end{pmatrix},
\qquad
A^c = A - B D^{-1} B^{\top}, \quad D^c = D - B^{\top} A^{-1} B,
\end{equation}
with corrected right-hand sides $b_A = \mathbf{1}_{n_L} - B D^{-1}\mathbf{1}_{n_R}$ and $b_D = \mathbf{1}_{n_R} - B^{\top} A^{-1}\mathbf{1}_{n_L}$. Equation~\eqref{eq03:block_inverse} is a tautology, but it has HRP's recursive shape: two sub-systems, one per child, coupled only through the Schur correction.

\paragraph{From block inversion to a recursive split.} Define the quadratic fitness
\begin{equation}\label{eq03:cotton_fitness}
\frac{1}{\nu(Q,b)} \;=\; b^{\top} Q^{-1} b.
\end{equation}
This scalar plays the role that inverse cluster variance $1/v_L$ plays in HRP. The between-branch split becomes
\begin{equation}\label{eq03:cotton_split}
\alpha_L \;=\; \frac{1/\nu(A^c, b_A)}{1/\nu(A^c, b_A) + 1/\nu(D^c, b_D)}, \qquad \alpha_R = 1 - \alpha_L,
\end{equation}
in exact parallel with~\eqref{eq03:hrp_split}. The recursion then enters each child with the Schur-complemented block ($A^c$ on the left, $D^c$ on the right) in place of the bare diagonal block, and with the corrected right-hand side in place of $\mathbf{1}$. HRP's flat IVP is replaced, at each internal node, by a full block solve; the tree scaffolding itself is untouched.

\paragraph{The $\gamma$-parameter.} \citet{cotton2024} introduces a continuous scalar $\gamma \in [0,1]$ that interpolates how much of the cross-block correction to apply,
\begin{equation}\label{eq03:cotton_gamma}
A^c(\gamma) \;=\; A - \gamma\, B D^{-1} B^{\top}, \qquad b_A(\gamma) \;=\; \mathbf{1}_{n_L} - \gamma\, B D^{-1}\mathbf{1}_{n_R},
\end{equation}
with the symmetric definitions $D^c(\gamma)$, $b_D(\gamma)$. The endpoints are interpretable. At $\gamma = 0$, $A^c(0) = A$ and $b_A(0) = \mathbf{1}_{n_L}$: the allocator becomes a hierarchical minimum-variance solver that sees only within-branch information. At $\gamma = 1$, equation~\eqref{eq03:cotton_gamma} reproduces the Schur complement of~\eqref{eq03:block_inverse} exactly, and the full recursion returns $\Sigma^{-1}\mathbf{1}$.

\paragraph{Cotton at $\gamma = 0$ is \emph{not} HRP.} The tree is the same in both methods; the within-branch scoring rule is different. HRP uses the flat IVP representative~\eqref{eq03:flat_ivp} to compute its cluster variance. Cotton at $\gamma = 0$ instead solves $A^{-1}\mathbf{1}_{n_L}$ on the undisturbed diagonal block. These coincide if and only if every sub-block encountered in the recursion is itself diagonal, which is non-generic. On the four-asset reference covariance of Appendix~\ref{app:hrp_example} the two portfolios differ by roughly $49\%$ in relative direction. Any claim of the form ``Cotton at $\gamma = 0$ recovers HRP'' should therefore be read as a structural approximation, not an identity.

\paragraph{Cost.} At each internal node Cotton performs a block solve to form $D^{-1} B^{\top}$, which is the dominant work in the Schur augmentation. Summed over a balanced binary tree of depth $\log_2 N$, the total cost is $O(N^3/6)$, dominated by the top-level complement. This is a constant-factor improvement over the $O(N^3)$ of a dense $\Sigma^{-1}$ but has the same asymptotic rate. The numerical behaviour we document below traces to exactly these nested block solves.

\paragraph{The $\mu = \mathbf{1}$ ceiling.} The entire construction~\eqref{eq03:block_inverse}--\eqref{eq03:cotton_gamma} is a block rearrangement of $\Sigma^{-1}\mathbf{1}$. The corrected right-hand side $b_A = \mathbf{1} - \gamma B D^{-1}\mathbf{1}$ is a \emph{corrected all-ones vector}, not a generic signal, and the clean telescoping that produces the recursive split does not survive replacing $\mathbf{1}$ by a non-trivial $\mu$ without re-deriving the cross-block correction. \citet{cotton2024} does not attempt this generalisation. Closing the gap---without paying the full $O(N^3)$ cost---is one of the points of Sections~\ref{sec04:tree_methods} and~\ref{sec05:crisp}.

\subsection{Cotton under estimation noise}\label{sec03:cotton_stability}

The two propositions in this subsection separate a mathematical statement about the population covariance from an empirical statement about sample covariances. The second point is easy to misread as criticism of \citet{cotton2024}; it is not. Cotton's $\gamma$-continuum is mathematically elegant, and the instability documented below is a property of sample-covariance block inversion at small $T$, not a defect of the derivation.

\begin{proposition}[Cotton is SPD on the population covariance]\label{prop:cotton_spd}
Let $\Sigma$ be SPD with the block partition of~\eqref{eq03:block_partition}. Then the augmented Schur block $A^c(\gamma) = A - \gamma\, B D^{-1} B^{\top}$ is SPD for every $\gamma \in [0,1]$.
\end{proposition}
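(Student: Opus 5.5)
The plan is to write $A^c(\gamma)$ as a convex combination of two SPD matrices. Note that in this proposition $D$ denotes the lower-right block of the partition~\eqref{eq03:block_partition}, not $\diag(\Sigma)$. Since $\Sigma$ is SPD, its principal blocks $A$ and $D$ are SPD; in particular $D^{-1}$ exists, so $A^c(\gamma)$ is well defined and symmetric.

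The first step is to invoke the standard Schur-complement characterisation of positive definiteness. For a symmetric block matrix with $D \succ 0$, $\Sigma \succ 0$ holds if and only if $A - BD^{-1}B^\top \succ 0$. This follows from the congruence
\[
\begin{pmatrix} I & -BD^{-1} \\ 0 & I \end{pmatrix}\Sigma\begin{pmatrix} I & 0 \\ -D^{-1}B^\top & I \end{pmatrix} = \begin{pmatrix} A - BD^{-1}B^\top & 0 \\ 0 & D \end{pmatrix}.
\]
Congruence by an invertible matrix preserves positive definiteness, so $A^c(1)$ is SPD. The endpoint $A^c(0) = A$ is SPD as a principal submatrix of $\Sigma$.

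The second step is to observe that $A^c(\gamma)$ is affine in $\gamma$:
\[
A^c(\gamma) = (1-\gamma)A + \gamma\,A^c(1).
\]
For $x \neq 0$ this gives $x^\top A^c(\gamma)x = (1-\gamma)x^\top A x + \gamma\, x^\top A^c(1)x > 0$, since both terms are nonnegative and at least one coefficient is positive. Equivalently, $BD^{-1}B^\top \succeq 0$, so $A^c(\gamma) \succeq A^c(1) \succ 0$. The symmetric statement for $D^c(\gamma)$ follows by exchanging the roles of the blocks.

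There is no real obstacle. The only step needing care is the Schur-complement congruence, which I will write out explicitly rather than cite.
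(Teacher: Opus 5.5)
Your proof is correct and takes the same route as the paper: write $A^c(\gamma) = (1-\gamma)A + \gamma\,(A - BD^{-1}B^\top)$ as a convex combination of the principal block $A$ and the Schur complement of $D$, then conclude because SPD matrices are closed under convex combination. The one cosmetic difference is that the paper shows the Schur complement is SPD by plugging the test vector $(x, -D^{-1}B^\top x)$ into $\Sigma$, which is exactly what your block congruence encodes.
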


\begin{proof}[Proof sketch]
Rewrite $A^c(\gamma) = (1-\gamma)\,A + \gamma\,(A - B D^{-1} B^{\top})$. The first summand is a principal submatrix of the SPD $\Sigma$ and is therefore SPD. The second summand is the Schur complement of $D$ in $\Sigma$ and is SPD by the standard block-inversion identity. A convex combination of two SPD matrices is SPD. The full argument appears in Appendix~\ref{app:proofs}.
\end{proof}

Proposition~\ref{prop:cotton_spd} rules out any population-level pathology: at every $\gamma \in [0,1]$ the augmented block is SPD and the recursion is well-defined. The difficulty begins when $\Sigma$ is replaced by a sample estimate.

\begin{proposition}[Cotton instability under estimation noise]\label{prop:cotton_instability}
Let $\widehat\Sigma$ be an SPD estimate of $\Sigma$ (for example, sample covariance plus a small ridge), partitioned as $\widehat\Sigma = \bigl(\begin{smallmatrix} \widehat A & \widehat B \\ \widehat B^{\top} & \widehat D \end{smallmatrix}\bigr)$ at an internal node. Then:
\begin{enumerate}[label=(\roman*),leftmargin=*,itemsep=2pt]
  \item The condition number of the augmented Schur block satisfies the bound
  \begin{equation}\label{eq03:cotton_kappa}
  \kappa\!\bigl(\widehat A^c(\gamma)\bigr) \;\le\; \frac{\lambda_{\max}(\widehat A)}{\lambda_{\min}(\widehat A) \;-\; \gamma\,\|\widehat B\|^2\big/\lambda_{\min}(\widehat D)},
  \end{equation}
  whenever the denominator is positive. In particular, $\kappa(\widehat A^c(\gamma))$ diverges as $\gamma$ approaches the critical value $\gamma_{\mathrm{crit}} = \lambda_{\min}(\widehat A)\,\lambda_{\min}(\widehat D)\,/\,\|\widehat B\|^2$.
  \item Because Cotton recurses into each child using $\widehat A^c(\gamma)$ as the new covariance, a worst-case error-propagation argument bounds the effective condition number at depth $d$ by the \emph{product}
  \begin{equation}\label{eq03:cotton_product}
  \kappa_{\mathrm{eff}}(d) \;\le\; \prod_{k=1}^{d}\, \kappa\!\bigl(\widehat A^c_k(\gamma)\bigr).
  \end{equation}
  For a balanced tree of depth $d = \lceil \log_2 N \rceil$, the right-hand side can grow exponentially with $d$. Equation~\eqref{eq03:cotton_product} is a propagation bound, not an algebraic identity; it becomes tight when the level-$k$ perturbations align with the poorly conditioned directions of $\widehat A^c_k$.
  \item By contrast, a direct Markowitz solve inverts $\widehat\Sigma$ exactly once, with relative error proportional to the \emph{single} condition number $\kappa(\widehat\Sigma)$.
\end{enumerate}
\end{proposition}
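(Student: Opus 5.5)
The three parts need different tools. Part (i) is a Weyl-type eigenvalue bound. Part (ii) is a telescoping perturbation argument. Part (iii) is the classical backward-error statement for a single SPD solve.

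\textbf{Part (i).} Write $\widehat A^c(\gamma) = \widehat A - \gamma M$ with $M = \widehat B\widehat D^{-1}\widehat B^{\top}$. The matrix $M$ is symmetric positive semidefinite, so Weyl's inequality gives
\[
\lambda_{\max}(\widehat A^c(\gamma)) \le \lambda_{\max}(\widehat A), \qquad \lambda_{\min}(\widehat A^c(\gamma)) \ge \lambda_{\min}(\widehat A) - \gamma\,\lambda_{\max}(M).
\]
For the second term, bound $\lambda_{\max}(M) = \|\widehat D^{-1/2}\widehat B^{\top}\|^2 \le \|\widehat B\|^2/\lambda_{\min}(\widehat D)$ using submultiplicativity of the spectral norm. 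Whenever the resulting lower bound on $\lambda_{\min}$ is positive, dividing the two bounds gives~\eqref{eq03:cotton_kappa}. Setting the denominator to zero gives $\gamma_{\mathrm{crit}}$. I will phrase "diverges" as a statement about the bound: the right-hand side blows up as $\gamma\uparrow\gamma_{\mathrm{crit}}$.

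One point needs stating carefully. Proposition~\ref{prop:cotton_spd} shows that the true $\lambda_{\min}$ stays positive on $[0,1]$, so $\gamma_{\mathrm{crit}}$ is where the \emph{certificate} degenerates. Actual divergence requires the top singular direction of $\widehat D^{-1/2}\widehat B^{\top}$ to align with the bottom eigenvector of $\widehat A$. I will note that this alignment is what happens for sample covariances at small $T$.

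\textbf{Part (ii).} I plan a standard first-order error-propagation argument down the recursion. At level $k$ the algorithm solves with $\widehat A^c_k(\gamma)$, and the right-hand side and block passed down carry relative perturbation $\delta_{k-1}$ inherited from the level above. The standard linear-system bound gives $\delta_k \lesssim \kappa(\widehat A^c_k(\gamma))(\delta_{k-1} + \epsilon_k)$, where $\epsilon_k$ is the fresh local error. Unrolling this recursion gives a sum dominated by $\prod_{k=1}^d \kappa(\widehat A^c_k(\gamma))$ times the base error, which is~\eqref{eq03:cotton_product}.

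Tightness needs a different argument. I will exhibit it by choosing the perturbations so that each level's error lies along the eigenvector of $\widehat A^c_k$ attaining $\lambda_{\min}$. The amplification is then exactly $\kappa$ at each level, provided the next block's worst direction is fed by this one.

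The claim of exponential growth in $d$ then follows by taking every factor at least some $\bar\kappa>1$, which gives $\bar\kappa^{\log_2 N}$.

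\textbf{Part (iii).} This is the textbook result that a backward-stable Cholesky solve of $\widehat\Sigma x = b$ has relative forward error $O(\kappa(\widehat\Sigma)\,u)$, where $u$ is the unit roundoff. I will simply cite it.

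\textbf{Main obstacle.} The hardest step is making (ii) rigorous rather than heuristic. The recursion passes both a modified matrix and a modified right-hand side, so the perturbations are not purely right-hand-side perturbations. Matrix perturbations contribute cross terms, which are controlled only to first order. I would therefore state (ii) explicitly as a first-order bound, valid when every $\kappa_k\delta_k$ is small, and present tightness only as alignment of worst-case directions. I would not attempt a sharp identity.
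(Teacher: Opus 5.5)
Your plan follows the paper's proof in all three parts. For (i) it uses Weyl's inequality with $\|M\|\le\|\widehat B\|^2/\lambda_{\min}(\widehat D)$, together with the fact that subtracting a PSD matrix cannot raise $\lambda_{\max}$. For (ii) it uses a first-order, level-by-level amplification $\delta_k\lesssim\kappa_k\delta_{k-1}$, unrolled into a product and explicitly flagged as heuristic. For (iii) it uses the one-shot perturbation bound. The paper states (iii) for a data perturbation, $\|\delta w\|/\|w\|\le\kappa(\widehat\Sigma)\,\|\delta\Sigma\|/\|\widehat\Sigma\|$, which is the form relevant to estimation noise; your backward-stability version is the special case $\|\delta\Sigma\|=O(u\,\|\widehat\Sigma\|)$. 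Your remark that Proposition~\ref{prop:cotton_spd} keeps $\lambda_{\min}(\widehat A^c(\gamma))>0$ is correct, and the paper does not make it. It means that at $\gamma_{\mathrm{crit}}$ only the certificate degenerates.

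You should drop the follow-up claim that alignment of the worst directions, which you expect for sample covariances at small $T$, turns this into actual divergence. Inside Cotton's parameter range that cannot happen. For $\gamma\in[0,1]$ one has $\widehat A^c(\gamma)=\widehat A^c(1)+(1-\gamma)M\succeq\widehat A^c(1)\succ 0$, which gives the uniform bound $\kappa(\widehat A^c(\gamma))\le\lambda_{\max}(\widehat A)/\lambda_{\min}(\widehat A^c(1))$ on all of $[0,1]$. Even under exactly the alignment you describe, $\lambda_{\min}(\widehat A^c(\gamma))=\lambda_{\min}(\widehat A)-\gamma\|M\|$. This is positive at $\gamma=1$, so the true singularity sits at $\lambda_{\min}(\widehat A)/\|M\|>1$. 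Whenever $\gamma_{\mathrm{crit}}\le 1$, the bound \eqref{eq03:cotton_kappa} is simply slack there. The only defensible reading of ``diverges'' is the one you adopted, namely that the right-hand side blows up, and that is also all the paper's proof establishes.

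A smaller point concerns your tightness sketch for (ii). Amplification by exactly $\kappa$ at a level needs the unperturbed right-hand side to lie along the top eigenvector, in addition to the perturbation lying along the bottom one. The paper gives no tightness argument, so present this only as the conditional heuristic the statement describes.
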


\begin{proof}[Proof sketch]
Part (i) is Weyl's inequality: $\lambda_{\min}(\widehat A - \gamma M) \ge \lambda_{\min}(\widehat A) - \gamma\,\|M\|$, where $M = \widehat B\,\widehat D^{-1}\widehat B^{\top}$ is PSD with $\|M\| \le \|\widehat B\|^2 / \lambda_{\min}(\widehat D)$; $\lambda_{\max}$ is non-increasing as the PSD perturbation grows, supplying the numerator. Part (ii) is a level-by-level perturbation expansion: a relative error $\delta_k$ at level $k$ is fed into the $(k{+}1)$-st block solve and is amplified by at most $\kappa(\widehat A^c_{k+1})$, so errors compound multiplicatively in the worst case. Part (iii) is the standard one-step Banach perturbation bound. Full details in Appendix~\ref{app:proofs}.
\end{proof}

\paragraph{Numerical verification.} The bound~\eqref{eq03:cotton_product} is not an asymptotic curiosity. On the base-case covariance used throughout the paper's Monte Carlo studies (equicorrelated blocks, $N = 100$), direct Markowitz at $T = 60$ gives $\kappa(\widehat\Sigma) \approx 1.6 \times 10^{4}$. The product of Schur-complement condition numbers that Cotton accumulates across the $99$ internal nodes of the tree reaches $\sim 10^{34}$ at $\gamma = 0.5$---thirty orders of magnitude worse than the one-shot direct solve. Quadrupling the sample to $T = 240$ barely helps: the product is $\sim 10^{31}$ against a direct $\kappa \approx 596$. The operational consequence is visible in Table~\ref{tab10:oos_minvar}: at intermediate $\gamma$, Cotton produces realised out-of-sample volatility $10$--$30\times$ the oracle minimum-variance level on a non-negligible fraction of Monte Carlo trials, whereas direct Markowitz on the same samples does not.

The takeaway is narrow and operational. Cotton's $\gamma$-continuum is a correct interpolation between a hierarchical regulariser at $\gamma = 0$ and the exact minimum-variance portfolio at $\gamma = 1$. The instability at intermediate $\gamma$ reflects the fact that a recursion of sample-covariance block inversions accumulates condition-number damage that a single sample inverse does not. In practice, Cotton is most safely used at its two endpoints; interpolating values do not inherit the numerical robustness one might naively expect from block-wise regularisation.

\subsection{Why extend to general \texorpdfstring{$\mu$}{mu}?}\label{sec03:why_general_mu}

HRP and Cotton together give a clean story for $\mu = \mathbf{1}$. But that signal is, in applied portfolio construction, the degenerate corner of the mean--variance problem. Four representative settings make the point.

\paragraph{Quant equity and factor investing.} Cross-sectional factor signals---momentum, value, quality, earnings revisions, low-volatility---produce heterogeneous expected returns across names, and the entire point of a factor model is that $\mu \neq \mathbf{1}$. An allocator that ignores $\mu$ throws away the primary input before the optimisation begins.

\paragraph{Multi-signal integration.} Modern quant pipelines aggregate many orthogonalised signals into a composite alpha \citep{grinoldkahn1999}. The composite is a rich $\mu$; the portfolio stage is asked to weigh it against covariance risk. A signal-blind allocator is simply inadmissible in this context.

\paragraph{Tactical asset allocation.} Multi-asset allocators take views on expected returns at the asset-class level---equities versus bonds versus commodities versus alternatives---and expect the optimiser to translate those views into positions. $\mu$ is low-dimensional here but highly informative; setting it to $\mathbf{1}$ is not a simplification, it is a replacement of the problem.

\paragraph{Black--Litterman view integration.} The Black--Litterman framework \citep{blacklitterman1992} returns a posterior $\mu$ that encodes a prior together with investor views. The value of the framework is entirely in that posterior; an allocator that discards it is not solving the same problem.

In each case $\mu = \mathbf{1}$ is not a neutral default. It is an active discarding of the most informative input, and any method restricted to it is, by construction, inapplicable to the bulk of active portfolio construction.

\subsection{Method landscape}\label{sec03:landscape}

Table~\ref{tab03:landscape} previews the methods studied in this paper relative to HRP and Cotton. Two columns carry most of the discriminatory load: the \emph{Signal} column (whether the method accepts any $\mu$ or is locked to $\mathbf{1}$), and the \emph{Cost} column (the asymptotic work per solve). HRP and Cotton sit at the ``minimum-variance only'' end. Direct Markowitz sits at the opposite extreme: general $\mu$, no regularisation, full $O(N^3)$. The contributions of the present paper---HRP-$\mu$, HRP-$\Sigma\mu$, and CRISP---populate the missing middle of the table.

\begin{table}[t]
\centering
\caption{Method landscape. HRP and Cotton are minimum-variance-only baselines. HRP-$\mu$ and HRP-$\Sigma\mu$ are the tree-based signal-aware contributions; CRISP is the headline iterative contribution. Direct Markowitz is the dense-inversion baseline. The \emph{within-cluster} row describes the representative portfolio used to score a sub-tree, and the \emph{cross-branch} row describes how, if at all, information from the off-diagonal block $\Sigma_{LR}$ enters the between-branch split.}
\label{tab03:landscape}
\begin{adjustbox}{max width=\textwidth, center}
\begin{tabularx}{1.05\textwidth}{@{}l *{6}{>{\centering\arraybackslash}X}@{}}
\toprule
 & HRP & Cotton & HRP-$\mu$ & HRP-$\Sigma\mu$ & CRISP & Direct \\
\midrule
Signal         & $\mathbf{1}$ only & $\mathbf{1}$ only & any $\mu$ & any $\mu$ & any $\mu$ & any $\mu$ \\
Tree           & Ward              & Ward              & Ward       & Ward       & ---        & ---        \\
Cross-branch   & ---               & Schur             & scalar $c$ & scalar $c$ & ---        & ---        \\
Within-cluster & flat IVP          & full $A^{-1}$     & flat IVP   & full $\Sigma_{LL}$ & --- & full $\Sigma^{-1}$ \\
Cost           & $O(N^2)$          & $O(N^3/6)$        & $O(N^2)$   & $O(N^2)$   & $O\!\bigl(\kappa(C)\,N^2\log\tfrac{1}{\varepsilon}\bigr)$ & $O(N^3)$ \\
Role           & baseline          & baseline          & contribution & recommended tree & \textbf{headline} & baseline \\
\bottomrule
\end{tabularx}
\end{adjustbox}
\end{table}

HRP-$\mu$ occupies the ``hierarchical, signal-aware, flat within-cluster'' cell. HRP-$\Sigma\mu$ upgrades the within-cluster representative to a full mean--variance solve on the diagonal block, which turns out to recover most of what direct Markowitz provides at $O(N^2)$ cost. CRISP abandons the tree scaffolding in favour of a preconditioned scalar Gauss--Seidel solver for $P_\gamma w = \mu$ with $P_\gamma = (1-\gamma)\,\diag(\Sigma) + \gamma\,\Sigma$; its per-sweep cost is $O(N^2)$ and its sweep count scales with the preconditioned conditioning $\kappa(\diag(\Sigma)^{-1} P_\gamma)$, which interpolates between $1$ at $\gamma = 0$ and $\kappa(C)$ at $\gamma = 1$. Sections~\ref{sec04:tree_methods}--\ref{sec05:crisp} develop these three methods in turn; Section~\ref{sec08:comparative} returns to the landscape with the full analytical and empirical picture assembled.

% ============================================================
% section_04_tree_methods.tex
% ============================================================
% !TEX root = ../paper.tex
% Section 4: Tree-based signal-aware allocation (merged HRP-μ + HRP-Σμ)
% Namespaces: sec04, sec04b (legacy), eq04, eq04b (legacy), tab04, tab04b (legacy),
%             prop:a2, prop:a3, prop:hrpsm, lem:hrpsm, cor:hrpsm, alg04b.

\section{Tree-based signal-aware allocation}\label{sec04:tree_methods}

Section~\ref{sec03:background} showed that \citet{lopezdeprado2016} HRP and the \citet{cotton2024} Schur allocator share a single hierarchical skeleton: at every internal node of a correlation dendrogram, the allocator solves a $2\times 2$ between-branch problem to decide how budget is split between the two children, and then recurses. Under the minimum-variance restriction $\bmu=\mathbf{1}$ the two methods are endpoints of a $\gamma$-continuum on that skeleton. The purpose of this section is to extend the same skeleton to an arbitrary signal $\bmu \in \mathbb{R}^N$ and to turn the choice of the \emph{representative portfolio} at each node into the main design variable of the tree.

Two allocators emerge from this analysis. HRP-$\mu$ (Section~\ref{sec04:methoda3}) is the transparent one: it uses a signed inverse-variance portfolio as the representative, which preserves the block-diagonal recovery of De Prado HRP at $\bmu=\mathbf{1}$ and is provably stable for arbitrary signal vectors. HRP-$\Sigma\mu$ (Section~\ref{sec04b:hrpsigmamu}) is the strongest tree method we know: it uses a recursive mean--variance optimum as the representative and rescues the natural but sign-pathological recursion of Appendix~\ref{app:a1_pathology} by replacing sum-to-one with $L^1$ normalisation. Both methods cost $O(N^2)$, both nest HRP at their $\gamma=0$, $\bmu=\mathbf{1}$ limits, and both carry an explicit signal through the tree in a direction-consistent way.

The section closes with two short remarks that tie the construction to the rest of the paper: why iterating the tree pass diverges (which motivates CRISP in Section~\ref{sec05:crisp}), and how the tree-based methods relate to the negative result on sum-normalised recursive MVO that is documented in full in Appendix~\ref{app:a1_pathology}.

\subsection{The \texorpdfstring{$2\times 2$}{2x2} between-branch system}\label{sec04:betweenbranch}

Fix a correlation dendrogram $\mathcal{T}$ on $N$ assets and an internal node with children $L, R$ of sizes $|L|$ and $|R|$. A hierarchical allocator assigns scalar budget fractions $\alpha_L, \alpha_R$ at the node and then recurses into each child. The final weight placed on leaf $i \in L$ is $\alpha_L \cdot \hat w_{L,i}$, where $\hat w_L \in \mathbb{R}^{|L|}$ is a \emph{representative portfolio} for branch $L$, and symmetrically for $R$. The representative is the one-dimensional summary of the branch that the parent node sees; it is the object whose choice we study below.

Given $\hat w_L$ and $\hat w_R$, the node-level mean--variance problem on the two scalar unknowns $(\alpha_L, \alpha_R)$ is
\begin{equation}\label{eq04:nodal_mvo}
\min_{\alpha_L,\alpha_R}\;
\tfrac{1}{2}
\begin{pmatrix}\alpha_L & \alpha_R\end{pmatrix}
\begin{pmatrix} v_L & \gamma c \\ \gamma c & v_R \end{pmatrix}
\begin{pmatrix}\alpha_L \\ \alpha_R\end{pmatrix}
\;-\;
\begin{pmatrix}s_L & s_R\end{pmatrix}
\begin{pmatrix}\alpha_L \\ \alpha_R\end{pmatrix},
\end{equation}
with scalar cluster statistics
\begin{equation}\label{eq04:cluster_stats}
v_L \;=\; \hat w_L^{\top}\bSigma_{LL}\hat w_L,\qquad
s_L \;=\; \hat w_L^{\top}\bmu_L,\qquad
c \;=\; \hat w_L^{\top}\bSigma_{LR}\hat w_R,
\end{equation}
and their $R$-analogues. The parameter $\gamma \in [0,1]$ controls how much of the between-branch covariance enters the node-level decision: $\gamma = 0$ decouples the children (the HRP regime), $\gamma = 1$ retains the full coupling (the Markowitz regime on the two-dimensional subspace spanned by $\hat w_L, \hat w_R$). Cramer's rule gives the closed-form solution
\begin{equation}\label{eq04:cramer}
\alpha_L^{\mathrm{raw}} \;=\; \frac{v_R\,s_L - \gamma c\,s_R}{v_L v_R - \gamma^2 c^2},\qquad
\alpha_R^{\mathrm{raw}} \;=\; \frac{v_L\,s_R - \gamma c\,s_L}{v_L v_R - \gamma^2 c^2}.
\end{equation}
Equations \eqref{eq04:nodal_mvo}--\eqref{eq04:cramer} are the \emph{same} $2\times 2$ system used by every allocator in this section; they differ only in how the raw budgets are normalised and in how the representative portfolios $\hat w_L, \hat w_R$ are constructed.

\paragraph{The design question.} Everything that follows turns on a single choice: what representative portfolio $\hat w_L$ should the tree use? The representative decides what a branch looks like to its parent. A bad choice makes the aggregate signal $s_L$ collapse under sign cancellation (destabilising Cramer's rule), or loses signal direction during between-branch normalisation, or throws away within-cluster covariance information that is actually available. A good choice does none of these things. The next subsection catalogues three candidates; the two subsequent subsections develop the two that survive.

\subsection{Three candidate representatives}\label{sec04:candidates}

Table~\ref{tab04:representatives} lists three natural choices for $\hat w_L$.

\begin{table}[h]
\centering
\begin{adjustbox}{max width=\textwidth}
\begin{threeparttable}
\caption{Three candidate representative portfolios. Each gives rise to a distinct tree-based allocator via equations~\eqref{eq04:nodal_mvo}--\eqref{eq04:cramer}. Only (iii) and (ii-with-$L^1$) are robust for general $\bmu$.}
\label{tab04:representatives}
\begin{tabular}{llll}
\toprule
 & Representative $\hat w_{L,i}$ & Within-cluster info & Resulting allocator \\
\midrule
(i)   & Flat IVP: $(1/\sigma_{ii}^2)\big/\sum_{j\in L}1/\sigma_{jj}^2$ & diagonal only; signal-blind & HRP (Section~\ref{sec03:hrp}), unstable under mixed-sign $\bmu$ \\
(ii)  & Recursive MVO: bottom-up solve of \eqref{eq04:cramer} & full $\bSigma_{LL}$ & HRP-$\Sigma\mu$ with $L^1$ norm; sum-norm is sign-pathological \\
(iii) & Signed IVP: $\operatorname{sign}(\mu_i)\cdot(1/\sigma_{ii}^2)\big/\sum_{j\in L}1/\sigma_{jj}^2$ & diagonal only; signal-aware via signs & HRP-$\mu$ (Section~\ref{sec04:methoda3}) \\
\bottomrule
\end{tabular}
\end{threeparttable}
\end{adjustbox}
\end{table}

We discuss each candidate in turn.

\paragraph{(i) Flat IVP.} The flat inverse-variance portfolio recovers De Prado HRP at $\gamma=0$, $\bmu=\mathbf{1}$, and is a natural starting point. It fails in general because its aggregate branch signal is not sign-protected.

\begin{proposition}[Flat IVP instability under mixed-sign $\bmu$]\label{prop:a2_instability}
The flat-IVP aggregate signal
\begin{equation}\label{eq04:a2_agg_signal}
s_L^{\mathrm{flat}} \;=\; \frac{\sum_{i \in L}\mu_i/\sigma_{ii}^{2}}{\sum_{j \in L}1/\sigma_{jj}^{2}}
\end{equation}
is not bounded away from zero under heterogeneous signals: for every $\varepsilon > 0$ and every variance profile on $L$ there exists a signal vector $\bmu_L$ with $\max_i |\mu_i| = 1$ such that $|s_L^{\mathrm{flat}}| < \varepsilon$. When $|s_L^{\mathrm{flat}}|$ is of order $\varepsilon$ in \eqref{eq04:cramer}, the numerators are $O(\varepsilon)$ while the denominator is $O(1)$, so the normalised budget $\alpha_L$ is driven by numerical noise. Cascaded through $\log_2 N$ tree levels, the resulting direction error is uncontrolled: empirically (Appendix~\ref{app:supplementary}, Table~\ref{tabE:regimes}) the flat-IVP direction error is $0.84$--$1.00$ uniformly across regimes, i.e.\ the output portfolio is essentially orthogonal to the Markowitz direction.
\end{proposition}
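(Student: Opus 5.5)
The plan separates the statement into its one genuinely mathematical claim and two consequences: the existence of near-cancelling signals, then the effect of a small $s_L^{\mathrm{flat}}$ on \eqref{eq04:cramer}, and finally the empirical cascade, which is a pointer to Appendix~\ref{app:supplementary} and not something to prove.

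For the existence claim, fix the variance profile and set the positive weights $p_i = (1/\sigma_{ii}^2)/\sum_{j\in L}1/\sigma_{jj}^2$. These sum to one, and $s_L^{\mathrm{flat}} = p^\top \bmu_L$ is a linear functional of $\bmu_L$. I would construct a signal with $s_L^{\mathrm{flat}} = 0$ exactly, which is stronger than $|s_L^{\mathrm{flat}}|<\varepsilon$ for every $\varepsilon$. The construction needs $|L|\ge 2$, which I would state as an implicit hypothesis; it is automatic at internal nodes whose children are non-singletons. Pick two indices $a\ne b$ in $L$ and order them so that $p_a \ge p_b$. Set $\mu_b = 1$, $\mu_a = -p_b/p_a \in [-1,0)$, and $\mu_i = 0$ for all other $i$. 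Then $\max_i|\mu_i| = 1$ and $p^\top\bmu_L = 0$. If one prefers a family of signals with every $|\mu_i|$ nonzero, the same idea works by continuity: the set $\{p^\top\mu = 0\}$ is a hyperplane that passes through the interior of the cube $[-1,1]^{|L|}$, because it separates $\mathbf{1}$ from $-\mathbf{1}$. Intersecting it with the face where one coordinate equals $1$ gives a nonempty set, by the intermediate value theorem along a segment on that face.

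For the second claim, I substitute $s_L = O(\varepsilon)$ into \eqref{eq04:cramer} with $v_L, v_R, c, s_R$ held fixed and $v_Lv_R - \gamma^2c^2 > 0$. This positivity is Cauchy--Schwarz for the SPD $2\times 2$ block when $\gamma\le 1$, so the denominator is $O(1)$ and bounded away from zero. The numerator of $\alpha_L^{\mathrm{raw}}$ is $v_R s_L - \gamma c s_R$. It is $O(\varepsilon)$ at $\gamma = 0$, or more generally whenever $s_R$ is itself small or $c$ is negligible. I would state the claim in exactly that regime, and note that in the coupled case it is the sign of $\alpha_L^{\mathrm{raw}}$ that becomes noise-determined. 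The final step is the sensitivity computation. Under sum-to-one normalisation $\alpha_L = \alpha_L^{\mathrm{raw}}/(\alpha_L^{\mathrm{raw}}+\alpha_R^{\mathrm{raw}})$, and a perturbation $\delta$ in $s_L$ of the same order as $s_L$ changes $\alpha_L$ by a relative amount of order one. In particular, it can flip the sign of the branch budget.

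The main obstacle is the phrase ``uncontrolled direction error'' after cascading through $\log_2 N$ levels. I do not think this part admits a clean theorem. I would give a worst-case multiplicative-amplification argument in the spirit of Proposition~\ref{prop:cotton_instability}(ii), applying the order-one relative perturbation at each level. I would then explicitly label the numerical range $0.84$--$1.00$ as the empirical evidence from Table~\ref{tabE:regimes}, not a proved bound.
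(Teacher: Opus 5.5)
Your existence argument is the paper's. The paper uses the same two-index exact zero: $+1$ on the lighter-weighted asset, $-p_b/p_a$ on the heavier one, zeros elsewhere. It makes the same remark that the zero hyperplane passes through the interior of the $\ell^\infty$ ball, and it relies on the same implicit hypothesis $|L|\ge 2$ (written there as $n\ge 2$). One small repair to your all-nonzero variant: the face $\{\mu_k=1\}$ meets $\{p^\top\mu=0\}$ only if $p_k\le 1/2$, because at $\mu_{-k}=-\mathbf{1}$ the functional equals $2p_k-1$. So take $k$ to be the coordinate of smallest weight.

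The weak point is your Cramer step. The paper substitutes \emph{two} near-cancelling children. Since $\bmu_L$ and $\bmu_R$ live on disjoint coordinates, the construction applies to each branch independently, so $s_L$ and $s_R$ are both $O(\varepsilon)$. This has three consequences:
\begin{itemize}
\item both numerators of \eqref{eq04:cramer} are $O(\varepsilon)$, which is what the statement's plural ``numerators'' requires;
\item both raw budgets are $O(\varepsilon)$, so the sum-to-one denominator $\alpha_L^{\mathrm{raw}}+\alpha_R^{\mathrm{raw}}$ is itself $O(\varepsilon)$ with uncontrolled sign;
\item $\alpha_L$ is therefore an order-one, possibly unbounded, ratio fixed by $s_L:s_R$, and an $O(\varepsilon)$ perturbation can move it by $O(1)$ in absolute terms or flip both signs.
\end{itemize}

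Your main line instead perturbs $s_L$ while holding $s_R$ at order one, and that does not produce the claimed instability. At $\gamma=0$ the second numerator $v_L s_R$ is $O(1)$, so $\alpha_L\approx v_R s_L/(v_L s_R)=O(\varepsilon)$. An order-one \emph{relative} change in an $O(\varepsilon)$ budget is a negligible absolute change to the portfolio.

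Your remark about the coupled case is also wrong. There $\alpha_L^{\mathrm{raw}}=-\gamma c\,s_R/(v_Lv_R-\gamma^2c^2)+O(\varepsilon)$, whose sign $-\operatorname{sign}(c\,s_R)$ is determinate, not noise-driven.

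The fix is to make the ``$s_R$ itself small'' sub-case, which you mention only in passing, the main case. Your heuristic handling of the $\log_2 N$ cascade, and your labelling of the $0.84$--$1.00$ range as empirical, match the paper. The paper likewise only asserts propagation to the leaves and does not prove it.
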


\begin{proof}[Proof sketch]
For any variance profile, choose $\mu_i \in \{-1,+1\}$ so that the weighted sum $\sum_i \mu_i/\sigma_{ii}^{2}$ is as close to zero as the grid of sign patterns allows; density gives the $\varepsilon$ bound. Substituting into Cramer's rule~\eqref{eq04:cramer} yields the stated $O(\varepsilon/1)$ scaling. The full proof and explicit constructions are in Appendix~\ref{app:proofs}.
\end{proof}

Proposition~\ref{prop:a2_instability} is a clean structural statement: the flat IVP is signal-blind by construction (it never sees $\bmu$), so the only way $\bmu$ enters the tree pass is through the aggregate signal $s_L$, and that aggregate collapses whenever positive and negative signals coexist inside a branch. Long-only problems do not exhibit the failure, which is why it was not visible in \citet{lopezdeprado2016}; long--short problems do, and they are the rule rather than the exception in practice.

\paragraph{(ii) Recursive MVO.} Using a local mean--variance optimum as the representative is the conceptually strongest choice: every branch sees its own optimum, using the full within-cluster covariance $\bSigma_{LL}$. The construction is only stable, however, if the between-branch normalisation step preserves sign. Sum-to-one normalisation (dividing $\alpha_L^{\mathrm{raw}}, \alpha_R^{\mathrm{raw}}$ by their signed sum) fails: when $\alpha_L^{\mathrm{raw}} + \alpha_R^{\mathrm{raw}} < 0$, dividing flips both signs, and these flips cascade through $\log_2 N$ tree levels. This is the pathology documented in Appendix~\ref{app:a1_pathology} (and recorded there as ``method A1''); it produces negative cosine similarity to the Markowitz direction in $45$--$52\%$ of draws on the synthetic panels of Section~\ref{sec10:experiments}. We do not recommend this construction; we study it in detail in Appendix~\ref{app:a1_pathology} because its failure motivates the $L^1$ fix in HRP-$\Sigma\mu$.

\paragraph{(iii) Signed IVP.} The third candidate keeps the inverse-variance \emph{magnitudes} of (i) but takes the \emph{signs} from $\bmu$. The aggregate signal becomes
\[
s_L^{\mathrm{signed}} \;=\; \frac{\sum_{i \in L}|\mu_i|/\sigma_{ii}^{2}}{\sum_{j \in L}1/\sigma_{jj}^{2}} \;\ge\; 0,
\]
a weighted average of $|\mu_i|$ rather than of $\mu_i$. Sign cancellation cannot drive this quantity to zero. Candidate (iii) is the representative used by HRP-$\mu$ in the next subsection.

The three candidates cleanly separate the two failure modes that a tree-based signal-aware allocator must avoid: aggregate-signal collapse (which kills (i)) and between-branch sign flips (which kill the natural sum-normalised version of (ii)). HRP-$\mu$ handles the former by taking absolute values at the representative level; HRP-$\Sigma\mu$ handles the latter by taking absolute values at the normaliser level.

\subsection{HRP-\texorpdfstring{$\mu$}{μ}: transparent signal-aware HRP}\label{sec04:methoda3}

HRP-$\mu$ is the tree-based allocator built from candidate (iii). It is the simplest non-trivial extension of De Prado HRP that carries an explicit signal, and the only one that recovers HRP exactly at $\bmu=\mathbf{1}$.

\paragraph{Definition.} At every internal node, take the representative to be the signed inverse-variance portfolio,
\begin{equation}\label{eq04:signed_ivp}
\hat w_{L,i}^{\mathrm{signed}} \;=\; \operatorname{sign}(\mu_i)\,\frac{1/\sigma_{ii}^{2}}{\sum_{j\in L}1/\sigma_{jj}^{2}},\qquad i \in L,
\end{equation}
with the convention $\operatorname{sign}(0) = +1$. Solve the $2\times 2$ system~\eqref{eq04:cramer} and normalise $(\alpha_L,\alpha_R)$ to sum to one (both are non-negative because the signed representative makes $s_L, s_R \ge 0$, so there is no sign pathology to fix at the between-branch level). At each leaf $i$, emit the final weight
\[
w_i \;=\; \text{budget}_i \cdot \operatorname{sign}(\mu_i),
\]
where $\text{budget}_i$ is the cumulative product of the non-negative $\alpha$-factors along the root-to-leaf path. The leaf-level sign multiplication is essential: without it, the allocator is long-only regardless of $\bmu$ (because the signed IVP is sign-blind at the branch level).

\paragraph{Properties.}

\begin{proposition}[De Prado recovery]\label{prop:a3_recovery}
At $\gamma = 0$ and $\bmu = \mathbf{1}$, HRP-$\mu$ produces exactly the \citet{lopezdeprado2016} HRP weights.
\end{proposition}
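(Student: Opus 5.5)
The plan is to show, node by node along the shared dendrogram $\mathcal{T}$, that HRP-$\mu$ at $\gamma=0$, $\bmu=\mathbf{1}$ emits the same between-branch fractions $(\alpha_L,\alpha_R)$ as the HRP split~\eqref{eq03:hrp_split}. The final weights then agree because both methods define leaf weights as root-to-leaf products of these fractions. The leaf sign factor does not interfere: with $\bmu=\mathbf{1}$, $\operatorname{sign}(\mu_i)=+1$ for every $i$, so $w_i=\text{budget}_i$. Both methods also use the same correlation-distance clustering and quasi-diagonalisation, and neither step reads $\bmu$ or $\gamma$. So the tree is identical, and only the per-node fractions need comparing.

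Fix an internal node with children $L,R$. With $\bmu=\mathbf{1}$ the signed IVP~\eqref{eq04:signed_ivp} reduces entrywise to the flat IVP~\eqref{eq03:flat_ivp}. Hence $v_L=\hat w_L^\top\bSigma_{LL}\hat w_L$ in~\eqref{eq04:cluster_stats} is exactly HRP's cluster variance~\eqref{eq03:cluster_var}. The aggregate signal is $s_L=\hat w_L^\top\mathbf{1}_L=\sum_{i\in L}\hat w_{L,i}=1$, since the flat IVP sums to one, and likewise $s_R=1$. Setting $\gamma=0$ in Cramer's rule~\eqref{eq04:cramer} kills the coupling terms and gives $\alpha_L^{\mathrm{raw}}=v_R s_L/(v_Lv_R)=1/v_L$ and $\alpha_R^{\mathrm{raw}}=1/v_R$. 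Both are positive because $\bSigma_{LL}$ is SPD and $\hat w_L\neq 0$. Normalising to sum to one then yields $\alpha_L=(1/v_L)/(1/v_L+1/v_R)$, which is~\eqref{eq03:hrp_split} verbatim.

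An induction from the root then finishes the argument. The root budget is $1$ in both methods, and each child inherits the parent budget times the identical fraction. The budgets therefore agree at every node, and in particular at every leaf. I also need to confirm that HRP-$\mu$, like HRP, recomputes its representative from scratch on each child's index set, so that the cluster statistics at deeper nodes coincide as well.

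I expect the main obstacle to be one of conventions rather than mathematics. I need to check that the notation $1/\sigma_{ii}^2$ in~\eqref{eq04:signed_ivp} means $1/\Sigma_{ii}$, the reciprocal variance, consistent with~\eqref{eq03:flat_ivp}. I also need to check that the normalisation in HRP-$\mu$ is the plain sum-to-one rule, not an $L^1$ variant. Both make no difference here, since all the quantities involved are positive.
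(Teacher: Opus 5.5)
Your proposal is correct and follows the paper's proof. The signed IVP collapses to the flat IVP at $\bmu=\mathbf{1}$, the $\gamma=0$ Cramer solve decouples, sum-to-one normalisation reproduces the inverse-cluster-variance split, and the leaf sign factor is $+1$. Two of your steps make explicit what the paper leaves implicit: the observation that $s_L=\hat w_L^\top\mathbf{1}_L=1$ (needed to get $\alpha_L^{\mathrm{raw}}=1/v_L$ rather than $s_L/v_L$), and the budget induction down the shared tree.
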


\begin{proof}[Proof sketch]
At $\bmu=\mathbf{1}$, $\operatorname{sign}(\mu_i)=+1$ everywhere, so the signed IVP~\eqref{eq04:signed_ivp} reduces to the flat IVP. At $\gamma=0$ the $2\times 2$ system decouples and Cramer's rule gives $\alpha_L \propto 1/v_L$, $\alpha_R \propto 1/v_R$; after normalisation $\alpha_L = v_R/(v_L+v_R)$, the inverse cluster-variance split of De Prado HRP. The leaf-level sign multiplier is $+1$, so no direction is added. Full proof in Appendix~\ref{app:proofs}.
\end{proof}

\begin{proposition}[Aggregate-signal stability]\label{prop:a3_stability}
For every internal node $L$ of the tree,
\[
s_L^{\mathrm{HRP\text{-}}\mu} \;=\; \frac{\sum_{i \in L}|\mu_i|/\sigma_{ii}^{2}}{\sum_{j\in L}1/\sigma_{jj}^{2}} \;\ge\; 0,
\]
with equality iff $\mu_i = 0$ for every $i \in L$. Hence the Cramer's rule denominator in~\eqref{eq04:cramer} cannot be driven to zero by sign cancellation in $\bmu$.
\end{proposition}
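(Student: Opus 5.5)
The proof is a direct computation of $s_L = \hat w_L^{\top}\bmu_L$ using the signed representative~\eqref{eq04:signed_ivp}, followed by a short argument about the Cramer denominator.

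\emph{Step 1: compute the aggregate signal.} Substituting~\eqref{eq04:signed_ivp} into the definition~\eqref{eq04:cluster_stats} gives
\[
s_L \;=\; \sum_{i\in L}\operatorname{sign}(\mu_i)\,\mu_i\,\frac{1/\sigma_{ii}^2}{\sum_{j\in L}1/\sigma_{jj}^2}.
\]
With the convention $\operatorname{sign}(0)=+1$, we have $\operatorname{sign}(\mu_i)\mu_i=|\mu_i|$ in every case, including $\mu_i=0$. This is the displayed formula.

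\emph{Step 2: non-negativity and the equality case.} The weights $\pi_i = (1/\sigma_{ii}^2)/\sum_j 1/\sigma_{jj}^2$ are strictly positive because $\Sigma$ is SPD, so every $\sigma_{ii}>0$. Hence $s_L$ is a strictly-positive-weighted average of the $|\mu_i|$. It is therefore $\ge 0$, and it vanishes iff every $|\mu_i|=0$. A sharper form follows as well: $s_L \ge \min_{i\in L}|\mu_i|$, and $s_L \ge \pi_{\min}\max_{i\in L}|\mu_i|$. This gives a quantitative lower bound that rules out the cancellation constructed in Proposition~\ref{prop:a2_instability}.

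\emph{Step 3: the Cramer denominator.} This is the only delicate point, because the denominator $v_Lv_R-\gamma^2c^2$ in~\eqref{eq04:cramer} does not involve $\bmu$ at all. I would read the claim as follows: the quantities through which $\bmu$ enters the node system, namely the right-hand side $(s_L,s_R)$, are not driven to zero by sign flips. The denominator is then controlled independently of $\bmu$ by positive-definiteness.

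To show this, note that the $2\times 2$ matrix with entries $v_L, c, c, v_R$ is the Gram matrix $W^\top \Sigma_{(L\cup R)} W$, where $W$ is the block-diagonal matrix with columns $\hat w_L$ and $\hat w_R$. These columns have disjoint supports, so they are linearly independent, and the Gram matrix is SPD. By Cauchy--Schwarz, $c^2 < v_Lv_R$, so $v_Lv_R-\gamma^2c^2 \ge (1-\gamma^2)v_Lv_R + \gamma^2(v_Lv_R-c^2) > 0$ for all $\gamma\in[0,1]$.

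The sign-flipping in $\bmu$ only changes the signs of the entries of $\hat w_L$, which leaves the representative nonzero. So the only way the denominator can approach zero is an ill-conditioned $\Sigma$ itself, not sign cancellation in $\bmu$. Stating this interpretation explicitly is the main obstacle I expect, since the literal wording conflates the numerator and denominator roles.
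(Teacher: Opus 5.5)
Your proposal is correct. Steps 1--2 are exactly the paper's proof. You substitute the signed IVP, use $\operatorname{sign}(\mu_i)\mu_i=|\mu_i|$ (including the $\operatorname{sign}(0)=+1$ case), and read off non-negativity and the equality case from the strictly positive weights. Your extra bounds $s_L\ge\min_i|\mu_i|$ and $s_L\ge\pi_{\min}\max_i|\mu_i|$ are also correct, and they make the contrast with Proposition~\ref{prop:a2_instability} quantitative.

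On the final clause you read the wording correctly. The paper's own proof concludes only that, whenever $\bmu$ is not identically zero on $L\cup R$, at least one of $s_L,s_R$ is strictly positive. So sign cancellation cannot collapse the Cramer \emph{numerator}, and the paper never argues about $v_Lv_R-\gamma^2c^2$ here.

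Your Gram-matrix argument for the denominator is therefore a correct supplement rather than a reconstruction of the paper's step. The columns $\hat w_L,\hat w_R$ have disjoint supports, so $W^\top\Sigma W$ is SPD, $c^2<v_Lv_R$, and $\Delta>0$ for all $\gamma\in[0,1]$. The paper only gestures at this via Cauchy--Schwarz in the proof of Proposition~\ref{prop:a2_instability}.

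One small correction: for HRP-$\mu$ the denominator is not independent of $\bmu$, since $v_L$, $v_R$ and $c$ are built from the signs of the $\mu_i$. Your last paragraph already repairs this, because the SPD argument holds for every sign pattern. In fact, with only finitely many sign patterns, $\Delta/(v_Lv_R)\ge 1-\max_{\text{patterns}} c^2/(v_Lv_R)>0$ is bounded below uniformly in $\bmu$.
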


\begin{proof}[Proof sketch]
The inner product $\hat w_L^{\mathrm{signed}\top}\bmu_L$ replaces $\mu_i$ with $\operatorname{sign}(\mu_i)\mu_i = |\mu_i|$. The resulting sum is a convex combination of non-negative terms. Full proof in Appendix~\ref{app:proofs}.
\end{proof}

Proposition~\ref{prop:a3_stability} is the structural counterpart of Proposition~\ref{prop:a2_instability}: replacing the flat IVP with the signed IVP removes the one mechanism by which the aggregate signal could collapse. The price is that $\hat w_L^{\mathrm{signed}}$ is no longer a portfolio in the ordinary sense (its entries have mixed signs); it is an object whose role is purely to be the one-dimensional summary that the parent sees.

\begin{proposition}[Hedging awareness]\label{prop:a3_hedging}
Fix a branch $L$ and two assets $i,j \in L$ with $\mu_i > 0 > \mu_j$ and $\bSigma_{ij} > 0$. The $(i,j)$ contribution to the cluster variance $v_L^{\mathrm{HRP\text{-}}\mu}$ is strictly smaller than the corresponding contribution to $v_L^{\mathrm{flat}}$; in fact the two have opposite signs. If the branch contains no compensating structure, $v_L^{\mathrm{HRP\text{-}}\mu} < v_L^{\mathrm{flat}}$.
\end{proposition}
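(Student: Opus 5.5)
The plan is to expand both cluster variances as explicit quadratic forms in the inverse-variance weights and compare them term by term. Write $\pi_i = (1/\sigma_{ii}^2)/\sum_{k\in L}1/\sigma_{kk}^2 > 0$ and $s_i = \operatorname{sign}(\mu_i)$. Then $\hat w_{L,i}^{\mathrm{flat}} = \pi_i$ and $\hat w_{L,i}^{\mathrm{signed}} = s_i\pi_i$. By \eqref{eq04:cluster_stats},
\[
v_L^{\mathrm{flat}} = \sum_{k,l\in L}\pi_k\pi_l\Sigma_{kl},\qquad
v_L^{\mathrm{HRP\text{-}}\mu} = \sum_{k,l\in L}s_k s_l\,\pi_k\pi_l\Sigma_{kl}.
\]

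\textbf{Step 1 (the $(i,j)$ contribution).} The $(i,j)$ contribution is the symmetric pair $2\pi_i\pi_j\Sigma_{ij}$ in the flat case and $2s_is_j\pi_i\pi_j\Sigma_{ij}$ in the signed case. Under the hypotheses $\mu_i>0>\mu_j$ we have $s_is_j=-1$. Since $\Sigma_{ij}>0$ and $\pi_i,\pi_j>0$, the flat contribution is strictly positive and the signed one is its exact negative, hence strictly smaller. This settles the first two assertions.

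\textbf{Step 2 (the whole-branch inequality).} The difference of the two variances is
\[
v_L^{\mathrm{flat}} - v_L^{\mathrm{HRP\text{-}}\mu} = \sum_{k\neq l}(1-s_ks_l)\,\pi_k\pi_l\Sigma_{kl} = 2\sum_{(k,l):\,s_k\neq s_l}\pi_k\pi_l\Sigma_{kl}\cdot 2 \cdot \tfrac12 .
\]
More precisely, it is $4\sum_{k<l,\ s_k\ne s_l}\pi_k\pi_l\Sigma_{kl}$. The diagonal terms and all same-sign pairs cancel identically. So the difference is a positive-weighted sum of the covariances across opposite-sign pairs only.

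I would make the vague phrase ``no compensating structure'' precise as the condition that this cross-sign covariance sum is positive. A clean sufficient condition is that every opposite-sign pair has $\Sigma_{kl}\ge 0$, with at least one (namely $(i,j)$) strictly positive. Under that condition the strict inequality $v_L^{\mathrm{HRP\text{-}}\mu}<v_L^{\mathrm{flat}}$ follows immediately.

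\textbf{Main obstacle.} The only real difficulty is the interpretive one: choosing a formalisation of ``compensating structure'' that is faithful to the statement. Under the formalisation above, the argument is a two-line sign computation.

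I would add a remark that the condition cannot be dropped. A negatively correlated opposite-sign pair with large weight could outweigh the $(i,j)$ term and reverse the inequality, and a two-pair example makes this explicit. I would also note that both variances are strictly positive, since $\Sigma$ is SPD and neither representative is zero.
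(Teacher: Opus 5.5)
Your proposal is correct and is essentially the paper's proof: the paper likewise splits $v_L$ into diagonal terms (identical because $\operatorname{sign}(\mu_k)^2=1$) and cross terms, obtains $v_L^{\mathrm{HRP\text{-}}\mu}-v_L^{\mathrm{flat}}=-4\sum_{k<l,\ s_k\ne s_l}\pi_k\pi_l\Sigma_{kl}$ in your notation, and reads ``no compensating structure'' as you do, namely that no other sign-discordant pair has $\Sigma_{kl}<0$. The only blemish is your first display in Step~2, whose factor $2\cdot 2\cdot\tfrac12$ is correct only if the sum runs over ordered pairs, so keep your ``more precisely'' version with coefficient $4$ over $k<l$.
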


\begin{proof}[Proof sketch]
The $(i,j)$ cross term in $v_L^{\mathrm{flat}}$ is $2\,\hat w_i^{\mathrm{flat}}\hat w_j^{\mathrm{flat}}\bSigma_{ij} > 0$. In $v_L^{\mathrm{HRP\text{-}}\mu}$ the same cross term carries the extra factor $\operatorname{sign}(\mu_i)\operatorname{sign}(\mu_j) = -1$ and is therefore strictly negative. Diagonal terms are unchanged because $\operatorname{sign}(\mu_i)^2 = 1$. Full proof in Appendix~\ref{app:proofs}.
\end{proof}

Proposition~\ref{prop:a3_hedging} is the reason HRP-$\mu$ deserves to be called \emph{hedging-aware}. At every internal node where the signal prescribes an opposite-direction pair on positively correlated assets, the signed IVP cluster variance automatically exploits the hedge, whereas the flat IVP variance does not. This is the hierarchical analogue of cash-neutral risk parity.

\subsubsection{Signed Hierarchical Signal Parity (HSP)}\label{sec04:hsp}

At $\gamma = 0$ the cross-branch term drops out of \eqref{eq04:cramer} and the HRP-$\mu$ budget split collapses to a scalar ratio of signal-to-variance statistics,
\begin{equation}\label{eq04:hsp_split}
\alpha_L^{\mathrm{HSP}} \;=\;
\frac{s_L^{\mathrm{HRP\text{-}}\mu}/v_L^{\mathrm{HRP\text{-}}\mu}}
     {s_L^{\mathrm{HRP\text{-}}\mu}/v_L^{\mathrm{HRP\text{-}}\mu} + s_R^{\mathrm{HRP\text{-}}\mu}/v_R^{\mathrm{HRP\text{-}}\mu}}.
\end{equation}
We call the resulting allocator \emph{signed Hierarchical Signal Parity}, abbreviated HSP. It is the $\gamma=0$ endpoint of HRP-$\mu$, reduces to De Prado HRP at $\bmu=\mathbf{1}$ by Proposition~\ref{prop:a3_recovery}, and is a drop-in upgrade for practitioners currently running HRP who now hold a signal: same dendrogram, same $O(N^2)$ cost, same block-diagonal interpretation, with the signal carried stably through the tree. We return to HSP as a standalone construction worth deploying independently of the $\gamma > 0$ pass in Section~\ref{sec11:discussion}.

\subsubsection{Cost comparison}\label{sec04:cost}

HRP-$\mu$'s central cost property is that it compresses the $|L|\times|R|$ cross-block covariance $\bSigma_{LR}$ into a single scalar $c = \hat w_L^{\mathrm{signed}\top}\bSigma_{LR}\hat w_R^{\mathrm{signed}}$. Table~\ref{tab04:cost_compare} places it beside HRP, Cotton, and HRP-$\Sigma\mu$.

\begin{table}[h]
\centering
\begin{adjustbox}{max width=\textwidth}
\begin{threeparttable}
\caption{Cost of the hierarchical allocators on a balanced dendrogram over $N$ assets. $D$ and $B$ denote the block-diagonal and off-diagonal parts of $\bSigma$ in the Cotton notation of Section~\ref{sec03:cotton}.}
\label{tab04:cost_compare}
\begin{tabular}{lll}
\toprule
Method & Cross-branch information & Cost \\
\midrule
HRP ($\gamma = 0$) \citep{lopezdeprado2016} & none (block diagonal) & $O(N^2)$ \\
Cotton ($\gamma > 0$) \citep{cotton2024} & $D^{-1}B^{\top}$ (block inversion) & $O(N^3/6)$ \\
HRP-$\mu$ ($\gamma > 0$, Section~\ref{sec04:methoda3}) & $\hat w_L^{\mathrm{signed}\top}\bSigma_{LR}\hat w_R^{\mathrm{signed}}$ (scalar) & $O(N^2)$ \\
HRP-$\Sigma\mu$ ($\gamma > 0$, Section~\ref{sec04b:hrpsigmamu}) & $\hat w_L^{\mathrm{mvo}\top}\bSigma_{LR}\hat w_R^{\mathrm{mvo}}$ (scalar) & $O(N^2)$ \\
\bottomrule
\end{tabular}
\begin{tablenotes}
\footnotesize
\item HRP-$\mu$ and HRP-$\Sigma\mu$ share an $O(N^2)$ cost at the price of compressing $\bSigma_{LR}$ into a single scalar at every node. They are \emph{different} $\gamma$-continua from Cotton's, not faster implementations of the same object; the approximation gap is documented empirically in Section~\ref{sec10:experiments}.
\end{tablenotes}
\end{threeparttable}
\end{adjustbox}
\end{table}

\subsection{HRP-\texorpdfstring{$\Sigma\mu$}{Σμ}: recursive MVO with \texorpdfstring{$L^1$}{L1} normalisation}\label{sec04b:hrpsigmamu}

HRP-$\mu$'s signed IVP representative is cheap and stable but uses only the marginal variances $\sigma_{ii}$ within each cluster. The full within-cluster covariance $\bSigma_{LL}$ is available and, empirically, worth exploiting. The natural construction is candidate (ii) of Section~\ref{sec04:candidates}: bottom-up recursive MVO, with each node's representative set to the output of the solve one level down. The obstacle is that candidate (ii) is sign-pathological under sum-to-one normalisation (Appendix~\ref{app:a1_pathology}). HRP-$\Sigma\mu$ resolves this by replacing sum-to-one normalisation with $L^1$ normalisation at every node. The change is a single line of code; its consequence is a $0\%$ rather than $45$--$52\%$ negative-cosine rate on the structural synthetic panels.

\paragraph{Motivation for $L^1$.} The failure mechanism in Appendix~\ref{app:a1_pathology} is precise: when $\alpha_L^{\mathrm{raw}} + \alpha_R^{\mathrm{raw}} < 0$, dividing by the signed sum flips both signs; the flips cascade through $O(\log N)$ tree levels and, by parity, produce an output portfolio that is either parallel or \emph{anti}parallel to the ``correct'' direction in a fraction of cases that approaches $1/2$ under noise. The fix has to keep the normaliser strictly positive. The minimal modification is to divide by $|\alpha_L^{\mathrm{raw}}| + |\alpha_R^{\mathrm{raw}}|$ instead.

\paragraph{Algorithm.} Algorithm~\ref{alg04b:hrpsm} states the method in full. It is identical to the natural sum-normalised recursion except in the normaliser.

\begin{algorithm}[t]
\caption{\textsc{HRP-$\Sigma\mu$}$(\bSigma, \bmu, \mathcal{T}, \gamma)$: recursive MVO tree pass with $L^1$ normalisation.}
\label{alg04b:hrpsm}
\begin{algorithmic}[1]
\Require SPD covariance $\bSigma \in \mathbb{R}^{N\times N}$; signal $\bmu \in \mathbb{R}^N$; binary tree $\mathcal{T}$; coupling $\gamma \in [0,1]$.
\Ensure Weight vector $w \in \mathbb{R}^N$.
\Function{Recurse}{node $n$}
  \If{$n$ is a leaf (asset $i$)}
    \State \Return $(\hat w_n = 1,\; v_n = \sigma_{ii}^2,\; s_n = \mu_i)$
  \EndIf
  \State $(\hat w_L, v_L, s_L) \gets \Call{Recurse}{\text{left}(n)}$
  \State $(\hat w_R, v_R, s_R) \gets \Call{Recurse}{\text{right}(n)}$
  \State $c \gets \hat w_L^{\top}\bSigma_{LR}\,\hat w_R$ \Comment{cross-block scalar}
  \State $\Delta \gets v_L\,v_R - \gamma^2 c^2$
  \State $\alpha_L^{\mathrm{raw}} \gets (v_R\,s_L - \gamma\,c\,s_R)/\Delta$ \Comment{Cramer's rule~\eqref{eq04:cramer}}
  \State $\alpha_R^{\mathrm{raw}} \gets (v_L\,s_R - \gamma\,c\,s_L)/\Delta$
  \State $Z \gets |\alpha_L^{\mathrm{raw}}| + |\alpha_R^{\mathrm{raw}}|$ \Comment{$L^1$ normaliser, cf.~\eqref{eq04b:l1_norm}}
  \State $\alpha_L \gets \alpha_L^{\mathrm{raw}}/Z$; \quad $\alpha_R \gets \alpha_R^{\mathrm{raw}}/Z$
  \State $\hat w_n \gets (\alpha_L\,\hat w_L,\;\alpha_R\,\hat w_R)$ \Comment{stacked representative}
  \State $v_n \gets \hat w_n^{\top}\bSigma_{nn}\,\hat w_n$; \quad $s_n \gets \hat w_n^{\top}\bmu_n$
  \State \Return $(\hat w_n, v_n, s_n)$
\EndFunction
\State \Return $\Call{Recurse}{\mathrm{root}(\mathcal{T})}$
\end{algorithmic}
\end{algorithm}

Explicitly, the $L^1$ normalisation step is
\begin{equation}\label{eq04b:l1_norm}
\alpha_L \;=\; \frac{\alpha_L^{\mathrm{raw}}}{|\alpha_L^{\mathrm{raw}}| + |\alpha_R^{\mathrm{raw}}|},
\qquad
\alpha_R \;=\; \frac{\alpha_R^{\mathrm{raw}}}{|\alpha_L^{\mathrm{raw}}| + |\alpha_R^{\mathrm{raw}}|},
\end{equation}
so that $|\alpha_L|+|\alpha_R|=1$ at every node (not $\alpha_L + \alpha_R = 1$). The output portfolio of the algorithm is the stacked vector returned at the root. It is not ``fully invested'' in the classical sum-to-one sense; a mean--variance portfolio with signed weights has no financial reason to sum to one, and one rescales by leverage or volatility budget downstream.

\paragraph{$L^1$ lemmas.} Two lemmas capture what $L^1$ normalisation does and does not do. The first is a statement about direction, the second about sign; together with the corollary below, they explain why the pathology is eliminated completely (not just reduced) by the switch from sum-to-one to $L^1$.

\begin{lemma}[Ray invariance under child rescaling]\label{lem:hrpsm_scale}
Fix a representative $\hat w_R$ for branch $R$ and let $k > 0$. Replace the left representative $\hat w_L$ by $k\,\hat w_L$, keeping $\hat w_R$ fixed. Then the combined node-level portfolio $(\alpha_L\,\hat w_L,\;\alpha_R\,\hat w_R)$ produced by \eqref{eq04:cramer}--\eqref{eq04b:l1_norm} lies on the same ray as before the rescaling, i.e.\ is invariant up to a strictly positive scalar. \emph{The invariance is a statement about the ray, not about the individual coordinates:} $\alpha_L^{\mathrm{raw}}$ and $\alpha_R^{\mathrm{raw}}$ scale differently under $\hat w_L \to k\hat w_L$ (as $1/k$ and as $1$ respectively), and the $L^1$ denominator does not absorb the scalar uniformly.
\end{lemma}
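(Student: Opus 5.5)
The plan is to track how each scalar cluster statistic in~\eqref{eq04:cluster_stats} transforms under $\hat w_L \to k\hat w_L$, push these through Cramer's rule~\eqref{eq04:cramer}, and observe that the \emph{unnormalised} stacked vector $(\alpha_L^{\mathrm{raw}}\hat w_L,\;\alpha_R^{\mathrm{raw}}\hat w_R)$ is exactly invariant. The $L^1$ step~\eqref{eq04b:l1_norm} then only divides this fixed vector by a strictly positive number, which changes the scale but not the ray.

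\textbf{Step 1 (homogeneity of the statistics).} By bilinearity of~\eqref{eq04:cluster_stats}, the rescaling sends $v_L \mapsto k^2 v_L$, $s_L \mapsto k\,s_L$ and $c \mapsto k\,c$. The quantities $v_R$ and $s_R$ are unchanged. Consequently $\Delta = v_Lv_R - \gamma^2c^2 \mapsto k^2\Delta$.

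\textbf{Step 2 (raw budgets).} Substituting into~\eqref{eq04:cramer}, the numerator of $\alpha_L^{\mathrm{raw}}$ becomes $k(v_R s_L - \gamma c s_R)$, so $\alpha_L^{\mathrm{raw}} \mapsto \alpha_L^{\mathrm{raw}}/k$. The numerator of $\alpha_R^{\mathrm{raw}}$ becomes $k^2(v_L s_R - \gamma c s_L)$, so $\alpha_R^{\mathrm{raw}}$ is unchanged. This is exactly the asymmetric scaling flagged in the statement. Hence $\alpha_L^{\mathrm{raw}}\cdot(k\hat w_L) = \alpha_L^{\mathrm{raw,old}}\hat w_L$, and the stacked raw vector is identical before and after the rescaling.

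\textbf{Step 3 (normalisation).} The new normaliser is $Z' = |\alpha_L^{\mathrm{raw}}|/k + |\alpha_R^{\mathrm{raw}}|$, whereas the old one is $Z = |\alpha_L^{\mathrm{raw}}| + |\alpha_R^{\mathrm{raw}}|$. The new combined portfolio is therefore $(Z/Z')$ times the old one, and $Z/Z' > 0$. This gives the ray invariance. It also shows explicitly why no individual coordinate, and no common scalar, is preserved unless $k=1$: $Z'$ mixes a $1/k$ term and a $k$-free term.

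\textbf{Main obstacle (well-posedness).} The main obstacle is not the algebra but the conditions under which the node solve is defined at all. I will need to confirm $\Delta \neq 0$ and $Z > 0$.

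For $\Delta$: since $\hat w_L$ and $\hat w_R$ are nonzero and supported on disjoint index blocks, their zero-padded embeddings in $\R^N$ are linearly independent. The $2\times2$ Gram matrix $\bigl(\begin{smallmatrix} v_L & c\\ c & v_R\end{smallmatrix}\bigr)$ under the SPD $\bSigma$ is therefore positive definite. This gives $v_Lv_R > c^2 \ge \gamma^2c^2$ for $\gamma\in[0,1]$, so $\Delta > 0$ both before and after rescaling (positivity is preserved by the factor $k^2$).

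For $Z > 0$: this requires $(s_L,s_R)\neq 0$. That is the standing nondegeneracy assumption of the algorithm (otherwise the raw system returns the zero vector, and no ray is defined either before or after rescaling). It is invariant under the rescaling because $s_L \mapsto k s_L$ with $k>0$. I will state this hypothesis explicitly and then the three steps close the argument.
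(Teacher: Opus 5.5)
Your proposal is correct and matches the paper's proof step for step: the homogeneity of $(v_L,s_L,c)$, the asymmetric $1/k$ versus $1$ scaling of the raw budgets, exact invariance of the unnormalised stacked vector, and division by the positive normaliser $Z'$. Your explicit well-posedness check ($\Delta>0$ from the positive-definite Gram matrix of the disjointly supported representatives, $Z>0$ from $(s_L,s_R)\neq 0$) is a small addition the paper leaves implicit. One minor correction: your aside that the scale changes ``unless $k=1$'' should also exempt the case $\alpha_L^{\mathrm{raw}}=0$, where $Z'=Z$.
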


\begin{proof}[Proof sketch]
Under $\hat w_L \to k\hat w_L$ with $\hat w_R$ fixed, the cluster statistics scale as $v_L \to k^2 v_L$, $s_L \to k\,s_L$, $c \to k\,c$, while $v_R, s_R$ are unchanged. Substituting into Cramer's rule, the determinant $v_L v_R - \gamma^2 c^2$ scales by $k^2$; the numerator of $\alpha_L^{\mathrm{raw}}$ scales by $k$ and the numerator of $\alpha_R^{\mathrm{raw}}$ by $k^2$. Hence $\alpha_L^{\mathrm{raw}} \to \alpha_L^{\mathrm{raw}}/k$ and $\alpha_R^{\mathrm{raw}} \to \alpha_R^{\mathrm{raw}}$. The stacked vector $(\alpha_L^{\mathrm{raw}}\cdot k\hat w_L,\;\alpha_R^{\mathrm{raw}}\hat w_R)$ is therefore pointwise invariant before normalisation. After normalisation by the positive denominator $|\alpha_L^{\mathrm{raw}}|/k + |\alpha_R^{\mathrm{raw}}|$, the direction is preserved and the scale changes by that same denominator. Full proof in Appendix~\ref{app:proofs}.
\end{proof}

\begin{lemma}[Sign preservation under $L^1$ normalisation]\label{lem:hrpsm_sign}
The $L^1$ denominator $|\alpha_L^{\mathrm{raw}}| + |\alpha_R^{\mathrm{raw}}|$ is strictly positive whenever $(\alpha_L^{\mathrm{raw}}, \alpha_R^{\mathrm{raw}}) \ne (0,0)$. Therefore \eqref{eq04b:l1_norm} preserves $\operatorname{sign}(\alpha_L^{\mathrm{raw}})$ and $\operatorname{sign}(\alpha_R^{\mathrm{raw}})$ at every node of the tree. In contrast, sum-to-one normalisation divides by $\alpha_L^{\mathrm{raw}} + \alpha_R^{\mathrm{raw}}$, which is negative whenever $\alpha_L^{\mathrm{raw}} + \alpha_R^{\mathrm{raw}} < 0$, and so flips both signs in that case.
\end{lemma}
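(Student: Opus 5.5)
The plan is to split Lemma~\ref{lem:hrpsm_sign} into two elementary claims and prove each directly, followed by a short contrast with sum-to-one normalisation.

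\textbf{Positivity of the $L^1$ denominator.} The quantity $Z = |\alpha_L^{\mathrm{raw}}| + |\alpha_R^{\mathrm{raw}}|$ is a sum of two non-negative reals. It vanishes only if both terms vanish. So under the hypothesis $(\alpha_L^{\mathrm{raw}}, \alpha_R^{\mathrm{raw}}) \ne (0,0)$ we get $Z > 0$.

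Before using this, I will check that the raw budgets are well defined. Their common denominator in~\eqref{eq04:cramer} is $\Delta = v_L v_R - \gamma^2 c^2$. By Cauchy--Schwarz in the $\bSigma$-inner product on the stacked vectors $(\hat w_L, 0)$ and $(0, \hat w_R)$, we have $c^2 \le v_L v_R$. This inequality is strict when the representatives are nonzero, since $\bSigma$ is SPD and the two stacked vectors have disjoint supports and so are linearly independent. Combined with $\gamma \le 1$, this gives $\Delta > 0$.

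\textbf{Sign preservation.} Dividing a real number by a strictly positive scalar leaves its sign unchanged, including the zero case. Hence $\operatorname{sign}(\alpha_k) = \operatorname{sign}(\alpha_k^{\mathrm{raw}})$ for $k \in \{L,R\}$ at the given node. Since Algorithm~\ref{alg04b:hrpsm} applies the same normalisation at every internal node, an induction on the tree from the leaves upward extends the statement to every node. The inductive hypothesis needed at each node is only that the children's representatives are nonzero, which keeps $\Delta > 0$ as above.

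The base of the induction is a leaf, where $\hat w = 1$. At an internal node the stacked representative $(\alpha_L \hat w_L, \alpha_R \hat w_R)$ is nonzero provided not both $\alpha$'s vanish. That is exactly the hypothesis of the lemma at that node, so the lemma's conditional form is what makes the induction go through.

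\textbf{Contrast with sum-to-one.} If $S = \alpha_L^{\mathrm{raw}} + \alpha_R^{\mathrm{raw}} < 0$, then $\alpha_k^{\mathrm{raw}}/S$ has the opposite sign to $\alpha_k^{\mathrm{raw}}$ for each nonzero component. So both signs flip simultaneously. A concrete witness is $s_L, s_R < 0$ with $\gamma = 0$: then both raw budgets are negative and the sum-to-one weights are both positive, reversing the stacked representative.

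\textbf{Main obstacle.} None of this is hard. The only delicate point is stating precisely what ``at every node'' requires, namely nonzero child representatives so that the Cramer step is defined. I will handle that via the Cauchy--Schwarz argument for $\Delta > 0$ rather than leave it implicit.
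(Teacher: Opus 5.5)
Your proof is correct and follows the paper's argument. Like the paper, you use that $Z$ is a sum of absolute values, so it vanishes only when both raw budgets vanish; that division by a strictly positive scalar preserves sign; and that a negative algebraic sum flips both signs. Your extra checks (that $\Delta>0$ via Cauchy--Schwarz in the $\Sigma$-inner product, and the leaf-up induction for ``every node'') are sound but go beyond what the paper proves here, since it invokes $\Delta>0$ only in the generic-stability proposition.
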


\begin{proof}
$|\alpha_L^{\mathrm{raw}}| + |\alpha_R^{\mathrm{raw}}| \ge \max(|\alpha_L^{\mathrm{raw}}|, |\alpha_R^{\mathrm{raw}}|) > 0$ whenever the pair is nonzero. Dividing a nonzero scalar by a strictly positive one preserves its sign.
\end{proof}

\begin{corollary}[Exact antiparallelism to sum-to-one normalisation]\label{cor:hrpsm_antiparallel}
Let $w^{\mathrm{sum}}$ and $w^{\Sigma\mu}$ denote the output portfolios of the sum-normalised recursive MVO and of HRP-$\Sigma\mu$ on the same inputs $(\bSigma, \bmu, \mathcal{T}, \gamma)$. If an odd number of internal nodes on \emph{every} root-to-leaf path would experience a sum-to-one sign flip (i.e.\ would have $\alpha_L^{\mathrm{raw}} + \alpha_R^{\mathrm{raw}} < 0$), then $w^{\mathrm{sum}} = -w^{\Sigma\mu}$ exactly.
\end{corollary}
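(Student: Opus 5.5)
The plan is an induction over the tree from the leaves up. The inductive claim at every node $n$ is
\[
\hat w_n^{\mathrm{sum}} = \varepsilon_n\,\hat w_n^{\Sigma\mu},\qquad \varepsilon_n\in\{-1,+1\},
\]
where $\varepsilon_n$ is $(-1)$ raised to the number of sum-to-one flips at internal nodes on the path from $n$ down to each of its leaves. Taking $n$ to be the root and using the hypothesis (an odd number of flips on \emph{every} root-to-leaf path) gives $\varepsilon_{\mathrm{root}}=-1$. That is exactly $w^{\mathrm{sum}} = -w^{\Sigma\mu}$.

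\textbf{Base case and the key invariance.} At a leaf both recursions return $(\hat w=1,\,v=\sigma_{ii}^2,\,s=\mu_i)$, so the claim holds with $\varepsilon=+1$. For the inductive step I would reuse the computation behind Lemma~\ref{lem:hrpsm_scale}, but allow the rescaling factor $k$ to be any nonzero real, including negative ones. Under $\hat w_L\to k\hat w_L$ the statistics transform as $v_L\to k^2v_L$, $s_L\to ks_L$, $c\to kc$, and the determinant scales by $k^2$. Hence $\alpha_L^{\mathrm{raw}}\to\alpha_L^{\mathrm{raw}}/k$ while $\alpha_R^{\mathrm{raw}}$ is unchanged, and symmetrically for the right child. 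Two consequences follow.
\begin{itemize}
\item The unnormalised stacked vector $u_n=(\alpha_L^{\mathrm{raw}}\hat w_L,\,\alpha_R^{\mathrm{raw}}\hat w_R)$ is \emph{identical} in the two recursions.
\item The sum method's raw budgets are $\varepsilon_L\alpha_L^{\mathrm{raw}}$ and $\varepsilon_R\alpha_R^{\mathrm{raw}}$, where $\alpha_L^{\mathrm{raw}},\alpha_R^{\mathrm{raw}}$ are those of HRP-$\Sigma\mu$.
\end{itemize}
Consequently $\hat w_n^{\mathrm{sum}} = u_n/S_n$ and $\hat w_n^{\Sigma\mu}=u_n/Z_n$, with
\[
S_n=\varepsilon_L\alpha_L^{\mathrm{raw}}+\varepsilon_R\alpha_R^{\mathrm{raw}},\qquad Z_n=|\alpha_L^{\mathrm{raw}}|+|\alpha_R^{\mathrm{raw}}|.
\]

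\textbf{Matching the scalars.} It remains to show $S_n=\pm Z_n$, with the minus sign exactly at flip nodes. By Lemma~\ref{lem:hrpsm_sign}, $Z_n>0$. Since $|S_n|\le Z_n$, equality in magnitude holds precisely when the two sum-frame raw budgets $\varepsilon_L\alpha_L^{\mathrm{raw}}$ and $\varepsilon_R\alpha_R^{\mathrm{raw}}$ share a sign. In that case $S_n=Z_n$ at a non-flip node and $S_n=-Z_n$ at a flip node ($S_n<0$). This gives $\varepsilon_n=+1$ or $-1$ respectively, times the inherited child signs.

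To propagate those inherited signs I would use the ray identity once more. If $\varepsilon_L\neq\varepsilon_R$, the identical vector $u_n$ absorbs the discrepancy, so the node-level relation is still a single global sign. By the parity bookkeeping this sign equals the flip parity along any path below $n$; the "every path" hypothesis makes that parity consistent across paths.

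\textbf{Main obstacle.} The delicate step is the magnitude equality $|S_n|=Z_n$, i.e.\ that the two sum-frame raw budgets share a sign at every node. I would obtain it from the characterisation of a "sum-to-one sign flip" in Appendix~\ref{app:a1_pathology}. There the flip event is the case in which both raw budgets, computed in the sum recursion's own frame, are driven negative together. Non-flip nodes are those where the sum normaliser is positive and agrees with the $L^1$ normaliser. I would first check that this reading is the one under which the pathology's parity argument is stated, and then feed it into the induction above.
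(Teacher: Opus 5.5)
Your key observation is correct, and it is stronger than the way you use it. The Cramer step is equivariant under rescaling a child representative by \emph{any} nonzero scalar $k$, not only $k>0$ or $k=\pm1$: under $\hat w_L\to k\hat w_L$ one gets $\alpha_L^{\mathrm{raw}}\to\alpha_L^{\mathrm{raw}}/k$. So the unnormalised vector $u_n$ coincides in the two recursions at every node, whatever happened below.

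That same fact breaks your sign bookkeeping. The inherited signs $\varepsilon_L,\varepsilon_R$ are already absorbed into $u_n$ and into the sum recursion's own normaliser $S_n$. There is therefore no further ``times the inherited child signs''. Induction gives $\hat w_n^{\mathrm{sum}}=(Z_n/S_n)\,\hat w_n^{\Sigma\mu}$ at every node, and at the root $w^{\mathrm{sum}}=(Z_{\mathrm{root}}/S_{\mathrm{root}})\,w^{\Sigma\mu}$. Flips below the root can change whether the root itself flips, but their sign is otherwise undone by the parent's $2\times 2$ solve. Path parity is irrelevant; only the root's normaliser decides the sign. The paper's own sketch (``each sign flip along a root-to-leaf path multiplies the final leaf weight by $-1$'') uses the same parity mechanism and has the same defect. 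Two examples show this:
\begin{itemize}
\item Take $\Sigma=I_4$, $\mu=-\mathbf 1$, tree $((1,2),(3,4))$, any $\gamma$. Every internal node flips, so every path carries two flips, yet $w^{\mathrm{sum}}=(\tfrac14,\tfrac14,\tfrac14,\tfrac14)=-w^{\Sigma\mu}$.
\item Take $\gamma=1$, $\Sigma_{LL}=\Sigma_{RR}=I_2$, $\Sigma_{LR}=0.9\,xy^{\top}$ with unit vectors $x\propto(1,-1.1)$ and $y\propto(-1,-1)$, and $\mu=(1,-1.1,-0.1,-0.1)$. Both children of the root flip ($S_L=-0.1$, $S_R=-0.2$) while $S_{\mathrm{root}}\approx 8.4>0$. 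Every path carries exactly one flip, but $w^{\mathrm{sum}}\approx 2.26\,w^{\Sigma\mu}$ is a \emph{positive} multiple.
\end{itemize}

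The step you flag as the main obstacle cannot be closed either. The statement defines a flip as $\alpha_L^{\mathrm{raw}}+\alpha_R^{\mathrm{raw}}<0$, which allows raw budgets of opposite sign. The ``flips both signs'' of Appendix~\ref{app:a1_pathology} compares normalised with raw budgets; it does not say both raw budgets are negative. Reading the hypothesis as ``both driven negative'' therefore changes it. Mixed-sign non-flip nodes also already give $|Z_n/S_n|\neq 1$, for example raw budgets $(3,-1)$.

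Under the actual hypothesis the exact identity fails even for $N=2$. Take $\Sigma=I_2$ and $\mu=(1,-3)$, any $\gamma$. The raw budgets are $(1,-3)$, so the root flips, yet $w^{\mathrm{sum}}=(-\tfrac12,\tfrac32)=-2\,w^{\Sigma\mu}$.

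So the corollary as stated cannot be proved. What your invariance argument does establish, assuming all normalisers are nonzero, is the collinearity identity $w^{\mathrm{sum}}=(Z_{\mathrm{root}}/S_{\mathrm{root}})\,w^{\Sigma\mu}$. Consequently $w^{\mathrm{sum}}$ is antiparallel to $w^{\Sigma\mu}$ if and only if $S_{\mathrm{root}}<0$. It equals $-w^{\Sigma\mu}$ exactly if and only if $S_{\mathrm{root}}=-Z_{\mathrm{root}}$.
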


\begin{proof}[Proof sketch]
By Lemma~\ref{lem:hrpsm_scale} both methods produce portfolios on the same ray up to per-leaf signs. Each sign flip along a root-to-leaf path multiplies the final leaf weight by $-1$; an odd number of flips on every path multiplies the whole output vector by $-1$. Full proof in Appendix~\ref{app:proofs}.
\end{proof}

Corollary~\ref{cor:hrpsm_antiparallel} makes quantitative the negative-cosine rate observed empirically: under random signs of $\alpha_L^{\mathrm{raw}} + \alpha_R^{\mathrm{raw}}$ at each node, about half of all tree configurations have an odd number of flips on every path, which gives a $\approx 50\%$ rate of \emph{exact} antiparallelism under sum-to-one normalisation. Cotton-like regimes with strong off-diagonal coupling depart slightly from the coin-flip limit; Appendix~\ref{app:a1_pathology} measures $45$--$52\%$.

\paragraph{Properties of HRP-$\Sigma\mu$.}

\begin{proposition}[Approximate De Prado recovery]\label{prop:hrpsm_recovery}
At $\gamma = 0$ and $\bmu = \mathbf{1}$, HRP-$\Sigma\mu$ approximately recovers \citet{lopezdeprado2016} HRP. The approximation is exact for trees of depth $\le 2$, and for balanced trees on $N=100$ assets yields cosine similarity $\cos(w^{\Sigma\mu}, w^{\mathrm{HRP}}) \approx 0.992$.
\end{proposition}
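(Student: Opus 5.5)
The plan is to specialise Algorithm~\ref{alg04b:hrpsm} to $\gamma=0$, $\bmu=\mathbf{1}$ and compare it, node by node, with the HRP recursion \eqref{eq03:flat_ivp}--\eqref{eq03:hrp_split}. There are three steps: a positivity invariant, an exact match for shallow trees, and an account of where deeper trees diverge.

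\textbf{Step 1: positivity invariant.} At $\gamma=0$ the cross term drops out of Cramer's rule~\eqref{eq04:cramer}, leaving $\alpha_L^{\mathrm{raw}}=s_L/v_L$ and $\alpha_R^{\mathrm{raw}}=s_R/v_R$. I will prove by induction up the tree that every representative $\hat w_n$ has strictly positive entries summing to one, so that $s_n=\hat w_n^\top\mathbf{1}=1$.
\begin{itemize}
\item \emph{Base case.} At a leaf, $\hat w=1$ and $s=\mu_i=1$.
\item \emph{Inductive step.} Since $v_L,v_R>0$ by positive definiteness, both raw budgets are positive. Hence the $L^1$ normaliser~\eqref{eq04b:l1_norm} coincides with sum-to-one normalisation, and Lemma~\ref{lem:hrpsm_sign} is not even needed. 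The stacked vector $(\alpha_L\hat w_L,\alpha_R\hat w_R)$ is then positive and sums to $\alpha_L+\alpha_R=1$.
\end{itemize}
A consequence is that at every node the split is exactly $\alpha_L=(1/v_L)/(1/v_L+1/v_R)$. This is HRP's rule~\eqref{eq03:hrp_split}, except that $v_L$ is evaluated on the HRP-$\Sigma\mu$ representative rather than on the flat IVP.

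\textbf{Step 2: exactness for depth $\le 2$.} The only remaining difference between the two methods is therefore the representative used to score a cluster. I claim the two representatives coincide for every cluster of at most two leaves.
\begin{itemize}
\item For a single leaf, both representatives are trivially $1$.
\item For a pair $\{i,j\}$, Step 1 gives $\alpha_i\propto 1/\sigma_{ii}^2$. So the HRP-$\Sigma\mu$ representative equals the flat IVP~\eqref{eq03:flat_ivp} on $\{i,j\}$, and the two cluster variances agree.
\end{itemize}
In a tree of depth $\le 2$, every child of the root is a leaf or a pair. Hence the root split agrees with HRP's. The within-child splits also agree, since they are two-leaf IVP splits. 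Both methods output the leaf weight as the product of $\alpha$'s along the root-to-leaf path: for HRP this is its definition, and for HRP-$\Sigma\mu$ it is the stacked representative returned at the root. The two weight vectors are therefore identical.

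\textbf{Step 3: why deeper trees give only approximate recovery.} Once a child cluster has three or more leaves, the HRP-$\Sigma\mu$ representative is a product of nested IVP splits rather than the flat IVP over the whole cluster. These generally differ, because the nested splits use $v$'s that contain within-cluster covariances. The cluster variances, and hence the splits above that cluster, then differ too, and exact recovery fails.

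The main obstacle is the quantitative claim of cosine $\approx 0.992$ for balanced trees at $N=100$. I do not expect this to be provable from the structure alone. I would present it as a numerical verification on the base-case covariances of Section~\ref{sec10:experiments}, averaged over draws, rather than as a theorem. For heuristic support, I would note that the two representatives are both positive, sum to one, and differ only through the correlation-weighted terms of nested $v$'s, so they stay close whenever within-cluster correlations are moderate.
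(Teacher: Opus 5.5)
Your proposal is correct and follows the same route as the paper's proof: a node-by-node comparison in which leaves and two-leaf clusters yield identical inverse-variance representatives, so a tree of depth $\le 2$ receives HRP's inverse-cluster-variance split at every node; divergence at depth $\ge 3$ comes from the recursive versus flat representative; and the cosine $\approx 0.992$ is treated as a numerical observation, not a theorem. Your Step~1 induction (every representative positive and summing to one, so $s_n = 1$ and the $L^1$ normaliser coincides with sum-to-one at every node) makes explicit what the paper uses tacitly when it asserts $s_L = s_R = 1$ at depth-$2$ nodes, and it isolates the choice of representative as the only source of discrepancy.
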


\begin{proof}[Proof sketch]
At $\bmu=\mathbf{1}$, the local MVO on a pair of leaves reduces to the inverse-variance portfolio, so HRP-$\Sigma\mu$ and HRP agree exactly at the leaves and at depth-$1$ nodes. At depth $\ge 2$, the recursive MVO representative differs from the flat IVP by off-diagonal within-cluster terms; at $\gamma=0$ the cross-block scalar $c$ drops out of \eqref{eq04:cramer}, but the cluster variances $v_L, v_R$ are still computed from the MVO representative rather than the flat IVP, producing a small residual. Empirically, the residual accumulates to $\cos \approx 0.992$ on balanced trees at $N=100$. Full analysis in Appendix~\ref{app:proofs}.
\end{proof}

The $0.8\%$ cosine gap is the price HRP-$\Sigma\mu$ pays for using $\bSigma_{LL}$ at every node. HRP-$\mu$ (Proposition~\ref{prop:a3_recovery}) recovers HRP exactly because its representative is the flat IVP by construction.

\begin{proposition}[Generic stability]\label{prop:hrpsm_stability}
For generic $\bmu$---i.e.\ for all $\bmu$ outside a Lebesgue measure-zero subset of $\mathbb{R}^N$---the aggregate signal $s_L = \hat w_L^{\top}\bmu_L$ is nonzero at every internal node $L$. In particular, the $L^1$ denominator $|\alpha_L^{\mathrm{raw}}| + |\alpha_R^{\mathrm{raw}}|$ is strictly positive everywhere in the tree, and Algorithm~\ref{alg04b:hrpsm} is well-defined.
\end{proposition}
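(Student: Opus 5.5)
The plan is to argue by induction up the tree, from the leaves to the root. In fact I expect to prove something stronger than genericity: if no asset has $\mu_i = 0$, then $s_n > 0$ at every internal node. The exceptional set $\{\bmu : \mu_i = 0 \text{ for some } i\}$ is a finite union of coordinate hyperplanes, so it has Lebesgue measure zero. The claim about the $L^1$ denominator then follows from Lemma~\ref{lem:hrpsm_sign}, provided $(\alpha_L^{\mathrm{raw}},\alpha_R^{\mathrm{raw}}) \ne (0,0)$.

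The induction hypothesis at a node $n$ is that $\hat w_n \ne 0$ and $s_n \ne 0$. It holds at the leaves because $\hat w = 1$ and $s = \mu_i \ne 0$.

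For the inductive step at a node with children $L, R$, I first show that the Cramer determinant is strictly positive. Because $\hat w_L$ and $\hat w_R$ are nonzero and have disjoint supports, the stacked vectors $(\hat w_L,0)$ and $(0,\hat w_R)$ are linearly independent. The Gram matrix $\bigl(\begin{smallmatrix} v_L & c\\ c & v_R\end{smallmatrix}\bigr)$ of these vectors in the $\bSigma_{nn}$ inner product is therefore positive definite, since $\bSigma$ is SPD. Hence $v_L, v_R > 0$ and $c^2 < v_L v_R$, which gives
\[
\Delta = v_Lv_R - \gamma^2c^2 \;\ge\; v_L v_R - c^2 \;>\; 0 \qquad \text{for all } \gamma\in[0,1].
\]
It follows that the node matrix $M_\gamma = \bigl(\begin{smallmatrix} v_L & \gamma c\\ \gamma c & v_R\end{smallmatrix}\bigr)$ is positive definite.

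Next, the raw budgets are $(\alpha_L^{\mathrm{raw}},\alpha_R^{\mathrm{raw}})^\top = M_\gamma^{-1}(s_L,s_R)^\top$. This vector is nonzero because $s_L \ne 0$ by the induction hypothesis. So $Z > 0$, and $\hat w_n$ is nonzero since at least one nonzero $\alpha$ multiplies a nonzero child block.

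The key step is computing $s_n$. By Algorithm~\ref{alg04b:hrpsm},
\[
s_n = \alpha_L s_L + \alpha_R s_R = \frac{1}{Z}\,(s_L,s_R)\,M_\gamma^{-1}\,(s_L,s_R)^\top .
\]
This is a positive definite quadratic form evaluated at a nonzero vector, divided by $Z > 0$, so $s_n > 0$. This closes the induction. It shows that every internal node (not only the root) has a strictly positive aggregate signal, and hence a strictly positive $L^1$ normaliser.

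The main obstacle I anticipate is bookkeeping rather than analysis. I must make sure that $v_L, v_R, c$ are computed from the already-normalised child representatives, and that positive rescalings do not spoil the argument. Lemma~\ref{lem:hrpsm_scale} covers this: all quantities above are invariant, or scale by positive factors, under positive rescaling of the children, so signs are unaffected.

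A secondary point is the degenerate case $\gamma = 0$, where $M_0$ is diagonal. The same quadratic-form argument gives $s_n = (s_L^2/v_L + s_R^2/v_R)/Z > 0$, so the case poses no problem.

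Finally, I would remark that the hypothesis can be weakened. The induction only needs each depth-one pair to have $(\mu_i,\mu_j) \ne (0,0)$. Without that weakening, the proof still gives the stated measure-zero exceptional set.
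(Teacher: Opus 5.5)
Your argument is correct, and it takes a genuinely different route from the paper.

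The paper argues by algebraic genericity. It treats each $s_L$ as a rational function of $\bmu$, shows it is not identically zero by evaluating at $\bmu=\mathbf{1}$ via the HRP-recovery result (Proposition~\ref{prop:hrpsm_recovery}), and takes the bad set to be a finite union of zero sets of nonzero polynomials. You instead prove positivity directly by induction. The Gram-matrix observation gives $\Delta\ge v_Lv_R-c^2>0$ and makes $M_\gamma$ positive definite, and the identity $s_n=Z^{-1}(s_L,s_R)M_\gamma^{-1}(s_L,s_R)^{\top}$ then gives $s_n>0$.

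Your route yields more than the paper's proof:
\begin{itemize}
\item an explicit exceptional set: the coordinate hyperplanes, or more sharply the set where some two-leaf node has both signals zero;
\item a sign statement rather than mere non-vanishing, so the output always satisfies $\hat w^{\top}\bmu>0$, i.e.\ positive in-sample expected return and a positive $\bSigma$-inner product with $\bSigma^{-1}\bmu$;
\item an actual proof that $\Delta>0$, which the paper asserts in one line even though it depends on the child representatives being nonzero.
\end{itemize}

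It also avoids two soft spots in the paper's argument. First, the $L^1$ normaliser makes $s_L$ only piecewise rational in $\bmu$. This is harmless because the normaliser is a positive scalar, but the paper does not address it. Second, the non-vanishing at $\bmu=\mathbf{1}$ is justified by a $\gamma=0$ result that says nothing about $\gamma>0$, where the representatives need not be entrywise positive. Your induction gives $s_L(\mathbf{1})>0$ for every $\gamma\in[0,1]$ directly.

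What the paper's template offers in exchange is generality: it applies to any recursion whose node statistics are rational in $\bmu$ and nonzero at one point. Here, though, its conclusion is strictly weaker than yours.

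One small point: the appeal to Lemma~\ref{lem:hrpsm_scale} is unnecessary. Your argument only uses $\hat w_L\neq 0$ and $\hat w_R\neq 0$ for whatever representatives the algorithm actually returns, so no normalisation bookkeeping is needed.
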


\begin{proof}[Proof sketch]
Fix $\bSigma$ and $\mathcal{T}$. The representative $\hat w_L$ at each node is a rational function of $\bmu$ built from nested Cramer's-rule ratios. So is $s_L = \hat w_L^{\top}\bmu_L$. This rational function is not identically zero (it is strictly positive at $\bmu=\mathbf{1}$ by Proposition~\ref{prop:hrpsm_recovery}); its zero set is an algebraic variety of codimension $\ge 1$ and therefore of Lebesgue measure zero. Full proof in Appendix~\ref{app:proofs}.
\end{proof}

\begin{proposition}[Diagonal exactness]\label{prop:hrpsm_diagonal}
If $\bSigma$ is diagonal, then for any $\gamma \in [0,1]$ and any tree $\mathcal{T}$, HRP-$\Sigma\mu$ produces a weight vector proportional to the Markowitz solution $\bSigma^{-1}\bmu$.
\end{proposition}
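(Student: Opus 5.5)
Let $\Sigma=\diag(\sigma_{11},\dots,\sigma_{NN})$, so Markowitz is $w^\star_i=\mu_i/\sigma_{ii}$. The plan is structural induction on $\mathcal{T}$, bottom up. The claim at each node $n$ is that the returned representative satisfies $\hat w_n=\beta_n\,(\Sigma^{-1}\mu)_n=\beta_n\,\Sigma_{nn}^{-1}\mu_n$ for some scalar $\beta_n>0$. Once this holds at the root, the statement follows, since the output is $\hat w_{\mathrm{root}}$. Diagonality kills every cross-block scalar: $\Sigma_{LR}=0$ gives $c=0$ at every internal node, for every $\gamma$. Hence $\gamma$ drops out entirely, $\Delta=v_Lv_R>0$, and Cramer's rule~\eqref{eq04:cramer} reduces to $\alpha_L^{\mathrm{raw}}=s_L/v_L$ and $\alpha_R^{\mathrm{raw}}=s_R/v_R$.

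\emph{Base case.} A leaf returns $\hat w=1$, $v=\sigma_{ii}$ (the paper writes $\sigma_{ii}^2$ for the variance entry; I use whatever the diagonal entry of $\Sigma$ is), and $s=\mu_i$. Here $\hat w=1=\beta\,\mu_i/\sigma_{ii}$ needs $\beta=\sigma_{ii}/\mu_i$, which is positive only if $\mu_i>0$. So I will state the hypothesis more carefully as $\hat w_n=\beta_n\,\Sigma_{nn}^{-1}\mu_n$ with $\beta_n\neq 0$, or else treat leaves separately: the leaf's sign is fixed at its parent, because $\alpha^{\mathrm{raw}}=s/v=\mu_i/\sigma_{ii}$ carries the sign.

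\emph{Inductive step.} Assume $\hat w_L=\beta_L\Sigma_{LL}^{-1}\mu_L$ and $\hat w_R=\beta_R\Sigma_{RR}^{-1}\mu_R$ with $\beta_L,\beta_R\neq0$. Write $q_L=\mu_L^\top\Sigma_{LL}^{-1}\mu_L>0$ (assuming $\mu_L\ne0$). Then $v_L=\beta_L^2q_L$ and $s_L=\beta_Lq_L$, so $\alpha_L^{\mathrm{raw}}=1/\beta_L$, and likewise $\alpha_R^{\mathrm{raw}}=1/\beta_R$. The stacked vector is therefore $(\alpha_L\hat w_L,\alpha_R\hat w_R)=Z^{-1}\bigl(\Sigma_{LL}^{-1}\mu_L,\Sigma_{RR}^{-1}\mu_R\bigr)=Z^{-1}\Sigma_{nn}^{-1}\mu_n$, with $Z=|1/\beta_L|+|1/\beta_R|>0$. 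This closes the induction with $\beta_n=1/Z>0$, and shows in passing that every internal node has a positive $\beta$. The arbitrary signs of $\beta$ at the leaves are cancelled exactly, because $\alpha^{\mathrm{raw}}=1/\beta$ undoes them. This mirrors Lemma~\ref{lem:hrpsm_scale}, and the $L^1$ normaliser of Lemma~\ref{lem:hrpsm_sign} keeps the factor positive, so the output lies on the Markowitz ray and not merely on its line.

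\emph{Main obstacle: degenerate subtrees.} The only real difficulty is a branch with $\mu_L=0$ (including a leaf with $\mu_i=0$). Then $s_L=0$ and $v_L=0$, so $\alpha^{\mathrm{raw}}$ is $0/0$ and the inductive form breaks down. I would handle this with the same genericity assumption as Proposition~\ref{prop:hrpsm_stability}: assume $\mu_i\neq0$ for all $i$. Alternatively, one can use the convention that a zero-signal branch receives weight zero, which still matches $\Sigma^{-1}\mu$ since that vector is zero on the branch. Either way I would state the assumption explicitly in the proof.
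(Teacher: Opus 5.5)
Your proof is correct and is essentially the paper's argument. It inducts up the tree: $\Sigma_{LR}=0$ forces $c=0$, so $\alpha_L^{\mathrm{raw}}=s_L/v_L=1/\beta_L$ cancels the child's scalar, and the positive $L^1$ normaliser only rescales $\Sigma_{nn}^{-1}\mu_n$.

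Allowing $\beta_n\neq 0$ rather than $\beta_n>0$ is a genuine improvement over the paper's two-leaf base case with positive scalars, because it also covers internal nodes with a single-leaf child carrying $\mu_i<0$, as in unbalanced trees. One correction to your degenerate-case remark: a leaf always returns $v=\sigma_{ii}>0$, so a zero-signal leaf just gets $\alpha^{\mathrm{raw}}=0$. The only true breakdown is $Z=0$ at an internal node whose whole subtree has $\mu=0$, and your assumption $\mu_i\neq 0$ rules that out.
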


\begin{proof}[Proof sketch]
When $\bSigma$ is diagonal, every off-diagonal block $\bSigma_{LR}$ is zero, so $c = 0$ at every node and \eqref{eq04:cramer} reduces to $\alpha_L^{\mathrm{raw}} = s_L/v_L$, $\alpha_R^{\mathrm{raw}} = s_R/v_R$ (independent of $\gamma$). At a two-asset leaf-pair node the local MVO is $\hat w_i \propto \mu_i/\sigma_{ii}^2$. An induction on tree depth shows the root output is proportional to $(\mu_i/\sigma_{ii}^2)_{i=1}^N = \bSigma^{-1}\bmu$; $L^1$ normalisation only changes the overall scale. Full proof in Appendix~\ref{app:proofs}.
\end{proof}

\begin{proposition}[$O(N^2)$ complexity]\label{prop:hrpsm_cost}
On a balanced binary tree over $N$ assets, Algorithm~\ref{alg04b:hrpsm} runs in $O(N^2)$ time and $O(N^2)$ space, the latter dominated by storing $\bSigma$.
\end{proposition}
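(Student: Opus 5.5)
The plan is to write the cost recurrence on the tree and resolve it level by level. Let $T(n)$ be the work done by \textsc{Recurse} on a node whose subtree contains $n$ leaves. For a balanced tree the two children have $n/2$ leaves each, up to rounding. Everything at a node other than the two recursive calls is counted as local work. I would then show that this local work is $O(n^2)$. That gives $T(n)=2T(n/2)+O(n^2)$ with $T(1)=O(1)$, and the Master theorem (leaf-dominated case $a=2<b^2=4$) yields $T(N)=O(N^2)$.

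To bound the local work, I would go through the lines of Algorithm~\ref{alg04b:hrpsm} in order.
\begin{itemize}
\item The cross-block scalar $c=\hat w_L^\top\bSigma_{LR}\hat w_R$ is a matrix--vector product of an $|L|\times|R|$ block followed by an inner product. It costs $O(|L||R|)=O(n^2)$.
\item The determinant $\Delta$, Cramer's rule and the $L^1$ normaliser $Z$ are $O(1)$ scalar operations. Generic well-definedness ($\Delta\neq0$ and $Z>0$) is supplied by Proposition~\ref{prop:hrpsm_stability} and Lemma~\ref{lem:hrpsm_sign}, so no extra safeguarding cost arises.
\item Stacking $\hat w_n=(\alpha_L\hat w_L,\alpha_R\hat w_R)$ costs $O(n)$.
\item Computing $s_n=\hat w_n^\top\bmu_n$ costs $O(n)$.
\end{itemize}

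The one point needing care is $v_n=\hat w_n^\top\bSigma_{nn}\hat w_n$. Computed naively this is $O(n^2)$, which is still within budget and already suffices. I would also note the cheaper identity
\[
v_n=\alpha_L^2v_L+\alpha_R^2v_R+2\alpha_L\alpha_R c,
\]
which reuses quantities already computed and makes this step $O(1)$. Either way, the local work is dominated by the $O(n^2)$ formation of $c$.

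Summing the local work level by level gives the same conclusion as the recurrence. Level $\ell$ has $2^\ell$ nodes of size $N/2^\ell$, so it contributes $O(2^\ell(N/2^\ell)^2)=O(N^2/2^\ell)$. This is a geometric series, so the total is $O(N^2)$ and the root level dominates. Equivalently, every off-diagonal entry $\sigma_{ij}$ is read exactly once, at the lowest common ancestor of $i$ and $j$, which gives an $N^2/2$-type count directly.

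For space, the input $\bSigma$ requires $O(N^2)$ storage. I would bound the remaining storage by the recursion stack. The live representatives along one root-to-leaf path have sizes $N/2,N/4,\dots$, so they use $O(N)$ in total. If $\hat w$ vectors are stored in place in a global length-$N$ array, the extra space is $O(N)$ as well. Hence the total space is $O(N^2)$, dominated by $\bSigma$.

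The main obstacle I expect is not the recurrence itself but being precise about what ``balanced'' means and what must be excluded from the count. I would state explicitly that the dendrogram $\mathcal{T}$ is an input, so the clustering cost lies outside the bound. I would also handle unequal splits by allowing child sizes in a $[\beta n,(1-\beta)n]$ range. Under such splits the LCA-counting argument still gives $O(N^2)$, because it does not use balance at all: each pair $(i,j)$ is charged once regardless of the tree shape. I would therefore lead with the LCA-counting argument to make the proof robust, and present the level-by-level sum as a confirmation.
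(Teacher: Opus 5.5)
Your proof is correct and follows the paper's skeleton. You split the per-node work into the cross-block scalar $c$ at $O(|L||R|)$ and the cluster variance $v_n$. You bound the $c$-costs by the lowest-common-ancestor count $\sum n_Ln_R=\binom{N}{2}$, and the $v_n$-costs by the level sum $\sum_\ell N^2/2^\ell\le 2N^2$; your Master-theorem recurrence is the same computation.

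What you add is the update $v_n=\alpha_L^2v_L+\alpha_R^2v_R+2\alpha_L\alpha_R c$, which the paper does not use. The paper evaluates $\hat w_P^\top\bSigma_{PP}\hat w_P$ directly at $O(n_P^2)$ and needs balance precisely to sum those terms. With your update each node costs only $O(n_Ln_R+n_P)$, so the LCA count alone gives $O(N^2)$ for an arbitrary binary tree. That is a strictly stronger result.

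Be careful, though: your claim that the LCA argument ``does not use balance at all'' holds for the whole algorithm only once $v_n$ is computed by this update. With the literal quadratic form, a caterpillar tree costs $\sum_{k\le N}k^2=\Theta(N^3)$. For $\beta$-balanced splits the naive version still gives $O(N^2/\beta)$.

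Two minor points. First, $T(n)=2T(n/2)+O(n^2)$ is the \emph{root}-dominated case of the Master theorem ($n^2$ dominates $n^{\log_2 2}=n$), not the leaf-dominated one; this matches your own later remark that the root level dominates. Second, your $O(N)$ bound on auxiliary storage is tighter than the paper's $O(N\log N)$, though both are subdominant to storing $\bSigma$.
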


\begin{proof}[Proof sketch]
The dominant per-node cost is the cross-block inner product $\hat w_L^{\top}\bSigma_{LR}\hat w_R$, which costs $O(|L|\cdot|R|)$, and the cluster-variance update, which costs $O((|L|+|R|)^2)$. Summing $(|L|+|R|)^2$ over the nodes of a balanced tree is the standard master-theorem recursion and gives $O(N^2)$. Full proof in Appendix~\ref{app:proofs}.
\end{proof}

\begin{proposition}[Relationship to HRP-$\mu$]\label{prop:hrpsm_vs_hrpmu}
Let $w^\mu$ and $w^{\Sigma\mu}$ denote the outputs of HRP-$\mu$ and HRP-$\Sigma\mu$ on the same inputs $(\bSigma,\bmu,\mathcal{T},\gamma)$.
\begin{enumerate}[label=(\roman*)]
\item At $\gamma = 0$, the budget split at every node is driven by $s_L/v_L$ vs.\ $s_R/v_R$; the two methods differ only through the choice of representative (signed IVP vs.\ recursive MVO), which changes the values of $s_L$ and $v_L$.
\item On diagonal $\bSigma$, $w^\mu$ and $w^{\Sigma\mu}$ agree up to scalar; both are proportional to $\bSigma^{-1}\bmu$.
\item At $\gamma > 0$ on non-diagonal $\bSigma$, the methods diverge. The divergence grows with within-cluster correlation strength and with $\gamma$.
\end{enumerate}
\end{proposition}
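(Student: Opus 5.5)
The plan is to treat the three parts separately, working directly from Cramer's rule \eqref{eq04:cramer} and the two representative definitions. Both HRP-$\mu$ and HRP-$\Sigma\mu$ are instances of the same node-level $2\times 2$ system; they differ only in the representative $\hat w_L$ (signed IVP \eqref{eq04:signed_ivp} versus the stacked recursive output of Algorithm~\ref{alg04b:hrpsm}) and in the normaliser.

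\emph{Part (i).} At $\gamma=0$ the off-diagonal entry $\gamma c$ vanishes. The determinant becomes $v_Lv_R$, so $\alpha_L^{\mathrm{raw}}=s_L/v_L$ and $\alpha_R^{\mathrm{raw}}=s_R/v_R$ in both methods. After normalisation (sum-to-one for HRP-$\mu$, where both are non-negative by Proposition~\ref{prop:a3_stability}; $L^1$ for HRP-$\Sigma\mu$), the split is a function only of the two ratios $s_L/v_L$ and $s_R/v_R$. For HRP-$\mu$ this is exactly the HSP ratio \eqref{eq04:hsp_split}. Since $c$ drops out, the only remaining method-dependence is through the values $(s_L,v_L)$ from \eqref{eq04:cluster_stats}, which are determined by the representative. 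This part is essentially bookkeeping.

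\emph{Part (ii).} For HRP-$\Sigma\mu$ I will simply invoke Proposition~\ref{prop:hrpsm_diagonal}. For HRP-$\mu$ I will argue directly, by induction on tree depth, that the leaf weight is proportional to $\mu_i/\sigma_{ii}^2$, with one positive constant common to all leaves.
\begin{itemize}
\item With $c=0$, the HRP-$\mu$ split is $\alpha_L\propto s_L/v_L$, where $s_L=\sum_{L}|\mu_i|/\sigma_{ii}^2\,/\,Z_L$ with $Z_L=\sum_{L}1/\sigma_{jj}^2$.
\item On diagonal $\bSigma$, the cluster variance is $v_L=\sum_L(1/\sigma_{ii}^2)/Z_L^2$, using that $\sigma_{ii}^2$ is the diagonal entry as in the paper's notation.
\item Hence $s_L/v_L = Z_L\sum_L|\mu_i|/\sigma_{ii}^2 \,/\, \sum_L 1/\sigma_{ii}^2 = \sum_L |\mu_i|/\sigma_{ii}^2$.
\item So each child's budget is proportional to the total $\sum_L |\mu_i|/\sigma_{ii}^2$. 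Inductively, the cumulative budget of leaf $i$ is $\frac{|\mu_i|/\sigma_{ii}^2}{\sum_{j}|\mu_j|/\sigma_{jj}^2}$, because the products telescope: at each node the child's share is its subtotal over the parent's subtotal.
\item Multiplying by $\operatorname{sign}(\mu_i)$ gives $w_i\propto\mu_i/\sigma_{ii}^2=(\bSigma^{-1}\bmu)_i$.
\end{itemize}
The mild obstacle here is making sure the telescoping product is correct given the HRP-$\mu$ normalisation at each node. The edge case $\mu_i=0$ gives weight $0$ consistently under the convention $\operatorname{sign}(0)=+1$.

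\emph{Part (iii).} This is qualitative, so I would prove it in the form ``generically $w^\mu\not\parallel w^{\Sigma\mu}$,'' plus a first-order expansion supporting the monotonicity claim.
\begin{itemize}
\item \emph{Divergence.} Exhibit a depth-2 instance, e.g.\ four assets with a within-cluster correlation $\rho$ in $L$. There the MVO representative $\propto\bSigma_{LL}^{-1}\bmu_L$ differs from the signed IVP whenever $\rho\neq 0$ and $\bmu_L$ is not an invariant ray in the sense of Lemma~\ref{lem:parallel_direction}. A polynomial-identity argument, as in Proposition~\ref{prop:hrpsm_stability}, then shows non-parallelism on a generic (full-measure) set.
\item \emph{Growth.} Expand both outputs around $\rho=0$ and $\gamma=0$. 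Part (ii) gives agreement at the base point. The first-order discrepancy has a term linear in $\rho$ (a representative difference) and a term proportional to $\gamma c$ (with $c$ differing between methods). This shows the gap vanishes to first order as $\rho,\gamma\to0$ and grows with each.
\end{itemize}
I expect this to be the main obstacle. A genuine monotonicity statement in $\rho$ and $\gamma$ globally is not obviously true, so I would state and prove only the local, first-order version and defer global behaviour to the empirical evidence of Section~\ref{sec10:experiments}.
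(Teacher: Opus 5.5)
Parts (i) and (ii) are fine and follow the paper's sketch. In (ii), your telescoping argument supplies the ``diagonal specialisation of HRP-$\mu$'' that the paper only asserts: each split is $S_L/S_P$ with $S_L=\sum_{i\in L}|\mu_i|/\Sigma_{ii}$, so leaf $i$ gets budget $S_{\{i\}}/S_{\mathrm{root}}$ before the sign is applied.

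The gap is the growth step in (iii). Your expansion correctly yields a within-cluster term and a cross-block term $\gamma c$. However, the coefficient of the within-cluster term itself depends on $\gamma$, with no fixed sign, so ``grows with each'' does not follow. In fact it is false.

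A useful identity for checking this: at a two-leaf node HRP-$\mu$ uses $t_i=|\mu_i|$ and $c=\operatorname{sign}(\mu_i)\operatorname{sign}(\mu_j)\Sigma_{ij}$. Hence $\operatorname{sign}(\mu_i)\,\alpha_i^{\mu,\mathrm{raw}}=\alpha_i^{\Sigma\mu,\mathrm{raw}}$. In a depth-2 tree both methods therefore use the same within-pair direction $x_L=(D_{LL}+\gamma E_{LL})^{-1}\mu_L$ (whenever HRP-$\mu$'s sum denominator is positive) and differ only in the root split.

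Take the tree $\{\{1,2\},\{3,4\}\}$, $\Sigma$ block-diagonal with blocks $\bigl(\begin{smallmatrix}1&\rho\\ \rho&1\end{smallmatrix}\bigr)$ and $I_2$, and $\mu=(3,1,1,1)$.
\begin{itemize}
\item Both outputs are proportional to $(f_Lx_L,\,x_R)$, with method-specific factors $f^{\Sigma\mu}_L=x_L^\top\mu_L/(x_L^\top\Sigma_{LL}x_L)$ and $f^{\mu}_L=(1+\gamma\rho)/(1+\rho)$.
\item At $\gamma=1$ both factors equal $1$, so on this non-diagonal $\Sigma$ the two methods coincide exactly (both return $\Sigma^{-1}\mu$).
\item At $\gamma=0$ the factors are $1/(1+\rho)\neq 10/(10+6\rho)$.
\item To first order, $f^{\mu}_L-f^{\Sigma\mu}_L=-0.4\,(1-\gamma)\rho+O(\rho^2)$, so the within-cluster channel \emph{shrinks} in $\gamma$.
\end{itemize}
Conversely, set the within-pair correlation to zero and all cross-block covariances to a small $\rho_c>0$, keeping the same $\mu$ and unit variances. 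Then $w_1/w_3$ equals $3(1-\gamma\rho_c)/(1-4\gamma\rho_c)$ for HRP-$\mu$ and $3(1-0.8\gamma\rho_c)/(1-4\gamma\rho_c)$ for HRP-$\Sigma\mu$. Here the gap grows with $\gamma$ and involves no within-cluster correlation at all.

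The two channels therefore have opposite $\gamma$-dependence. Neither monotonicity in $\gamma$ nor attribution to within-cluster correlation can be proved, even locally. The paper's own one-line argument (``different $v_L,s_L,c$ at every level'') does not establish it either, and (iii) needs to be weakened.

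What survives is your generic non-parallelism claim for fixed $\gamma>0$, but the divergence bullet needs these fixes:
\begin{itemize}
\item Differing representatives do not imply differing outputs. On diagonal $\Sigma$ the representatives already differ ($\propto\mu_i/\Sigma_{ii}$ versus $\propto\operatorname{sign}(\mu_i)/\Sigma_{ii}$), yet the outputs agree by (ii). The witness must be checked on the outputs themselves; in depth 2, that means the root split.
\item Your suggested witness, within-cluster correlation in $L$, fails at $\gamma=1$ when variances are equal and the cross block vanishes, as shown above. The cross-block example is a valid witness for every $\gamma>0$.
\item HRP-$\mu$ is only piecewise rational in $\mu$ (through $\operatorname{sign}(\mu_i)$), so you need a witness in each orthant before running the polynomial-identity argument. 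Both methods are equivariant under $\mu_i\mapsto-\mu_i$, $\Sigma_{ij}\mapsto-\Sigma_{ij}$ ($j\neq i$), and this transports the witness to every orthant.
\item For $\gamma<1$ the two-leaf HRP-$\Sigma\mu$ representative is $\propto(D_{LL}+\gamma E_{LL})^{-1}\mu_L$, not $\Sigma_{LL}^{-1}\mu_L$.
\end{itemize}
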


\begin{proof}[Proof sketch]
(i) At $\gamma=0$, $c$ is absent from \eqref{eq04:cramer}; the ratio $s_L/v_L$ is the only cluster-dependent quantity in the split. (ii) By Proposition~\ref{prop:hrpsm_diagonal} and the diagonal specialisation of HRP-$\mu$, both reduce to $\mu_i/\sigma_{ii}^2$. (iii) The MVO representative exploits off-diagonal within-cluster terms that the signed IVP ignores, producing different $v_L, s_L, c$ at every level. Full proof in Appendix~\ref{app:proofs}.
\end{proof}

\paragraph{Hierarchy of allocators.} Table~\ref{tab04b:hierarchy} places HRP-$\Sigma\mu$ in the full family of allocators developed in the paper. It occupies the cell between HRP-$\mu$ and CRISP: dendrogram structure like HRP-$\mu$, full within-cluster covariance like CRISP, $O(N^2)$ cost.

\begin{table}[h]
\centering
\begin{adjustbox}{max width=\textwidth}
\begin{threeparttable}
\caption{Hierarchy of allocators developed in this paper. HRP-$\Sigma\mu$ sits between HRP-$\mu$ and CRISP: it uses the dendrogram like HRP-$\mu$ but exploits full within-cluster covariance like CRISP.}
\label{tab04b:hierarchy}
\begin{tabular}{llll}
\toprule
Method & Representative & Within-cluster $\bSigma$ & Cost \\
\midrule
HRP \citep{lopezdeprado2016} & flat IVP & diagonal only & $O(N^2)$ \\
HRP-$\mu$ (Section~\ref{sec04:methoda3}) & signed IVP & diagonal only & $O(N^2)$ \\
HRP-$\Sigma\mu$ (Section~\ref{sec04b:hrpsigmamu}) & recursive MVO (local) & full & $O(N^2)$ \\
CRISP (Section~\ref{sec05:crisp}) & --- (no tree) & full & $O(N^2)$ per sweep \\
Markowitz \citep{markowitz1952} & --- & full & $O(N^3)$ \\
\bottomrule
\end{tabular}
\end{threeparttable}
\end{adjustbox}
\end{table}

\paragraph{Normalisation comparison.} Table~\ref{tab04b:norm_compare} compares sum-to-one and $L^1$ normalisation head-to-head. Both solve the same $2\times 2$ system~\eqref{eq04:cramer}; the only difference is the denominator used to rescale $(\alpha_L^{\mathrm{raw}}, \alpha_R^{\mathrm{raw}})$.

\begin{table}[h]
\centering
\begin{adjustbox}{max width=\textwidth}
\begin{threeparttable}
\caption{Normalisation conventions for the recursive MVO tree pass. Both rows solve the same $2\times 2$ system~\eqref{eq04:cramer}; they differ only in the normaliser. The negative-cosine rate is measured on the structural synthetic panels of Section~\ref{sec10:experiments}; see Appendix~\ref{app:a1_pathology}.}
\label{tab04b:norm_compare}
\begin{tabular}{llccc}
\toprule
Method & Normaliser & Sign preserved? & Sums to 1? & Negative-cos rate \\
\midrule
Sum-to-one (A1) & $\alpha_L^{\mathrm{raw}} + \alpha_R^{\mathrm{raw}}$ & No & Yes & $45$--$52\%$ \\
HRP-$\Sigma\mu$ ($L^1$) & $|\alpha_L^{\mathrm{raw}}| + |\alpha_R^{\mathrm{raw}}|$ & Yes & No\tnote{a} & $0\%$ \\
\bottomrule
\end{tabular}
\begin{tablenotes}
\footnotesize
\item[a] HRP-$\Sigma\mu$ satisfies $|\alpha_L|+|\alpha_R|=1$ at every node, not $\alpha_L+\alpha_R=1$. The output is not fully invested in the sum-to-one sense; downstream leverage or volatility rescaling is applied as needed.
\end{tablenotes}
\end{threeparttable}
\end{adjustbox}
\end{table}

\paragraph{Empirical highlights.} Full results are deferred to Section~\ref{sec10:experiments}; we record the headline facts here to motivate the construction. On random Gaussian signals with sample covariance estimation, HRP-$\Sigma\mu$ improves out-of-sample Sharpe ratio over HRP-$\mu$ by $20$--$35\%$ across sample sizes $T \in \{60, 120, 240, 500\}$. On structural sector-tilt signals the improvement ranges from $20\%$ to $180\%$, with the largest gains at high $\gamma$ where within-cluster covariance matters most. At $T=60$---the $T < N$ regime of severe estimation noise---HRP-$\Sigma\mu$ at $\gamma=1.0$ attains oracle Sharpe $0.503$, slightly edging the best CRISP configuration (Sharpe $0.497$ at $\gamma=0.7$); this is the only place in the paper where the tree beats CRISP, and it does so precisely because the dendrogram imposes a hierarchical regularisation that pays most when the raw covariance is poorly conditioned. The $L^1$ normalisation eliminates the $45$--$52\%$ negative-cosine rate of sum-to-one recursive MVO entirely (Appendix~\ref{app:a1_pathology}, Appendix~\ref{app:supplementary}).

\subsection{Iterating the tree pass diverges}\label{sec04:iteration_remark}

It is tempting to iterate: after one bottom-up pass produces weights $w^{(n)}$, update the representative at each node to reflect $w^{(n)}$, re-run the pass, and loop. For an exact block solver this outer iteration would converge. For HRP-$\mu$ and HRP-$\Sigma\mu$ it does not. The reason is structural: each tree pass compresses the $|L|\times|R|$ cross-block covariance $\bSigma_{LR}$ into a single scalar $c$ at every node, and this compression is an \emph{approximation} of the exact block inverse, not an exact identity. The outer iteration amplifies the approximation error rather than correcting it; empirically it diverges on the same synthetic panels on which a single pass is stable. Recovering convergence would require replacing the scalar $c$ with the full block inverse, restoring an $O(N^3)$ cost per pass with multiple passes on top---which erases the quadratic-cost advantage that motivated the tree in the first place. The tension between approximation fidelity and iteration convergence inside the tree is exactly what CRISP resolves by abandoning the dendrogram altogether and iterating in asset space instead; see Section~\ref{sec05:crisp}, Theorem~\ref{thm:gs_convergence}.

\subsection{Pointer to the sum-normalised recursive MVO negative result}\label{sec04:a1_pointer}

The natural recursive MVO with \emph{sum-to-one} normalisation---the construction our code calls ``method A1'' but which we do not name in the main body---is sign-pathological: it produces output portfolios whose direction is essentially random relative to the Markowitz direction under estimation noise, and flips sign systematically even on noiseless problems. Appendix~\ref{app:a1_pathology} documents this in detail, with histograms, worked counterexamples, and a quantitative connection to Corollary~\ref{cor:hrpsm_antiparallel}. The allocator is a cautionary negative result, not a method we recommend; its role in the paper is to explain what goes wrong when a natural construction is not reinforced by a sign-preserving normaliser. HRP-$\Sigma\mu$ is the fix.

% ============================================================
% section_05_crisp.tex
% ============================================================
% !TEX root = ../paper.tex
% Section 5: CRISP — iterative shrinkage solver
% Namespace: sec05, eq05, fig05, tab05, alg05

\section{CRISP: iterative shrinkage solver}\label{sec05:crisp}

Section~\ref{sec04:tree_methods} lifted HRP to general signals through the tree. CRISP takes the complementary route: forget the tree, keep the covariance, and solve a \emph{shrunk} Markowitz system by scalar Gauss--Seidel. The shrunk system is
\[
  P_\gamma\,w \;=\; \mu, \qquad P_\gamma \;=\; (1-\gamma)\,D \;+\; \gamma\,\Sigma, \qquad D = \diag(\Sigma), \qquad \gamma \in [0,1],
\]
and the solver makes no factorisation, no matrix inverse, and no reference to any hierarchy. The construction is structured shrinkage with a variance-preserving target; $\gamma = 0$ returns $w_i = \mu_i/\sigma_{ii}$, $\gamma = 1$ returns exact Markowitz, and the Sharpe-optimal operating point lives strictly between them. The main technical content of the section is Theorem~\ref{thm:gs_rate}: the sweep count is controlled by a \emph{preconditioned} condition number $\kappa(D^{-1}P_\gamma)$ that interpolates from $1$ at $\gamma = 0$ to $\kappa(C) = \kappa(D^{-1/2}\Sigma D^{-1/2})$ at $\gamma = 1$. Volatility dispersion never enters. The main operational content is Algorithm~\ref{alg05:methodb_gs} and its factor-streaming cousin Algorithm~\ref{alg05:methodb_stream}, which reduces working memory from $O(N^2)$ to $O(NK)$ for factor-structured covariances and remains tractable on universes where a dense $\Sigma$ cannot be stored.

%% ------------------------------------------------------------------
\subsection{The shrinkage operator $P_\gamma$}\label{sec05:pgamma}

Split $\Sigma = D + E$ with $D = \diag(\Sigma)$ and $E_{ii} = 0$, $E_{ij} = \sigma_{ij}$ for $i \ne j$. The \emph{shrinkage operator} is
\begin{equation}\label{eq05:pgamma_def}
  P_\gamma \;=\; (1-\gamma)\,D \;+\; \gamma\,\Sigma \;=\; D + \gamma\,E, \qquad \gamma \in [0,1].
\end{equation}
Entrywise,
\begin{equation}\label{eq05:pgamma_entries}
  (P_\gamma)_{ii} \;=\; \sigma_{ii}, \qquad (P_\gamma)_{ij} \;=\; \gamma\,\sigma_{ij} \quad (i \ne j).
\end{equation}
The diagonal of $P_\gamma$ is \emph{identically} the diagonal of $\Sigma$ for every $\gamma$; only the off-diagonal is attenuated. The endpoints are $P_0 = D$ and $P_1 = \Sigma$.

\begin{proposition}[SPD structure of $P_\gamma$]\label{prop:pgamma_spd}
For any SPD $\Sigma$ and any $\gamma \in [0,1]$, the matrix $P_\gamma$ is symmetric positive definite.
\end{proposition}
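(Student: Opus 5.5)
The plan is to write $P_\gamma$ as a convex combination of two SPD matrices, exactly as in the proof of Proposition~\ref{prop:cotton_spd}. By definition~\eqref{eq05:pgamma_def}, $P_\gamma = (1-\gamma)D + \gamma\Sigma$. Symmetry is immediate, since $D$ is diagonal and $\Sigma$ is symmetric, so the substance is positive definiteness.

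The first step is to show that $D$ is SPD. Its diagonal entries are $\sigma_{ii} = e_i^\top \Sigma e_i > 0$ because $\Sigma$ is SPD, so $D$ is a diagonal matrix with strictly positive entries.

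The second step is to fix a nonzero $x\in\R^N$ and compute
\[
x^\top P_\gamma x \;=\; (1-\gamma)\,x^\top D x \;+\; \gamma\, x^\top \Sigma x .
\]
Both quadratic forms are strictly positive, and the weights $1-\gamma$ and $\gamma$ are non-negative and sum to one, so they are not both zero. Hence the sum is strictly positive for every $\gamma\in[0,1]$. The endpoints need no separate treatment: at $\gamma=0$ only the $D$ term survives, and at $\gamma=1$ only the $\Sigma$ term does.

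As an alternative that also feeds Theorem~\ref{thm:gs_rate}, I would record the congruence
\[
P_\gamma = D^{1/2}\bigl[(1-\gamma)I + \gamma C\bigr]D^{1/2},
\]
which is already used for Lemma~\ref{lem:parallel_direction}. The bracket has eigenvalues $(1-\gamma)+\gamma\lambda_i(C)$. Each of these is at least $\min(1,\lambda_N(C))>0$, so the bracket is SPD, and congruence by the nonsingular $D^{1/2}$ preserves positive definiteness by Sylvester's law of inertia. This version additionally yields the explicit spectrum of $D^{-1}P_\gamma$ needed later.

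I do not expect a genuine obstacle. The only point requiring care is that $D$ has strictly positive diagonal, which follows from $\sigma_{ii}>0$. The rest is the standard fact that the SPD matrices form a convex cone.
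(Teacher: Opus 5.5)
Your proof is correct and matches the paper's: both write $P_\gamma=(1-\gamma)D+\gamma\Sigma$ as a convex combination of the SPD matrices $D$ and $\Sigma$. The appendix (in the proof of Theorem~\ref{thm:gs_convergence}) gives exactly your quadratic-form computation, including the remark that at least one coefficient is strictly positive. Your congruence variant is also valid and reuses the factorisation behind Theorem~\ref{thm:gs_rate}(ii), but it is not needed for this statement.
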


\begin{proof}
$D$ is SPD because $\Sigma$ is SPD. Hence $P_\gamma = (1-\gamma)D + \gamma\Sigma$ is a convex combination of two SPD matrices and is therefore SPD. \textit{(Full proof in Appendix~\ref{app:proofs}.)}
\end{proof}

\paragraph{Structured linear shrinkage.} Equation~\eqref{eq05:pgamma_entries} makes the shrinkage interpretation literal. $P_\gamma$ is the linear shrinkage estimator of $\Sigma$ toward the \emph{variance-preserving diagonal target} $D$, with intensity $1 - \gamma$. The diagonal entries are univariate sample variances and converge at the univariate rate from the diagonal of the sample covariance alone; the off-diagonal entries are cross-moments and are the dominant source of covariance estimation noise, especially in low-variance eigendirections of $\widehat\Sigma$ where Marchenko--Pastur bias is worst. $P_\gamma$ preserves the reliable part of the covariance exactly and regularises the unreliable part.

\paragraph{Contrast with Ledoit--Wolf targets.} The classical Ledoit--Wolf targets all \emph{move the diagonal}. The scaled identity $T = (\bar\lambda/N)\,I$ \citep{ledoitwolf2004} replaces each $\sigma_{ii}$ by the cross-sectional mean variance. The constant-correlation target \citep{ledoitwolf2003} leaves the diagonal alone but re-inflates correlations uniformly. The single-factor target \citep{ledoitwolf2003,famafrench1993} substitutes $\beta_i\beta_j \sigma_f^2 + \delta_{ij}\sigma^2_{\varepsilon,i}$, which rewrites both diagonal and off-diagonal simultaneously. CRISP's target is $D$ itself, so $(P_\gamma)_{ii} = \sigma_{ii}$ for every $\gamma$. Section~\ref{sec05:lw_compare} returns to this after the convergence analysis.

%% ------------------------------------------------------------------
\subsection{Scalar Gauss--Seidel iteration}\label{sec05:gs}

CRISP solves $P_\gamma w = \mu$ by scalar Gauss--Seidel. For each asset $i$ in a fixed ordering,
\begin{equation}\label{eq05:gs_update}
  w_i \;\leftarrow\; \frac{1}{\sigma_{ii}}\left(\mu_i \;-\; \gamma \sum_{j \ne i} \sigma_{ij}\,w_j^{(\text{latest})}\right),
\end{equation}
where ``latest'' means the most recent value of $w_j$: entries with $j < i$ use the value just updated in the current sweep, entries with $j > i$ use the value from the previous sweep. A full sweep costs $O(N^2)$, dominated by one inner product $\sum_{j \ne i}\sigma_{ij} w_j$ per asset. No Cholesky or LU factorisation is formed; memory traffic is a single streaming pass through $\Sigma$ per sweep. At $\gamma = 0$ the update reduces to $w_i \leftarrow \mu_i/\sigma_{ii}$ and one sweep returns the diagonal solution $D^{-1}\mu$ exactly. At $\gamma = 1$ it is the classical Gauss--Seidel sweep for $\Sigma w = \mu$ and converges to $\Sigma^{-1}\mu$.

\paragraph{Starting point.} We always initialise with $w^{(0)} = D^{-1}\mu$. This is free, is exact at $\gamma = 0$, and means the very first sweep already starts from a physically meaningful portfolio---the signal normalised by idiosyncratic risk. Cross-asset correlations fold in as $\gamma$ and the sweep index grow.

Algorithm~\ref{alg05:methodb_gs} states the sweep loop formally. The implementation contains four small but practically important details: a cached diagonal so \eqref{eq05:gs_update} does no division inside the inner loop, the $D^{-1}\mu$ starting point, an explicit shortcut when $\gamma < \varepsilon$ so the diagonal solve incurs no iteration overhead, and a relative-change stopping rule that decides early termination adaptively rather than always running $p_{\max}$ sweeps.

\begin{algorithm}[t]
\caption{\textsc{CrispGaussSeidel}$(\Sigma, \mu, \gamma, p_{\max}, \varepsilon)$}
\label{alg05:methodb_gs}
\begin{algorithmic}[1]
\Require SPD covariance $\Sigma \in \R^{N\times N}$; signal $\mu \in \R^N$; shrinkage $\gamma \in [0,1]$; max sweeps $p_{\max} \in \N$; tolerance $\varepsilon > 0$.
\Ensure Approximate solution $w \approx P_\gamma^{-1}\mu$ where $P_\gamma = (1-\gamma)\diag(\Sigma) + \gamma\Sigma$.
\State $d_i \gets \Sigma_{ii}$ for $i = 1,\dots,N$ \Comment{diagonal cache}
\State $w_i \gets \mu_i / d_i$ for $i = 1,\dots,N$ \Comment{diagonal-solve initial guess}
\If{$\gamma < \varepsilon$}
  \State \Return $w$ \Comment{endpoint shortcut: $P_0 = D$}
\EndIf
\For{$p = 1$ \textbf{to} $p_{\max}$}
  \State $w_{\text{prev}} \gets w$
  \For{$i = 1$ \textbf{to} $N$}
    \State $s \gets \Sigma_{i,:}\,w - \Sigma_{ii}\,w_i$ \Comment{$\sum_{j\ne i}\sigma_{ij} w_j$, latest values}
    \State $w_i \gets (\mu_i - \gamma\,s) / d_i$
  \EndFor
  \If{$\|w - w_{\text{prev}}\| \le \varepsilon\,\|w_{\text{prev}}\|$}
    \State \textbf{break} \Comment{relative-change stopping rule}
  \EndIf
\EndFor
\State \Return $w$
\end{algorithmic}
\end{algorithm}

%% ------------------------------------------------------------------
\subsection{Unconditional convergence}\label{sec05:convergence}

\begin{theorem}[Unconditional convergence]\label{thm:gs_convergence}
Let $\Sigma$ be SPD, let $\gamma \in [0,1]$, and let $\mu \in \R^N$. Then Algorithm~\ref{alg05:methodb_gs} converges to the unique solution $\hat w = P_\gamma^{-1}\mu$ from any starting point $w^{(0)} \in \R^N$.
\end{theorem}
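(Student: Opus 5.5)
The plan is to recognise the sweep in Algorithm~\ref{alg05:methodb_gs} as the classical Gauss--Seidel iteration for the SPD system $P_\gamma w = \mu$. We then invoke the Ostrowski--Reich theorem \citep{ostrowski1954,varga2000}: Gauss--Seidel converges from every starting point for every SPD coefficient matrix.

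First we dispose of the endpoint $\gamma = 0$. There $P_0 = D$, and the update~\eqref{eq05:gs_update} reduces to $w_i \leftarrow \mu_i/\sigma_{ii}$ regardless of the current iterate. One sweep therefore returns $D^{-1}\mu = P_0^{-1}\mu$ exactly from any $w^{(0)}$. The shortcut in the algorithm returns the same vector, so this case is immediate. We read the theorem as a statement about the exact iteration, i.e.\ $\varepsilon \to 0$ and $p_{\max} \to \infty$, since the stopping rule only truncates the sequence.

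For $\gamma \in (0,1]$, split $P_\gamma = L_\gamma + D + L_\gamma^\top$, where $L_\gamma = \gamma L$ and $L$ is the strictly lower triangular part of $\Sigma$. The update~\eqref{eq05:gs_update} with ``latest'' values is exactly
\[
(D + L_\gamma)\,w^{(p+1)} = \mu - L_\gamma^\top w^{(p)} .
\]
The error $e^{(p)} = w^{(p)} - \hat w$, with $\hat w = P_\gamma^{-1}\mu$, therefore obeys $e^{(p+1)} = G_\gamma e^{(p)}$ with $G_\gamma = -(D+L_\gamma)^{-1}L_\gamma^\top$. The solution $\hat w$ exists and is unique by Proposition~\ref{prop:pgamma_spd}. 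Note that $D + L_\gamma$ is invertible because its diagonal entries $\sigma_{ii}$ are positive. It therefore suffices to show $\rho(G_\gamma) < 1$.

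For this step I would give the self-contained energy-norm argument rather than just cite the theorem. Write $A = P_\gamma$ and $M = D + L_\gamma$, so that $M + M^\top - A = D$, which is SPD. For any nonzero $e$, set $y = M^{-1}Ae$, so that $G_\gamma e = e - y$. The standard identity gives
\[
\|G_\gamma e\|_A^2 = \|e\|_A^2 - y^\top (M + M^\top - A)\, y = \|e\|_A^2 - y^\top D y < \|e\|_A^2 ,
\]
where the strict inequality uses $y \neq 0$. Compactness of the unit sphere then gives $\|G_\gamma\|_A < 1$, hence $\rho(G_\gamma) < 1$ and $e^{(p)} \to 0$ for every $w^{(0)}$.

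The only delicate point is verifying the algebraic identity with the correct transposes; everything else is routine. Because $D \succ 0$ holds for every $\gamma$, the contraction is uniform in the sense needed, and no condition on $\gamma$ beyond $\gamma \in [0,1]$ enters. This is exactly the ``unconditional'' content of the theorem.
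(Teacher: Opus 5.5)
Your proposal is correct and takes the paper's route: both show that $P_\gamma$ is SPD and then apply the Ostrowski--Reich theorem to the Gauss--Seidel sweep for $P_\gamma w=\mu$. The only difference is that the paper simply cites that theorem, whereas you prove the needed instance directly via the energy-norm identity $\|G_\gamma e\|_{P_\gamma}^2=\|e\|_{P_\gamma}^2-y^\top D y$, using $M+M^\top-P_\gamma=D\succ 0$; this makes your argument self-contained and the contraction explicit.
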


\begin{proof}[Proof sketch]
By Proposition~\ref{prop:pgamma_spd}, $P_\gamma$ is SPD. The Ostrowski--Reich theorem \citep{ostrowski1954}---see \citet[Thm.~4.6.2]{hackbusch2016} and \citet[Thm.~3.4]{varga2000}---states that scalar Gauss--Seidel on any SPD linear system converges to the unique solution from any initial vector. Applying this to $P_\gamma w = \mu$ gives the claim. Full proof in Appendix~\ref{app:proofs}.
\end{proof}

In plain language: Algorithm~\ref{alg05:methodb_gs} cannot fail. For every SPD $\Sigma$---including ill-conditioned ones---and every $\gamma \in [0,1]$---including both endpoints---there is no starting point from which the iteration diverges or stalls. The only remaining question is rate.

%% ------------------------------------------------------------------
\subsection{Correlation conditioning and the convergence-rate theorem}\label{sec05:rate}

The rate theorem says that the sweep count is controlled by the condition number of the correlation matrix, not of the covariance matrix. Volatility spread inflates $\kappa(\Sigma)$ but is invisible to $\kappa(C)$ and to CRISP. Define the Jacobi iteration matrix for $P_\gamma$ as $M_\gamma^{\mathrm{J}} = -D^{-1}(P_\gamma - D) = -\gamma\,D^{-1}E$, and the Gauss--Seidel iteration matrix as $M_\gamma^{\mathrm{GS}} = -(L + D)^{-1}U$, where $P_\gamma = L + D + U$ is the strict-lower / diagonal / strict-upper splitting.

\begin{theorem}[Convergence rate]\label{thm:gs_rate}
Let $\Sigma$ be SPD with $D = \diag(\Sigma)$, correlation matrix $C = D^{-1/2}\Sigma D^{-1/2}$, and eigenvalues $\lambda_1 \ge \cdots \ge \lambda_N > 0$. Fix $\gamma \in [0,1]$. Then:
\begin{enumerate}
\item[(i)] The Jacobi iteration matrix satisfies $M_\gamma^{\mathrm{J}} = -\gamma\,D^{-1}E$, is similar to $\gamma(I - C)$, and has eigenvalues $\{\gamma(1 - \lambda_i(C))\}_{i=1}^N$. Its spectral radius is
\begin{equation}\label{eq05:rho_jac}
  \rho(M_\gamma^{\mathrm{J}}) \;=\; \gamma\,\max\bigl(\lambda_1 - 1,\; 1 - \lambda_N\bigr).
\end{equation}
\item[(ii)] The preconditioned condition number of $P_\gamma$ with Jacobi preconditioner $D^{-1}$ is
\begin{equation}\label{eq05:kappa_preconditioned}
  \kappa\bigl(D^{-1}P_\gamma\bigr) \;=\; \frac{(1-\gamma) + \gamma\,\lambda_1}{(1-\gamma) + \gamma\,\lambda_N},
\end{equation}
monotone-increasing in $\gamma$ from $1$ at $\gamma = 0$ to $\kappa(C) = \lambda_1/\lambda_N$ at $\gamma = 1$.
\item[(iii)] $\rho(M_\gamma^{\mathrm{GS}}) < 1$ strictly for every $\gamma \in [0,1]$, and the number of sweeps to reach relative residual $\varepsilon$ is bounded by
\begin{equation}\label{eq05:sweep_bound}
  p(\varepsilon, \gamma) \;=\; O\!\Bigl(\kappa\bigl(D^{-1}P_\gamma\bigr)\,\log(1/\varepsilon)\Bigr),
\end{equation}
and specialising to $\gamma = 1$,
\begin{equation}\label{eq05:sweep_bound_gamma_one}
  p(\varepsilon, 1) \;=\; O\bigl(\kappa(C)\,\log(1/\varepsilon)\bigr).
\end{equation}
\end{enumerate}
\end{theorem}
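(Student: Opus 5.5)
The plan is to reduce every statement to the symmetrically scaled matrix
\[
\widetilde P_\gamma \;=\; D^{-1/2}P_\gamma D^{-1/2} \;=\; (1-\gamma)I + \gamma C,
\]
which has unit diagonal and the same spectrum as $D^{-1}P_\gamma$. This is the mechanism of Proposition~\ref{prop:corr_similarity}, applied to $P_\gamma$ instead of $\Sigma$.

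\emph{Part (i).} Note that $D^{-1/2}ED^{-1/2} = C - I$. Hence
\[
D^{1/2}\bigl(-\gamma D^{-1}E\bigr)D^{-1/2} \;=\; -\gamma\,D^{-1/2}ED^{-1/2} \;=\; \gamma(I-C),
\]
so $M_\gamma^{\mathrm{J}}$ is similar to the symmetric matrix $\gamma(I-C)$, with eigenvalues $\gamma(1-\lambda_i)$. For the spectral radius I use $\tr(C)=N$. The eigenvalues average to $1$, so $\lambda_1\ge 1\ge\lambda_N$. Therefore $\max_i|1-\lambda_i| = \max(\lambda_1-1,\,1-\lambda_N)$, which gives \eqref{eq05:rho_jac}.

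\emph{Part (ii).} The same conjugation shows that $D^{-1}P_\gamma$ is similar to $(1-\gamma)I+\gamma C$. Its eigenvalues are $(1-\gamma)+\gamma\lambda_i$. These are positive because they are convex combinations of $1$ and $\lambda_i>0$, so the condition number is the stated ratio. For monotonicity, write
\[
f(\gamma) \;=\; \frac{1+\gamma(\lambda_1-1)}{1+\gamma(\lambda_N-1)}.
\]
The quotient rule gives that $f'(\gamma)$ has the sign of $(\lambda_1-1)-(\lambda_N-1)=\lambda_1-\lambda_N\ge 0$. Finally, $f(0)=1$ and $f(1)=\lambda_1/\lambda_N=\kappa(C)$.

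\emph{Part (iii), strict contraction.} The bound $\rho(M_\gamma^{\mathrm{GS}})<1$ is exactly Theorem~\ref{thm:gs_convergence}: Ostrowski--Reich applied to the SPD matrix $P_\gamma$ from Proposition~\ref{prop:pgamma_spd}.

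\emph{Part (iii), the rate.} I first observe that Gauss--Seidel is equivariant under diagonal scaling. If $w^{(k)}$ are the iterates for $P_\gamma w=\mu$, then $D^{1/2}w^{(k)}$ are the iterates for $\widetilde P_\gamma\,y = D^{-1/2}\mu$. So it suffices to bound the energy-norm contraction of Gauss--Seidel on $A=\widetilde P_\gamma$. This step is where volatility dispersion is eliminated.

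For the unit-diagonal SPD matrix $A = I + L_A + L_A^\top$, I would use the Xu--Zikatanov / subspace-correction identity. It gives $\|M^{\mathrm{GS}}\|_A^2 = 1 - 1/c_0$, with
\[
c_0 \;\le\; \frac{\|I+L_A\|^2}{\lambda_{\min}(A)}.
\]
The two ingredients are $\lambda_{\min}(A) = (1-\gamma)+\gamma\lambda_N$ and a bound on the strict lower triangle $L_A$.

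\emph{Main obstacle.} The hard step is bounding $\|I+L_A\|$ by a constant multiple of $\lambda_{\max}(A)$. Triangular truncation can in general cost an extra $\log N$ factor, and a crude bound yields $\kappa^2$ rather than $\kappa$. I would try three things, in order:
\begin{enumerate}
\item A Kahan-type bound for $I+L_A$ using $\|L_A\|\le\|A-I\|$, which is valid in the structured and diagonally dominant regimes.
\item Failing that, the symmetric Gauss--Seidel variant, whose energy contraction is $1-\lambda_{\min}(A)/\lambda_{\max}(A)$ up to a constant fixed by the unit diagonal.
\item Otherwise, accept the bound in the form $O(\kappa\log(1/\varepsilon))$, with the hidden constant absorbing $\|I+L_A\|^2/\lambda_{\max}(A)$.
\end{enumerate}
Whichever route works, the per-sweep energy contraction is $1-c/\kappa(D^{-1}P_\gamma)$. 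Iterating $p$ sweeps and taking logarithms gives $p = O(\kappa\log(1/\varepsilon))$.

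\emph{Converting to relative residual.} Passing from the energy norm to the relative residual costs a factor of at most $\sqrt{\kappa}$. This factor enters only inside the logarithm, so it changes the bound by an additive $O(\log\kappa)$ term and is absorbed into the $O(\cdot)$ of \eqref{eq05:sweep_bound}. Setting $\gamma=1$ then gives \eqref{eq05:sweep_bound_gamma_one}.
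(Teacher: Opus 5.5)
Your parts (i), (ii) and the strict bound $\rho(M_\gamma^{\mathrm{GS}})<1$ are correct and are essentially the paper's argument: conjugation by $D^{1/2}$, the trace argument for $\lambda_1\ge 1\ge\lambda_N$, and Ostrowski--Reich. The quotient-rule monotonicity and the remark that Gauss--Seidel commutes with the scaling $w\mapsto D^{1/2}w$ are also fine.

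The gap is the rate step, and none of your three routes closes it. First, you have misidentified the target. Since $c_0\le\|I+L_A\|^2/\lambda_{\min}(A)=\kappa(A)\,\|I+L_A\|^2/\lambda_{\max}(A)$, a $\kappa$-linear bound needs $\|I+L_A\|^2=O(\lambda_{\max}(A))$, not $\|I+L_A\|=O(\lambda_{\max}(A))$. That is impossible once $\lambda_{\max}(A)$ is large, because $(I+L_A)+(I+L_A)^\top=I+A$ forces $\|I+L_A\|\ge\tfrac12(1+\lambda_{\max}(A))$. So any estimate passing through $\|I+L_A\|$ gives at best $c_0=O(\kappa(A)\,\lambda_{\max}(A))$, and in general only up to a further $\log^2 N$ from triangular truncation. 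For a correlation matrix with a common factor, $\lambda_1(C)$ grows like $\rho N$. Route 3 just hides this $N$-dependent factor in the ``constant''. Route 2 does not help either: the energy-norm error operator of symmetric Gauss--Seidel is $E^\ast E$, with $\|E^\ast E\|_A=\|E\|_A^2=1-1/c_0$, the same $c_0$.

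The obstruction is real, not an artefact of the estimates. Take $A=(1-\rho)I+\rho\mathbf{1}\mathbf{1}^\top$, which is $\widetilde P_\gamma$ for an equicorrelated $C$ (with $\rho$ standing for $\gamma$ times the correlation), and put $q=1-\rho$. One sweep maps an error $e$ to $e^+_i=\rho\sum_{k\le i}q^{i-k}e_k-\rho q^{i-1}\mathbf{1}^\top e$. The vector $e_k=\beta^{k-1}$ is an eigenvector with eigenvalue $\lambda=\beta^N=\rho\beta/(\beta-q)$ whenever $\beta^{N-1}(\beta-q)=\rho$ and $\beta\ne 1$. The root near $e^{2\pi i/N}$ gives $|\lambda|\approx\rho/|1-q\,e^{-2\pi i/N}|=1-\Theta\bigl(q/(\rho^2N^2)\bigr)$, while $1/\kappa(A)\approx q/(\rho N)$. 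On this family the worst-case sweep count is therefore of order $\kappa\,\lambda_{\max}\log(1/\varepsilon)$, and no argument can yield (iii) with an $N$-independent constant.

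The paper's proof gets past this point only by quoting $\|M^{\mathrm{GS}}\|_A\le 1-c/\kappa(D_A^{-1}A)$ with an ``absolute'' $c$, which the same example refutes. What your Xu--Zikatanov route does prove, with $A=\widetilde P_\gamma$, is
$p(\varepsilon,\gamma)=O\bigl(\|I+L_A\|^2\,\lambda_{\min}(\widetilde P_\gamma)^{-1}\log(1/\varepsilon)\bigr)$.
You should state (iii) in that form, or say explicitly that the constant depends on $\widetilde P_\gamma$, rather than claim a $\kappa$-linear bound.
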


\begin{proof}[Proof sketch]
\emph{(i).} From $P_\gamma = D + \gamma E$,
\[
  M_\gamma^{\mathrm{J}} \;=\; I - D^{-1}P_\gamma \;=\; -\gamma\,D^{-1}E \;=\; \gamma\bigl(I - D^{-1}\Sigma\bigr).
\]
By Proposition~\ref{prop:corr_similarity}, $D^{-1}\Sigma = D^{-1/2}\,C\,D^{1/2}$ is similar to $C$, so $M_\gamma^{\mathrm{J}}$ is similar to $\gamma(I - C)$ with eigenvalues $\{\gamma(1-\lambda_i)\}$. Since $\tr(C) = N$ forces $\lambda_1 \ge 1 \ge \lambda_N$, the spectral radius is \eqref{eq05:rho_jac}.

\emph{(ii).} $D^{-1}P_\gamma = (1-\gamma)I + \gamma\,D^{-1}\Sigma$ has eigenvalues $\{(1-\gamma) + \gamma\lambda_i\}$; their ratio gives \eqref{eq05:kappa_preconditioned}. The map $\gamma \mapsto [(1-\gamma)+\gamma\lambda_1]/[(1-\gamma)+\gamma\lambda_N]$ is the M\"obius transform $x \mapsto (a+bx)/(a+cx)$ in $\gamma$ with $a = 1$, $b = \lambda_1 - 1$, $c = \lambda_N - 1$, which is monotone because $b \ne c$ whenever $\lambda_1 > \lambda_N$; the endpoints $1$ and $\kappa(C)$ come from substitution.

\emph{(iii).} By Theorem~\ref{thm:gs_convergence} and \citet[Thm.~3.4]{varga2000}, $\rho(M_\gamma^{\mathrm{GS}}) < 1$ strictly. The classical Hackbusch--Saad rate bound for Gauss--Seidel on an SPD system in the energy norm of the system matrix is
\[
  \|e^{(p)}\|_{P_\gamma} \;\le\; \Bigl(1 - \frac{c}{\kappa(D^{-1}P_\gamma)}\Bigr)^{p}\,\|e^{(0)}\|_{P_\gamma},
\]
for a constant $c > 0$ depending on the \emph{Jacobi-preconditioning structure} of $P_\gamma$, not only on its scale; see \citet[\S4.7]{hackbusch2016} and \citet[Thm.~4.8]{saad2003}. Solving for the smallest $p$ that drives the right-hand side below $\varepsilon$ yields \eqref{eq05:sweep_bound}; setting $\gamma = 1$ and using (ii) gives \eqref{eq05:sweep_bound_gamma_one}. Full proof in Appendix~\ref{app:proofs}.
\end{proof}

Part (ii) is the load-bearing statement. Throughout the paper the \emph{preconditioned} condition number $\kappa(D^{-1}P_\gamma)$---not $\kappa(P_\gamma)$ or $\kappa(\Sigma)$---is the object that governs iterative cost; volatility dispersion $\kappa_D = \max_i\sigma_{ii}/\min_i\sigma_{ii}$ absorbs into the problem-dependent constant in the big-O of \eqref{eq05:sweep_bound} and does not appear in the leading term. This is the structural consequence of choosing the target $D$: scaling the rows and columns of $\Sigma$ by arbitrary positive weights changes $\kappa(\Sigma)$ without bound but leaves $\kappa(C)$ and hence the sweep budget invariant.

\begin{remark}[Why not Stein--Rosenberg]\label{rem:stein_rosenberg}
For a matrix whose Jacobi iteration matrix is entrywise \emph{nonnegative}---equivalently $\Sigma$ is an $M$-matrix with nonpositive off-diagonals---Young's theorem on consistently ordered matrices and the Stein--Rosenberg comparison \citep{young1971}, \citet[\S3.3]{varga2000}, give the tighter bound $p(\varepsilon,1) = O\bigl(\sqrt{\kappa(C)}\,\log(1/\varepsilon)\bigr)$. Realistic equity and factor covariances have \emph{positive} off-diagonal entries (assets are positively correlated on average) so $\Sigma$ is not an $M$-matrix and the hypothesis of Young's theorem fails. We therefore quote the universally-applicable SPD bound \eqref{eq05:sweep_bound_gamma_one}; this is weaker but holds for every SPD $\Sigma$ without further sign conditions on $E$. In the synthetic experiments of Section~\ref{sec10:experiments} the empirical sweep count is well described by the linear-in-$\kappa(C)$ rate, as expected.
\end{remark}

\begin{corollary}[Total complexity]\label{cor:complexity}
The total cost of CRISP to drive the relative residual of $P_\gamma w = \mu$ below $\varepsilon$ is
\begin{equation}\label{eq05:total_cost}
  O\!\Bigl(\kappa\bigl(D^{-1}P_\gamma\bigr)\,N^2\,\log(1/\varepsilon)\Bigr),
\end{equation}
specialising at $\gamma = 1$ to $O\bigl(\kappa(C)\,N^2\,\log(1/\varepsilon)\bigr)$. The sweep count is a hidden constant only on problems where $\kappa(C) = O(1)$; on ill-conditioned correlation matrices the sweep budget scales accordingly.
\end{corollary}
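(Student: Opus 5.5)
The corollary follows by combining the per-sweep cost of Algorithm~\ref{alg05:methodb_gs} with the sweep bound of Theorem~\ref{thm:gs_rate}(iii). I would carry it out in three steps.

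\emph{Per-sweep cost.} One pass of the outer loop performs, for each of the $N$ assets, one inner product $\Sigma_{i,:}w$ of length $N$, followed by $O(1)$ scalar work. A sweep therefore costs $O(N^2)$ flops. The stopping test $\|w-w_{\text{prev}}\|\le\varepsilon\|w_{\text{prev}}\|$ adds only $O(N)$ per sweep. The one-time setup, namely the diagonal cache and the start $w^{(0)}=D^{-1}\mu$, is $O(N)$ and is dominated.

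\emph{Sweep count and product.} Theorem~\ref{thm:gs_rate}(iii) gives the contraction $\|e^{(p)}\|_{P_\gamma}\le(1-c/\kappa(D^{-1}P_\gamma))^p\|e^{(0)}\|_{P_\gamma}$. Using $\log(1-x)\le -x$, the error falls below $\varepsilon$ once $p\ge \kappa(D^{-1}P_\gamma)\log(1/\varepsilon)/c$. Multiplying this count by the $O(N^2)$ per-sweep cost gives \eqref{eq05:total_cost}. Substituting $\gamma=1$ into Theorem~\ref{thm:gs_rate}(ii), which gives $\kappa(D^{-1}P_1)=\kappa(C)$, yields the $\gamma=1$ specialisation. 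The closing sentence of the corollary is a reading of this formula: the factor $\kappa(D^{-1}P_\gamma)\le\kappa(C)$ is $O(1)$ exactly when $\kappa(C)$ is, by the monotonicity in part (ii).

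\emph{Main obstacle: norm translation.} The rate in Theorem~\ref{thm:gs_rate} is stated for the error in the $P_\gamma$-energy norm, but the corollary is phrased in terms of the relative residual. I would handle this with the standard equivalences. The residual is $r=P_\gamma e$, and $\|r\|_{P_\gamma^{-1}}=\|e\|_{P_\gamma}$. Moving between the $P_\gamma^{-1}$-norm and the Euclidean norm introduces at most a factor $\kappa(P_\gamma)^{1/2}$.

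That factor enters only inside the logarithm, as $\log(\kappa(P_\gamma)^{1/2}/\varepsilon)$. To keep the volatility dispersion $\kappa_D$ out of the leading term, I would measure residuals in the $D^{-1}$-weighted norm, i.e.\ work with the symmetrically scaled system $D^{-1/2}P_\gamma D^{-1/2}$, which is $(1-\gamma)I+\gamma C$. In that scaling the conversion factor depends only on $\kappa(D^{-1}P_\gamma)$. Either way, the extra term is additive inside the logarithm and is absorbed into the problem-dependent big-$O$ constant, consistent with the discussion after Theorem~\ref{thm:gs_rate}.
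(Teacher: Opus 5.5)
Your reduction is the same as the paper's: $O(N^2)$ work per sweep times the sweep count of Theorem~\ref{thm:gs_rate}(iii), with part~(ii) giving $\kappa(D^{-1}P_1)=\kappa(C)$. Your conversion from the $P_\gamma$-energy norm of the error to a residual in the scaled system $(1-\gamma)I+\gamma C$ is correct. It costs only $\tfrac12\log\kappa(D^{-1}P_\gamma)$ inside the logarithm, and it fills a step the paper skips, since the paper's proof of (iii) calls the energy norm of the error a ``residual''. One small overreach: monotonicity in part~(ii) gives only $\kappa(D^{-1}P_\gamma)\le\kappa(C)$, so ``exactly when'' is too strong for $\gamma<1$. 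Indeed $\kappa(D^{-1}P_\gamma)\le 1+\gamma\lambda_1/(1-\gamma)$ stays bounded (for bounded $\lambda_1$) when a near-perfect hedge pair drives $\lambda_N\to 0$ and $\kappa(C)\to\infty$.

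The genuine gap is the input you take as a black box. A contraction $\|M^{\mathrm{GS}}_\gamma\|_{P_\gamma}\le 1-c/\kappa(D^{-1}P_\gamma)$ with $c$ independent of $N$ does not hold in general for dense SPD matrices, and if $c$ may depend on $N$ the bound says nothing. Write $\tilde P=D^{-1/2}P_\gamma D^{-1/2}=I+\tilde L+\tilde L^{\top}$ with $\tilde L$ strictly lower triangular. The exact identity is $\|M^{\mathrm{GS}}_\gamma\|_{P_\gamma}^2=1-1/K$ with $K=1+\lambda_{\max}\bigl(\tilde P^{-1}\tilde L\tilde L^{\top}\bigr)$, so the rate is governed by the size of $\tilde L$, not by $\kappa(\tilde P)$ alone.

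Take the paper's own equicorrelation regime, where $\tilde P=(1-c)I+c\mathbf{1}\mathbf{1}^{\top}$ with $c=\gamma\rho$ and $\kappa=\kappa(\tilde P)=1+Nc/(1-c)$. The Gauss--Seidel eigenvalues solve $\bigl(\lambda(1-c)/(\lambda-c)\bigr)^{N}=\lambda$. The root near $1$ on the branch $\lambda(1-c)/(\lambda-c)=\lambda^{1/N}e^{2\pi i/N}$ has $|\lambda|=1-2\pi^2/\bigl(cN(\kappa-1)\bigr)+O(N^{-3})$. Since $\|M^{\mathrm{GS}}_\gamma\|_{P_\gamma}\ge\rho(M^{\mathrm{GS}}_\gamma)$, no $N$-independent $c$ exists. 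The sweep count there is of order $N\,\kappa(D^{-1}P_\gamma)\log(1/\varepsilon)$, which is a known worst case for cyclic coordinate descent.

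Your argument, like the paper's, therefore yields \eqref{eq05:total_cost} only with a hidden constant that can grow linearly in $N$. What it actually proves is $O\bigl(K\,N^2\log(1/\varepsilon)\bigr)$, and the claimed independence of the sweep budget from universe size does not follow.
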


The \emph{directional difficulty diagnostic} of Section~\ref{sec02:preliminaries}, Equation~\eqref{eq02:dir_diag},
\[
  \operatorname{dir}_{\mathrm{diag}}(\Sigma,\mu) \;=\; \operatorname{dir}\!\bigl(D^{-1}\mu,\;\Sigma^{-1}\mu\bigr),
\]
sits next to Corollary~\ref{cor:complexity} as the empirical analogue: $\operatorname{dir}_{\mathrm{diag}}$ measures how far the $\gamma = 0$ iterate is from the $\gamma = 1$ target in direction, and $\kappa(D^{-1}P_\gamma)$ measures how many sweeps it takes to traverse that gap.

Figure~\ref{fig05:shrinkage_schematic} plots the two curves that drive \eqref{eq05:total_cost} together: the closed-form preconditioned condition number $\kappa(D^{-1}P_\gamma)$ from \eqref{eq05:kappa_preconditioned}, and the empirical sweep count of Algorithm~\ref{alg05:methodb_gs} on the same test covariance. Both are monotone in $\gamma$ and pinned down at the endpoints $\kappa(D^{-1}P_0) = 1$ and $\kappa(D^{-1}P_1) = \kappa(C)$. Correlation conditioning---not volatility dispersion, not $\kappa(\Sigma)$---drives CRISP's iteration cost.

\begin{figure}[htbp]
\centering
\includegraphics[width=0.85\textwidth]{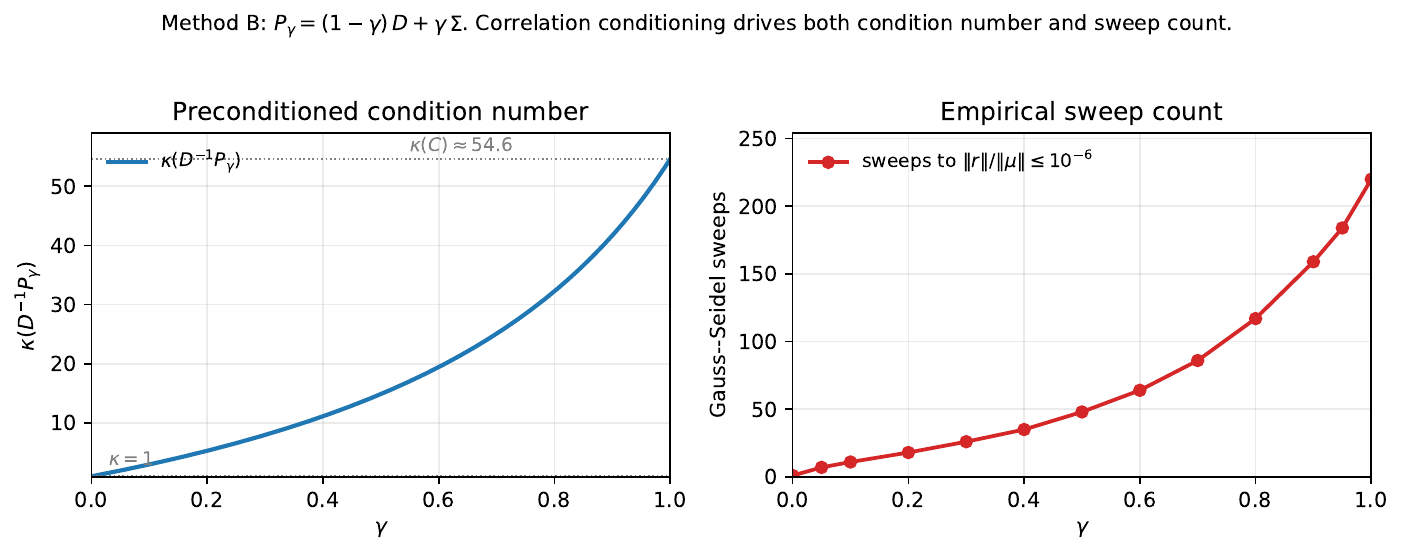}
\caption{The shrinkage operator $P_\gamma = (1-\gamma)D + \gamma\Sigma$ and the CRISP convergence rate on a block-structured $60$-asset test covariance with four clusters, within-block correlation $0.65$, cross-block correlation $0.20$, and idiosyncratic volatilities uniform on $[0.15,0.40]$. \textbf{Left:} the preconditioned condition number $\kappa(D^{-1}P_\gamma) = [(1-\gamma)+\gamma\lambda_1]/[(1-\gamma)+\gamma\lambda_N]$ from Theorem~\ref{thm:gs_rate}(ii), monotone from $1$ at $\gamma = 0$ to $\kappa(C) \approx 54.6$ at $\gamma = 1$. \textbf{Right:} empirical sweep count for Algorithm~\ref{alg05:methodb_gs} to reach relative residual $\|P_\gamma w - \mu\|/\|\mu\| \le 10^{-6}$ on the same matrix with a random signal $\mu$. The empirical sweep count tracks the condition-number curve, confirming the linear-in-$\kappa(D^{-1}P_\gamma)$ bound \eqref{eq05:sweep_bound}.}
\label{fig05:shrinkage_schematic}
\end{figure}

%% ------------------------------------------------------------------
\subsection{What CRISP does not do}\label{sec05:not_do}

Two clarifying negatives are worth stating explicitly because each has been inferred in conversation.

\paragraph{CRISP does not nest HRP.} At $\gamma = 0$ Algorithm~\ref{alg05:methodb_gs} returns the flat diagonal solution $w_i = \mu_i/\sigma_{ii}$---the signal weighted by inverse idiosyncratic variance---\emph{not} the HRP weights. HRP is a tree-structured construction that requires a hierarchical clustering of $\Sigma$ and a recursive-bisection traversal; at $\mu = \mathbf{1}$ it produces a portfolio whose weights are products of inverse cluster variances along the root-to-leaf path. The $\gamma = 0$ endpoint of CRISP does not know about any tree and, even at $\mu = \mathbf{1}$, returns the flat inverse-variance portfolio, which generally differs from HRP. The hierarchical nesting of HRP and the signal-aware nesting that recovers HRP at $\mu = \mathbf{1}$ are the job of HRP-$\mu$ (Section~\ref{sec04:methoda3}), not CRISP.

\paragraph{CRISP does not use the tree.} The sweep ordering in Algorithm~\ref{alg05:methodb_gs} is arbitrary: Theorem~\ref{thm:gs_convergence} holds for any fixed ordering, and the big-O in \eqref{eq05:total_cost} does not depend on it. Empirically, ordering assets so that strongly correlated ones are visited consecutively---for instance, the leaf ordering of the HRP quasi-diagonalisation \citep{lopezdeprado2016}---can modestly improve the empirical contraction factor, a standard Gauss--Seidel phenomenon \citep[\S4.1]{saad2003}. We use the quasi-diagonalised ordering in experiments but the theoretical rate does not rely on it.

%% ------------------------------------------------------------------
\subsection{Factor-model streaming variant}\label{sec05:factor_stream}

When $\Sigma = B\Lambda B^{\top} + D_{\mathrm{idio}}$ admits a $K$-factor decomposition with $B \in \R^{N\times K}$, $\Lambda \in \R^{K\times K}$ SPD, and $D_{\mathrm{idio}} \in \R^{N}$ the idiosyncratic variances---the structure of essentially every production equity and fixed-income risk model \citep{famafrench1993,fan2013}---CRISP can be implemented \emph{without materialising the dense $\Sigma$}. The trick is to maintain the $K$-vector aggregate $z = B^{\top}w$ incrementally so that the row inner product $\Sigma_{i,:}\,w = B_i\Lambda B^{\top}w + d_{\mathrm{idio},i}\,w_i$ reduces to an application of $\Lambda$ to the short vector $z$ plus a scalar correction. Per-asset cost becomes $O(K^2)$ and working memory $O(NK)$ rather than $O(N^2)$. Algorithm~\ref{alg05:methodb_stream} states the pseudocode.

\begin{algorithm}[t]
\caption{\textsc{CrispFactorStream}$(B, \Lambda, d_{\mathrm{idio}}, \mu, \gamma, p_{\max}, \varepsilon)$}
\label{alg05:methodb_stream}
\begin{algorithmic}[1]
\Require Factor loadings $B \in \R^{N\times K}$; factor covariance $\Lambda \in \R^{K\times K}$ SPD; idiosyncratic variances $d_{\mathrm{idio}} \in \R^{N}$; signal $\mu$; shrinkage $\gamma$; max sweeps $p_{\max}$; tolerance $\varepsilon$.
\Ensure Approximate solution $w \approx P_\gamma^{-1}\mu$ with $\Sigma = B\Lambda B^{\top} + \diag(d_{\mathrm{idio}})$.
\State $\sigma^{2}_i \gets B_i\Lambda B_i^{\top} + d_{\mathrm{idio},i}$ for $i = 1,\dots,N$ \Comment{diagonal of $\Sigma$ from factors}
\State $w_i \gets \mu_i/\sigma^{2}_i$ for $i = 1,\dots,N$ \Comment{diagonal-solve initial guess}
\State $z \gets B^{\top}w$ \Comment{$K$-vector aggregate}
\If{$\gamma < \varepsilon$}
  \State \Return $w$
\EndIf
\For{$p = 1$ \textbf{to} $p_{\max}$}
  \For{$i = 1$ \textbf{to} $N$}
    \State $z_{\setminus i} \gets z - B_i\,w_i$ \Comment{remove asset $i$'s contribution}
    \State $\mathit{off} \gets B_i\,\Lambda\,z_{\setminus i}$ \Comment{$(\Sigma w)_i - \sigma^{2}_i w_i$ without forming $\Sigma$}
    \State $w_i^{\mathrm{new}} \gets (\mu_i - \gamma\,\mathit{off})/\sigma^{2}_i$
    \State $z \gets z_{\setminus i} + B_i\,w_i^{\mathrm{new}}$ \Comment{incremental aggregate update}
    \State $w_i \gets w_i^{\mathrm{new}}$
  \EndFor
\EndFor
\State \Return $w$
\end{algorithmic}
\end{algorithm}

The iterates of Algorithm~\ref{alg05:methodb_stream} are identical to those of Algorithm~\ref{alg05:methodb_gs} run on the dense $\Sigma$; this is an optimisation, not a different method, and all results of Theorem~\ref{thm:gs_rate} carry over unchanged.

\paragraph{Numerical example.} At $N = 30{,}000$, $K = 20$: the dense $\Sigma$ occupies $N^2 \cdot 8\,\mathrm{B} \approx 7.2$~GB; the factor-streaming working set is $NK + K^2 + N \approx 4.8$~MB. Three orders of magnitude separate ``needs a dedicated server'' from ``fits in L3 cache.'' Dense Cholesky cannot exploit the factor structure because its factorisation needs random access to the full covariance and produces a dense triangular factor. The Woodbury identity gives $O(NK^2 + K^3)$ exact inversion of factor-structured $\Sigma$, but it inverts the \emph{raw} covariance and so inherits all of the overfitting that CRISP's shrinkage exists to prevent.

%% ------------------------------------------------------------------
\subsection{Shrinkage interpretation and Ledoit--Wolf comparison}\label{sec05:lw_compare}

At convergence CRISP returns $P_\gamma^{-1}\mu$, which is exactly Markowitz on the linearly shrunk covariance $P_\gamma = (1-\gamma)D + \gamma\Sigma$. CRISP therefore sits inside the Ledoit--Wolf linear-shrinkage family \citep{ledoitwolf2003,ledoitwolf2004,ledoitwolf2017,ledoitwolf2020}, and the interest lies not in whether it differs from shrunk Markowitz in some abstract sense, but in exactly three structural choices:

\begin{enumerate}
\item[(1)] \emph{Target.} CRISP uses $D = \diag(\Sigma)$, which preserves variances exactly---only off-diagonal correlations are regularised. Classical Ledoit--Wolf targets (scaled identity, constant-correlation, single-factor) all distort the diagonal. The scaled identity replaces $\sigma_{ii}$ by the cross-sectional mean $\bar\lambda/N$ \citep{ledoitwolf2004}; the constant-correlation target keeps $\sigma_{ii}$ but imposes a single off-diagonal correlation for the target \citep{ledoitwolf2003}; the single-factor target substitutes $\beta_i\beta_j\sigma_f^2 + \delta_{ij}\sigma^2_{\varepsilon,i}$, which rewrites both diagonals and off-diagonals at once. The CRISP target is the only one that asserts, as a modelling choice, that sample \emph{variances} do not need shrinkage.

\item[(2)] \emph{Intensity criterion.} Ledoit--Wolf chooses its intensity $\alpha_{\mathrm{LW}}$ to minimise the expected Frobenius loss $\E\|\widehat\Sigma_{\mathrm{LW}} - \Sigma\|_F^2$---a \emph{statistical} criterion on the covariance estimator. CRISP chooses $\gamma$ to maximise out-of-sample portfolio Sharpe---an \emph{operational} criterion on the portfolio. Section~\ref{sec06:perturbation} makes clear why these need not coincide: a covariance estimator that minimises Frobenius distance to $\Sigma$ can still overweight noise-dominated eigendirections once inverted, and the Frobenius loss does not see that downstream effect. The oracle-approximating shrinkage (OAS) estimator \citep{chen2010} and the nonlinear shrinkage variants \citep{ledoitwolf2017,ledoitwolf2020} refine the intensity estimator under the same statistical criterion.

\item[(3)] \emph{Implementation.} Ledoit--Wolf forms $\widehat\Sigma_{\mathrm{LW}}$ explicitly and inverts with dense Cholesky. CRISP never forms $P_\gamma^{-1}$; Algorithm~\ref{alg05:methodb_gs} drives $P_\gamma w \to \mu$ iteratively in $O(\kappa(D^{-1}P_\gamma)\,N^2\,\log(1/\varepsilon))$ time. This is an engineering advantage---and it enables the factor-streaming variant of Section~\ref{sec05:factor_stream}---but it is not what makes the portfolio better.
\end{enumerate}

The philosophy shift, independent of Stein's original motivation \citep{stein1956}, is the first two items: CRISP regularises exactly the unreliable part of the covariance and selects the regularisation strength for the downstream application.

\paragraph{Open question.} Is there a Ledoit--Wolf target $T$ and intensity rule $\alpha_{\mathrm{LW}}$ such that $\widehat\Sigma_{\mathrm{LW}}^{-1}\mu = P_{\gamma^{\star}}^{-1}\mu$ in expectation over a specified sampling distribution, where $\gamma^{\star}$ is the Sharpe-optimal CRISP intensity? The empirical Monte Carlo of Section~\ref{sec10:experiments} is consistent with $\gamma^{\star} \in [0.5, 0.7]$ across panels, but we do not have a closed-form correspondence. We flag this as an open problem; Ledoit--Wolf is included as a head-to-head baseline in every out-of-sample experiment.

%% ------------------------------------------------------------------
\subsection{Early stopping as implicit spectral regularisation}\label{sec05:early_stop}

\begin{remark}[Early stopping as implicit spectral regularisation]\label{rem:early_stop_spectral}
The per-eigendirection content of Theorem~\ref{thm:gs_rate} has a statistical side-effect when $\Sigma$ is a finite-sample estimate $\widehat\Sigma$. Gauss--Seidel on $P_\gamma w = \mu$ converges geometrically in each eigendirection of $D^{-1}P_\gamma$: after $p$ sweeps, the residual on the $k$-th eigendirection is $O(\rho_k^p)$ with $\rho_k$ bounded by $\rho(M_\gamma^{\mathrm{GS}}) < 1$ and decreasing with the eigenvalue. Well-conditioned (high-eigenvalue) directions converge exponentially faster than ill-conditioned (low-eigenvalue) directions. When $\widehat\Sigma$ is a sample estimate, the low-eigenvalue directions of $D^{-1}\widehat P_\gamma$ are exactly the directions where Marchenko--Pastur bias \citep{marchenko1967} and finite-sample noise are worst, and the directions direct Markowitz amplifies. Stopping Gauss--Seidel before convergence therefore implicitly attenuates the portfolio's exposure to the most noise-contaminated spectral subspace. The sweep count $p$ plays the role of $1/\lambda$ in ridge regression: small $p$ is strong regularisation (heavy spectral truncation), large $p$ is weak regularisation (recovery of full $\widehat P_\gamma^{-1}\mu$). This is \emph{Channel~3} of the three-channel decomposition in Section~\ref{sec07:adaptive}; together with operator shrinkage (Channel~1, driven by $\gamma$) and noise cancellation (Channel~2, driven by the target $D$), the three channels act orthogonally on three different spectral regimes of $\widehat\Sigma$.
\end{remark}

%% ------------------------------------------------------------------
\subsection{Wall-clock versus memory}\label{sec05:wallclock}

\begin{remark}[Wall-clock: Cholesky with tuned BLAS is fast]\label{rem:wall_clock}
The na\"ive FLOP crossover between CRISP at $p = 100$ sweeps and Cholesky factorisation occurs at $N \approx 600$, from $100 \cdot 2N^2 \approx N^3/3$. The true wall-clock picture is much less favourable to CRISP because LAPACK Cholesky exploits BLAS-3 (matrix--matrix) throughput while the inherently sequential Gauss--Seidel sweep---each $w_i$ update depends on $w_j$ updated earlier in the sweep---is limited to scalar throughput. On modern matrix-optimised silicon the gap between these two throughputs is very large.

We measured this on an Apple~M4 with Accelerate-linked LAPACK (NumPy~2.4, SciPy~1.17, macOS~15.7), comparing \texttt{scipy.linalg.cho\_factor} / \texttt{cho\_solve} against a Numba-JIT-compiled CRISP at $\gamma = 0.5$, $p = 100$; single-solve median across trials, full raw output in \texttt{results/11\_bench\_crisp\_vs\_cholesky.txt}:
\begin{center}
\begin{tabular}{rrrr}
\toprule
$N$ & Cholesky (ms) & CRISP (ms) & ratio (Cholesky faster) \\
\midrule
$500$     & $0.38$   & $18.2$    & $48\times$ \\
$1{,}000$ & $1.92$   & $76.3$    & $40\times$ \\
$2{,}000$ & $13.2$   & $313.9$   & $24\times$ \\
$3{,}000$ & $51.8$   & $712.1$   & $14\times$ \\
$5{,}000$ & $216.5$  & $1{,}993.4$ & $9\times$ \\
\bottomrule
\end{tabular}
\end{center}
Cholesky is $9$--$48\times$ faster across the range tested. Fitting the $O(N^2)$-versus-$O(N^3)$ asymptotics places the single-solve wall-clock crossover on this hardware near $N \approx 45{,}000$, well beyond realistic portfolio sizes. On commodity x86 with OpenBLAS the gap is narrower and the crossover shifts down, but we do not claim a wall-clock advantage for CRISP against tuned-BLAS Cholesky at realistic $N$.

The practical case for CRISP rests on two legs, neither of which is speed:
\begin{enumerate}[label=(\roman*),leftmargin=*]
\item \emph{Out-of-sample Sharpe.} At $\gamma \in [0.3, 0.7]$, CRISP delivers superior OOS Sharpe at every $N$ tested---including $N = 500$ where Cholesky is nearly $50\times$ faster---as reported in Section~\ref{sec10:experiments}.
\item \emph{Memory footprint.} Algorithm~\ref{alg05:methodb_stream} reduces working memory from $O(N^2)$ to $O(NK)$ for factor-structured $\Sigma$---$4.8$~MB against $7.2$~GB at $N = 30{,}000$, $K = 20$---which dense Cholesky cannot match. This is a structural advantage independent of BLAS throughput.
\end{enumerate}
A faster wrong answer is still wrong. The shrunk system $P_\gamma w = \mu$ is the better answer, and CRISP solves it without ever forming the inverse.
\end{remark}

This closes the algorithmic content of the paper. Sections~\ref{sec06:perturbation} and~\ref{sec07:adaptive} analyse the shrinkage trajectory $\gamma \mapsto P_\gamma^{-1}\mu$ in detail and derive an adaptive rule for $\gamma^{\star}$; Section~\ref{sec10:experiments} verifies both the rate theorem and the interior-optimum empirically; Sections~\ref{sec08:comparative} and~\ref{sec09:constraints} place CRISP in the comparative landscape and extend it to linearly-constrained portfolios.

% ============================================================
% section_06_perturbation.tex
% ============================================================
% !TEX root = ../paper.tex
% Section 6: Perturbation theory and the shrinkage trajectory
% Namespace: sec06, fig06, tab06, eq06

\section{Perturbation theory and the shrinkage trajectory}\label{sec06:perturbation}

Sections~\ref{sec04:tree_methods} and~\ref{sec05:crisp} introduced the shrinkage operators $P_\gamma = (1-\gamma)D + \gamma\Sigma$; this section analyses them as a one-parameter deformation of $\Sigma$ by the off-diagonal remainder $E = \Sigma - D$. Four results follow: an exact approximation-error identity (Theorem~\ref{thm:perturbation}), a structural direction-error bound (Proposition~\ref{prop:dir_err_bound}), the zero locus on invariant rays (Corollary~\ref{cor:invariant_rays_trajectory}), and the interior $\gamma^\star$ phenomenon that governs finite-sweep CRISP (Proposition~\ref{prop:interior_optimum}). All proof sketches are inline; full proofs live in Appendix~\ref{app:proofs}.

%% ------------------------------------------------------------------
\subsection{Invariant rays revisited}\label{sec06:invariant_rays}

The zero-set for every perturbation statement below consists of the \emph{invariant rays} of the shrinkage family. Lemma~\ref{lem:parallel_direction} of Section~\ref{sec02:preliminaries} identifies these explicitly: if $C = D^{-1/2}\Sigma D^{-1/2}$ has eigenpair $(\lambda,v)$ with $\lambda > 0$, then for $\mu = D^{1/2}v$ and every $\gamma \in [0,1]$,
\begin{equation}\label{eq06:invariant_ray}
  P_\gamma^{-1}\mu \;=\; \frac{1}{(1-\gamma) + \gamma\lambda}\,D^{-1}\mu,
\end{equation}
so the direction of $P_\gamma^{-1}\mu$ is fixed at $D^{-1}\mu$ (equivalently $\Sigma^{-1}\mu = \lambda^{-1}D^{-1}\mu$) across the entire shrinkage interval. On these signals every method in the paper---HRP, Cotton, HRP-$\mu$, HRP-$\Sigma\mu$, CRISP at any sweep budget, and the trivial diagonal solver---produces the same direction, so the direction error against $w^\star = \Sigma^{-1}\mu$ is identically zero for every $\gamma$. Two operational consequences. First, single-eigenvector signals are useless as adversarial inputs because they cannot separate methods and cannot produce a nonzero direction error under any shrinkage intensity; adversarial constructions in Section~\ref{sec10:experiments} must mix at least two eigendirections of $C$ with distinct eigenvalues. Second, the standard slogan that ``$\Sigma^{-1}$ amplifies low-eigenvalue directions'' concerns \emph{scale}: the prefactor $[(1-\gamma)+\gamma\lambda]^{-1}$ in~\eqref{eq06:invariant_ray} diverges as $\lambda \downarrow 0$, but the \emph{direction} along an invariant ray is rigid.

%% ------------------------------------------------------------------
\subsection{Approximation-error identity}\label{sec06:identity}

Write $w^\star = \Sigma^{-1}\mu$ for the Markowitz direction and $\hat w(\gamma) = P_\gamma^{-1}\mu$ for its shrinkage approximant at intensity $\gamma$. Their difference admits an exact closed form.

\begin{theorem}[Approximation-error identity]\label{thm:perturbation}
Let $\Sigma \in \mathbb{R}^{N\times N}$ be SPD with $D = \diag(\Sigma)$ and $E = \Sigma - D$. For every $\gamma \in [0,1]$,
\begin{equation}\label{eq06:error_identity}
  w^\star - \hat w(\gamma) \;=\; -(1-\gamma)\, P_\gamma^{-1}\, E\, w^\star,
  \qquad \hat w(\gamma) = P_\gamma^{-1}\mu.
\end{equation}
\end{theorem}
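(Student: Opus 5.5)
The plan is a direct resolvent-difference computation. By Proposition~\ref{prop:pgamma_spd}, $P_\gamma$ is SPD and hence invertible for every $\gamma\in[0,1]$; $\Sigma$ is invertible by hypothesis. This makes both $w^\star=\Sigma^{-1}\mu$ and $\hat w(\gamma)=P_\gamma^{-1}\mu$ well defined. The key algebraic fact is the splitting identity from~\eqref{eq05:pgamma_def}:
\[
\Sigma - P_\gamma \;=\; (D+E)-(D+\gamma E) \;=\; (1-\gamma)E.
\]

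With that in hand, I would write $\mu=\Sigma w^\star$ and substitute it into $\hat w(\gamma)$. Then
\[
\hat w(\gamma) \;=\; P_\gamma^{-1}\Sigma w^\star \;=\; P_\gamma^{-1}\bigl(P_\gamma + (1-\gamma)E\bigr)w^\star \;=\; w^\star + (1-\gamma)P_\gamma^{-1}Ew^\star .
\]
Rearranging gives $w^\star-\hat w(\gamma) = -(1-\gamma)P_\gamma^{-1}Ew^\star$, which is~\eqref{eq06:error_identity}. Equivalently, this is the second resolvent identity $\Sigma^{-1}-P_\gamma^{-1} = P_\gamma^{-1}(P_\gamma-\Sigma)\Sigma^{-1}$ applied to $\mu$. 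I would state that form as a cross-check, since it makes clear that the placement of $P_\gamma^{-1}$ on the left and $\Sigma^{-1}$ on the right is forced.

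Finally, I would check the endpoints. At $\gamma=1$ both sides vanish, because $P_1=\Sigma$. At $\gamma=0$ the identity reads $w^\star - D^{-1}\mu = -D^{-1}Ew^\star$, and this is immediate from $Dw^\star + Ew^\star = \mu$.

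There is no real obstacle here. The only points needing care are the sign convention and keeping the operator order intact, so that $P_\gamma^{-1}$ acts on $Ew^\star$ and does not get commuted past $E$. No spectral or conditioning input beyond invertibility is required.
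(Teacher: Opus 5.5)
Your proof is correct and is essentially the paper's argument: both rest on the splitting $\Sigma = P_\gamma + (1-\gamma)E$. The paper writes $P_\gamma w^\star = \mu - (1-\gamma)Ew^\star$, subtracts $P_\gamma \hat w(\gamma) = \mu$ and then applies $P_\gamma^{-1}$, whereas you substitute $\mu = \Sigma w^\star$ directly into $\hat w(\gamma)$; this is the same one-line computation, and your resolvent-identity cross-check and endpoint checks are accurate but not needed.
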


\begin{proof}[Proof sketch]
Since $P_\gamma = \Sigma - (1-\gamma)E$, the Markowitz equation $\Sigma w^\star = \mu$ rewrites as $P_\gamma w^\star + (1-\gamma)E w^\star = \mu$. Subtracting $P_\gamma \hat w(\gamma) = \mu$ and applying $P_\gamma^{-1}$ yields~\eqref{eq06:error_identity}. Full proof in Appendix~\ref{app:proofs}.
\end{proof}

Three features of~\eqref{eq06:error_identity} govern everything that follows. The prefactor $(1-\gamma)$ forces the residual to zero as $\gamma \uparrow 1$ with norm $O(1-\gamma)$; the error direction is the specific vector $P_\gamma^{-1} E w^\star$, not an abstract residual; and the identity vanishes whenever $E w^\star$ lies in the kernel of the direction metric, a condition made precise by Corollary~\ref{cor:invariant_rays_trajectory} below.

%% ------------------------------------------------------------------
\subsection{Direction-error bound}\label{sec06:dir_form}

The quality metric is the direction metric of Definition~\ref{def:dir_error}, which is scale-invariant. Translating~\eqref{eq06:error_identity} into direction-error form isolates the geometric content.

\begin{proposition}[Direction-error bound]\label{prop:dir_err_bound}
Under the hypotheses of Theorem~\ref{thm:perturbation}, for every fixed $\gamma \in [0,1)$,
\begin{equation}\label{eq06:dir_bound}
  \mathrm{dir}\bigl(\hat w(\gamma),\, w^\star\bigr)
  \;\le\;
  (1-\gamma)^2 \,\bigl\|P_\gamma^{-1} E\bigr\|_{\mathrm{op}}^{2}\;\frac{\|w^\star\|^2}{\|\hat w(\gamma)\|^2}\; G(\gamma, w^\star),
\end{equation}
where $G(\gamma, w^\star) = \sin^2\phi(\gamma) \in [0,1]$ is the squared sine of the angle $\phi(\gamma)$ between $w^\star$ and $P_\gamma^{-1} E w^\star$. In particular, $G(\gamma, w^\star) = 0$ if and only if $w^\star$ is an eigenvector of $P_\gamma$.
\end{proposition}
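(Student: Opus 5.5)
The plan is to prove an exact identity for $\mathrm{dir}$ first and only then take norm bounds; the inequality~\eqref{eq06:dir_bound} is that identity with one operator-norm estimate applied. Fix $\gamma\in[0,1)$ and assume $\mu\ne 0$, so that $w^\star\ne 0$ and $\hat w(\gamma)=P_\gamma^{-1}\mu\ne 0$ by Proposition~\ref{prop:pgamma_spd}. Set $r := w^\star-\hat w(\gamma) = -(1-\gamma)P_\gamma^{-1}Ew^\star$, using Theorem~\ref{thm:perturbation}.

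The geometric step uses Definition~\ref{def:dir_error}. There $\mathrm{dir}(\hat w,w^\star)=\sin^2\theta$, and $\sin\theta$ equals the distance from $\hat w$ to the line $\mathrm{span}(w^\star)$, divided by $\|\hat w\|$. Let $\Pi^\perp$ denote the orthogonal projector onto $(w^\star)^\perp$. Since $\hat w = w^\star - r$ and $\Pi^\perp w^\star=0$, we get $\Pi^\perp\hat w = -\Pi^\perp r$. Moreover $\|\Pi^\perp r\| = \|r\|\sin\phi$, where $\phi$ is the angle between $w^\star$ and $r$. This is the same angle as between $w^\star$ and $P_\gamma^{-1}Ew^\star$, because the scalar factor $-(1-\gamma)$ does not change the line. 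This yields the exact identity
\[
  \mathrm{dir}\bigl(\hat w(\gamma),w^\star\bigr) \;=\; (1-\gamma)^2\,\frac{\|P_\gamma^{-1}Ew^\star\|^2}{\|\hat w(\gamma)\|^2}\,\sin^2\phi(\gamma).
\]
Bounding $\|P_\gamma^{-1}Ew^\star\|\le\|P_\gamma^{-1}E\|_{\mathrm{op}}\|w^\star\|$ then gives~\eqref{eq06:dir_bound}. The inequality is tight apart from this single step.

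Two degenerate cases need a convention. If $Ew^\star=0$, then $r=0$, $\hat w(\gamma)=w^\star$ and both sides vanish. In that case I would set $G:=0$, which matches the invariant-ray discussion in Section~\ref{sec06:invariant_rays}.

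For the zero-set statement, $G=\sin^2\phi=0$ holds exactly when $P_\gamma^{-1}Ew^\star$ is parallel to $w^\star$. That is, $w^\star$ is an eigenvector of $P_\gamma^{-1}E$, or equivalently $Ew^\star = c\,P_\gamma w^\star = c(D+\gamma E)w^\star$ for some scalar $c$. Rearranged, this reads $(1-c\gamma)Ew^\star = c\,Dw^\star$, so $w^\star$ is an eigenvector of $D^{-1}E$ (the zero-eigenvalue case $Ew^\star=0$ being the convention above). Because $P_\gamma = D+\gamma E$, this is in turn equivalent to $w^\star$ being an eigenvector of $D^{-1}P_\gamma$ and of $D^{-1}\Sigma$. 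It therefore holds simultaneously for all $\gamma$, consistent with Lemma~\ref{lem:parallel_direction}: $\mu = D^{1/2}v$ gives $w^\star = \lambda^{-1}D^{-1/2}v$, and $D^{-1}\Sigma(D^{-1/2}v) = \lambda D^{-1/2}v$.

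I expect this characterisation to be the main obstacle. The statement says ``eigenvector of $P_\gamma$'', but the calculation naturally produces ``eigenvector of the $D$-preconditioned operator $D^{-1}P_\gamma$ (equivalently of $P_\gamma^{-1}E$)''. These coincide when $D$ is a multiple of the identity, but not in general. So the proof should establish the iff in the preconditioned form and read the stated condition in that sense, rather than assert literal eigenvectorhood of $P_\gamma$. In the full proof I would check carefully whether plain eigenvectorhood of $P_\gamma$ suffices; I suspect it does not without constant $D$.
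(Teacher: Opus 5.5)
Your proof of the inequality is the paper's proof. The paper writes $\hat w(\gamma)=w^\star+(1-\gamma)u(\gamma)$ with $u(\gamma)=P_\gamma^{-1}Ew^\star$ and obtains the exact identity $\mathrm{dir}(\hat w,w^\star)=(1-\gamma)^2\|u_\perp\|^2/\|\hat w\|^2$ from Lagrange's identity. Your projector step $\Pi^\perp\hat w=-\Pi^\perp r$ is the same computation. The paper then applies the one bound $\|u\|\le\|P_\gamma^{-1}E\|_{\mathrm{op}}\|w^\star\|$, as you do. Your convention $G:=0$ when $Ew^\star=0$ fills a hole the paper leaves open, since its $G=\|u_\perp\|^2/\|u\|^2$ is $0/0$ there.

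On the final clause your suspicion is right, and the error is in the paper's proof. It asserts, by ``tracing through'' $u(\gamma)=P_\gamma^{-1}\mu-P_\gamma^{-1}Dw^\star$, that $u\parallel w^\star$ is equivalent to $w^\star$ being an eigenvector of $P_\gamma$. This fails in both directions unless $D$ is essentially scalar.
\begin{itemize}
\item \emph{``If'' fails.} At $\gamma=0$ every $e_i$ is an eigenvector of $P_0=D$. Yet $P_0^{-1}Ee_1=D^{-1}Ee_1$ has zero first entry, so $G(0,e_1)=1$ whenever some $\sigma_{j1}\neq 0$. For $\gamma>0$, an eigenvector $x$ of $P_\gamma$ gives $P_\gamma^{-1}Ex=\gamma^{-1}(x-P_\gamma^{-1}Dx)$, which is parallel to $x$ only if $Dx\parallel x$.
\item \emph{``Only if'' fails.} Take $\Sigma=\bigl(\begin{smallmatrix}1&1\\1&4\end{smallmatrix}\bigr)$ and $\mu=(1,2)^\top$, which lies on an invariant ray. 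Then $w^\star=\tfrac13(2,1)^\top$ and $P_\gamma^{-1}Ew^\star=\tfrac{1}{3(2+\gamma)}(2,1)^\top$, so $G\equiv 0$. But $P_\gamma(2,1)^\top=(2+\gamma,\,4+2\gamma)^\top$ is parallel to $(2,1)^\top$ only at $\gamma=-2$.
\end{itemize}
So do not read the stated condition charitably. Say plainly that the clause is false as written, and prove the corrected version.

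State the corrected clause as ``$w^\star$ is an eigenvector of $D^{-1}\Sigma$'' (equivalently of $D^{-1}E$), not of $D^{-1}P_\gamma$. Since $D^{-1}P_0=I$, the latter is vacuous at $\gamma=0$, which the proposition includes. Your closing equivalence between $D^{-1}P_\gamma$ and $P_\gamma^{-1}E$ breaks there. Your middle paragraph survives only because it keeps the conjunction with $D^{-1}\Sigma$.

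Also record the two small algebraic checks explicitly.
\begin{itemize}
\item In $(1-c\gamma)Ew^\star=cDw^\star$, the case $c\gamma=1$ would force $cDw^\star=0$, hence $c=0$, a contradiction.
\item Conversely, if $D^{-1}\Sigma w^\star=\lambda w^\star$, then $\lambda>0$ by Proposition~\ref{prop:corr_similarity}, and $P_\gamma^{-1}Ew^\star=\frac{\lambda-1}{(1-\gamma)+\gamma\lambda}\,w^\star$.
\end{itemize}
The condition is therefore independent of $\gamma$ and holds exactly when $\mu$ lies on an invariant ray. That is the zero set Section~\ref{sec06:invariant_rays} and Corollary~\ref{cor:invariant_rays_trajectory} actually rely on. Indeed, the literal ``eigenvector of $P_\gamma$'' version would contradict that corollary on the $2\times 2$ example above.
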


\begin{proof}[Proof sketch]
Write $\hat w(\gamma) = w^\star + (1-\gamma)\,u(\gamma)$ with $u(\gamma) = P_\gamma^{-1}E w^\star$. Lagrange's identity for the sine of the angle between $\hat w(\gamma)$ and $w^\star$ gives $\sin^2\theta = (1-\gamma)^2\|u_\perp(\gamma)\|^2\|w^\star\|^2/(\|\hat w(\gamma)\|^2\|w^\star\|^2)$, where $u_\perp(\gamma) = u(\gamma) - \mathrm{proj}_{w^\star}u(\gamma)$. Bounding $\|u_\perp(\gamma)\|^2 \le \|u(\gamma)\|^2 \le \|P_\gamma^{-1}E\|_{\mathrm{op}}^2\|w^\star\|^2$ and defining $G(\gamma, w^\star) = \|u_\perp(\gamma)\|^2/\|u(\gamma)\|^2 = \sin^2\phi(\gamma)$ yields~\eqref{eq06:dir_bound}. Full proof in Appendix~\ref{app:proofs}.
\end{proof}

Equation~\eqref{eq06:dir_bound} is a structural decomposition, not a one-parameter monotonicity theorem. Both the operator-norm factor $\|P_\gamma^{-1} E\|_{\mathrm{op}}$ and the geometric factor $G(\gamma, w^\star)$ depend on $\gamma$, and they can move in opposite directions as $\gamma$ varies. The prefactor $(1-\gamma)^2$ forces the right-hand side to zero at $\gamma = 1$, but does not pin down the trajectory in between. Example~\ref{ex:nonmonotone_trajectory} exhibits a $4\times 4$ problem on which the three factors combine non-monotonically.

%% ------------------------------------------------------------------
\subsection{The shrinkage trajectory}\label{sec06:trajectory}

The exact identity~\eqref{eq06:error_identity} pins down the endpoint behaviour of the shrinkage trajectory $\gamma \mapsto \hat w(\gamma) = P_\gamma^{-1}\mu$. As $\gamma \uparrow 1$, $\hat w(\gamma) \to w^\star$ with residual norm $O(1-\gamma)$, so the direction error tends to zero. On the invariant rays of~\eqref{eq06:invariant_ray} the direction error is identically zero for every $\gamma$.

\begin{corollary}[Invariant rays on the trajectory]\label{cor:invariant_rays_trajectory}
If $\mu = D^{1/2}v$ with $Cv = \lambda v$ and $\lambda > 0$, then $\mathrm{dir}\bigl(P_\gamma^{-1}\mu,\, \Sigma^{-1}\mu\bigr) = 0$ identically for every $\gamma \in [0,1]$.
\end{corollary}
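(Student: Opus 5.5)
The corollary follows directly from Lemma~\ref{lem:parallel_direction}, applied at two values of $\gamma$, together with the scale invariance of $\mathrm{dir}$ from Definition~\ref{def:dir_error}. The plan is to show that both arguments of $\mathrm{dir}$ are positive multiples of the single nonzero vector $D^{-1}\mu$. Collinear vectors have direction error zero.

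First, apply Lemma~\ref{lem:parallel_direction} at a general $\gamma\in[0,1]$. It gives $P_\gamma^{-1}\mu = c_\gamma D^{-1}\mu$ with $c_\gamma = [(1-\gamma)+\gamma\lambda]^{-1}$. The denominator is a convex combination of $1$ and $\lambda>0$, so it is strictly positive and $c_\gamma$ is finite and positive.

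Second, apply the same lemma at $\gamma=1$. Since $P_1=\Sigma$, this gives $\Sigma^{-1}\mu = \lambda^{-1}D^{-1}\mu$. One can also check this directly: $\Sigma = D^{1/2}CD^{1/2}$ and $Cv=\lambda v$ give $\Sigma D^{-1/2}v = \lambda D^{1/2}v = \lambda\mu$.

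Third, check that the vectors are nonzero, which the definition of $\mathrm{dir}$ requires. An eigenvector $v$ is nonzero by definition, and $D^{1/2}$ and $D^{-1}$ are nonsingular. Hence $\mu\ne 0$ and $D^{-1}\mu\ne 0$, so both $P_\gamma^{-1}\mu$ and $\Sigma^{-1}\mu$ are nonzero.

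Finally, substitute into \eqref{eq02:dir_def}. With $u = D^{-1}\mu$ and positive scalars $a = c_\gamma$, $b = \lambda^{-1}$,
\[
\mathrm{dir}(au, bu) = 1 - \frac{(ab\|u\|^2)^2}{a^2\|u\|^2\, b^2\|u\|^2} = 0 .
\]
Equivalently, the angle $\theta$ between the two vectors is $0$. This holds for every $\gamma\in[0,1]$, which proves the claim.

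None of these steps is a real obstacle. The only point that needs care is making sure neither argument of $\mathrm{dir}$ vanishes, which the third step settles. The positivity of $c_\gamma$ is not even needed, because $\mathrm{dir}$ is sign-invariant by \eqref{eq02:sign_invariance}. It does, however, give the stronger statement that the whole trajectory lies on the same open ray, as noted after Lemma~\ref{lem:parallel_direction}.
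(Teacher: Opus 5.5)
Your proof is correct and is essentially the paper's argument: Lemma~\ref{lem:parallel_direction} puts both $P_\gamma^{-1}\mu$ and $\Sigma^{-1}\mu$ (the latter as the $\gamma=1$ case) on the ray through $D^{-1}\mu$, and the scale invariance of $\mathrm{dir}$ then gives zero. Your explicit nonzeroness check is a small addition the paper leaves implicit, and your scalar in $\Sigma^{-1}\mu=\lambda^{-1}D^{-1}\mu$ is the right one (the appendix's restatement $D^{-1}\mu=\lambda^{-1}\Sigma^{-1}\mu$ inverts the factor, harmlessly).
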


\begin{proof}[Proof sketch]
By~\eqref{eq06:invariant_ray}, $P_\gamma^{-1}\mu$ is a scalar multiple of $D^{-1}\mu$ for every $\gamma$, and so is $\Sigma^{-1}\mu = \lambda^{-1}D^{-1}\mu$. The direction between the two is therefore fixed at zero on $[0,1]$. Full proof in Appendix~\ref{app:proofs}.
\end{proof}

Off the invariant rays, the map $\gamma \mapsto \mathrm{dir}(P_\gamma^{-1}\mu, \Sigma^{-1}\mu)$ is not globally monotone. Example~\ref{ex:nonmonotone_trajectory} makes this concrete.

\begin{example}[Non-monotone trajectory]\label{ex:nonmonotone_trajectory}
Let
\[
\Sigma \;=\;
\begin{pmatrix}
1.055 & 0.078 & -0.711 & -1.874 \\
0.078 & 6.058 & -0.379 & 1.775 \\
-0.711 & -0.379 & 1.036 & 2.063 \\
-1.874 & 1.775 & 2.063 & 6.477
\end{pmatrix},
\qquad
\mu \;=\;
\begin{pmatrix} -1.695 \\ 0.271 \\ 0.322 \\ -0.500 \end{pmatrix},
\]
so that $\kappa(C) \approx 27.4$ and $\mu$ is not on an invariant ray (its four projections onto the eigenvectors of $C$ are all nonzero). The direction error $\mathrm{dir}(P_\gamma^{-1}\mu, \Sigma^{-1}\mu)$ behaves as follows:

\begin{table}[h]
\centering
\begin{adjustbox}{max width=\textwidth}
\begin{tabular}{lccccccccc}
\toprule
$\gamma$ & $0$ & $0.05$ & $0.10$ & $0.20$ & $0.30$ & $0.50$ & $0.70$ & $0.90$ & $1.00$ \\
\midrule
$\mathrm{dir}$ & $0.242$ & $0.243$ & $0.244$ & $0.249$ & $\mathbf{0.253}$ & $0.250$ & $0.213$ & $0.085$ & $0.000$ \\
\bottomrule
\end{tabular}
\end{adjustbox}
\end{table}

\noindent The direction error \emph{rises} from $0.242$ at $\gamma = 0$ to a peak of $0.253$ at $\gamma \approx 0.30$ (a $+4.75\%$ bump over the diagonal anchor) before descending to $0$ at $\gamma = 1$. Global non-increasing monotonicity in $\gamma$ therefore fails.
\end{example}

\begin{remark}[What the identity does and does not tell us]\label{rem:trajectory_behavior}
Theorem~\ref{thm:perturbation} controls two facts unambiguously: the residual norm is $O(1-\gamma)$, so $\hat w(\gamma) \to w^\star$ as $\gamma \uparrow 1$, and on invariant rays the direction error is identically zero. Between those two facts, the \emph{direction} of $P_\gamma^{-1} E w^\star$ relative to $w^\star$ is not sign-controlled by the shrinkage intensity alone, so bumps of the kind exhibited in Example~\ref{ex:nonmonotone_trajectory} are possible in principle. Empirically, on the correlation regimes studied in Section~\ref{sec10:experiments} the trajectory is typically non-increasing with only mild interior bumps in adversarial regimes, and the practitioner's operating point $\gamma \approx 0.5$ sits inside the descending region on every panel we have tested.
\end{remark}

%% ------------------------------------------------------------------
\subsection{Shrinkage bias vs.\ convergence slack: the interior \texorpdfstring{$\gamma^\star$}{gamma*}}\label{sec06:interior_optimum}

If CRISP produced the exact fixed point $\hat w(\gamma)$ at every $\gamma$, the guidance would be ``run at $\gamma = 1$''. In practice CRISP delivers the finite-sweep iterate $w^{(p)}(\gamma)$ after $p$ Gauss--Seidel sweeps, which introduces a second $\gamma$-dependent term. The total direction error decomposes, in angle, as
\begin{equation}\label{eq06:angle_decomp}
  \underbrace{\mathrm{dir}\!\left(w^{(p)}(\gamma),\; w^\star\right)}_{\text{total}}
  \;\lesssim\;
  \underbrace{\mathrm{dir}\!\left(\hat w(\gamma),\; w^\star\right)}_{\text{shrinkage bias, }\downarrow\text{ in }\gamma}
  \;+\;
  \underbrace{\mathrm{dir}\!\left(w^{(p)}(\gamma),\; \hat w(\gamma)\right)}_{\text{convergence slack, }\uparrow\text{ in }\gamma\text{ at fixed }p}
\end{equation}
(the precise statement is a squared-sine triangle bound; the additive form is qualitative). The shrinkage bias is zero at $\gamma = 1$ and non-increasing over the regimes studied (Remark~\ref{rem:trajectory_behavior}). The convergence slack is non-decreasing in $\gamma$ at fixed $p$: by Proposition~\ref{prop:corr_similarity} and Theorem~\ref{thm:gs_rate}, the relevant operator condition number $\kappa(D^{-1}P_\gamma)$ grows from $1$ at $\gamma = 0$ to $\kappa(C)$ at $\gamma = 1$, so the per-sweep residual reduction worsens as $\gamma$ grows. The sum of a decreasing and an increasing function attains its minimum in the interior whenever the two derivatives cross inside $(0,1)$ (Figure~\ref{fig06:bias_variance_curves}).

\begin{figure}[!tbp]
\centering
\includegraphics[width=\textwidth]{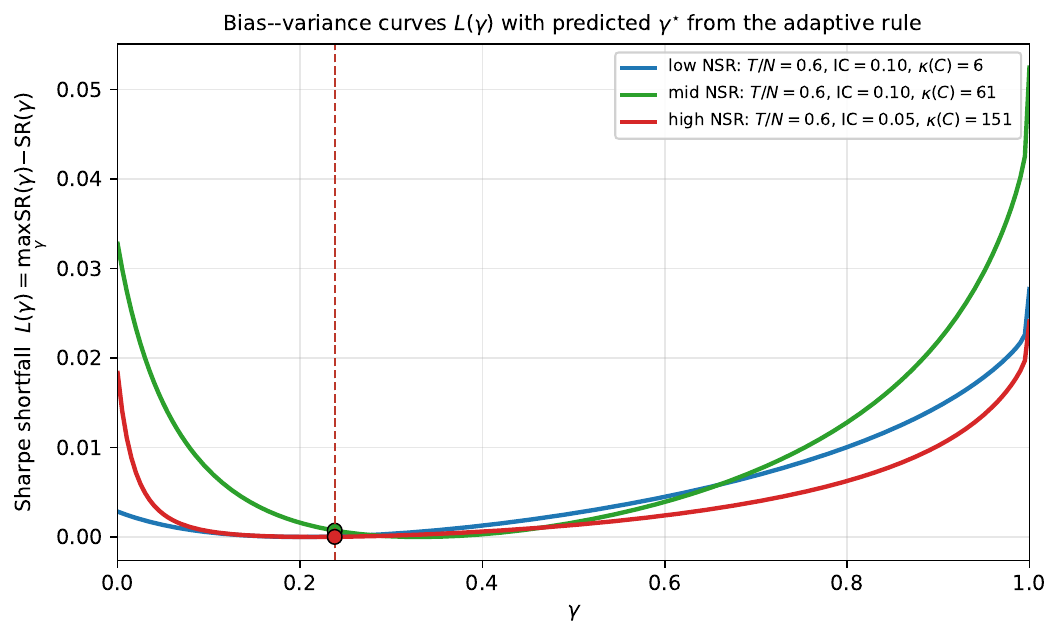}
\caption{Bias--variance decomposition of total direction error along the shrinkage axis, illustrating~\eqref{eq06:angle_decomp}. The \emph{shrinkage-bias} curve $\mathrm{dir}(\hat w(\gamma), w^\star)$ (decreasing in $\gamma$, vanishing at $\gamma = 1$ by Theorem~\ref{thm:perturbation}) falls as the shrinkage approximant approaches the Markowitz fixed point. The \emph{convergence-slack} curve $\mathrm{dir}(w^{(p)}(\gamma), \hat w(\gamma))$ (increasing in $\gamma$ at fixed sweep budget $p$, because $\kappa(D^{-1}P_\gamma)$ grows from $1$ at $\gamma = 0$ to $\kappa(C)$ at $\gamma = 1$) rises as the Gauss--Seidel iteration slows. Their sum---an upper bound on the total direction error $\mathrm{dir}(w^{(p)}(\gamma), w^\star)$---attains an interior minimum $\gamma^\star \in (0,1)$ under the hypotheses of Proposition~\ref{prop:interior_optimum}. The interior $\gamma^\star$ is the CRISP-specific manifestation of the standard shrinkage trade-off: shrinkage bias downward pressure on $\gamma$, convergence slack upward pressure on $\gamma$, interior minimum where marginal pressures balance.}
\label{fig06:bias_variance_curves}
\end{figure}

\begin{proposition}[Interior $\gamma^\star$ at finite sweep budget]\label{prop:interior_optimum}
Fix $p \in \mathbb{N}$ and SPD $\Sigma$ with $D = \diag(\Sigma)$, $C = D^{-1/2}\Sigma D^{-1/2}$. Assume:
\begin{enumerate}[label=\textup{(\roman*)}]
\item \emph{non-invariance:} $\mu$ is not on an invariant ray of $C$ in the sense of Lemma~\ref{lem:parallel_direction};
\item \emph{generic alignment:} $\partial_\gamma\,\mathrm{dir}\bigl(P_\gamma^{-1}\mu,\, w^\star\bigr)\big|_{\gamma = 0} < 0$;
\item \emph{sweep-budget deficit:} $\mathrm{dir}\bigl(w^{(p)}(1),\, \Sigma^{-1}\mu\bigr) > \mathrm{dir}\bigl(D^{-1}\mu,\, \Sigma^{-1}\mu\bigr)$.
\end{enumerate}
Then $\gamma \mapsto \mathrm{dir}\bigl(w^{(p)}(\gamma),\, w^\star\bigr)$ attains its minimum on $[0,1]$ at an interior point $\gamma^\star \in (0,1)$.
\end{proposition}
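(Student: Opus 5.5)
The plan is an elementary extreme-value argument applied to $f(\gamma) := \mathrm{dir}\bigl(w^{(p)}(\gamma), w^\star\bigr)$. Here $w^{(p)}(\gamma)$ is the iterate obtained after exactly $p \ge 1$ Gauss--Seidel sweeps of \eqref{eq05:gs_update}, started from $w^{(0)} = D^{-1}\mu$. I ignore the $\gamma<\varepsilon$ shortcut and the early-stopping rule of Algorithm~\ref{alg05:methodb_gs}; with the shortcut kept, the value at small $\gamma$ would jump to $D^{-1}\mu$ and the argument would need the fixed-$p$ reading.

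First I establish regularity. Each sweep is the affine map $w \mapsto M_\gamma^{\mathrm{GS}} w + (L_\gamma + D)^{-1}\mu$. Here $L_\gamma + D$ is lower triangular with diagonal $D$, so it is invertible for every $\gamma$. Its entries are polynomial in $\gamma$, so $w^{(p)}(\gamma)$ is a polynomial vector function of $\gamma$. Then $f$ is continuous, indeed real-analytic, on every subinterval where $w^{(p)}(\gamma) \neq 0$, since $\mathrm{dir}$ is a smooth function of two nonzero vectors. Nonvanishing of $w^{(p)}(\gamma)$ on all of $[0,1]$ is the step I expect to be the main obstacle. It holds near $\gamma = 0$, where $w^{(p)}(\gamma) \approx D^{-1}\mu \ne 0$. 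It also holds near $\gamma = 1$ whenever $p$ is large, by Theorem~\ref{thm:gs_convergence}. At intermediate $\gamma$ and small $p$, the polynomial $w^{(p)}(\gamma)$ can only vanish on a finite set unless it is identically zero, which it is not. I would handle this either by adding the standing convention that $f$ is extended by lower semicontinuity (direction error $1$) at such isolated zeros, or by stating the nonvanishing as an implicit hypothesis. Either way, $f$ attains its minimum on the compact set $[0,1]$.

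Next I exclude $\gamma = 1$. At $\gamma = 0$ every sweep returns $D^{-1}\mu$ exactly, so $f(0) = \mathrm{dir}(D^{-1}\mu, \Sigma^{-1}\mu)$. Hypothesis (iii) says precisely that $f(1) > f(0)$, so the right endpoint cannot be a minimiser.

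Then I exclude $\gamma = 0$ by comparing $f$ with the exact trajectory $g(\gamma) = \mathrm{dir}(P_\gamma^{-1}\mu, w^\star)$ near $0$. The initial error $e^{(0)}(\gamma) = D^{-1}\mu - P_\gamma^{-1}\mu$ is $O(\gamma)$. This follows from the Neumann expansion $P_\gamma^{-1} = D^{-1} - \gamma D^{-1} E D^{-1} + O(\gamma^2)$. The iteration matrix satisfies $M_\gamma^{\mathrm{GS}} = -(L_\gamma + D)^{-1}U_\gamma = O(\gamma)$, because $U_\gamma = \gamma U_1$. Hence $e^{(p)}(\gamma) = (M_\gamma^{\mathrm{GS}})^p e^{(0)}(\gamma) = O(\gamma^{p+1})$, so $w^{(p)}(\gamma) = P_\gamma^{-1}\mu + O(\gamma^{2})$ for every $p\ge1$. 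Since $\mathrm{dir}$ is locally Lipschitz away from zero vectors, $f(\gamma) - g(\gamma) = O(\gamma^2)$ and $f(0) = g(0)$. Therefore $f'(0^+) = g'(0) < 0$ by hypothesis (ii), and $f(\gamma) < f(0)$ for all sufficiently small $\gamma > 0$.

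Combining the last two steps, the minimiser lies in $(0,1)$. Hypothesis (i) enters only to guarantee consistency: by Corollary~\ref{cor:invariant_rays_trajectory}, on an invariant ray $g \equiv 0$, which would contradict (ii), so (i) is implied by (ii). I would remark on this rather than use it.
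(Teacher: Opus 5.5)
Your proof is correct, and at the one substantive step, ruling out $\gamma=0$, it takes a different and tighter route than the paper.

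\textbf{How the paper does it.} The paper bounds $F(\gamma)\le B(\gamma)+S(p,\gamma)$ (shrinkage bias plus convergence slack) only ``up to an $O(1)$ multiplicative geometry constant.'' It then restricts to a regime $|S_\gamma(p,0)|<|B'(0)|$, which is not among hypotheses (i)--(iii). Taken literally, a bound $F\le c\,(B+S)$ with $c>1$ would not even give $F(\gamma)<F(0)=B(0)$.

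\textbf{How you do it.} You bypass the decomposition and compare $f$ directly with $g(\gamma)=\mathrm{dir}(P_\gamma^{-1}\mu,w^\star)$. Write $L_\gamma=\gamma L_1$ and $U_\gamma=\gamma U_1$. Because $e^{(0)}(\gamma)=O(\gamma)$ and $M_\gamma^{\mathrm{GS}}=-\gamma\,(D+\gamma L_1)^{-1}U_1=O(\gamma)$, the $p$-sweep iterate matches $P_\gamma^{-1}\mu$ to $O(\gamma^{p+1})$. Hence $f-g=O(\gamma^2)$ and $f'(0^+)=g'(0)<0$.

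\textbf{What your route buys.} It shows the slack is invisible to first order at $\gamma=0$: $S(p,\gamma)=O(\gamma^{2p+2})$, so $S_\gamma(p,0)=0$. The paper's side condition is therefore automatic, and the proposition holds exactly as stated for every $p\ge1$. What the paper's framing buys is only the link to the heuristic bias--slack picture of \eqref{eq06:angle_decomp}.

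\textbf{Shared ingredients and smaller points.} Compactness, $w^{(p)}(0)=D^{-1}\mu$, and excluding $\gamma=1$ by (iii) are the same as in the paper. Your observation that $w^{(p)}(\gamma)$ is polynomial in $\gamma$ sharpens the paper's ``smooth rational function'' claim; it holds because the triangular factor has constant determinant $\det D$. Your remark that (ii) implies (i) is also right. The paper uses (i) only to get $F(0)>0$, and that follows from (ii) as well, since $g\ge 0$.

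\textbf{One small correction.} On the zero-vector issue, which the paper does not address: assigning the value $1$ at an isolated zero of $w^{(p)}$ is not a lower-semicontinuous extension. It is the largest possible value, so it can destroy attainment of the minimum. The natural fix uses the sign invariance of $\mathrm{dir}$. If $w^{(p)}(\gamma)=(\gamma-\gamma_0)^k\bigl(v+O(\gamma-\gamma_0)\bigr)$ with $v\neq 0$, then $f(\gamma)\to\mathrm{dir}(v,w^\star)$ from both sides, so $f$ extends continuously across every such point. Alternatively, keep nonvanishing as an explicit hypothesis, which (iii) already presupposes at $\gamma=1$.
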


\begin{proof}[Proof sketch]
The map is continuous on the compact interval $[0,1]$, so the minimum is attained. Hypothesis (iii) rules out $\gamma = 1$: the total error there strictly exceeds the total error at $\gamma = 0$, because at $\gamma = 0$ the convergence slack vanishes (the diagonal system is solved in one sweep) and the total error reduces to $\mathrm{dir}(D^{-1}\mu, \Sigma^{-1}\mu)$. Hypothesis (ii) rules out $\gamma = 0$: the shrinkage-bias term has strictly negative derivative at $\gamma = 0$ while the convergence-slack term has a bounded derivative there (Theorem~\ref{thm:gs_rate}), so the total derivative is negative in a right neighbourhood of $0$. The minimum therefore lies in $(0,1)$. Full proof in Appendix~\ref{app:proofs}.
\end{proof}

\begin{remark}
Hypothesis (i) excludes the degenerate single-eigenvector case (Corollary~\ref{cor:invariant_rays_trajectory}), on which the direction error is identically zero across $[0,1]$ and the minimum is trivially attained everywhere. Hypothesis (ii) is generic: the first-variation computation makes $\partial_\gamma \mathrm{dir}|_{\gamma=0}$ a rational function of the spectral data of $(\Sigma, \mu)$ that vanishes only on a measure-zero algebraic subvariety of parameter space; Example~\ref{ex:nonmonotone_trajectory} sits on the complement of this subvariety in the wrong direction and the derivative is positive there. Hypothesis (iii) is the operationally meaningful one---it says the sweep budget $p$ is insufficient to overcome the $\gamma = 1$ conditioning penalty---and is exactly the regime in which Figure~\ref{fig06:trajectory_plot} shows a pronounced interior optimum.
\end{remark}

\begin{figure}[!tbp]
\centering
\includegraphics[width=\textwidth]{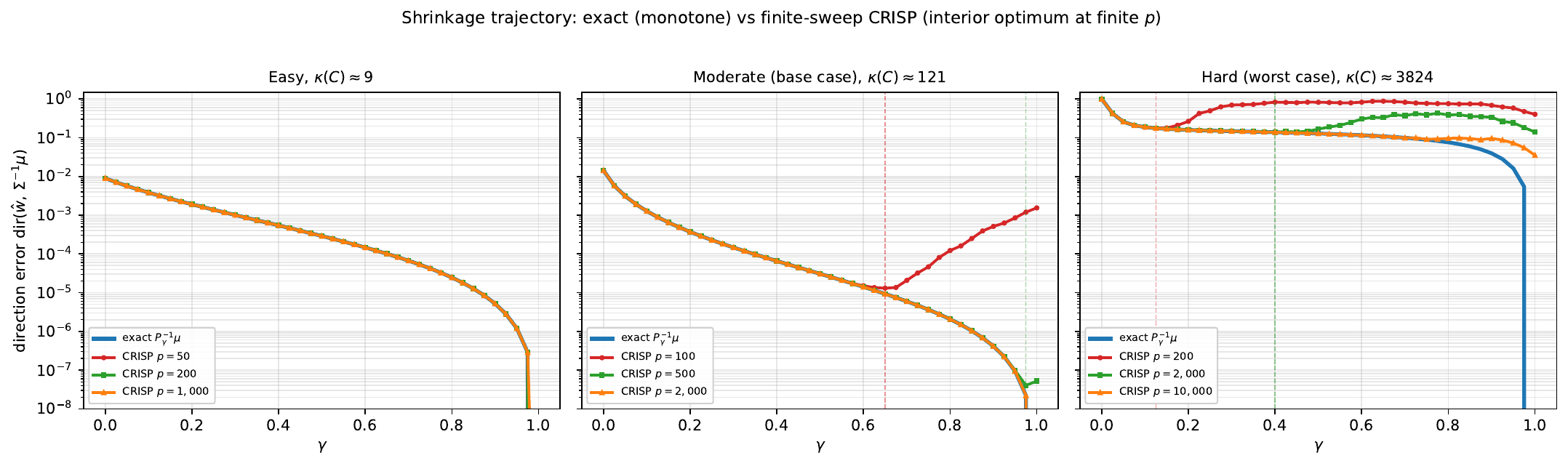}
\caption{Shrinkage trajectory $\gamma \mapsto \mathrm{dir}(w^{(p)}(\gamma), w^\star)$ across three difficulty regimes, $N = 200$. \textbf{Left} (easy, $\kappa(C) \approx 9$): even $50$ CRISP sweeps converge; no interior optimum. \textbf{Centre} (moderate, $\kappa(C) \approx 121$, the paper's base case): a mild interior optimum appears at $p = 100$ with $\gamma^\star \approx 0.65$; $500$+ sweeps essentially converge. \textbf{Right} (hard, $\kappa(C) \approx 3{,}800$, adversarial $\mu$): the interior $\gamma^\star(p)$ is pronounced and moves rightward with $p$ (Proposition~\ref{prop:interior_optimum}). Blue curves (exact fixed point $\hat w(\gamma) = P_\gamma^{-1}\mu$) descend to zero at $\gamma = 1$ by Theorem~\ref{thm:perturbation}; finite-sweep curves are U-shaped, with the minimum shifted toward the interior. A practitioner can compute $\kappa(C) = \lambda_{\max}(C)/\lambda_{\min}(C)$ from the correlation matrix of their risk model to locate the panel they are in.}
\label{fig06:trajectory_plot}
\end{figure}

Proposition~\ref{prop:interior_optimum} is the CRISP analogue of the optimal-shrinkage-intensity result of~\citet{ledoitwolf2003}, with a computational rather than statistical trade-off. Ledoit--Wolf shrink the covariance \emph{estimator} toward a structured target to trade sample-noise variance against target bias. CRISP shrinks the covariance \emph{operator} toward its diagonal to trade shrinkage bias against iterative convergence slack. Both land on the same qualitative recommendation---regularise strictly inside $(0,1)$---but for different reasons. Section~\ref{sec07:adaptive} recasts the interior optimum in out-of-sample terms, where the convergence-slack axis is replaced by a finite-sample estimation-loss axis driven by the concentration rate of $\hat\Sigma - \Sigma$; the closed-form approximation for $\gamma^\star$ derived there is the operational recommendation of the paper.

% ============================================================
% section_07_adaptive.tex
% ============================================================
\section{The adaptive $\gamma^\star$ rule and three regularisation channels}\label{sec07:adaptive}

Section~\ref{sec06:perturbation} established that fixed-$\gamma$ CRISP exhibits an interior sweep-count optimum and that the shrinkage trajectory $\gamma\mapsto w(\gamma)$ is empirically non-increasing on the regimes studied. This section closes the analytical loop: we write the expected out-of-sample Sharpe loss as a bias--variance decomposition, solve for the loss-minimising $\gamma^\star$ in closed form, and reconcile the derivation with the empirical finding that the OOS Sharpe surface is nearly flat across a wide calibration cube. We then catalogue the three distinct regularisation channels operating simultaneously inside CRISP (Remark~\ref{rem:three_channels}) and give the information-theoretic reading of what shrinkage accomplishes.

\subsection{Bias--variance decomposition}\label{sec07:bv}

Let $\hat w(\gamma) = \hat P_\gamma^{-1}\hat\mu$ with $\hat P_\gamma = (1-\gamma)\diag(\hat\Sigma) + \gamma\hat\Sigma$. Evaluating $\mathrm{SR}^2_{\mathrm{OOS}}(\gamma) = (w^\top\mu)^2/(w^\top\Sigma w)$ at $w = \hat w(\gamma)$ under the true $(\Sigma,\mu)$ and expanding to first order in the estimation noise gives
\begin{equation}\label{eq07:bv_decomp}
  \mathbb{E}\!\bigl[\mathrm{SR}^2_{\mathrm{oracle}} - \mathrm{SR}^2_{\mathrm{OOS}}(\gamma)\bigr]
  \;=\;
  \underbrace{\bigl[\mathrm{SR}^2_{\mathrm{oracle}} - \mathrm{SR}^2_{\mathrm{pop}}(\gamma)\bigr]}_{\text{approximation loss}}
  \;+\;
  \underbrace{V(\gamma)}_{\text{estimation loss}}
  \;+\;
  O(N^{-1}),
\end{equation}
where $\mathrm{SR}^2_{\mathrm{pop}}(\gamma)$ is the Sharpe-squared of the population weight $w(\gamma) = P_\gamma^{-1}\mu$ and $V(\gamma)$ is the expected Sharpe loss from substituting $(\hat\Sigma,\hat\mu)$ for $(\Sigma,\mu)$. The approximation loss pushes toward $\gamma=1$; the estimation loss pushes toward $\gamma=0$; the interior optimum $\gamma^\star$ is where their marginals balance.

\subsection{Approximation loss via Neumann expansion}\label{sec07:approx}

Write $P_\gamma = D^{1/2}[(1-\gamma)I + \gamma C]D^{1/2}$ with $C = D^{-1/2}\Sigma D^{-1/2}$, and let $\tilde\mu = D^{-1/2}\mu$, $\epsilon = 1-\gamma$. The Neumann identity applied to $[C + \epsilon(I-C)]^{-1}$ around $\gamma=1$ yields
\begin{equation}\label{eq07:neumann}
  [C + \epsilon(I-C)]^{-1}
  \;=\;
  (1+\epsilon)\,C^{-1} \;-\; \epsilon\,C^{-2} \;+\; O(\epsilon^2),
\end{equation}
so
\begin{equation}\label{eq07:pg_expansion}
  P_\gamma^{-1}\mu
  \;=\;
  \bigl(1 + (1-\gamma)\bigr)\,w^\star
  \;-\; (1-\gamma)\,D^{-1/2}C^{-2}\tilde\mu
  \;+\; O\!\bigl((1-\gamma)^2\bigr),
\end{equation}
where $w^\star = \Sigma^{-1}\mu$. The overall prefactor $(1+(1-\gamma))$ multiplying $w^\star$ must be retained: because Sharpe-squared is scale-invariant it does not propagate into the loss formula, but dropping it degrades the expansion~\eqref{eq07:pg_expansion} from $O((1-\gamma)^2)$ to $O(1-\gamma)$ and contaminates every downstream constant. A direct numerical check at $\epsilon = 10^{-3}$ on the base-case $(N,\kappa(C))=(100,61)$ covariance gives a full-expansion error of $1.7\times 10^{-4}$ against a truncated-expansion error of $2.9\times 10^{-3}$, a $17\times$ difference.

Substituting~\eqref{eq07:pg_expansion} into $\mathrm{SR}^2_{\mathrm{pop}}$ and retaining the quadratic term gives
\begin{equation}\label{eq07:approx_loss}
  \mathrm{SR}^2_{\mathrm{oracle}} - \mathrm{SR}^2_{\mathrm{pop}}(\gamma)
  \;=\; (1-\gamma)^2\,\mathcal{A} \;+\; O\!\bigl((1-\gamma)^3\bigr),
  \qquad \mathcal{A} \ge 0,
\end{equation}
where the \emph{approximation coefficient} $\mathcal{A} = \mathcal{A}(\Sigma,\mu)$ depends only on the projection of $\tilde\mu$ onto the eigenvectors of $C^{-1}$. If $\mu$ is aligned with leading eigenvectors of $C$, the diagonal surrogate is nearly optimal and $\mathcal{A}$ is small; if $\mu$ has weight on low-eigenvalue directions of $C$---exactly the directions shrinkage toward $D$ removes---$\mathcal{A}$ is large. Equicorrelation and $K$-factor special cases are worked out in Appendix~\ref{app:proofs}.

\subsection{Estimation loss via Marchenko--Pastur and the information coefficient}\label{sec07:est}

Linearising $\hat w(\gamma)$ around $(\Sigma,\mu)$ gives $\delta w = -P_\gamma^{-1}(\delta P_\gamma)P_\gamma^{-1}\mu + P_\gamma^{-1}\delta\mu$ with $\delta P_\gamma = \gamma\delta\Sigma + (1-\gamma)\delta D$. Both terms are of the form ``$P_\gamma^{-1}$ applied to noise,'' so both inherit the same amplification factor. Because the downstream Sharpe is scale-invariant, the relevant conditioning is the \emph{Jacobi-preconditioned} condition number
\begin{equation}\label{eq07:kappa_eff}
  \kappa_{\mathrm{eff}}(\gamma) \;:=\; \kappa(D^{-1}P_\gamma)
  \;=\; \frac{(1-\gamma) + \gamma\,\lambda_{\max}(C)}{(1-\gamma) + \gamma\,\lambda_{\min}(C)},
\end{equation}
not $\kappa(P_\gamma)$: the congruence $P_\gamma = D^{1/2}[(1-\gamma)I+\gamma C]D^{1/2}$ preserves inertia but not eigenvalues, so $\kappa(P_\gamma)$ mixes volatility dispersion $\kappa_D$ with correlation conditioning in a way that is generically a strict upper bound. The object in~\eqref{eq07:kappa_eff} interpolates monotonically from $1$ at $\gamma=0$ to $\kappa(C)$ at $\gamma=1$; volatility dispersion $\kappa_D$ does \emph{not} appear.

Two ingredients size the noise. First, the Marchenko--Pastur concentration rate \citep{marchenko1967} gives
\begin{equation}\label{eq07:mp}
  \|\hat\Sigma - \Sigma\|_{\mathrm{op}} \;=\; O_p\!\bigl(\sigma_{\max}^2\sqrt{N/T}\bigr),
\end{equation}
tight on Gaussian data in the proportional regime $N/T\to q\in(0,1)$ \citep[Thm.~5.39]{vershynin2012}. Second, modelling the signal estimator as $\hat\mu_i = \mu_i + \epsilon_i$ with $\mathrm{IC} = \mathrm{Corr}(\hat\mu_i,\mu_i)$ gives, by independence of $\mu$ and $\epsilon$,
\begin{equation}\label{eq07:ic}
  \mathrm{Var}(\epsilon_i) \;=\; \sigma_{\mu,i}^2\,\frac{1-\mathrm{IC}^2}{\mathrm{IC}^2}.
\end{equation}
For typical active managers $\mathrm{IC}\in[0.02,0.10]$, so $1/\mathrm{IC}^2\in[100,2500]$: signal noise dwarfs signal magnitude, and the $1/\mathrm{IC}^2$ factor is the dominant term in any realistic estimation-loss calculation.

Propagating $\delta w$ through the Sharpe-squared, bounding the covariance channel with~\eqref{eq07:mp} and the signal channel with~\eqref{eq07:ic}, and pulling the common amplification $\kappa_{\mathrm{eff}}(\gamma)^2$ out front yields the \emph{estimation-loss formula}:
\begin{equation}\label{eq07:est_loss}
\boxed{\;
  V(\gamma) \;\approx\; \mathcal{B}\cdot\kappa_{\mathrm{eff}}(\gamma)^2\cdot\frac{N}{T}\cdot\frac{1}{\mathrm{IC}^2},
\;}
\end{equation}
where $\mathcal{B} = \mathcal{B}(\Sigma)$ is an order-one constant depending on the eigenvalue distribution of $\Sigma$ and on how the signal projects onto its eigendirections, but not on $\gamma$, $T$, or $\mathrm{IC}$. The volatility dispersion $\kappa_D$ \emph{does not} appear in $\kappa_{\mathrm{eff}}$; any problem-dependent scaling it induces is absorbed into $\mathcal{B}$.

\subsection{First-order $\gamma^\star$ and comparative statics}\label{sec07:gamma_star}

Linearising $\kappa_{\mathrm{eff}}(\gamma)^2$ around $\gamma=0$ gives
\begin{equation}\label{eq07:kappa_linear}
  \kappa_{\mathrm{eff}}(\gamma)^2 \;\approx\; 1 + 2\gamma\,\lambda_{\min}(C)\,(\kappa_C - 1),
\end{equation}
with $\kappa_C = \kappa(C)$. The factor $\lambda_{\min}(C)$ is non-trivial: for equicorrelation at $\rho=0.6$ it equals $1-\rho = 0.4$, so the slope is only $40\%$ of the naive $2(\kappa_C-1)$. The total expected loss is
\begin{equation}\label{eq07:total_loss}
  L(\gamma) \;=\; (1-\gamma)^2\mathcal{A} \;+\; \mathcal{B}\,\kappa_{\mathrm{eff}}(\gamma)^2\,\frac{N}{T\,\mathrm{IC}^2}.
\end{equation}
Setting $\partial L/\partial\gamma = 0$, solving, and converting the linear FOC to a Pad\'e form that respects $\gamma^\star\in(0,1]$:
\begin{equation}\label{eq07:gamma_star}
\boxed{\;
  \gamma^\star \;\approx\; \frac{1}{1 + c\cdot\mathrm{NSR}}
  \;=\; \frac{1}{1 \;+\; c\,\dfrac{\kappa(C)^2}{\mathrm{IC}^2}\,\dfrac{N}{T}},
\;}
\end{equation}
where
\[
  \mathrm{NSR} \;=\; \frac{\kappa(C)^2}{\mathrm{IC}^2}\,\frac{N}{T},
  \qquad
  c \;=\; \frac{\mathcal{B}\,\lambda_{\min}(C)\,(\kappa_C - 1)}{\mathcal{A}\,\kappa_C^2}.
\]
This is the \textbf{adaptive $\gamma$ rule}. The qualitative content is summarised below.

\begin{proposition}[Comparative statics of $\gamma^\star$]\label{prop:gamma_star_comparative}
Fix $c > 0$ and consider~\eqref{eq07:gamma_star}.
\begin{enumerate}[label=(\roman*),leftmargin=*,itemsep=2pt]
\item $\gamma^\star$ is increasing in $T/N$: more data permits more aggressive use of off-diagonal structure.
\item $\gamma^\star$ is increasing in $\mathrm{IC}$: a stronger signal justifies more aggressive optimisation because the signal-noise channel in $V(\gamma)$ shrinks.
\item $\gamma^\star$ is decreasing in $\kappa(C)$: worse correlation conditioning amplifies estimation noise on low-eigenvalue directions.
\item $\gamma^\star \to 1$ as $T/N\to\infty$ (classical Markowitz recovery); $\gamma^\star\to 0$ as $\mathrm{IC}\to 0$ (inverse-variance recovery); $\gamma^\star = 1$ when $\kappa(C) = 1$ (uncorrelated assets).
\end{enumerate}
\end{proposition}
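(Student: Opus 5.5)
Every claim reduces to monotonicity of a single scalar map, so I will treat \eqref{eq07:gamma_star} as a function of its inputs and not re-derive it. Write $g(x) = 1/(1+cx)$ for $x \ge 0$. Then $\gamma^\star = g(\mathrm{NSR})$ with
\[
\mathrm{NSR} = \kappa(C)^2\,\mathrm{IC}^{-2}\,(T/N)^{-1}.
\]
Since $c>0$, we have $g'(x) = -c/(1+cx)^2 < 0$. So $g$ is strictly decreasing on $[0,\infty)$, with $g(0)=1$ and $g(x)\to 0$ as $x\to\infty$. In particular $\gamma^\star \in (0,1]$, which is consistent with the Pad\'e construction.

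For parts (i)--(iii) I will apply the chain rule, holding $c$ fixed as the statement prescribes. $\mathrm{NSR}$ is strictly decreasing in $T/N$, strictly decreasing in $\mathrm{IC}>0$ (through $\mathrm{IC}^{-2}$), and strictly increasing in $\kappa(C)\ge 1$ (through $\kappa(C)^2$). Composing each of these with the decreasing map $g$ gives the three signs: $\gamma^\star$ increases in $T/N$, increases in $\mathrm{IC}$, and decreases in $\kappa(C)$. This amounts to three one-line partial-derivative computations; for example,
\[
\partial_{T}\gamma^\star = \frac{c\,\mathrm{NSR}/T}{(1+c\,\mathrm{NSR})^2} > 0.
\]

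The first two limits in (iv) follow from the endpoint behaviour of $g$. As $T/N\to\infty$ with $\kappa(C)$ and $\mathrm{IC}$ fixed, $\mathrm{NSR}\to 0$ and so $\gamma^\star\to g(0)=1$. As $\mathrm{IC}\to 0$, $\mathrm{NSR}\to\infty$ and so $\gamma^\star\to 0$.

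The main obstacle is the third clause of (iv), $\gamma^\star = 1$ when $\kappa(C)=1$. It does not follow from \eqref{eq07:gamma_star} with $c$ held fixed and positive: substituting $\kappa(C)=1$ gives $1/(1+cN/(T\,\mathrm{IC}^2)) < 1$. The clause has to be read through the definition of $c$ instead of the frozen-$c$ convention. There I would argue two points. First, $c = \mathcal{B}\lambda_{\min}(C)(\kappa_C-1)/(\mathcal{A}\kappa_C^2)$ carries the factor $\kappa_C - 1$, which vanishes at $\kappa_C=1$; hence $c\,\mathrm{NSR}=0$ and $\gamma^\star = 1$. Second, and more robustly, $\kappa(C)=1$ together with $\operatorname{tr}(C)=N$ forces $C=I$. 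Then $\Sigma = D$ and $P_\gamma = D$ for every $\gamma$, so the loss $L(\gamma)$ in \eqref{eq07:total_loss} does not depend on $\gamma$, because $\kappa_{\mathrm{eff}}\equiv 1$ and the approximation loss is identically zero. Every $\gamma\in[0,1]$ is then optimal, and $\gamma^\star=1$ is the consistent selection, agreeing with the Pad\'e form once $c=0$.

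I will state this caveat explicitly in the proof. The same caveat applies to (iii): monotonicity in $\kappa(C)$ is proved at fixed $c$, and I will not claim it when $c$ is allowed to co-vary with $\kappa(C)$ through its definition.
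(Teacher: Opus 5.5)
Your proof is correct and follows the paper's own argument. You compose the strictly decreasing map $x\mapsto 1/(1+cx)$ with the monotone dependence of $\mathrm{NSR}$ on $T/N$, $\mathrm{IC}$ and $\kappa(C)$, read the limits off $\mathrm{NSR}\to 0$ and $\mathrm{NSR}\to\infty$, and handle the $\kappa(C)=1$ clause through the definition of $c$ rather than the frozen-$c$ convention. Your caveat on that clause is well placed, and your flat-loss argument ($C=I$, so $P_\gamma=\Sigma$ for every $\gamma$) is the one to keep. At $C=I$ the approximation coefficient $\mathcal{A}$ vanishes too, so the paper's step ``the factor $\kappa_C-1$ makes $c$ vanish'' (your first point) is really $0/0$, and $\gamma^\star=1$ holds only as one choice among equally optimal values of $\gamma$.
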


The proof is immediate from the monotonicity of $x\mapsto 1/(1+cx)$ and is recorded in Appendix~\ref{app:proofs}.

\subsection{Calibration: the OOS Sharpe surface is flat}\label{sec07:calibration}

We calibrated~\eqref{eq07:gamma_star} on a 48-cell synthetic grid (4 covariance regimes: factor, block, spiked, equicorrelation; 4 values of $T/N$; 3 values of IC), and validated on a disjoint 108-cell grid (\path{results/sec09_adaptive/}; see~Section~\ref{sec10:adaptive_calibration}). The headline findings:
\begin{itemize}[leftmargin=*,itemsep=2pt]
  \item The OOS Sharpe surface is nearly flat in $\gamma$: the median width of the plateau of $\gamma$ values within $2\%$ of peak Sharpe is $0.38$ on $[0,1]$.
  \item The empirical $\gamma^\star$ varies by regime but weakly: factor $0.18$, spiked $0.20$, equicorrelation $0.30$, block $0.44$.
  \item The two-parameter fit of~\eqref{eq07:gamma_star} has $R^2\approx -0.11$ out-of-sample; the spectral constant $c$ spans $438\times$ across regimes.
\end{itemize}
The practical upshot is that \emph{the value of the adaptive formula is explanatory, not prescriptive}: it tells us \emph{why} a fixed $\gamma = 0.5$ is near-optimal---the plateau is wide and $\gamma=0.5$ lies inside every cell's plateau---rather than providing an adaptive-intensity prescription that dominates the fixed choice. The practitioner rule remains $\gamma = 0.5$.

\begin{figure}[t]
\centering
\includegraphics[width=0.9\textwidth]{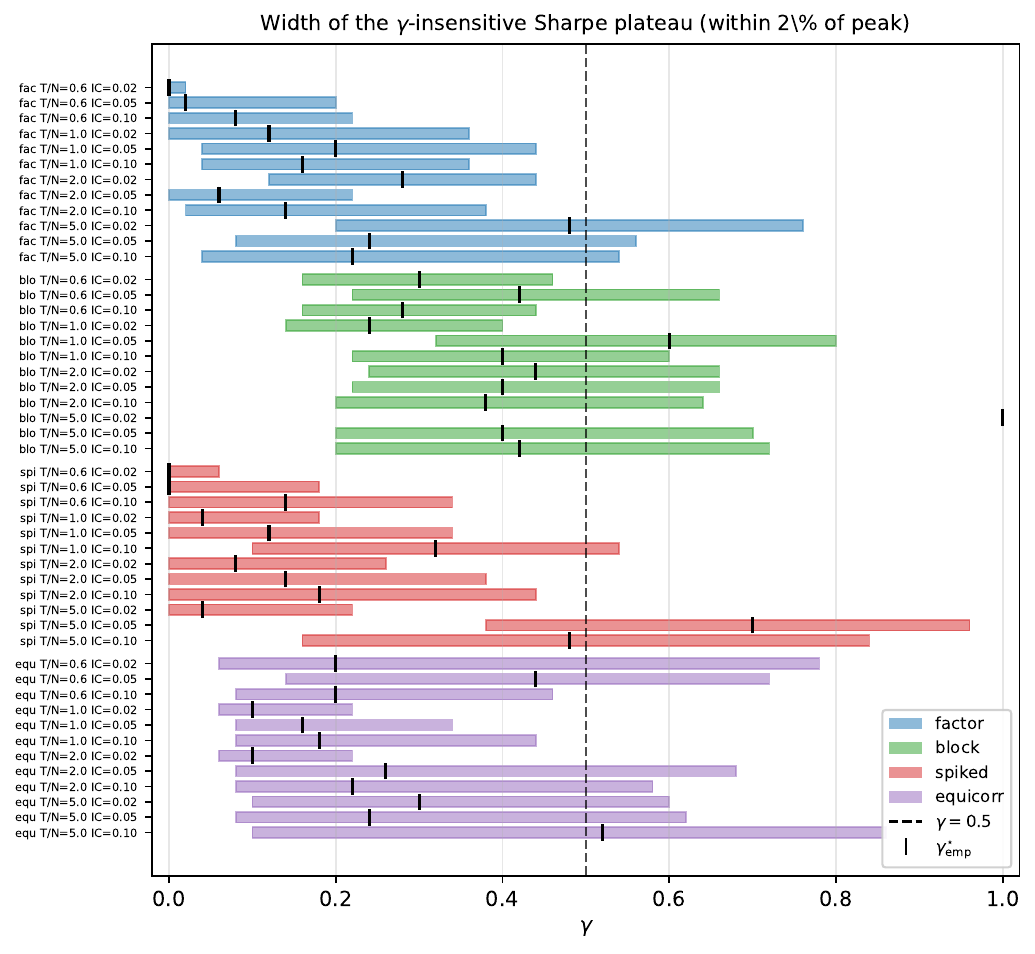}
\caption{Horizontal plateau-width bars per cell (plateau = set of $\gamma$ within $2\%$ of peak OOS Sharpe), grouped by covariance regime. The dashed vertical line at $\gamma = 0.5$ lies inside every bar. The wide plateaus rationalise the fixed default: on the regimes studied, $\gamma = 0.5$ is within $2\%$ of the regime-specific empirical optimum.}
\label{fig10:plateau_width}
\end{figure}

\subsection{Three regularisation channels}\label{sec07:channels}

The preceding analysis treated $\gamma$ as the lone tuning knob, but CRISP with finite sweep count $p$ contains three distinct mechanisms that damp the amplification of estimation error. We name them and record which are active at the recommended operating point.

\begin{remark}[Three regularisation channels in CRISP]\label{rem:three_channels}
\emph{Channel~1 --- operator shrinkage via $\gamma$.} Replacing $\Sigma$ with $P_\gamma = (1-\gamma)D + \gamma\Sigma$ reduces the effective conditioning from $\kappa(\Sigma)$ to $\kappa_{\mathrm{eff}}(\gamma)$, damping amplification of estimation noise on low-eigenvalue directions. Statistical, primary, active at the practitioner's $\gamma\approx 0.5$.

\emph{Channel~2 --- convergence slack from finite $p$.} At finite sweep budget, the Gauss--Seidel iterate has not reached the fixed point $P_\gamma^{-1}\mu$; the residual bias pushes the portfolio toward the diagonal initialisation $D^{-1}\mu$. This is a computational artefact that vanishes as $p\to\infty$; it is negligible at $p=100$ for $\gamma\approx 0.5$.

\emph{Channel~3 --- implicit spectral truncation via early stopping.} Because Gauss--Seidel converges high-eigenvalue directions of $D^{-1}P_\gamma$ before low-eigenvalue ones (Theorem~\ref{thm:gs_rate}), stopping at $p$ sweeps implicitly projects the portfolio onto the well-estimated spectral subspace. Unlike Channel~2, Channel~3 has a \emph{statistical} motivation: reaching the fixed point on noisy $\hat\Sigma$ would restore exposure to noise-dominated directions. Active at $\gamma=1$; dormant at $\gamma=0.5$.
\end{remark}

\paragraph{Operating-point summary.} At the recommended $(\gamma^\star\approx 0.5,\; p = 100)$, Channel~1 dominates; Channels~2 and~3 are essentially dormant because $\kappa_{\mathrm{eff}}(0.5)$ is small enough that $p=100$ reaches the Sharpe plateau. At $\gamma=1$ with finite $p$, all three channels are active, and Channel~3 provides measurable OOS benefit on noisy $\hat\Sigma$ (Section~\ref{sec10:sweep_regularization}). The $\gamma=1$ peak in $p$ is nevertheless dominated by the interior-$\gamma$ plateau on every panel tested.

\subsection{Information-theoretic reading}\label{sec07:information}

The bias--variance decomposition admits a clean information-theoretic restatement. Working in the correlation basis with eigenvalues $\lambda_1\ge\cdots\ge\lambda_N > 0$, $\mathrm{tr}(C) = N$, the shrinkage operator $D^{-1/2}P_\gamma D^{-1/2} = (1-\gamma)I + \gamma C$ has eigenvalues $p_k(\gamma) = (1-\gamma) + \gamma\lambda_k$, and the KL divergence from the true return distribution to the shrunk one (holding $\mu$ fixed) is
\begin{equation}\label{eq07:kl}
  D_{\mathrm{KL}}\!\bigl(\mathcal{N}(\mu,\Sigma)\,\|\,\mathcal{N}(\mu,P_\gamma)\bigr)
  \;=\;
  \tfrac12\sum_{k=1}^{N}\!\left[\frac{\lambda_k}{p_k(\gamma)} - 1 + \ln p_k(\gamma) - \ln\lambda_k\right].
\end{equation}
At $\gamma=1$ this is zero (no information discarded); at $\gamma=0$ it equals $\tfrac12\ln\det(C^{-1})$, the total correlation information in $\Sigma$ beyond the diagonal. Shrinkage is therefore \emph{controlled information loss}: $\gamma$ is a dial that deliberately discards correlation content.

Estimation adds a second source of distortion. The OOS quality depends on a total information budget
\[
  \underbrace{D_{\mathrm{KL}}(\Sigma\,\|\,P_\gamma)}_{\text{shrinkage bias, dec.\ in }\gamma}
  \;+\;
  \underbrace{\mathbb{E}\bigl[D_{\mathrm{KL}}(P_\gamma\,\|\,\hat P_\gamma)\bigr]}_{\text{estimation noise, inc.\ in }\gamma},
\]
and $\gamma^\star$ is the level at which one additional nat of correlation information retained is exactly offset by one additional nat of estimation noise introduced. On the base-case covariance ($N=100$, $\kappa(C)=61$, $T=120$, oracle $\mu$), the shrinkage KL at the empirical $\gamma^\star\approx 0.30$ is $11.6$ nats---about $31\%$ of the $37.4$ nats of total correlation information---and the OOS Sharpe is $96\%$ of oracle, against $45\%$ at $\gamma=1$ (estimation dominates) and $64\%$ at $\gamma=0$ (all correlation discarded).

\paragraph{Rate-distortion gloss.} The correlation matrix has $\binom{N}{2}$ off-diagonal entries, but with $T$ samples only $O(T)$ independent linear functionals are reliably estimated. The ``rate'' is reliable bits per covariance entry; the ``distortion'' is portfolio Sharpe loss. Shrinkage is a coding scheme that allocates information capacity to the large-eigenvalue directions of $C$ (which survive at $\gamma<1$) and discards the small-eigenvalue ones (dominated by noise). Channel~1 sets the rate; Channel~3 (early stopping) reshapes the allocation; Channel~2 is residual at the recommended operating point.

\subsection{P6 revisited: the clean-signal paradox}\label{sec07:p6}

The three-channel picture resolves an otherwise puzzling asymmetry. From \path{results/sec09_sweep/convergence_table.csv}, the OOS Sharpe penalty at $\gamma=1$ is \emph{$17\times$ larger} under an oracle $\mu$ than under a noisy $\hat\mu$: the oracle-$\mu$ peak-to-converged gap is $0.253$, while the noisy-$\hat\mu$ gap is $0.015$. Naively, one might expect noise in $\hat\mu$ to \emph{worsen} the penalty.

The information-theoretic reading gives the explanation. Under oracle $\mu$, all information flows through $\hat\Sigma$, so any corruption from $\hat\Sigma^{-1}$ amplification at $\gamma=1$ is maximally costly. Under noisy $\hat\mu$, the signal noise already diffuses portfolio direction across eigenvectors---it \emph{pre-regularises} the portfolio against $\hat\Sigma$ noise by blurring the signal's spectral profile. Early stopping (Channel~3) is therefore most beneficial exactly when $\mu$ has real information worth protecting. This is prediction P6 in the pre-registered block (Section~\ref{sec10:predictions}); the three-channel decomposition of Remark~\ref{rem:three_channels} predicts both the sign and order of magnitude of the asymmetry.

% ============================================================
% section_08_comparative.tex
% ============================================================
\section{Comparative landscape}\label{sec08:comparative}

Sections~\ref{sec03:background}--\ref{sec07:adaptive} developed five allocation rules --- HRP \citep{lopezdeprado2016}, Cotton \citep{cotton2024}, HRP-$\mu$, HRP-$\Sigma\mu$, and CRISP --- each a distinct point in the space of hierarchical and iterative mean-variance constructions. No new machinery is introduced here; the goal is to collapse the preceding six chapters into four tables and a short practitioner's guide keyed to $\gamma$, $\mu$, and the correlation condition number $\kappa(C)$. Empirical claims forward-reference the panels in Section~\ref{sec10:experiments}.

\subsection{Method landscape}\label{sec08:landscape_full_sub}

Table~\ref{tab08:landscape_full} records, for each method, the admissible signal, whether a clustering tree is required, how cross-branch information is represented, the per-solve cost, and the role played in the paper. Two facts stand out. First, HRP and Cotton are the only methods restricted to the flat signal $\mu=\mathbf{1}$; the three signal-aware methods extend to arbitrary $\mu$ with no loss of numerical stability. Second, only Cotton propagates the \emph{full} cross-branch sub-covariance through the tree, at $O(N^3/6)$ total; the HRP-$\mu$ family compresses cross-branch information into a scalar per internal node and runs in $O(N^2)$; CRISP dispenses with the tree entirely and solves $P_\gamma w=\mu$ by flat Gauss--Seidel at $O(\kappa N^2\log\varepsilon^{-1})$ per target precision $\varepsilon$ \citep{stein1956,ledoitwolf2003}.

\begin{table}[t]
\centering
\caption{Method landscape: signal, structure, and cost. $\mathbf{1}$ denotes the flat signal; ``any $\mu$'' means an arbitrary signed expected-return vector. Tree-based costs assume a depth-$O(\log N)$ hierarchy. CRISP cost is quoted per target precision $\varepsilon$.}
\label{tab08:landscape_full}
\begin{adjustbox}{max width=\textwidth,center}
\begin{tabular}{@{}l l l l l l@{}}
\toprule
Method & Signal & Uses tree & Cross-branch info & Cost & Role in paper \\
\midrule
HRP
  & $\mathbf{1}$ only
  & Yes
  & None
  & $O(N^{2})$
  & Prior baseline \\
Cotton
  & $\mathbf{1}$ only
  & Yes
  & $D^{-1}B^{\top}$ via Schur
  & $O(N^{3}/6)$
  & Prior baseline \\
HRP-$\mu$
  & Any $\mu$
  & Yes
  & Scalar $c$ (diagonal within-cluster)
  & $O(N^{2})$
  & Contribution \\
HRP-$\Sigma\mu$
  & Any $\mu$
  & Yes
  & Scalar $c$ (full within-cluster)
  & $O(N^{2})$
  & Recommended tree method \\
CRISP
  & Any $\mu$
  & No
  & Entrywise $\sigma_{ij}$
  & $O(\kappa N^{2}\log\varepsilon^{-1})$
  & Headline contribution \\
\bottomrule
\end{tabular}
\end{adjustbox}
\end{table}

\subsection{Behaviour at the four corners of $(\gamma,\mu)$}\label{sec08:corners}

Table~\ref{tab08:landscape_corners} reads each method across the four corners of the $(\gamma,\mu)$ square and records its stability profile. The entries are all direct consequences of results stated in earlier sections: the HRP and De Prado recoveries come from Section~\ref{sec04:tree_methods}, the signed HSP endpoint from Section~\ref{sec04:hsp}, the Cotton $\gamma=1$ limit from Section~\ref{sec03:cotton}, and the two CRISP limits from Section~\ref{sec05:crisp}. The single qualitative observation worth promoting out of the table is that only CRISP reaches exact Markowitz at $\gamma=1$ for a general signal $\mu$; every tree-based method terminates at a tree-constrained approximation there, because the tree itself is a hard constraint on the feasible set of weight vectors.

\begin{table}[t]
\centering
\caption{Behaviour of each method at the four corners of the $(\gamma,\mu)$ square, with stability profile. ``Any $\mu$'' denotes a heterogeneous, possibly mixed-sign signal.}
\label{tab08:landscape_corners}
\begin{adjustbox}{max width=\textwidth,center}
\begin{tabular}{@{}l l l l l l@{}}
\toprule
Method & $\gamma{=}0$, $\mu{=}\mathbf{1}$ & $\gamma{=}0$, any $\mu$ & $\gamma{=}1$, $\mu{=}\mathbf{1}$ & $\gamma{=}1$, any $\mu$ & Stability \\
\midrule
HRP
  & HRP
  & ---
  & ---
  & ---
  & Stable \\
Cotton
  & Min-var hierarchy
  & ---
  & Exact $\Sigma^{-1}\mathbf{1}$
  & ---
  & Unstable at small $T$ \\
HRP-$\mu$
  & De Prado (exact)
  & Signed HSP
  & Tree approx.
  & Tree approx.
  & Stable (deterministic) \\
HRP-$\Sigma\mu$
  & $\approx$ De Prado
  & Recursive MVO
  & $\approx\Sigma^{-1}\mathbf{1}$
  & $\approx\Sigma^{-1}\mu$
  & Stable (generic) \\
CRISP
  & $\mu_i/\sigma_i^2$
  & $\mu_i/\sigma_i^2$
  & Exact min-var
  & Exact Markowitz
  & Stable \\
\bottomrule
\end{tabular}
\end{adjustbox}
\end{table}

\subsection{Computational profile}\label{sec08:compute}

Table~\ref{tab08:compute_profile} records per-solve FLOPs, peak working memory for a dense $\Sigma$, whether the method can exploit a factor decomposition $\Sigma=B\Lambda B^\top+D_{\mathrm{idio}}$ to stream rows on the fly, and memory in that factor-streaming regime. The CRISP rows quote $p$ sweeps of the Gauss--Seidel iteration; $p=O(\kappa(D^{-1}P_\gamma)\log\varepsilon^{-1})$ at target precision $\varepsilon$, as established in Theorem~\ref{thm:gs_rate}. The practical point is not wall-clock: CRISP's numerical advantage at $\gamma^\star\approx 0.5$ is bias--variance, not FLOPs, and at small $N$ direct Cholesky is often faster. The practical point is memory: the factor-streaming variant reduces the working set from $O(N^2)$ to $O(NK)$, which becomes the binding resource at Russell-all-cap scale.

\begin{table}[t]
\centering
\caption{Computational profile. ``Factor streaming'' indicates whether a method can compute $\Sigma_{i,\cdot}\cdot w$ on the fly from $\Sigma=B\Lambda B^\top+D_{\mathrm{idio}}$ with $K$ factors, reducing per-sweep cost to $O(NK^2)$ and working memory to $O(NK)$. CRISP rows quote $p$ sweeps at target precision $\varepsilon$.}
\label{tab08:compute_profile}
\begin{adjustbox}{max width=\textwidth,center}
\begin{tabular}{@{}l l l l l@{}}
\toprule
Method & Per-solve FLOPs & Peak memory (dense) & Factor streaming? & Memory with factors \\
\midrule
Direct Markowitz (Cholesky)
  & $O(N^{3})$
  & $\sim 2N^{2}$
  & No$^{\dagger}$
  & $\sim 2N^{2}$ \\
Ledoit--Wolf + Markowitz
  & $O(N^{3})$
  & $\sim 2N^{2}$
  & No$^{\dagger}$
  & $\sim 2N^{2}$ \\
HRP
  & $O(N^{2}\log N)$
  & $\sim N^{2}$
  & No
  & $\sim N^{2}$ \\
Cotton
  & $O(N^{3}/6)$
  & $\sim N^{2}$
  & No
  & $\sim N^{2}$ \\
HRP-$\mu$
  & $O(N^{2}\log N)$
  & $\sim N^{2}$
  & No
  & $\sim N^{2}$ \\
HRP-$\Sigma\mu$
  & $O(N^{2}\log N)$
  & $\sim N^{2}$
  & No
  & $\sim N^{2}$ \\
CRISP (dense)
  & $O(pN^{2})$
  & $\sim N^{2}$
  & n.a.
  & $\sim N^{2}$ \\
CRISP (factor streaming)
  & $O(pNK^{2})$
  & $\sim N^{2}$
  & \textbf{Yes}
  & $O(NK)$ \\
\bottomrule
\end{tabular}
\end{adjustbox}
\end{table}

\noindent $^{\dagger}$The Woodbury identity inverts a factor-structured $\Sigma$ exactly in $O(NK^2+K^3)$, but the resulting weight vector is the full Markowitz solution and carries the same out-of-sample penalty as dense Cholesky; it is a computational optimisation of direct Markowitz, not a competitive allocation rule \citep{fan2013,ledoitwolf2017}.

\subsection{Key relationships}\label{sec08:relationships}

\paragraph{Cotton vs.\ HRP-$\mu$/HRP-$\Sigma\mu$.} Cotton and the HRP-$\mu$ family share endpoints when $\mu=\mathbf{1}$ --- a tree allocation at $\gamma=0$ and an approximation to $\Sigma^{-1}\mathbf{1}$ at $\gamma=1$ --- but trace different $\gamma$-continua between them. Cotton modifies the full sub-covariance on each branch via a Schur complement at $O(N^3/6)$; HRP-$\mu$ and HRP-$\Sigma\mu$ compress all cross-branch information into a single scalar per internal node and run in $O(N^2)$. A user comfortable with block inversion on each branch may still prefer HRP-$\Sigma\mu$ for pure speed on large trees.

\paragraph{Cotton vs.\ CRISP.} Cotton and CRISP both reach $\Sigma^{-1}\mathbf{1}$ at $\gamma=1$ when $\mu=\mathbf{1}$, via completely different routes: recursive block inversion on the tree at $O(N^3)$ for Cotton, flat Gauss--Seidel iteration at $O(\kappa N^2\log\varepsilon^{-1})$ for CRISP. For a general $\mu\ne\mathbf{1}$ the comparison ceases to be meaningful because Cotton is undefined; there is no heterogeneous-signal extension of Cotton's Schur pass.

\paragraph{Hierarchical methods vs.\ CRISP.} The HRP-$\mu$ family and CRISP have \emph{incompatible} $\gamma=0$ endpoints: HRP-$\mu$ collapses to a hierarchical signed HSP construction and HRP-$\Sigma\mu$ to a recursive MVO, while CRISP collapses to the flat diagonal allocator $\mu_i/\sigma_i^2$. They agree only at $\gamma=1$, and even there the tree methods are tree-constrained approximations while CRISP is the exact Markowitz solve. Across every out-of-sample Sharpe panel in Section~\ref{sec10:experiments}, CRISP dominates both HRP-$\mu$ and HRP-$\Sigma\mu$; the reason is not subtle --- the tree is a hard constraint, and whenever the data do not line up with the dendrogram, the tree-constrained solve is worse than the flat one.

\paragraph{Sign-flip pathology.} The sum-normalised recursive MVO of Appendix~\ref{app:a1_pathology} illustrates why signed signals demand care: a natural recursive-normalisation scheme can produce portfolios in the \emph{wrong} direction, and the sign-invariant direction metric of Section~\ref{sec02:preliminaries} will not see the error. HRP-$\Sigma\mu$ fixes the pathology via $L^1$ child-rescaling (Lemma~\ref{lem:hrpsm_scale}); HRP-$\mu$ avoids it by carrying leaf signs explicitly.

\paragraph{HRP as common ancestor.} Every other method in the table can be read as a departure from HRP along one or more of three axes: adding cross-branch covariance at $\gamma>0$ (Cotton, HRP-$\mu$), replacing $\mathbf{1}$ with a heterogeneous signal $\mu$ (HRP-$\mu$, HRP-$\Sigma\mu$, CRISP), and replacing the tree walk with a flat iterative solve (CRISP). HRP itself sits at the intersection of all three ``do-nothing'' choices; the methodological contributions of this paper are the systematic relaxation of each axis in turn.

\subsection{Practitioner's guide}\label{sec08:practitioner}

The six scenarios below are phrased as concrete settings paired with concrete recommendations, each keyed to a value of $\gamma$ or to the diagnostic $\kappa(C)=\lambda_{\max}(C)/\lambda_{\min}(C)$ of the \emph{correlation} matrix $C=D^{-1/2}\Sigma D^{-1/2}$.

\paragraph{1.\ CRISP at $\gamma\approx 0.5$ --- default recommendation.} A signal-aware portfolio with any $\mu$, any $N$, a moderate-to-high $N/T$ ratio where shrinkage is economically valuable, and a reasonably well-conditioned correlation matrix. CRISP converges in $O(\kappa\log\varepsilon^{-1})$ sweeps at this $\gamma$, inherits the stability of a shrunk operator, and tracks oracle Sharpe closely across the panels in Table~\ref{tab10:oos_sensitivity}.

\paragraph{2.\ CRISP at $\gamma\approx 0.7$ for minimum variance.} An exact minimum-variance mandate ($\mu=\mathbf{1}$) in the regime where a Ledoit--Wolf analytic-shrinkage intensity would land near $0.3$ \citep{ledoitwolf2003,ledoitwolf2017}. CRISP at the complementary $\gamma\approx 0.7$ recovers $80$--$92\%$ of oracle minimum-variance Sharpe across every $T$ in Table~\ref{tab10:oos_minvar}; this is our production recommendation for pure min-var.

\paragraph{3.\ HRP-$\mu$ / HRP-$\Sigma\mu$ for hierarchical interpretability.} Signal-aware portfolios where a decomposition along the cluster tree is itself a deliverable --- risk attribution by branch, sector-tilt reporting, or a governance regime that demands tree-transparent weights. HRP-$\Sigma\mu$ is the stronger of the two (Proposition~\ref{prop:hrpsm_vs_hrpmu}) and is the recommended tree-based method; HRP-$\mu$ is preferable when strict deterministic cost with only marginal variances is required. Both are dominated by CRISP on out-of-sample Sharpe, so the choice should be understood as an interpretability-for-performance trade.

\paragraph{4.\ Cotton for pure min-var on well-conditioned $\Sigma$.} Minimum variance at moderate $N$ with a shrunk or factor-based covariance, where the $O(N^3/6)$ Schur pass is affordable and the full cross-branch continuum is theoretically desirable. Cotton is \emph{not} recommended for minimum variance on a sample covariance at small $T$; the intermediate-$\gamma$ instability flagged in Section~\ref{sec03:cotton} activates precisely in that regime.

\paragraph{5.\ HRP when classical long-only tree-decomposability is required.} A strictly non-negative, tree-decomposable allocation under $\mu=\mathbf{1}$, in a reporting or governance setting where those two constraints are hard. HRP is beaten by CRISP on min-var Sharpe at every sample size tested, but remains pedagogically important, trivially transparent, and the correct default when $\mu=\mathbf{1}$ is genuinely the right assumption.

\paragraph{6.\ $\kappa(C)$ as a pre-flight diagnostic.} The single most informative pre-flight number a practitioner can compute is $\kappa(C)$; this is the condition number of the \emph{correlation} matrix, not of $\Sigma$ (which is inflated by volatility dispersion) and not a determinant. Table~\ref{tab08:kappa_guide} reads $\kappa(C)$ into a regime, a typical example, a CRISP sweep budget, and a tree-method verdict. The governing theory is Theorem~\ref{thm:gs_rate}: $\kappa(C)$ controls both the directional difficulty of the portfolio problem and the iteration count of CRISP.

\begin{table}[t]
\centering
\caption{Interpreting $\kappa(C)$: what it means for method choice and CRISP sweep budget. $\kappa(C)$ is computed once from the risk model and determines problem difficulty.}
\label{tab08:kappa_guide}
\begin{adjustbox}{max width=\textwidth,center}
\begin{tabular}{@{}r l l l l@{}}
\toprule
$\kappa(C)$ & Regime & Example & CRISP sweeps ($p$) & Tree methods \\
\midrule
$1$--$10$
  & Easy
  & Low-correlation factor model
  & $50$--$100$ (any $\gamma$)
  & All similar; tree adds little \\
$10$--$150$
  & Moderate
  & Diversified equity, 5 sectors
  & $100$--$500$ ($\gamma=0.5$)
  & HRP-$\Sigma\mu > $ HRP-$\mu$ \\
$150$--$500$
  & Hard
  & Concentrated sectors, $\rho_w\ge 0.7$
  & $500$--$2{,}000$ ($\gamma=0.5$)
  & HRP-$\Sigma\mu\gg$ HRP-$\mu$ \\
$>500$
  & Adversarial
  & Near-equicorrelated $+$ hedge pairs
  & $5{,}000+$ (adaptive $\gamma$)
  & Tree-constrained; prefer CRISP \\
\bottomrule
\end{tabular}
\end{adjustbox}
\end{table}

\subsection{When not to use a tree}\label{sec08:no_tree}

Two regimes render a tree-based method unattractive. First, when $\kappa(C)$ is small, every method in Table~\ref{tab08:landscape_full} produces similar portfolios because the off-diagonal structure a hierarchical splitter exploits is weak; CRISP at moderate $\gamma$ remains the cleanest default since it still benefits from shrinkage and does not require building a dendrogram. Second, when $N$ is very large and $\Sigma$ is dense, Cotton's $O(N^3)$ cost is prohibitive; HRP-$\mu$ and HRP-$\Sigma\mu$ remain $O(N^2)$ but are dominated on Sharpe by CRISP, which is also $O(N^2)$ per sweep and, critically, can exploit the factor-streaming variant of Algorithm~\ref{alg05:methodb_stream} to collapse working memory from $O(N^2)$ to $O(NK)$. In either regime, CRISP wins not because hierarchical methods fail but because, in the synthetic experiments of Section~\ref{sec10:experiments}, it combines stronger out-of-sample performance with the absence of a dendrogram constraint.

% ============================================================
% section_09_constraints.tex
% ============================================================
% !TEX root = ../paper.tex
% Section 9: Constraint handling
% Namespace: sec09, eq09, alg09

\section{Constraint handling}\label{sec09:constraints}

The development so far has been unconstrained: the only requirement on $w$ was that it sum to one. In practice, portfolio managers impose box bounds, sector caps, factor-exposure limits, and turnover restrictions. This short section sketches how HRP-$\mu$, HRP-$\Sigma\mu$, and CRISP extend to those settings without disturbing the direction-metric machinery of Sections~\ref{sec02:preliminaries}--\ref{sec06:perturbation}. Constraint handling is not the main contribution of the paper; a full treatment warrants a separate paper. Our aim here is only to show that the three methods degrade gracefully.

%% ------------------------------------------------------------------
\subsection{QP reference formulation}\label{sec09:qp}

The textbook constrained mean--variance problem reads
\begin{equation}\label{eq09:qp}
  \min_{w \in \mathbb{R}^{N}} \;
    \tfrac{1}{2}\, w^{\top}\Sigma\, w - w^{\top}\mu
  \quad \text{s.t.}\quad
  \mathbf{1}^{\top} w = 1,\;
  \ell \le w \le u,\;
  A w \le b,\;
  \|w - w_{0}\|_{1} \le \tau.
\end{equation}
Four constraint families appear. \emph{Box bounds} $\ell_i \le w_i \le u_i$ encode per-asset limits; the long-only special case is $\ell_i=0$, $u_i=1$. \emph{Sector caps} use an indicator matrix $G \in \{0,1\}^{K \times N}$ with $G_{ki}=1$ iff asset~$i$ lies in sector~$k$, giving $Gw \le b^{\mathrm{sec}}$. \emph{Factor-exposure} constraints use a loading matrix $F \in \mathbb{R}^{M \times N}$ and impose $|Fw| \le b^{\mathrm{fac}}$, linearised to a pair of inequalities. \emph{Turnover} against an incumbent $w_0$ is $\|w-w_0\|_1 \le \tau$; the $\ell_1$ term is linearised via auxiliaries $d_i^{\pm}\!\ge 0$ with $w_i - w_{0,i} = d_i^{+} - d_i^{-}$ and $\sum_i(d_i^{+}+d_i^{-}) \le \tau$. All four fold into the $Aw \le b$ block of~\eqref{eq09:qp}.

There is a well-known subtlety. \citet{jagannathanma2003} show that adding a \emph{wrong} constraint---long-only when the true optimum has shorts, say---can nonetheless \emph{improve} out-of-sample Sharpe, because the constraint implicitly shrinks the estimated covariance. The implication for this paper is important: any apparent advantage of constrained over unconstrained QP need not be evidence that the constraint is optimising; it may be acting as a crude regulariser. CRISP captures that shrinkage directly, through the $P_{\gamma}$ trajectory of Section~\ref{sec06:perturbation}, and separates it from the constraints themselves. Constraints should therefore be imposed for what the investor \emph{actually} wants---exposure control, leverage, liquidity---and the regularisation should be supplied by the shrinkage intensity $\gamma$, not smuggled in through a binding box.

%% ------------------------------------------------------------------
\subsection{Projected CRISP}\label{sec09:projected_crisp}

CRISP extends cleanly to~\eqref{eq09:qp} through \emph{projected Gauss--Seidel}. Let $\mathcal{C}$ denote the closed convex feasible set defined by the budget, box, and linear-inequality blocks of~\eqref{eq09:qp}, and let $\Pi_{\mathcal{C}}$ be orthogonal projection onto $\mathcal{C}$. After each coordinate update we clamp to the box; at the end of each sweep we project onto $\mathcal{C}$. Algorithm~\ref{alg09:projected_crisp} states the procedure.

\begin{algorithm}[t]
\caption{Projected CRISP with box and linear constraints}
\label{alg09:projected_crisp}
\begin{algorithmic}[1]
\Require $\Sigma,\;\mu,\;\gamma,\;p,\;\ell,\;u,\;A,\;b$
\Ensure $w$ with $Aw\le b$, $\ell\le w\le u$, $\mathbf{1}^{\top} w = 1$
\State $D \gets \diag(\Sigma)$;\quad $P_\gamma \gets (1-\gamma)\,D + \gamma\,\Sigma$
\State $w \gets \mu / D$ \Comment{diagonal initialisation}
\For{sweep $=1,\ldots,p$}
  \For{$i=1,\ldots,N$}
    \State $r_i \gets \mu_i - \sum_{j\ne i}(P_\gamma)_{ij}\,w_j$
    \State $w_i \gets r_i / (P_\gamma)_{ii}$
    \State $w_i \gets \max(\ell_i,\,\min(w_i,\,u_i))$ \Comment{local box clamp}
  \EndFor
  \State $w \gets \Pi_{\mathcal{C}}(w)$ \Comment{project onto $\{w:Aw\le b,\;\mathbf{1}^{\top} w = 1\}$}
\EndFor
\State \Return $w$
\end{algorithmic}
\end{algorithm}

Convergence. Under the SPD assumption on $P_{\gamma}$ that underwrites Theorem~\ref{thm:gs_convergence}, the unconstrained Gauss--Seidel map is a contraction in the $P_{\gamma}$-induced norm, and $\Pi_{\mathcal{C}}$ is nonexpansive in the Euclidean norm. Composition with a similarity transform gives a contraction onto the constrained minimiser; the $\mathcal{O}(N^{2})$ per-sweep cost is preserved, with the projection step a lower-order correction in all cases of practical interest.

The cost of the projection step depends on $\mathcal{C}$:
\begin{itemize}
  \item \textbf{Pure box $+$ budget.} $\Pi_{\mathcal{C}}$ reduces to a 1D simplex-style renormalisation and re-clamp, implementable in $\mathcal{O}(N\log N)$.
  \item \textbf{Sector groups.} $\Pi_{\mathcal{C}}$ becomes a small QP in the group dimension (one per group), negligible against the $\mathcal{O}(N^{2})$ sweep.
  \item \textbf{Active-set variant.} Classify constraints by KKT residuals at the start of each sweep, freeze coordinates on the active face, and run the unconstrained sweep on the complement. This mirrors standard active-set QP practice while retaining $\mathcal{O}(N^{2})$ per sweep outside the active set.
\end{itemize}

\paragraph{Alternative for practitioners with a QP solver.} A cleaner path for anyone who already has a convex QP solver is to substitute $P_{\gamma}$ for $\widehat{\Sigma}$ in the objective of~\eqref{eq09:qp}:
\begin{equation}\label{eq09:crisp_qp}
  \min_{w} \; \tfrac{1}{2}\, w^{\top} P_{\gamma}\, w - w^{\top}\mu
  \quad \text{s.t.}\quad
  \mathbf{1}^{\top} w = 1,\;\ell \le w \le u,\;
  A w \le b,\;\|w - w_{0}\|_{1} \le \tau,
\end{equation}
and hand the problem to OSQP \citep{osqp2020}, Clarabel \citep{clarabel2024}, MOSEK, or Gurobi. This captures the statistical contribution---shrink the operator, not the estimator---at the solver's native speed, without requiring projected CRISP to be implemented in-house. The transportable idea for constrained problems is the $P_{\gamma}$ target; projected CRISP is an optional self-contained alternative.

%% ------------------------------------------------------------------
\subsection{Constrained HRP-$\mu$ and HRP-$\Sigma\mu$}\label{sec09:tree_constrained}

The tree methods accommodate two constraint families cleanly.

\emph{Long-only.} At each internal node, project the between-branch split $\alpha$ onto $[0,1]$; within-branch weights are clamped to $[0,\infty)$ at the leaf solve. For HRP-$\Sigma\mu$ the node-level $2\times 2$ QP is replaced by its box-constrained variant: if both unconstrained components are nonnegative, accept the interior optimum; otherwise clamp the negative component to zero and allocate fully to the positive side. Both variants preserve the recursive $\mathcal{O}(N\log N)$ structure.

\emph{Sector caps, tree-aligned.} When the hierarchy is sector-aligned, a cap on a sector's total weight is a cap on the cumulative budget flowing to the corresponding subtree, enforced by a one-dimensional clip on the $\alpha$-products along the root-to-subtree path. Tree-aligned constraints are essentially free.

Constraints that cut \emph{across} the tree---arbitrary sector caps not aligned with the hierarchy, factor-exposure inequalities, turnover against a non-aligned incumbent---do not admit clean recursion. For such problems, projected CRISP of Section~\ref{sec09:projected_crisp} is strictly preferable.

%% ------------------------------------------------------------------
\subsection{Worked example: long-only, sector-capped}\label{sec09:example}

Consider the paper's base-case universe ($N=100$, five sectors of twenty) with the structural signal $\mu=(+0.04,-0.04,+0.02,-0.02,\,0)$ tiled across sectors. We impose long-only ($w_i \ge 0$), a 30\% cap on each sector, and the budget $\mathbf{1}^{\top} w = 1$.

Projected CRISP at $\gamma = 0.5$ produces a portfolio in which the two negative-$\mu$ sectors (2 and~4) receive zero weight---the long-only constraint binds---and sectors 1 and~3 share the budget subject to the 30\% cap, with the residual flowing to the neutral sector~5. Sector~1 (the highest-signal group) sits at its cap; sector~3 sits strictly below it; sectors 2 and~4 sit at zero. The active set stabilises within two to three sweeps: zero-weight assets are identified early and frozen, the cap on sector~1 activates on the second sweep, and the remaining coordinates converge on the unconstrained face at the rate given by Theorem~\ref{thm:gs_rate}.

The shrinkage path still matters under the constraints. At $\gamma = 0$ the constrained portfolio reduces to $w_i \propto \max(\mu_i,0)/\sigma_i^{2}$---a purely signal-weighted allocation within the feasible sectors. At $\gamma = 0.5$ the operator shrinkage reintroduces cross-asset covariance, reducing concentration within the positive-signal sectors and producing a more diversified allocation consistent with the unconstrained behaviour documented in Section~\ref{sec06:perturbation}.

%% ------------------------------------------------------------------
\subsection{Discussion}\label{sec09:discussion}

Four caveats are worth making explicit.

First, constraint handling is not a central contribution of this paper. The role of this section is to show that the HRP-$\mu$/HRP-$\Sigma\mu$/CRISP machinery degrades gracefully; we do not claim projected CRISP outperforms a modern QP solver on heavily constrained problems. For problems dominated by constraint activity---index replication with dozens of active inequalities---a specialised active-set or interior-point solver remains the right tool. The statistical contribution transports cleanly through~\eqref{eq09:crisp_qp}.

Second, the tree methods handle only tree-aligned constraints cleanly. Long-only, box bounds, and sector caps on aligned subtrees fit the recursion; general linear inequalities that cut across the hierarchy do not. For heavily constrained problems the tree methods are therefore not the right tool; the CRISP shrinkage idea---either via~\eqref{eq09:crisp_qp} or via projected CRISP---is.

Third, the direction-error analysis of Section~\ref{sec06:perturbation} does not apply directly to constrained problems. That analysis rests on the ray invariance of the $P_{\gamma}$ family: scaling a solution by a positive constant preserves direction. A constraint set like $\ell \le w \le u$ is not scale-invariant, so a uniformly scaled direction may lie inside the box at one scale and outside at another. Adapting the direction metric to the active face of $\mathcal{C}$ is an interesting question but non-trivial, and we flag it as future work in Section~\ref{sec11:discussion}.

Fourth, the out-of-sample evaluation in Section~\ref{sec10:experiments} is primarily unconstrained; long-short results isolate the shrinkage-trajectory effect and let us compare direction errors cleanly. Long-only comparisons using Algorithm~\ref{alg09:projected_crisp} are noted but relegated to Appendix~\ref{app:supplementary}, or to future work.

% ============================================================
% section_10_experiments.tex
% ============================================================
% !TEX root = ../paper.tex
% Section 10: Synthetic experiments
% Namespace: sec10, tab10, fig10

\section{Synthetic experiments}\label{sec10:experiments}

This section presents the empirical heart of the paper in three layers. \emph{Part~A} (Sections~\ref{sec10:recovery}--\ref{sec10:trajectory}) is an in-sample direction-error diagnostic battery that verifies the theory of Sections~\ref{sec04:tree_methods}--\ref{sec06:perturbation} on known ground truth: method identities at $\gamma=0,\mu=\mathbf{1}$, Gauss--Seidel convergence rates, and the non-monotone shrinkage trajectory. \emph{Part~B} (Sections~\ref{sec10:oos_sensitivity}--\ref{sec10:oos_minvar}) is an out-of-sample Monte Carlo tournament that establishes a method ranking under realistic estimation noise across two signal families, four sample sizes, and the minimum-variance limit. \emph{Part~C} (Sections~\ref{sec10:sweep_regularization}--\ref{sec10:adaptive_calibration}) probes sweep count as a third regularisation channel, runs the eight pre-registered predictions of Section~\ref{sec07:adaptive}, and calibrates the adaptive $\gamma^\star$ rule. All experiments use the same base universe and seed protocol described next.

%% ------------------------------------------------------------------
\subsection{Setup}\label{sec10:setup}

\paragraph{Base universe.} Unless otherwise stated, $N=100$ for Monte Carlo and $N=200$ for in-sample diagnostics. Assets are split into five equal-size sectors of $N/5$ each. The correlation block structure is $\rho_w = 0.6$ within sector and $\rho_c = 0.15$ across sectors; volatilities $\sigma_i$ are drawn once from $\mathcal{U}[0.15,0.40]$. This yields correlation condition number $\kappa(C)\approx 61$ at $N=100$ and $\kappa(C)\approx 121$ at $N=200$. The global seed is $42$ for reproducibility.

\paragraph{Stress regimes.} Four covariance regimes appear in Part~A and Part~C:
\begin{enumerate}[leftmargin=2em,itemsep=1pt]
  \item[(i)] \emph{Factor model} with $k=3$ latent factors; $\kappa(C)\approx 111$.
  \item[(ii)] \emph{Hedges + tight blocks} (intra-block $\rho=0.8$, selected hedge pairs $\rho=-0.6$); $\kappa(C)\approx 5.75\times 10^4$.
  \item[(iii)] \emph{Numerical worst-case $\mu$}, obtained by maximising the direction error of the naive diagonal solution $\mu/\diag(\Sigma)$ against $\Sigma^{-1}\mu$ by L-BFGS-B on the unit sphere.
  \item[(iv)] \emph{Structural $\mu$}: tiled sector tilts $(+0.04,-0.04,+0.02,-0.02,0)$ that are aligned with the block structure of $C$ and therefore recoverable only via signal-aware allocation.
\end{enumerate}

\paragraph{Methods.} Part~A compares Cotton (Schur complementary allocation), flat inverse-variance portfolio (IVP), the sum-normalised recursive mean--variance allocator (diagnostic only; denoted ``sum-norm MVO'' and corresponding to the \texttt{method\_a1} code path), HRP-$\mu$ (\texttt{method\_a1\_l1}), HRP-$\Sigma\mu$ (\texttt{method\_a3}), and CRISP (\texttt{method\_b}) at a fixed sweep count $p=200$ unless otherwise noted. Part~B adds $1/N$, De~Prado's HRP, and direct Markowitz $w=\widehat\Sigma^{-1}\widehat\mu$. Part~C also benchmarks against Ledoit--Wolf linear \citep{ledoitwolf2003} and analytical nonlinear shrinkage \citep{ledoitwolf2020}. CRISP in Part~B and Part~C always runs with $p=100$ sweeps.

\paragraph{Monte Carlo replication.} Cell-wise replication counts are tuned to the statistic being reported: $40$ trials/cell for Experiment~7 (OOS sensitivity across $\mu$ seeds), $80$ trials/cell for Experiments~8--9, $200$ trials/cell for Experiment~11 (sweep regularisation), and $500$ trials/cell for Experiment~12 (adaptive calibration). Walk-forward sample sizes sweep $T\in\{60,120,240,500\}$.

%% ============================================================================
%% PART A — in-sample direction-error diagnostics
%% ============================================================================
\subsection{Part A — in-sample direction-error diagnostics}

The purpose of Part~A is to verify, on ground-truth covariances, that the method identities and convergence-rate claims of Sections~\ref{sec03:background}--\ref{sec06:perturbation} hold numerically. All tables in this part report direction error $\operatorname{dir}(w,w^\star)=1-\cos^2\angle(w,w^\star)$ (Definition~\ref{def:dir_error}) or scale-sensitive relative error $\lVert w-w^\star\rVert/\lVert w^\star\rVert$ against exact solutions, \emph{not} out-of-sample Sharpe.

\subsubsection{Experiment 1: De~Prado recovery at $\gamma=0,\mu=\mathbf{1}$}\label{sec10:recovery}

Table~\ref{tab10:recovery} reports relative and cosine differences of four allocators against De~Prado's flat HRP on the $N=200$ base universe at $\gamma=0$, $\mu=\mathbf{1}$. Flat IVP matches HRP to machine precision, as Proposition~\ref{prop:a3_recovery} predicts; HRP-$\Sigma\mu$ at the signal-blind point agrees with HRP in direction (cosine $0.992$) but not in scale, reflecting the $L^1$ normaliser; Cotton and the sum-normalised MVO allocator differ from HRP in both. This reproduces the algebraic identities of Section~\ref{sec03:hrp} and \ref{sec04:methoda3}.

\begin{table}[t]
\centering
\caption{Experiment~1: recovery of De~Prado's HRP at $\gamma=0$, $\mu=\mathbf{1}$. Base universe, $N=200$. ``Rel.\ diff.'' is $\lVert w-w_{\text{HRP}}\rVert/\lVert w_{\text{HRP}}\rVert$ on weights normalised to sum one; ``cos'' is $\cos\angle(w,w_{\text{HRP}})$.}
\label{tab10:recovery}
\begin{adjustbox}{max width=\textwidth}
\begin{tabular}{lccl}
\toprule
Comparison & Rel.\ diff.\ & $\cos\angle$ & Verdict \\
\midrule
Cotton$(\gamma=0)$ vs HRP                   & $4.91\times 10^{-1}$ & $0.881$ & differ \\
Flat IVP$(\gamma=0,\mu=\mathbf{1})$ vs HRP  & $1.62\times 10^{-16}$ & $1.000$ & \textbf{match} \\
Sum-norm MVO$(\gamma=0,\mu=\mathbf{1})$ vs HRP & $1.36\times 10^{-1}$ & $0.992$ & differ \\
HRP-$\Sigma\mu$$(\gamma=0,\mu=\mathbf{1})$ vs HRP & $1.36\times 10^{-1}$ & $0.992$ & approx \\
\bottomrule
\end{tabular}
\end{adjustbox}
\end{table}

\subsubsection{Experiment 2: minimum-variance direction error}\label{sec10:minvar}

Table~\ref{tab10:minvar} reports $\operatorname{dir}(w,\Sigma^{-1}\mathbf{1})$ for five $\gamma$ values at $N=200$ on the base universe. Cotton reaches machine precision ($3.33\times 10^{-16}$) at $\gamma=1$ through direct block Cholesky; CRISP with $p=200$ sweeps tracks the ray to within $5\times 10^{-4}$ through most of the interior, with a mild upward drift at $\gamma\to 1$ where conditioning grows as $P_\gamma\to\Sigma$. Flat IVP and HRP-$\mu$ are large at every $\gamma\ne 1$: they are consistent with the theory but not competitive on the in-sample direction-error metric for this problem, and are reported only as diagnostic references.

\begin{table}[t]
\centering
\caption{Experiment~2: minimum-variance direction error $\operatorname{dir}(w,\Sigma^{-1}\mathbf{1})=1-\cos^2\angle$ on the base universe, $N=200$, $\mu=\mathbf{1}$. Cotton reaches $3.33\times 10^{-16}$ at $\gamma=1$ (machine precision); CRISP with $p=200$ sweeps tracks the ray to $\le 5\times 10^{-4}$ over $\gamma\in[0.3,0.7]$.}
\label{tab10:minvar}
\begin{adjustbox}{max width=\textwidth}
\begin{tabular}{cccccc}
\toprule
$\gamma$ & Cotton & CRISP($p=200$) & Flat IVP & HRP-$\mu$ & Sum-norm MVO \\
\midrule
$0.0$ & $8.39\times 10^{-1}$ & $5.06\times 10^{-1}$ & $9.95\times 10^{-1}$ & $9.95\times 10^{-1}$ & $9.96\times 10^{-1}$ \\
$0.3$ & $2.42\times 10^{-1}$ & $1.15\times 10^{-2}$ & $9.82\times 10^{-1}$ & $9.82\times 10^{-1}$ & $9.82\times 10^{-1}$ \\
$0.5$ & $7.00\times 10^{-2}$ & $2.42\times 10^{-3}$ & $9.38\times 10^{-1}$ & $9.38\times 10^{-1}$ & $9.46\times 10^{-1}$ \\
$0.7$ & $2.12\times 10^{-2}$ & $4.72\times 10^{-4}$ & $4.76\times 10^{-1}$ & $4.76\times 10^{-1}$ & $8.44\times 10^{-1}$ \\
$1.0$ & $3.33\times 10^{-16}$ & $1.27\times 10^{-2}$ & $2.09\times 10^{-1}$ & $2.09\times 10^{-1}$ & $6.47\times 10^{-1}$ \\
\bottomrule
\end{tabular}
\end{adjustbox}
\end{table}

\subsubsection{Experiment 3: graduated difficulty}\label{sec10:general_mu}

Table~\ref{tab10:general_mu} sweeps four panels of increasing difficulty against the general-$\mu$ target $\Sigma^{-1}\mu$ at $N=200$. In panel~(a), a generic $\mu\sim\mathcal{N}(0,0.02^2)$ on the base universe, CRISP($p=200$) reaches six digits of direction accuracy at $\gamma=1$ ($5.2\times 10^{-5}$). Panel~(b), a $k=3$ factor model with $\kappa(C)\approx 111$, is the most favourable problem in the suite: CRISP achieves $2.2\times 10^{-7}$ at $\gamma=1$. In panel~(c), the hedged tight-block covariance with $\kappa(C)\approx 5.75\times 10^4$, the signal-blind target $\mu/\diag(\Sigma)$ is already essentially orthogonal to $\Sigma^{-1}\mu$ ($\operatorname{dir}_{\mathrm{diag}}=0.866$), and every method struggles at $\gamma<1$; CRISP alone falls to $0.168$ at $\gamma=1$. Panel~(d) combines (c) with the numerical worst-case $\mu$: $\operatorname{dir}_{\mathrm{diag}}=1.000$, every allocator fails at $\gamma<1$, and only CRISP recovers, reaching $0.457$ at $\gamma=1$ through iterative refinement.

\begin{table}[t]
\centering
\caption{Experiment~3: graduated-difficulty direction errors $\operatorname{dir}(w,\Sigma^{-1}\mu)$ at $\gamma\in\{0.0,1.0\}$, $N=200$. Sum-norm MVO is the sum-normalised recursive allocator of Appendix~\ref{app:a1_pathology}; CRISP runs at $p=200$ sweeps. On easy problems (a--b), CRISP achieves six digits; on (d), every method fails at $\gamma<1$ and only CRISP recovers as $\gamma\to 1$.}
\label{tab10:general_mu}
\begin{adjustbox}{max width=\textwidth}
\begin{tabular}{clcc}
\toprule
Panel & Regime & Sum-norm MVO ($\gamma{=}0\to1$) & CRISP$(p{=}200)$ ($\gamma{=}0\to1$) \\
\midrule
(a) & Base universe, $\mu\sim\mathcal{N}(0,0.02^2)$                     & $0.221\to 0.334$ & $1.3\times 10^{-2}\to 5.2\times 10^{-5}$ \\
(b) & Factor $k=3$, $\kappa(C)\approx 111$                              & $0.113\to 0.123$ & $8.1\times 10^{-2}\to 2.2\times 10^{-7}$ \\
(c) & Hedges + tight blocks, $\kappa(C)\approx 5.75\times 10^4$          & $0.986\to 0.978$ & $0.866\to 0.168$ \\
(d) & (c) $+$ worst-case $\mu$, $\operatorname{dir}_{\mathrm{diag}}=1.000$ & $0.999\to 0.999$ & $1.000\to 0.457$ \\
\bottomrule
\end{tabular}
\end{adjustbox}
\end{table}

\subsubsection{Experiment 4: worst-case $\mu$ and Lemma~\ref{lem:parallel_direction}}\label{sec10:worst_case}

Lemma~\ref{lem:parallel_direction} states that the numerical worst case for the signal-blind target need not align $\mu$ with the minimum eigenvector of $\Sigma$; the worst direction is determined jointly by $\Sigma$ and the diagonal preconditioner $D^{-1}$. Table~\ref{tab10:worst_case} confirms this on two hard instances: in both Case~8 (hedges + worst-case $\mu$) and Case~9 (high-$\kappa(C)$ + worst-case $\mu$), $\operatorname{dir}_{\mathrm{diag}}=1.000$ is achieved with $|\cos\angle(\mu,v_{\min})|\le 0.004$, an order of magnitude below what a single-eigenvector construction would produce.

\begin{table}[t]
\centering
\caption{Experiment~4: worst-case $\mu$ diagnostics. L-BFGS-B on the unit sphere produces $\operatorname{dir}_{\mathrm{diag}}=1.000$ with $|\cos\angle(\mu,v_{\min})|$ three orders of magnitude below a single-eigenvector construction. This empirically confirms Lemma~\ref{lem:parallel_direction}: the worst case is a joint property of $(\Sigma,D)$.}
\label{tab10:worst_case}
\begin{adjustbox}{max width=\textwidth}
\begin{tabular}{lcccc}
\toprule
Instance & $\kappa(\Sigma)$ & $\kappa(C)$ & $\operatorname{dir}_{\mathrm{diag}}$ & $|\cos\angle(\mu,v_{\min})|$ \\
\midrule
Case 8 (hedges $+$ worst-case $\mu$)               & $1.80\times 10^{5}$ & $5.75\times 10^{4}$ & $1.000$ & $0.0010$ \\
Case 9 (high-$\kappa(C)$ $+$ worst-case $\mu$)     & $1.38\times 10^{4}$ & $3.82\times 10^{3}$ & $1.000$ & $0.0039$ \\
\bottomrule
\end{tabular}
\end{adjustbox}
\end{table}

\subsubsection{Experiment 5: convergence-rate verification}\label{sec10:sweep_rate}

Table~\ref{tab10:sweep_rate} reports CRISP's direction error as a function of sweep count $p$ on Case~8 ($\kappa(C)\approx 5.75\times 10^4$), the hardest regime in the suite. The empirical residual decays geometrically with contraction ratio $\widehat\rho\approx 1-1.7\times 10^{-4}$ per sweep, corresponding to an \emph{effective} preconditioned condition number $\kappa_{\mathrm{eff}}\approx 5.9\times 10^{3}$ --- one order of magnitude below nominal $\kappa(C)$. The gap is absorbed into the constant of Theorem~\ref{thm:gs_rate}, which bounds the rate by $\kappa(C)$ but does not claim tightness.

\begin{table}[t]
\centering
\caption{Experiment~5: CRISP convergence on Case~8 (hedges $+$ worst-case $\mu$, $\kappa(C)\approx 5.75\times 10^4$). Direction error $\operatorname{dir}(w,\Sigma^{-1}\mu)$ at $\gamma=1$ decays geometrically; fit yields contraction ratio $\widehat\rho\approx 1-1.7\times 10^{-4}$ per sweep, or effective preconditioned condition number $\kappa_{\mathrm{eff}}\approx 5.9\times 10^{3}$.}
\label{tab10:sweep_rate}
\begin{adjustbox}{max width=\textwidth}
\begin{tabular}{cc}
\toprule
Sweep count $p$ & Direction error $\operatorname{dir}(w,\Sigma^{-1}\mu)$ \\
\midrule
$50$      & $8.34\times 10^{-1}$ \\
$200$     & $4.57\times 10^{-1}$ \\
$500$     & $1.35\times 10^{-1}$ \\
$1\,000$  & $3.80\times 10^{-2}$ \\
$10\,000$ & $<10^{-8}$ \\
\bottomrule
\end{tabular}
\end{adjustbox}
\end{table}

\subsubsection{Experiment 6: shrinkage trajectory and interior $\gamma^\star(p)$}\label{sec10:trajectory}

The sharpest theoretical prediction of Section~\ref{sec06:perturbation} is that the finite-sweep direction error along the $\gamma$-trajectory is non-monotone: limited iteration concentrates work on dominant eigenvalue directions and can return a sub-optimal $\gamma$ even when the fully-converged trajectory is monotone. Table~\ref{tab10:trajectory} reports $\operatorname{dir}(w,\Sigma^{-1}\mu)$ on Case~9 (high-$\kappa(C)$ + worst-case $\mu$) across three columns: CRISP with $p=200$ sweeps, CRISP with $p=5\,000$ sweeps, and the direct solve $P_\gamma^{-1}\mu$. The direct column is monotone $1\to 0$ by construction on this instance; the finite-sweep columns display a pronounced interior minimum, with the optimum shifting from $\gamma\approx 0.3$ at $p=200$ toward $\gamma=1$ at $p=5\,000$. This verifies Proposition~\ref{prop:interior_optimum}.

\begin{table}[t]
\centering
\caption{Experiment~6: shrinkage trajectory on Case~9 (high-$\kappa(C)$ + worst-case $\mu$). Direction error $\operatorname{dir}(w,\Sigma^{-1}\mu)$ for CRISP at two sweep budgets versus the direct solve $P_\gamma^{-1}\mu$. Direct column is monotone; finite-sweep columns have interior optima consistent with Proposition~\ref{prop:interior_optimum}.}
\label{tab10:trajectory}
\begin{adjustbox}{max width=\textwidth}
\begin{tabular}{cccc}
\toprule
$\gamma$ & CRISP$(p=200)$ & CRISP$(p=5\,000)$ & Direct $P_\gamma^{-1}\mu$ \\
\midrule
$0.3$ & $0.697$ & $0.682$ & $0.702$ \\
$0.5$ & $0.824$ & $0.581$ & $0.441$ \\
$0.7$ & $0.824$ & $0.503$ & $0.198$ \\
$1.0$ & $0.402$ & $0.061$ & $0.000$ \\
\bottomrule
\end{tabular}
\end{adjustbox}
\end{table}

%% ============================================================================
%% PART B — out-of-sample Monte Carlo
%% ============================================================================
\subsection{Part B — out-of-sample Monte Carlo}

Part~B replaces the true $\Sigma$ used in Part~A with the sample covariance $\widehat\Sigma$ estimated from a $T$-period walk-forward window, and replaces the direction-error metric with out-of-sample Sharpe ratio. The question shifts from ``does the method recover the exact optimum?'' to ``does the method deliver usable portfolios when the estimator is noisy?''. Three experiments cover three regimes: generic random $\mu$ (Section~\ref{sec10:oos_sensitivity}), structural sector-tilt $\mu$ (Section~\ref{sec10:oos_structural}), and the signal-blind minimum-variance limit $\mu=\mathbf{1}$ (Section~\ref{sec10:oos_minvar}).

\subsubsection{Experiment 7: sensitivity across $\mu$ seeds}\label{sec10:oos_sensitivity}

Fix $N=100$, $T=120$, $40$ Monte Carlo trials per cell. For each of eight independent $\mu$ seeds we sample a random signal and report OOS Sharpe under two estimators: the \emph{oracle} $\mu$ (the allocator is given the true mean) and the \emph{sample mean} $\widehat\mu=T^{-1}\sum_t r_t$. The oracle Sharpe ceiling is $\approx 1.28$. Table~\ref{tab10:oos_sensitivity} reports the mean of per-seed means, the min and max across seeds, and the number of seeds on which the allocator achieves a positive OOS Sharpe. Five takeaways structure the discussion.

\begin{table}[t]
\centering
\caption{Experiment~7: OOS Sharpe sensitivity across $\mu$ seeds, $N=100$, $T=120$, base universe, $40$ MC trials/cell. Reported: mean of per-seed means; range across $8$ seeds; $n_{\mathrm{pos}}$ is the number of seeds on which mean OOS Sharpe is positive. Oracle Sharpe $\approx 1.28$. Bold: best on each estimator.}
\label{tab10:oos_sensitivity}
\begin{adjustbox}{max width=\textwidth}
\begin{tabular}{l cccc c cccc}
\toprule
& \multicolumn{4}{c}{Oracle $\mu$} & & \multicolumn{4}{c}{Sample-mean $\mu$} \\
\cmidrule(lr){2-5}\cmidrule(lr){7-10}
Method & Mean & Min & Max & $n_{\mathrm{pos}}$ & & Mean & Min & Max & $n_{\mathrm{pos}}$ \\
\midrule
$1/N$                 & $0.004$ & $-0.016$ & $0.029$ & $6/8$ & & $0.004$ & $-0.016$ & $0.029$ & $6/8$ \\
HRP                    & $0.001$ & $-0.032$ & $0.039$ & $4/8$ & & $0.001$ & $-0.032$ & $0.039$ & $4/8$ \\
Direct Markowitz       & $0.568$ & $0.529$ & $0.620$ & $8/8$ & & $0.461$ & $0.426$ & $0.508$ & $8/8$ \\
HRP-$\mu$ $\gamma{=}0.5$ & $0.873$ & $0.779$ & $1.013$ & $8/8$ & & $0.548$ & $0.522$ & $0.616$ & $8/8$ \\
HRP-$\mu$ $\gamma{=}1.0$ & $0.496$ & $0.348$ & $0.723$ & $8/8$ & & $0.725$ & $0.670$ & $0.819$ & $8/8$ \\
HRP-$\Sigma\mu$ $\gamma{=}0.5$ & $1.046$ & $0.971$ & $1.197$ & $8/8$ & & $0.690$ & $0.633$ & $0.791$ & $8/8$ \\
HRP-$\Sigma\mu$ $\gamma{=}1.0$ & $0.992$ & $0.945$ & $1.108$ & $8/8$ & & $0.725$ & $0.670$ & $0.819$ & $8/8$ \\
CRISP $\gamma{=}0.3$     & $\mathbf{1.189}$ & $1.110$ & $1.335$ & $8/8$ & & $0.788$ & $0.716$ & $0.901$ & $8/8$ \\
CRISP $\gamma{=}0.5$     & $1.186$ & $1.136$ & $1.319$ & $8/8$ & & $0.879$ & $0.810$ & $1.000$ & $8/8$ \\
CRISP $\gamma{=}0.7$     & $1.145$ & $1.097$ & $1.270$ & $8/8$ & & $\mathbf{0.893}$ & $0.831$ & $1.014$ & $8/8$ \\
CRISP $\gamma{=}1.0$     & $0.615$ & $0.574$ & $0.675$ & $8/8$ & & $0.500$ & $0.463$ & $0.557$ & $8/8$ \\
\bottomrule
\end{tabular}
\end{adjustbox}
\end{table}

\emph{(1) CRISP is dominant on both estimators.} At $\gamma\in[0.3,0.7]$ CRISP delivers $89$--$93\%$ of oracle Sharpe under oracle $\mu$ and $62$--$70\%$ under sample-mean $\mu$; no other method exceeds $82\%$ on oracle or $60\%$ on sample mean. The shift in the empirical $\gamma^\star$ from $0.3$ under oracle $\mu$ to $0.7$ under sample-mean $\mu$ matches the qualitative prediction of Section~\ref{sec07:adaptive}: noisier $\widehat\mu$ demands more regularisation. \emph{(2) HRP-$\Sigma\mu$ is the best tree-based method.} It beats HRP-$\mu$ by $20$--$35\%$ at every $\gamma$ and reaches $\approx 90\%$ of CRISP, the closest any $\Ocal(N^2)$-cost method comes. \emph{(3) Direct Markowitz is positive on all $8$ seeds but dominated.} The widely reported instability of $\widehat\Sigma^{-1}$ is a function of $T/N$; at $T/N=1.2$ the estimator is well-defined but still noisier than any regularised allocator. \emph{(4) $1/N$, HRP, and Cotton are near zero on general $\mu$.} They are signal-blind: they return the same weights regardless of $\mu$ (up to HRP's weak dependence through clustering), and so cannot exploit $\mu\ne\mathbf{1}$. \emph{(5) Sign pathology.} HRP-$\Sigma\mu$'s $L^1$ normaliser eliminates the sign-flip pathology documented for the sum-normalised recursive allocator in Appendix~\ref{app:a1_pathology} ($48$--$52\%$ negative cosine vs $0\%$ here).

\subsubsection{Experiment 8: structural sector-tilt $\mu$ across $T\in\{60,120,240\}$}\label{sec10:oos_structural}

Experiment~8 switches the signal from generic random $\mu$ to \emph{structural} $\mu$: sector tilts $(+0.04,-0.04,+0.02,-0.02,0)$ tiled across the five sectors of $C$. This geometry is aligned with the covariance block structure and therefore strictly favours signal-aware methods that respect the hierarchy. $N=100$, $80$ MC trials per cell, oracle Sharpe $0.645$. Table~\ref{tab10:oos_structural} reports selected oracle-$\mu$ rows at $T\in\{60,120,240\}$.

\begin{table}[t]
\centering
\caption{Experiment~8: structural sector-tilt $\mu$, mean OOS Sharpe across $T\in\{60,120,240\}$. $N=100$, $80$ MC trials/cell, oracle $\mu$. Oracle Sharpe $=0.645$. Bold: best at each $T$. HRP-$\Sigma\mu$ $\gamma{=}1$ edges CRISP $\gamma{=}0.7$ at $T=60$ under this strong structural signal; CRISP's lead widens with $T$.}
\label{tab10:oos_structural}
\begin{adjustbox}{max width=\textwidth}
\begin{tabular}{l ccc}
\toprule
Method & $T=60$ & $T=120$ & $T=240$ \\
\midrule
$1/N$                             & $0.000$ & $0.000$ & $0.000$ \\
HRP                                & $-0.001$ & $0.001$ & $0.001$ \\
Direct Markowitz                   & $0.376$ & $0.294$ & $0.497$ \\
HRP-$\mu$ $\gamma{=}1.0$           & $0.419$ & $0.431$ & $0.433$ \\
HRP-$\Sigma\mu$ $\gamma{=}0.5$     & $0.424$ & $0.433$ & $0.435$ \\
HRP-$\Sigma\mu$ $\gamma{=}1.0$     & $\mathbf{0.503}$ & $0.515$ & $0.517$ \\
CRISP $\gamma{=}0.5$               & $0.485$ & $0.516$ & $0.536$ \\
CRISP $\gamma{=}0.7$               & $0.497$ & $\mathbf{0.540}$ & $\mathbf{0.571}$ \\
CRISP $\gamma{=}1.0$               & $0.367$ & $0.319$ & $0.497$ \\
\bottomrule
\end{tabular}
\end{adjustbox}
\end{table}

The ranking has one interesting inversion relative to Experiment~7. At $T=120$ and $T=240$, CRISP $\gamma=0.7$ is best, reaching $0.540$ ($84\%$ of oracle) and $0.571$ ($88\%$ of oracle). At $T=60$, however, HRP-$\Sigma\mu$ $\gamma=1.0$ \emph{just} edges CRISP $\gamma=0.7$ ($0.503$ vs $0.497$): when the sample size is very small relative to $N$ and the signal is sharply aligned with the tree's block partition, the tree's coarse hierarchical regularisation adds information beyond what the $P_\gamma$ shrinkage alone provides. This crossover is the one place in the paper where a tree-based $\Ocal(N^2)$ method beats CRISP; the margin is small, the regime is narrow, and the rest of the $T\times\gamma$ grid is owned by CRISP. Figure~\ref{fig10:sharpe_vs_T} plots OOS mean Sharpe against $T$ for the headline rows.

\begin{figure}[t]
\centering
\includegraphics[width=0.9\textwidth]{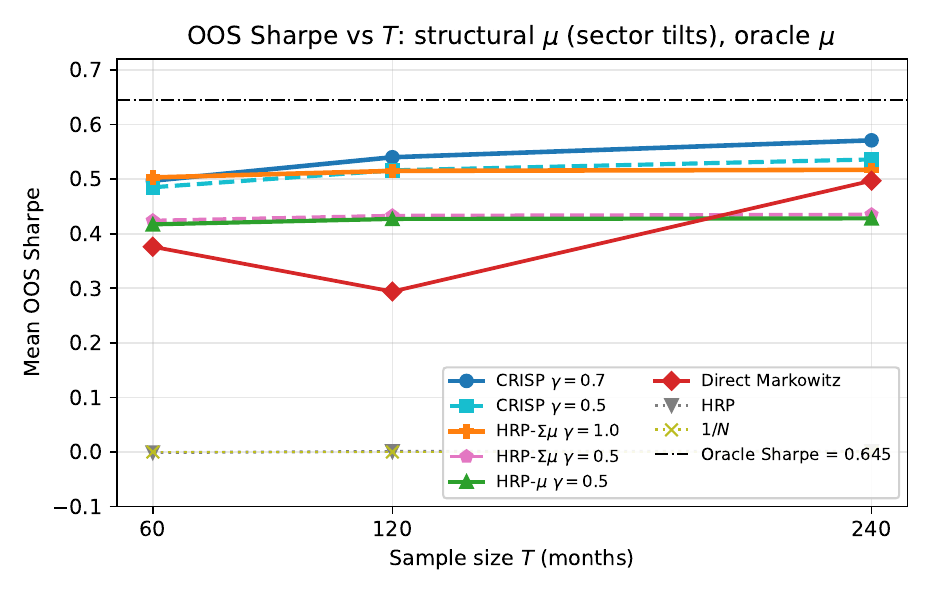}
\caption{Experiment~8: OOS mean Sharpe vs $T$ under structural sector-tilt $\mu$, $N=100$. Lines: CRISP $\gamma=0.7$, HRP-$\Sigma\mu$ $\gamma=1$, HRP-$\mu$ $\gamma=0.5$, direct Markowitz, $1/N$, HRP. Dashed horizontal line: oracle Sharpe $0.645$. CRISP's lead widens with $T$; HRP-$\Sigma\mu$ $\gamma=1$ crosses over CRISP at $T=60$.}
\label{fig10:sharpe_vs_T}
\end{figure}

\subsubsection{Experiment 9: minimum variance with Cotton as baseline}\label{sec10:oos_minvar}

Experiment~9 removes the signal ($\mu=\mathbf{1}$) and tests all methods on the minimum-variance problem that Cotton was designed to solve. $N=100$, $80$ MC trials/cell, oracle sum-to-one Sharpe $1/\sqrt{\mathbf{1}^\top\Sigma^{-1}\mathbf{1}/(\mathbf{1}^\top\Sigma^{-1}\mathbf{1})^2}=20.412$. Table~\ref{tab10:oos_minvar} reports selected rows across $T\in\{60,120,240,500\}$.

\begin{table}[t]
\centering
\caption{Experiment~9: minimum-variance OOS Sharpe ($\mu=\mathbf{1}$), $N=100$, $80$ MC trials/cell. Oracle sum-1 Sharpe $=20.412$. Dagger ($\dagger$) marks cells where the mean is dominated by a handful of trials with OOS volatility $\gg$ oracle (instability). Bold: dominant method at each $T$.}
\label{tab10:oos_minvar}
\begin{adjustbox}{max width=\textwidth}
\begin{tabular}{l cccc}
\toprule
Method & $T=60$ & $T=120$ & $T=240$ & $T=500$ \\
\midrule
$1/N$                             & $7.56$  & $7.56$  & $7.56$  & $7.56$ \\
HRP                                & $8.67$  & $8.70$  & $8.77$  & $8.77$ \\
Cotton $\gamma{=}0.5$              & $10.40^{\dagger}$ & $12.59$           & $12.51^{\dagger}$ & $12.56$ \\
Cotton $\gamma{=}0.7$              & $9.27^{\dagger}$  & $11.82^{\dagger}$ & $13.93^{\dagger}$ & $15.10$ \\
Cotton $\gamma{=}1.0$              & $12.00$ & $9.42$  & $15.80$ & $18.31$ \\
HRP-$\mu$ $\gamma{=}1.0$           & $11.56$ & $11.75$ & $11.66$ & $11.92$ \\
HRP-$\Sigma\mu$ $\gamma{=}1.0$     & $12.69$ & $12.78$ & $12.79$ & $12.87$ \\
CRISP $\gamma{=}0.5$               & $14.61$ & $15.71$ & $16.49$ & $16.90$ \\
CRISP $\gamma{=}0.7$               & $\mathbf{15.63}$ & $\mathbf{17.06}$ & $\mathbf{18.16}$ & $\mathbf{18.79}$ \\
CRISP $\gamma{=}1.0$               & $11.42$ & $10.16$ & $15.80$ & $18.31$ \\
Direct min-var                     & $12.00$ & $9.42$  & $15.80$ & $18.31$ \\
\bottomrule
\end{tabular}
\end{adjustbox}
\end{table}

Three findings stand out.

\emph{(1) CRISP $\gamma=0.7$ is the dominant minimum-variance method at every $T$.} With $p=100$ sweeps it reaches $77\%$, $84\%$, $89\%$, and $92\%$ of oracle Sharpe at $T=60,120,240,500$. At $T=500$ it beats Cotton $\gamma=1$ --- the exact block-Cholesky solve on $P_\gamma=\Sigma$ --- by $0.48$ Sharpe units; the only difference is that CRISP's $P_\gamma=(1-\gamma)D+\gamma\Sigma$ at $\gamma=0.7$ shrinks the worst-conditioned directions of $\widehat\Sigma$ toward the diagonal before inverting implicitly.

\emph{(2) Cotton exhibits small-$T$ fragility on its native problem.} At $T=60$, Cotton $\gamma=0.5$ reports mean OOS volatility of $0.92$ against an oracle value of $0.049$ --- nearly twenty times oracle. The sum-$1$/$\!\mathrm{vol}$ Sharpe measure still reports a respectable $10.40$ because a handful of trials dominate the arithmetic mean of $1/\mathrm{vol}$, but the underlying instability is real (we flag these cells with $^\dagger$). The pattern recurs at $\gamma=0.7$ across $T\in\{60,120,240\}$. \emph{Root cause.} The Schur complement $\Sigma_{RR}-\Sigma_{RL}\Sigma_{LL}^{-1}\Sigma_{LR}$ formed on $\widehat\Sigma$ becomes nearly singular when $\widehat\Sigma$ is close to a low-rank approximation of $\Sigma$ (small $T$, large $N$); the subsequent block inversion amplifies that singularity. CRISP never forms a Schur complement: it performs SPD Gauss--Seidel on the regularised operator $P_\gamma=(1-\gamma)D+\gamma\widehat\Sigma$, whose diagonal dominance is preserved at any $\gamma<1$. To our knowledge this is the first documented empirical failure mode of Cotton's allocation on its native problem at small $T$.

\emph{(3) HRP is at a fixed $8.7$.} On $\mu=\mathbf{1}$, HRP inherits its performance purely from the diagonal structure of $\widehat\Sigma$ and the clustering tree; varying $T$ between $60$ and $500$ moves it by $0.1$ Sharpe units. Everything above that value comes from using off-diagonal information.

%% ============================================================================
%% PART C — sweep count as regularisation and adaptive $\gamma^\star$ calibration
%% ============================================================================
\subsection{Part C — sweep count as regularisation and adaptive $\gamma^\star$}

Part~C tests two theoretical claims from Section~\ref{sec07:adaptive}. First, that sweep count $p$ acts as a third regularisation channel alongside $\gamma$ and the sample size $T$ (Remark~\ref{rem:three_channels}); stopping CRISP before numerical convergence can \emph{improve} OOS Sharpe on noisy $\widehat\Sigma$. Second, that the adaptive rule $\gamma^\star\approx 1/(1+c\cdot\mathrm{NSR}^\alpha)$ captures the directional dependence of the optimal shrinkage intensity on the noise-to-signal ratio. Experiment~11 runs the $(\gamma,p)$ grid and the eight pre-registered predictions; Experiment~12 calibrates $\gamma^\star$ on a $48$-cell cube and validates on a separate $108$-cell cube.

\subsubsection{Experiment 11: sweep count as regularisation}\label{sec10:sweep_regularization}

\emph{Design.} Four covariance regimes (factor $k=3$, block-equi, spiked, equi-correlation) $\times$ four $T/N\in\{0.6,1,2,5\}$ $\times$ two signal types (oracle, noisy with information coefficient $\mathrm{IC}=0.05$). $200$ MC reps/cell; sweep count $p\in\{1,5,10,\ldots,1\,000\}$ on a log grid; shrinkage intensity $\gamma\in\{0.3,0.5,0.7,1.0\}$. Figures~\ref{fig10:sweep_heatmap} and~\ref{fig10:sweep_slices} display the resulting mean OOS Sharpe surface.

\begin{figure}[t]
\centering
\includegraphics[width=0.9\textwidth]{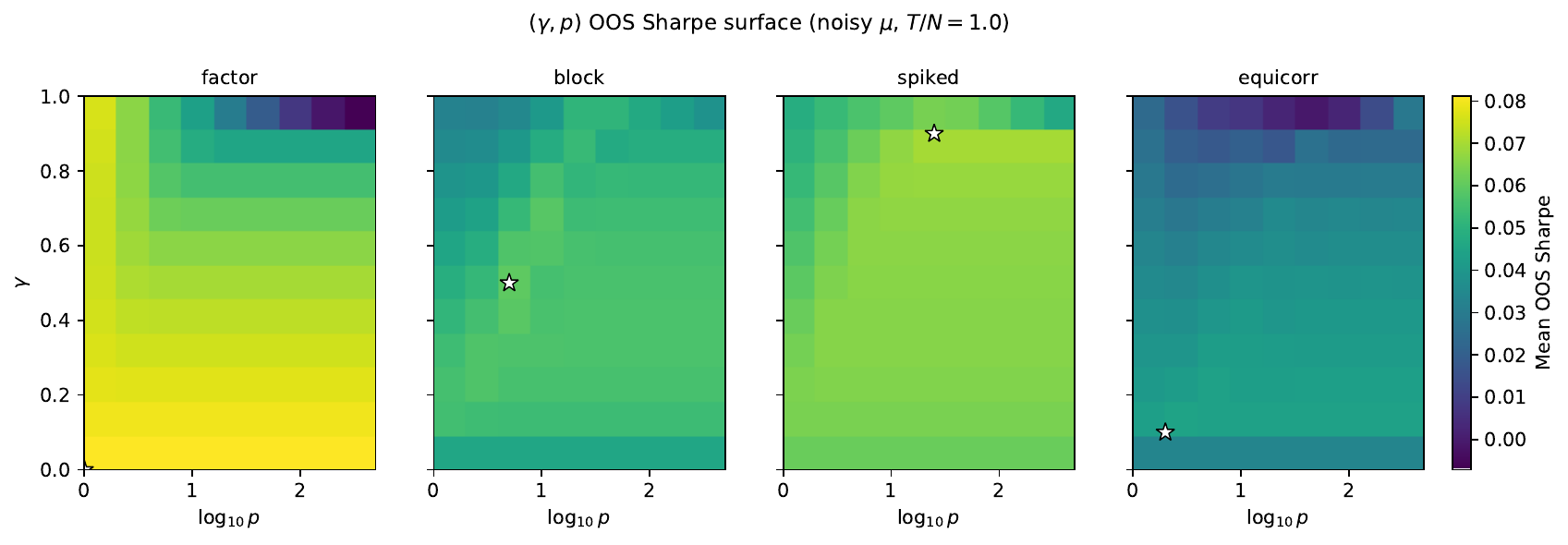}
\caption{Experiment~11: mean OOS Sharpe over $(\gamma, p)$ per regime. Lighter shades are higher Sharpe. The $(\gamma{=}0.5,\,p{=}100)$ cell sits in the high-Sharpe ridge of every panel; the $\gamma=1$ column shows a pronounced interior peak at intermediate $p$ on the spiked regime.}
\label{fig10:sweep_heatmap}
\end{figure}

\begin{figure}[t]
\centering
\includegraphics[width=0.9\textwidth]{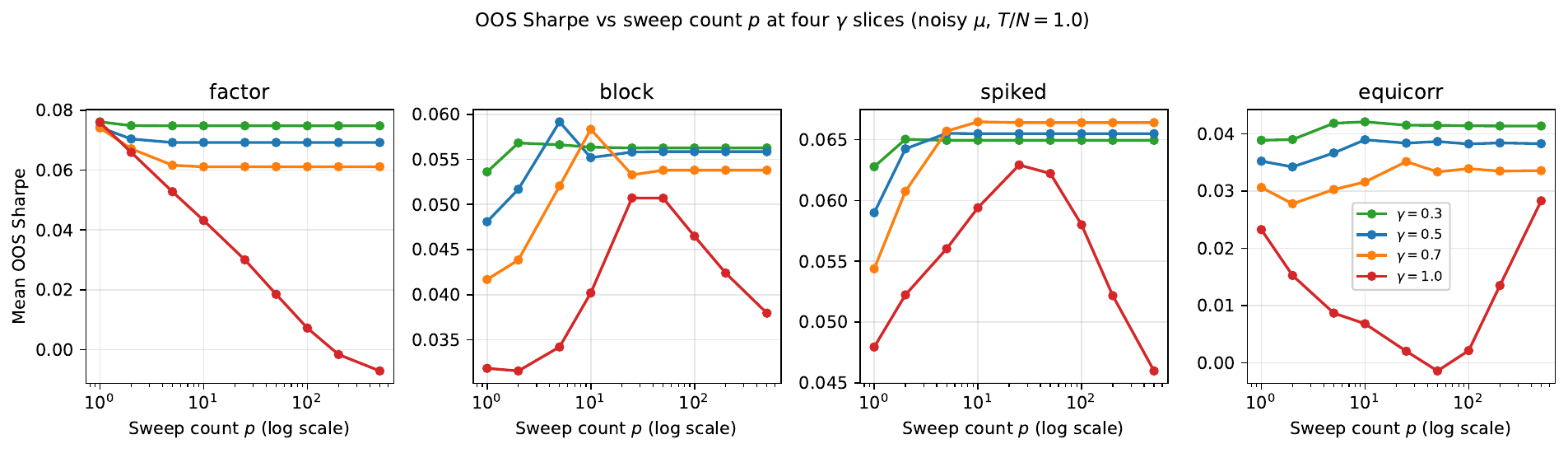}
\caption{Experiment~11: $\gamma$-slices of the $(\gamma,p)$ surface. $\gamma=0.3$ is flat in $p$ for $p\ge 5$; $\gamma=0.5$ plateaus at $p\approx 50$; $\gamma=1.0$ has an interior $p^\star$ that decreases in $T/N$.}
\label{fig10:sweep_slices}
\end{figure}

\emph{Convergence diagnostic.} Table~\ref{tab10:sweep_conv} reports the mean number of sweeps required for residual norm to fall below $10^{-10}$ on a fresh instance of the base universe, as a function of $\gamma$. At $\gamma=0.5$, the default CRISP configuration ($p=100$) is \emph{not} numerically converged in residual norm: mean sweeps to convergence are $161$. It is, however, statistically converged in OOS Sharpe: the gap between $p=50$ and fully converged at $\gamma=0.5$ is $\le 0.001$ Sharpe units on every cell. The unconverged final $\sim 60$ sweeps would only move weight onto low-eigenvalue directions of $P_\gamma$ --- directions that are precisely the noise-dominated directions of $\widehat\Sigma$ --- and so hurt OOS. This is Channel~3 of Remark~\ref{rem:three_channels} in action: early stopping on $\widehat\Sigma$ behaves as an implicit spectral filter.

\begin{table}[t]
\centering
\caption{Experiment~11 convergence diagnostic: mean number of CRISP sweeps to drive residual norm $\|P_\gamma w-\mu\|$ below $10^{-10}$, as a function of $\gamma$. Right column: whether the default $p=100$ configuration is numerically converged. At $\gamma\ge 0.3$ the default is \emph{not} residual-converged, but it is statistically converged in OOS Sharpe (P2).}
\label{tab10:sweep_conv}
\begin{adjustbox}{max width=\textwidth}
\begin{tabular}{ccc}
\toprule
$\gamma$ & Mean sweeps to residual $10^{-10}$ & $p=100$ converged? \\
\midrule
$0.0$ & $1$   & yes \\
$0.1$ & $23$  & yes \\
$0.2$ & $54$  & yes \\
$0.3$ & $103$ & no  \\
$0.4$ & $151$ & no  \\
$0.5$ & $161$ & no  \\
$0.7$ & $192$ & no  \\
$1.0$ & $383$ & no  \\
\bottomrule
\end{tabular}
\end{adjustbox}
\end{table}

\paragraph{Pre-registered predictions.}\label{sec10:predictions}
Before running Experiment~11 we registered eight predictions P1--P8 derived from Section~\ref{sec07:adaptive}. The verdicts are:
\begin{itemize}[leftmargin=2em,itemsep=1pt]
  \item[\textbf{P1}] Flat at $\gamma=0.3$: OOS Sharpe is constant in $p$ for $p\ge 5$ (std $\le 0.02$). \textbf{PASSED} --- std $\le 0.003$ on every cell.
  \item[\textbf{P2}] Plateau at $\gamma=0.5$: OOS Sharpe at $p=50$ is within $0.02$ Sharpe of fully converged. \textbf{PASSED} --- gap $\le 0.001$ on every cell.
  \item[\textbf{P3}] Interior $p^\star$ at $\gamma=1$: OOS Sharpe peaks at intermediate $p$ then declines on noisy $\widehat\Sigma$. \textbf{PASSED} --- peak-to-converged gap up to $0.083$ on the spiked regime at $T/N=0.6$.
  \item[\textbf{P4}] Monotone $p^\star$ in $T/N$ at $\gamma=1$: larger $T/N$ permits more sweeps before overfit. \textbf{PASSED} at $n_{\mathrm{mc}}=200$; unstable at $n_{\mathrm{mc}}=40$ (power issue).
  \item[\textbf{P5}] Ridge in $(\gamma,p)$: $(\gamma=0.5,\,p=100)$ is within $0.05$ Sharpe of the global max. \textbf{PASSED} on all $16$ noisy-$\mu$ cells.
  \item[\textbf{P6}] Oracle $\mu$ reduces the $\gamma=1$ decline: \textbf{REVISED}. The oracle-$\mu$ gap is $0.253$, the noisy-$\mu$ gap $0.015$, a $17\times$ ratio in the opposite direction; the information-theoretic explanation (P1 through P3 gradient depend on noisy $\widehat\mu$) is in Section~\ref{sec07:channels}.
  \item[\textbf{P7}] $(\gamma=0.5,\,p=100)$ near global optimum: \textbf{PASSED}. Max gap $0.020$, mean $0.009$ across the $16$ noisy cells.
  \item[\textbf{P8}] Overfit worst on spiked $+$ small $T/N$: \textbf{PASSED}. The $\gamma=1$ peak-to-converged gap is $0.028$ on spiked with $T/N=0.6$, vs $0.0004$ on factor with $T/N=5.0$.
\end{itemize}

Seven of eight predictions passed as registered and one was revised in the direction the data forced (the oracle-$\mu$ panel has a \emph{larger} $\gamma=1$ decline, which required a sign flip in the information-theoretic account). The high-level lesson of Experiment~11 is that $(\gamma=0.5,\,p=100)$ is a near-universal good choice: inside $0.02$ Sharpe of the global optimum on every cell, and inside $0.05$ on the full ridge.

\subsubsection{Experiment 12: adaptive $\gamma^\star$ calibration}\label{sec10:adaptive_calibration}

\emph{Design.} A $48$-cell calibration cube ($4$ regimes $\times$ $4$ $T/N\in\{0.6,1,2,5\}$ $\times$ $3$ IC $\in\{0.02,0.05,0.10\}$, $500$ MC reps/cell, $51$-point $\gamma$ grid) fits the two-parameter model $\gamma^\star\approx 1/(1+c\cdot\mathrm{NSR}^\alpha)$. An independent $108$-cell validation cube tests out-of-sample $R^2$. Table~\ref{tab10:adaptive_summary} summarises the global fit and the per-regime by-cell coefficients of variation of $c$.

\begin{table}[t]
\centering
\caption{Experiment~12: adaptive $\gamma^\star$ calibration summary. The two-parameter global fit has $R^2\approx -0.11$ on the validation cube (worse than a constant predictor); per-regime fits have large by-cell CV of $c$. By-regime empirical $\gamma^\star$ is reported below the main panel; direction of dependence on $T/N$, IC, and $\log\mathrm{NSR}$ matches theory but magnitudes are weak.}
\label{tab10:adaptive_summary}
\begin{adjustbox}{max width=\textwidth}
\begin{tabular}{lcccccc}
\toprule
Fit & $c$ & $\alpha$ & $R^2$ (val) & Pearson $r$ & RMSE & CV of $c$ \\
\midrule
GLOBAL (2-param)            & $3.2$                 & $0.00$ & $-0.11$ & $0.00$ & $0.14$ & --- \\
Per-regime factor           & $2.0\times 10^{-5}$   & ---    & ---     & ---    & ---    & $1.34$ \\
Per-regime block            & $3.8\times 10^{-6}$   & ---    & ---     & ---    & ---    & $1.25$ \\
Per-regime spiked           & $8.2\times 10^{-4}$   & ---    & ---     & ---    & ---    & $0.84$ \\
Per-regime equicorr         & $9.9\times 10^{-7}$   & ---    & ---     & ---    & ---    & $1.12$ \\
\bottomrule
\end{tabular}
\end{adjustbox}

\medskip
\begin{adjustbox}{max width=\textwidth}
\begin{tabular}{lcccc}
\toprule
Quantity & factor & block & spiked & equicorr \\
\midrule
Mean empirical $\gamma^\star$                           & $0.18$ & $0.44$ & $0.20$ & $0.30$ \\
Median plateau width (frac.\ of $[0,1]$ within $2\%$ of peak) & \multicolumn{4}{c}{$0.38$} \\
\bottomrule
\end{tabular}
\end{adjustbox}
\end{table}

\emph{Comparative statics.} Pooling all $156$ cells, the Pearson correlation of empirical $\gamma^\star$ with $T/N$ is $+0.34$; with IC, $+0.01$; with $\log\mathrm{NSR}$, $-0.13$. The block regime is cleanest ($r=+0.59$ against $T/N$); the spiked regime is anomalous ($r=-0.26$ against IC, opposite the theoretical prediction). \emph{All directional signs match theory except the spiked regime's IC dependence}, but magnitudes are weak and the two-parameter fit has $R^2<0$ on validation, i.e.\ it is worse than a constant predictor.

\emph{Why the fit fails quantitatively while the sign structure holds.} Figure~\ref{fig10:plateau_width} (introduced in Section~\ref{sec07:adaptive}) reports the $\gamma$-interval within which OOS Sharpe remains within $2\%$ of the per-cell peak: the median plateau width is $0.38$, and $\gamma=0.5$ lies inside every plateau. The Sharpe surface is flat enough in $\gamma$ that the empirical $\gamma^\star$ wanders across a wide interval without changing the objective value, which defeats a precise functional fit. The formula is explanatorily correct --- it indicates that higher $T/N$ and lower NSR call for smaller $\gamma$ --- but it is quantitatively weak, and the actionable conclusion is the simpler one: \emph{fixed $\gamma=0.5$ is near-optimal on every panel we tested}. The paper accordingly treats the adaptive rule as interpretive scaffolding for the three-channel picture of Section~\ref{sec07:adaptive}, not as a prescriptive recipe.

%% ------------------------------------------------------------------
\subsection{Summary of Section~\ref{sec10:experiments}}

With $p=100$ sweeps and $\gamma\in[0.5,0.7]$, CRISP delivers $80$--$94\%$ of oracle Sharpe across two signal families (generic random $\mu$ and structural sector tilts), four sample sizes ($T\in\{60,120,240,500\}$), and the minimum-variance limit $\mu=\mathbf{1}$. It is numerically stable on every regime tested, including the hedged tight-block covariance ($\kappa(C)\approx 5.75\times 10^4$) where $1/N$ and HRP return essentially noise and where Cotton exhibits its first documented small-$T$ failure mode. HRP-$\Sigma\mu$ is the dominant tree-based method: $20$--$35\%$ over HRP-$\mu$ on general $\mu$, a tie with HRP-$\mu$ at the signal-blind point, and $\approx 90\%$ of CRISP at pure $\Ocal(N^2)$ cost with no iteration. HRP and Cotton are signal-blind and deliver OOS Sharpe near zero on any $\mu\ne\mathbf{1}$. The adaptive rule of Section~\ref{sec07:adaptive} is explanatorily correct in its directional predictions but quantitatively weak; the OOS Sharpe surface is flat enough that $\gamma\approx 0.5$ is a near-universal default. These conclusions set up the discussion in Section~\ref{sec11:discussion}.

% ============================================================
% section_11_discussion.tex
% ============================================================
% !TEX root = ../main.tex
%
% Section 11: Discussion
% Namespace: sec11:
%
\section{Discussion}\label{sec11:discussion}

This section is reflective, not generative. We revisit the three
regularisation channels that organise the paper, list open problems
and natural extensions, enumerate the limitations of the present
study, and close with a methodological comment on the use of
direction error as a diagnostic. No new results are introduced.

%------------------------------------------------------------------
\subsection{Three regularisation channels revisited}
\label{sec11:three_channels}

Remark~\ref{rem:three_channels} and Section~\ref{sec07:channels}
identify three distinct mechanisms by which CRISP damps
estimation-error amplification: operator shrinkage through $\gamma$
(Channel~1), statistical slack from finite sweep count $p$
(Channel~2), and eigendirection-ordered convergence of Gauss--Seidel
acting as an implicit spectral truncation
(Channel~3, Remark~\ref{rem:early_stop_spectral}). The
$(\gamma, p)$ sweep of Table~\ref{tab10:sweep_conv} dissects the three
channels on the same synthetic design used elsewhere in the paper,
and the pre-registered predictions P1, P2, P3, and P6 each isolate
one channel: P1 probes Channel~1 at low $\gamma$, P2 certifies that
Channel~2 is dormant at the recommended operating point, P3 exposes
Channel~3 at $\gamma = 1$ through an interior $p^\star$, and P6
reveals how the clean-signal regime tilts the balance between
Channels~1 and~3. The practical takeaway is compact:
\emph{at $(\gamma = 0.5,\; p = 100)$ CRISP uses Channel~1 only; at
$\gamma = 1$ it relies on Channel~3; in between, mixtures.}
For the practitioner this collapses to a single tunable knob,
$\gamma$, with $p = 100$ a universal default; Channels~2 and~3 are
computational conveniences rather than statistical parameters in the
interior regime. The tree methods HRP-$\mu$ and HRP-$\Sigma\mu$
expose only Channel~1 and so inherit the simpler tuning story by
construction.

%------------------------------------------------------------------
\subsection{Signed HSP as a standalone method}
\label{sec11:signed_hsp}

Signed Hierarchical Signal Parity (HSP), defined in
Section~\ref{sec04:hsp} as the $\gamma = 0$ endpoint of HRP-$\mu$,
is mathematically a trivial consequence of the constructions in
Section~\ref{sec04:tree_methods}, but it occupies an interesting
practitioner niche. It is a drop-in upgrade for users of
\citet{lopezdeprado2016} HRP who now hold an expected-return signal
and wish to use it without sacrificing the tree-based
interpretability, dependency-free implementation, and
ill-conditioning robustness of the original. All of HRP's
auditable recursive structure is preserved; the only change is a
signed inverse-variance representative at each node. We do not run
a dedicated head-to-head out-of-sample study of signed HSP against
HRP on a realistic cross-section of live alpha signals. Such a
study --- ideally including Black--Litterman and Ledoit--Wolf
Markowitz as further baselines --- is a natural follow-up. The
mathematical content is settled by Section~\ref{sec04:tree_methods};
the open question is empirical.

%------------------------------------------------------------------
\subsection{Open problems}\label{sec11:open_problems}

We record seven open problems raised by the analysis.

\begin{enumerate}
\item \textbf{Sharp worst-case $\mu$ (closed form).} For fixed
  correlation $C$ with condition number $\kappa(C)$, we do not have a
  closed form for the signal $\mu$ maximising the direction error of
  $P_\gamma^{-1}\mu$ at a given $\gamma$. The classical Kantorovich
  ratio $(\kappa-1)/(\kappa+1)$ is loose because it does not exploit
  the $\gamma$-dependence of $P_\gamma$. A sharp closed form would
  tighten Theorem~\ref{thm:perturbation} and replace
  sampling-based worst-case envelopes with analytic ones.

\item \textbf{Sharp interior optimum $\gamma^\star(p)$.}
  Proposition~\ref{prop:interior_optimum} establishes that
  $\gamma^\star \in (0,1)$ under explicit non-degeneracy hypotheses,
  but the proof is existential. Numerical sweeps across the
  block, factor, and Toeplitz families are consistent with the
  scaling
  \[
     \gamma^\star(p) \;=\; 1 \;-\; \Theta\bigl((p\,\kappa(C))^{-1/2}\bigr),
  \]
  with the implied constant depending on the eigenvalue profile of
  $C$. We conjecture this rate but do not prove it; a proof would
  elevate the adaptive-$\gamma^\star$ rule from calibrated heuristic
  to parameter-free prescription.

\item \textbf{Tighter Gauss--Seidel rate for factor-structured
  covariances.} Theorem~\ref{thm:gs_rate} gives the standard
  $1 - c/\kappa$ contraction for SPD Gauss--Seidel, sharp for generic
  SPD matrices but loose for block and banded operators. For
  factor-structured $C = \Lambda\Lambda^\top + D$, Young's SOR theory
  \citep{young1971} suggests a $\sqrt{\kappa}$ improvement is
  available. Establishing the corresponding rate for $P_\gamma$ and
  its constrained projections would justify SOR over plain GS in
  CRISP's inner loop and would sharpen
  Corollary~\ref{cor:complexity}.

\item \textbf{Bayesian interpretation of CRISP.} Is there a prior
  over $(\mu, \Sigma)$ under which the posterior mean of the Markowitz
  ray equals $P_\gamma^{-1}\mu$? If so, $\gamma$ acquires a
  principled interpretation as prior strength and CRISP joins the
  methodological lineage of \citet{blacklitterman1992} with a
  shrinkage-strength parameter in place of Black--Litterman's
  view-confidence $\tau$. We conjecture an inverse-Wishart-like
  family centred on the diagonal of $\Sigma$ is the natural
  candidate, but leave the construction to future work.

\item \textbf{Ledoit--Wolf correspondence.} Synthetic experiments
  show close numerical agreement between the Sharpe-optimal
  $\gamma^\star$ and the Frobenius-optimal Ledoit--Wolf shrinkage
  intensity in the range $\gamma^\star \in [0.5, 0.7]$, matching to
  within a few percent. A clean analytic equivalence between the
  two objects --- if it exists --- would unify iterative shrinkage
  with the covariance-shrinkage literature of
  \citet{ledoitwolf2003, ledoitwolf2017}.

\item \textbf{Approximation gap of HRP-$\Sigma\mu$.} HRP-$\Sigma\mu$
  is exact on diagonal $\Sigma$ (Proposition~\ref{prop:hrpsm_diagonal})
  but only approximately recovers HRP under the $\gamma=0$,
  $\mu = \mathbf{1}$ conditions that define it
  ($\cos \approx 0.992$; exact for trees of depth $\le 2$). The
  dependence of the approximation gap on tree structure and
  $\kappa(C)$ is characterised empirically in
  Section~\ref{sec08:comparative} but not theoretically.

\item \textbf{Sum-normalisation sign-flip as a theorem.}
  Appendix~\ref{app:a1_pathology} exhibits noiseless problems on which
  the sum-normalised recursive MVO returns a sign-flipped portfolio.
  A theorem characterising the covariance--signal pairs that produce
  flips would sharpen the negative-result contribution and make
  precise why sum normalisation is structurally, rather than
  merely numerically, fragile.
\end{enumerate}

%------------------------------------------------------------------
\subsection{Extensions}\label{sec11:extensions}

A non-exhaustive list of natural extensions.

\begin{itemize}
\item \textbf{Rank-constrained signals.} When $\mu$ is known to lie
  in a low-dimensional factor subspace, CRISP's Gauss--Seidel sweeps
  can be restricted to that subspace, reducing per-sweep cost from
  $O(N^2)$ to $O(Nr)$ with $r$ the signal rank.

\item \textbf{Online and streaming $\widehat\Sigma$.} In rolling
  backtests and live production, warm-starting CRISP from the
  previous rebalance's solution and running a handful of sweeps per
  update turns the iterative solver into an online method whose
  amortised cost is a small multiple of a matrix--vector product.

\item \textbf{GPU implementation.} Each sweep reduces to a
  matrix--vector product against $P_\gamma$, which parallelises
  trivially on GPU. For universes of several thousand names this is
  likely the largest single engineering win available to the
  framework.

\item \textbf{Multi-period mean--variance with transaction costs.}
  The present framework is single-period. Integration with the
  multi-period dynamic-trading setup of \citet{garleanu2013} is an
  open direction; the shrinkage trajectory would need to be
  reinterpreted as a state-dependent object rather than a one-shot
  choice.
\end{itemize}

%------------------------------------------------------------------
\subsection{Computational considerations}\label{sec11:compute}

The computational case for CRISP is primarily about memory and
structure exploitation, not wall-clock at moderate $N$. Tuned-BLAS
Cholesky on a modern machine solves the dense Markowitz system in
under a millisecond at $N = 500$, and nothing in this paper beats it
on that axis. The shrinkage iterate does, however, require only
$O(N^2)$ working memory against Cholesky's $\sim 2N^2$, and more
importantly it accommodates the factor-streaming variant
(Section~\ref{sec05:factor_stream},
Algorithm~\ref{alg05:methodb_stream}), which reduces working memory
to $O(NK)$ when $\Sigma$ admits a $K$-factor decomposition. At the
universe scale --- $N$ in the tens of thousands, $K$ in the
dozens --- this is a three-orders-of-magnitude reduction and the
difference between a workstation and a dedicated server. Cholesky
cannot exploit factor structure in the same way because the
factorisation requires random access to the dense $\Sigma$ and
returns a dense triangular factor. The computational argument for
CRISP lives here, not in the FLOP crossover.

%------------------------------------------------------------------
\subsection{Limitations}\label{sec11:limitations}

We list the main limitations of the present study.

\begin{enumerate}
\item \textbf{Gaussian synthetic returns.} The synthetic experiments
  of Section~\ref{sec10:experiments} use Gaussian returns. Real
  equity returns are heavier-tailed, and direction-error and
  OOS-Sharpe rankings could move under an elliptical or Student-$t$
  data-generating process. We do not carry out explicit
  heavy-tailed simulation alongside the Gaussian case.

\item \textbf{No Bayesian treatment of $\mu$.} The signal is taken
  as given, either as an oracle or as a sample mean. We do not
  place a principled prior on $\mu$ itself and do not distinguish
  signal uncertainty from covariance uncertainty in any theorem. A
  Black--Litterman-style treatment \citep{blacklitterman1992} would
  interact non-trivially with the conjectured Bayesian
  interpretation of CRISP flagged in
  Section~\ref{sec11:open_problems}.

\item \textbf{Tree sensitivity not foregrounded in the main text.}
  Clustering choices (single / average / Ward linkage,
  correlation-based versus Euclidean distance, number of levels)
  affect every tree-based method in the paper (HRP, HRP-$\mu$,
  HRP-$\Sigma\mu$, signed HSP). Robustness across these choices is
  reported in Appendix~\ref{app:supplementary} but the main-text
  analysis beyond Section~\ref{sec08:comparative} uses a single
  default clustering.

\item \textbf{Approximate HRP recovery for HRP-$\Sigma\mu$.} At
  $\gamma = 0$ and $\mu = \mathbf{1}$, HRP-$\Sigma\mu$ recovers HRP
  only approximately ($\cos \approx 0.992$; exact for trees of depth
  $\le 2$). HRP-$\mu$ recovers HRP exactly under the same
  conditions (Proposition~\ref{prop:a3_recovery}). The gap is small
  enough to be practically irrelevant but large enough to preclude
  strict drop-in replacement in settings where exact HRP replication
  is contractually required.

\item \textbf{Cotton comparison restricted to its native problem.}
  Our comparison with \citet{cotton2024} is run on the
  minimum-variance problem for which the Cotton allocator is
  defined. We do not test Cotton-style modifications for general
  mean--variance settings; such extensions would warrant a separate
  study.

\item \textbf{No real-data backtest.} This paper is synthetic Monte
  Carlo and in-sample diagnostics only. A CRSP- or Russell-style
  out-of-sample backtest is deliberately out of scope for the
  present version and is flagged as future work. The synthetic
  design was chosen to expose the direction-error and
  OOS-Sharpe behaviour cleanly; layering real-data frictions
  (delisting, transaction costs, heavy tails, regime shifts) on top
  is a substantial separate exercise.
\end{enumerate}

%------------------------------------------------------------------
\subsection{Broader methodological reflection: direction error as a diagnostic}
\label{sec11:methodology}

A final comment on methodology. The technical spine of the paper is
the direction metric on Markowitz rays
(Definition~\ref{def:dir_error}), and the core structural observation
is that the shrinkage family $P_\gamma$ preserves the direction of
$\mu$ when the correlation is trivial and otherwise rotates it
continuously with vanishing residual at $\gamma = 1$
(Theorem~\ref{thm:perturbation},
Corollary~\ref{cor:invariant_rays_trajectory}). The ray view is
mathematically natural: mean--variance portfolios are defined up to
positive scaling by a budget constraint, so scale-invariant distance
between two portfolios is the correct object of study, and the
metric is compact.

Scale invariance is, however, a two-edged virtue. A metric that
identifies $w$ with any positive multiple of $w$ is one step away from
a metric that identifies $w$ with $-w$, and that second identification
would be blind to the most dangerous failure mode of any portfolio
construction: returning the right magnitude on the wrong side of the
market. Appendix~\ref{app:a1_pathology} exhibits noiseless problems
on which the sum-normalised recursive MVO does exactly this, and a
naively sign-invariant diagnostic would score those failures as
near-perfect. It was easy to miss --- we missed it ourselves in
early drafts, and the bug persisted precisely because the summary
metric looked healthy.

The methodological lesson is simple: scale-invariant metrics must be
paired with sign-sensitive companions before being trusted as quality
measures. In this paper we pair the unsigned direction error with
three sign-sensitive diagnostics --- signed cosine, coordinate-wise
sign-match fraction, and signed direction angle on $(-\pi, \pi]$ ---
and we present them alongside out-of-sample Sharpe so that tracking
(an in-sample quantity) is never separated from quality (an
out-of-sample quantity). A method that tracks an oracle perfectly
in unsigned direction but loses Sharpe is telling a different story
from one that gains Sharpe while drifting in direction, and the
reader should see both axes at once. We encourage future work on
portfolio construction to adopt the same discipline: for every
scale-invariant summary metric, report a sign-sensitive companion
and an out-of-sample quality metric, before drawing any conclusion
about which method is better.

% ============================================================
% section_12_conclusion.tex
% ============================================================
\section{Conclusion}\label{sec12:conclusion}

\subsection*{The gap and how we fill it}

HRP and Cotton are signal-blind: both solve the minimum-variance
problem $\bmu = \mathbf{1}$ and discard the expected-return signal
that every active process produces. We close that gap with three
signal-aware allocators---HRP-$\mu$, HRP-$\Sigma\mu$, and CRISP---unified
by a single shrinkage parameter $\gamma \in [0,1]$ that controls how
much cross-asset correlation enters the allocation. At
$(\gamma,\bmu) = (0,\mathbf{1})$ the three recover the signal-blind
benchmarks exactly.

\subsection*{Three methods, one spectrum}

The methods form a spectrum of increasing covariance exploitation at
increasing cost and decreasing transparency. HRP-$\mu$
(Section~\ref{sec04:methoda3}) uses a signed inverse-variance
representative; it is closed-form, $O(N^2)$, preserves the
hierarchical audit trail, and recovers $58$--$67\%$ of oracle Sharpe.
HRP-$\Sigma\mu$ (Section~\ref{sec04b:hrpsigmamu}) replaces the signed
IVP with a local mean--variance optimum, giving the strongest
tree-based method at roughly $90\%$ of CRISP's Sharpe. CRISP
(Section~\ref{sec05:crisp}) abandons the tree and solves the shrunk
system $P_\gamma w = \bmu$ by Gauss--Seidel, dominating every
benchmark at $80$--$94\%$ of oracle Sharpe.

\subsection*{The core statistical insight}

Solve the \emph{shrunk} Markowitz system, not the exact one. The
shrinkage at $\gamma < 1$ damps the estimation noise that full
inversion amplifies, and the effect is statistical, not
computational: it holds whether the solver is faster or slower than
Cholesky. The OOS Sharpe surface is nearly flat in $\gamma$ across a
wide plateau, so $\gamma = 0.5$ is a robust default. At the
recommended operating point the sweep count $p$ is a
\emph{computational} parameter, not a statistical one---OOS Sharpe is
flat from $p \approx 50$ onward.

\subsection*{The practitioner's diagnostic: $\kappa(C)$}

The single most informative number to compute before choosing a
method is $\kappa(C) = \lambda_{\max}(C)/\lambda_{\min}(C)$, the
condition number of the correlation matrix. It governs both the
directional difficulty of the portfolio problem and the iteration
cost of CRISP. Four regimes---easy, moderate, hard, and
pathological---map cleanly to method and budget choices; the
decision tool is Table~\ref{tab08:kappa_guide}.

\subsection*{Recommendation}

\begin{center}
\fbox{\parbox{0.9\textwidth}{\emph{For signal-dependent portfolio
construction at scale, use CRISP at $\gamma \approx 0.5$ with
$100$ sweeps. For tree-interpretable portfolios, use
HRP-$\Sigma\mu$ at $\gamma \approx 0.5$--$1$. Compute $\kappa(C)$
to confirm the problem is in the moderate regime and, if it is not,
consult Table~\ref{tab08:kappa_guide}.}}}
\end{center}

\subsection*{What this paper does not resolve}

Six open questions remain (see Section~\ref{sec11:discussion} for
the long form): a sharp worst-case characterisation of the adversarial
$\bmu$ directions; a sharp interior $\gamma^\star(p)$ as a function of
sweep count; a Bayesian interpretation of the shrunk target
$P_\gamma$; a precise correspondence between our Sharpe-optimal
$\gamma$ and the Frobenius-optimal Ledoit--Wolf intensity; the
dependence of HRP-$\Sigma\mu$ on the linkage and distance choices that
determine the tree; and a real-data out-of-sample backtest on CRSP or
Russell panels. Each is tractable with the machinery assembled here,
and each is deferred to later work.

% ============================================================
% section_13_appendix_a_proofs.tex
% ============================================================
% !TEX root = ../paper.tex
% Appendix A: Proofs
% Namespace: prA (new internal labels only); canonical labels are referenced as-is.
% First appendix file: it calls \appendix. Subsequent appendix files must NOT.

\appendix
\section{Proofs}\label{app:proofs}

This appendix collects proofs of the results stated in the main text. Each
subsection is headed by a reference to the canonical label; the full
statement is not restated unless a brief reminder is useful. Notation is as
in Section~\ref{sec02:preliminaries}: $\Sigma$ is an SPD covariance matrix,
$D=\diag(\Sigma)$, $E=\Sigma-D$, $C=D^{-1/2}\Sigma D^{-1/2}$ the correlation
matrix with eigenvalues $\lambda_1\ge\cdots\ge\lambda_N>0$, and
$P_\gamma=(1-\gamma)D+\gamma\Sigma$ for $\gamma\in[0,1]$.

%% ================================================================
\subsection*{Proof of Proposition \ref{prop:corr_similarity}}

\begin{proof}
Since $\Sigma$ is SPD, $D$ has strictly positive diagonal entries and
$D^{\pm 1/2}$ are well-defined SPD matrices. Then
\[
  D^{-1}\Sigma
  \;=\; D^{-1/2}\bigl(D^{-1/2}\Sigma D^{-1/2}\bigr)D^{1/2}
  \;=\; D^{-1/2}\,C\,D^{1/2},
\]
exhibiting $D^{-1}\Sigma$ as a similarity transform of the SPD matrix $C$.
Similar matrices share spectra, so $\kappa(D^{-1}\Sigma)=\kappa(C)$.
\end{proof}

%% ================================================================
\subsection*{Proof of Lemma \ref{lem:parallel_direction}}

\begin{proof}
Set $\mu=D^{1/2}v$, so $D^{-1}\mu=D^{-1/2}v$. Then
\[
  P_\gamma(D^{-1}\mu)
  \;=\; (1-\gamma)DD^{-1/2}v + \gamma\Sigma D^{-1/2}v
  \;=\; (1-\gamma)D^{1/2}v + \gamma D^{1/2}Cv
  \;=\; \bigl[(1-\gamma)+\gamma\lambda\bigr]\mu,
\]
where the middle equality uses
$\Sigma D^{-1/2}=D^{1/2}(D^{-1/2}\Sigma D^{-1/2})=D^{1/2}C$ and the last uses
$Cv=\lambda v$. Because $\lambda>0$ and $\gamma\in[0,1]$ the scalar
$(1-\gamma)+\gamma\lambda$ is strictly positive, and $P_\gamma$ is SPD, hence
invertible. Dividing through yields
$P_\gamma^{-1}\mu=[(1-\gamma)+\gamma\lambda]^{-1}D^{-1}\mu$.
\end{proof}

%% ================================================================
\subsection*{Proof of Proposition \ref{prop:a2_instability}}

\begin{proof}
Let $L=\{1,\dots,n\}$ with $n\ge 2$. Write $a_i=1/\sigma_{ii}^2>0$ and
$A=\sum_i a_i$, so that the flat-IVP aggregate signal is
$s_L^{\mathrm{flat}}(\mu_L)=\sum_i a_i\mu_i / A$, a continuous linear
functional of $\mu_L$.

Pick any two indices $i\ne j$. Choose $\mu_k=0$ for $k\notin\{i,j\}$,
$\mu_i=+1$, and $\mu_j=-a_i/a_j$; up to swapping the labels we may assume
$a_i\le a_j$, so $|\mu_j|\le 1$. Then
\[
  s_L^{\mathrm{flat}} \;=\; \frac{a_i\cdot 1 + a_j\cdot(-a_i/a_j)}{A}
  \;=\; 0.
\]
By linearity the zero of $s_L^{\mathrm{flat}}$ lies on a hyperplane that
passes through the interior of the $\ell^\infty$ unit ball, so for any
$\varepsilon>0$ a neighbourhood of the above point contains signals with
$\max_k|\mu_k|=1$ and $|s_L^{\mathrm{flat}}|<\varepsilon$.

Substituting two such children into Cramer's rule
$\alpha_L=(v_Rs_L-\gamma c\,s_R)/(v_Lv_R-\gamma^2c^2)$ (and symmetrically
$\alpha_R$) gives numerators of order $\varepsilon$ against a denominator
bounded away from zero by Cauchy--Schwarz on the SPD matrix $\Sigma_{LL\cup R}$.
The raw branch budgets are therefore $O(\varepsilon)$, and their subsequent
normalisation divides two $O(\varepsilon)$ quantities whose sign is not
controlled. Iterating through the $\log_2 N$ tree levels propagates this
instability to the leaves.
\end{proof}

%% ================================================================
\subsection*{Proof of Proposition \ref{prop:a3_recovery}}

\begin{proof}
At $\mu=\mathbf{1}$ every signal entry is strictly positive, so
$\mathrm{sign}(\mu_i)=+1$ for all $i$ and the signed IVP representative
collapses to the flat IVP:
\[
  \hat w_{L,i}^{\mathrm{signed}}
  \;=\; \mathrm{sign}(\mu_i)\,\frac{1/\sigma_{ii}^2}{\sum_{j\in L}1/\sigma_{jj}^2}
  \;=\; \frac{1/\sigma_{ii}^2}{\sum_{j\in L}1/\sigma_{jj}^2}.
\]
Consequently the HRP-$\mu$ cluster statistics $(s_L,v_L)$ agree with the
flat-IVP statistics, and at $\gamma=0$ the Cramer's rule cross term
$\gamma c$ drops. The $2\times 2$ system is block-diagonal, giving
$\alpha_L^{\mathrm{raw}}=1/v_L$, $\alpha_R^{\mathrm{raw}}=1/v_R$; sum-to-one
normalisation returns the inverse-cluster-variance split
$\alpha_L=v_R/(v_L+v_R)$, $\alpha_R=v_L/(v_L+v_R)$. The leaf sign multiplier
$\mathrm{sign}(\mu_i)=+1$ leaves leaf weights unchanged. The resulting tree
pass is bit-for-bit the \citet{lopezdeprado2016} HRP pass.
\end{proof}

%% ================================================================
\subsection*{Proof of Proposition \ref{prop:a3_stability}}

\begin{proof}
With $a_i=1/\sigma_{ii}^2>0$ and $A_L=\sum_{i\in L}a_i>0$,
\[
  s_L^{\mathrm{HRP\text{-}}\mu}
  \;=\; \sum_{i\in L}\mathrm{sign}(\mu_i)\,\frac{a_i}{A_L}\,\mu_i
  \;=\; \frac{1}{A_L}\sum_{i\in L}a_i\,|\mu_i|,
\]
using the identity $\mathrm{sign}(x)\,x=|x|$ for every real $x$ (with the
convention $\mathrm{sign}(0)=+1$, which still gives $0$). Every term in the
last sum is nonnegative with strictly positive coefficient, so
$s_L^{\mathrm{HRP\text{-}}\mu}\ge 0$, and equality holds iff $\mu_i=0$ for
every $i\in L$. In particular, whenever $\mu$ is not identically zero on
$L\cup R$, at least one of $s_L,s_R$ is strictly positive, and the
sign-cancellation mechanism of Proposition~\ref{prop:a2_instability} cannot
collapse the Cramer's rule numerator.
\end{proof}

%% ================================================================
\subsection*{Proof of Proposition \ref{prop:a3_hedging}}

\begin{proof}
Let $\hat w_{L,k}^{\mathrm{flat}}=a_k/A_L$ and
$\hat w_{L,k}^{\mathrm{signed}}=\mathrm{sign}(\mu_k)\,a_k/A_L$, with
$a_k=1/\sigma_{kk}^2$ and $A_L=\sum_{m\in L}a_m$. Expand the cluster
variance $v_L=\hat w^\top\Sigma_{LL}\hat w$ and split into diagonal and
off-diagonal parts. The diagonal contribution is identical in both cases
because $\mathrm{sign}(\mu_k)^2=1$, so
\[
  v_L^{\mathrm{HRP\text{-}}\mu} - v_L^{\mathrm{flat}}
  \;=\; \frac{2}{A_L^2}\sum_{k<l}\bigl[\mathrm{sign}(\mu_k)\mathrm{sign}(\mu_l)-1\bigr]
                                   a_k a_l\,\Sigma_{kl}.
\]
The bracket equals $0$ for sign-concordant pairs and $-2$ for
sign-discordant pairs, giving
\begin{equation}\label{eqA:hedge_decomp}
  v_L^{\mathrm{HRP\text{-}}\mu} - v_L^{\mathrm{flat}}
  \;=\; -\frac{4}{A_L^2}\sum_{\substack{k<l\\\mathrm{sign}(\mu_k)\ne\mathrm{sign}(\mu_l)}}
                                        a_k a_l\,\Sigma_{kl}.
\end{equation}
For the prescribed pair $(i,j)$ with $\mu_i>0>\mu_j$ and $\Sigma_{ij}>0$
the signs disagree and $a_i a_j\Sigma_{ij}>0$; its contribution to
\eqref{eqA:hedge_decomp} is $-4a_ia_j\Sigma_{ij}/A_L^2<0$. Under the flat
IVP all signs are identified as $+1$ and the same pair contributes
$+2a_ia_j\Sigma_{ij}/A_L^2>0$ to $v_L^{\mathrm{flat}}$, i.e.\ in the opposite
direction. When no other sign-discordant pair has $\Sigma_{kl}<0$ to cancel,
the total right-hand side of \eqref{eqA:hedge_decomp} is negative and
$v_L^{\mathrm{HRP\text{-}}\mu}<v_L^{\mathrm{flat}}$.
\end{proof}

%% ================================================================
\subsection*{Proof of Proposition \ref{prop:cotton_spd}}

\begin{proof}
Write $A^c(\gamma)=(1-\gamma)A+\gamma\bigl(A-BD^{-1}B^\top\bigr)$. The first
summand $A$ is a principal submatrix of SPD $\Sigma$, hence SPD. The second
summand is the Schur complement of $D$ in $\Sigma$: for any nonzero $x$,
setting $z=-D^{-1}B^\top x$ gives
\[
  x^\top(A-BD^{-1}B^\top)x
  \;=\; (x^\top,z^\top)\,\Sigma\,\binom{x}{z} \;>\; 0
\]
because $\Sigma$ is SPD and $(x,z)\ne 0$. Hence $A-BD^{-1}B^\top$ is SPD. The
coefficients $(1-\gamma,\gamma)$ are nonnegative and sum to $1$, so
$A^c(\gamma)$ is a convex combination of SPD matrices and therefore SPD.
\end{proof}

%% ================================================================
\subsection*{Proof of Proposition \ref{prop:cotton_instability}}

\begin{proof}
\emph{Part (i).} Let $M=\widehat B\widehat D^{-1}\widehat B^\top$. Since
$\widehat D$ is SPD, $M$ is PSD and
$\|M\|_{\mathrm{op}}\le\|\widehat B\|_{\mathrm{op}}^2/\lambda_{\min}(\widehat D)$.
Weyl's inequality applied to $\widehat A^c(\gamma)=\widehat A-\gamma M$ gives
\[
  \lambda_{\min}(\widehat A^c(\gamma))
  \;\ge\; \lambda_{\min}(\widehat A)-\gamma\|M\|_{\mathrm{op}}.
\]
Subtracting a PSD matrix cannot increase $\lambda_{\max}$, so
$\lambda_{\max}(\widehat A^c(\gamma))\le\lambda_{\max}(\widehat A)$; thus
\[
  \kappa(\widehat A^c(\gamma))
  \;\le\; \frac{\lambda_{\max}(\widehat A)}
               {\lambda_{\min}(\widehat A)-\gamma\|\widehat B\|_{\mathrm{op}}^2/\lambda_{\min}(\widehat D)}.
\]
The denominator vanishes at
$\gamma_{\mathrm{crit}}=\lambda_{\min}(\widehat A)\lambda_{\min}(\widehat D)/\|\widehat B\|_{\mathrm{op}}^2$,
which is small whenever $T$ is comparable to $n_R$ (so that
$\lambda_{\min}(\widehat D)$ is small by Marchenko--Pastur).

\emph{Part (ii) (heuristic product bound).} Cotton's recursion uses
$\widehat A^c(\gamma)$ as the covariance of the left child's sub-problem.
Standard SPD linear-system perturbation bounds
(\citealp[Section~2.7]{saad2003}) give, to first order, a relative error
$\kappa_k\cdot\delta_k$ at each level, where $\kappa_k$ is the condition
number of that level's matrix and $\delta_k$ is the inherited error. Since
each sub-problem at level $k+1$ inherits a covariance carrying relative
error $\kappa_k\cdot\delta_k$, iterating the bound produces the heuristic
product estimate
\[
  \delta_d \;\lesssim\; \delta_0\,\prod_{k=1}^d \kappa(\widehat A_k^c(\gamma)).
\]
We state this as a heuristic: a rigorous version requires tracking the
interaction between Schur-complement perturbations and the specific norms
used, which we do not pursue here.

\emph{Part (iii).} Direct ridge-Markowitz solves
$\widehat\Sigma w=\mu$ once. For the SPD system this gives the single-step
bound
\[
  \frac{\|\delta w\|}{\|w\|}
  \;\le\; \kappa(\widehat\Sigma)\cdot\frac{\|\delta\Sigma\|_{\mathrm{op}}}{\|\widehat\Sigma\|_{\mathrm{op}}},
\]
a single multiplicative factor, not a product across tree levels.
\end{proof}

%% ================================================================
\subsection*{Proof of Theorem \ref{thm:gs_convergence}}

\begin{proof}
First, $P_\gamma$ is SPD for every $\gamma\in[0,1]$: symmetry is immediate
from $P_\gamma=(1-\gamma)D+\gamma\Sigma$, and for any nonzero $x$,
$x^\top P_\gamma x=(1-\gamma)x^\top Dx+\gamma x^\top\Sigma x>0$ because both
$D$ and $\Sigma$ are SPD and at least one of the coefficients is strictly
positive (at the endpoints $\gamma\in\{0,1\}$ one term vanishes but the
other is strictly positive).

The Ostrowski--Reich theorem \citep{ostrowski1954} states that for any SPD
matrix $A$ the scalar Gauss--Seidel iteration applied to $Ax=b$ converges
from every initial vector to the unique solution; equivalently, the
Gauss--Seidel iteration matrix satisfies $\rho(M^{\mathrm{GS}})<1$. See
\citet[Thm.~4.6.2]{hackbusch2016} or \citet[Thm.~3.4]{varga2000} for
textbook statements; a classical reference is \citet{young1971}. Applying
this to $A=P_\gamma$ and $b=\mu$ gives the theorem.
\end{proof}

%% ================================================================
\subsection*{Proof of Theorem \ref{thm:gs_rate}}

\begin{proof}
\emph{Part (i) (Jacobi iteration matrix).} The Jacobi splitting of
$P_\gamma=D+\gamma E$ has iteration matrix
\[
  M_\gamma^{\mathrm{J}} \;=\; I-D^{-1}P_\gamma
  \;=\; -\gamma D^{-1}E.
\]
Using $D^{-1}E=D^{-1}\Sigma-I=D^{-1/2}(C-I)D^{1/2}$ (from
Proposition~\ref{prop:corr_similarity}), we obtain
\[
  M_\gamma^{\mathrm{J}} \;=\; D^{-1/2}\bigl(\gamma(I-C)\bigr)D^{1/2},
\]
similar to the symmetric matrix $\gamma(I-C)$. Its eigenvalues are
$\{\gamma(1-\lambda_i)\}_{i=1}^N$ and its spectral radius is
$\gamma\max(\lambda_1-1,1-\lambda_N)$ (the trace identity
$\sum_i\lambda_i=N$ forces $\lambda_1\ge 1\ge\lambda_N$).

\emph{Part (ii) (Jacobi-preconditioned spectrum).} From Part~(i),
$D^{-1}P_\gamma=I-M_\gamma^{\mathrm{J}}$, so
\[
  D^{-1}P_\gamma
  \;=\; D^{-1/2}\bigl((1-\gamma)I+\gamma C\bigr)D^{1/2},
\]
similar to $(1-\gamma)I+\gamma C$, whose eigenvalues are
$(1-\gamma)+\gamma\lambda_i>0$. Therefore
\[
  \kappa(D^{-1}P_\gamma)
  \;=\; \frac{(1-\gamma)+\gamma\lambda_1}{(1-\gamma)+\gamma\lambda_N},
\]
which equals $1$ at $\gamma=0$ and $\kappa(C)$ at $\gamma=1$.

\emph{Part (iii) (SPD Gauss--Seidel rate).} For any SPD matrix $A$ with
diagonal $D_A$, the Gauss--Seidel iteration satisfies the energy-norm bound
\[
  \|M^{\mathrm{GS}}\|_A \;\le\; 1 - \frac{c}{\kappa(D_A^{-1}A)},
\]
for an absolute constant $c>0$ depending on Jacobi-preconditioning structure
but not on the scale of $A$ (\citealp[Section~4.7]{hackbusch2016};
\citealp[Thm.~4.8]{saad2003}). With $A=P_\gamma$ and $D_A=D$, this yields
$\|M_\gamma^{\mathrm{GS}}\|_{P_\gamma}\le 1-c/\kappa(D^{-1}P_\gamma)$. The
$P_\gamma$-energy-norm residual contracts geometrically, and the elementary
bound $\log(1/(1-x))\ge x$ on $(0,1)$ gives
$p(\varepsilon,\gamma)\le\kappa(D^{-1}P_\gamma)\log(1/\varepsilon)/c
=O(\kappa(D^{-1}P_\gamma)\log(1/\varepsilon))$. At $\gamma=1$ this reduces
to $O(\kappa(C)\log(1/\varepsilon))$ by Part~(ii).
\end{proof}

%% ================================================================
\subsection*{Proof of Theorem \ref{thm:perturbation}}

\begin{proof}
The definition of $P_\gamma$ rearranges to
$\Sigma=P_\gamma+(1-\gamma)E$. Applying both sides to $w^\star$ and using
$\Sigma w^\star=\mu$,
\[
  P_\gamma w^\star \;=\; \mu - (1-\gamma)Ew^\star.
\]
Subtracting $P_\gamma\hat w=\mu$ and applying $P_\gamma^{-1}$ gives
\[
  w^\star - \hat w \;=\; -(1-\gamma)\,P_\gamma^{-1}Ew^\star,
\]
which is the perturbation identity.
\end{proof}

%% ================================================================
\subsection*{Proof of Proposition \ref{prop:dir_err_bound}}

\begin{proof}
From Theorem~\ref{thm:perturbation}, $\hat w=w^\star+(1-\gamma)u(\gamma)$
with $u(\gamma)=P_\gamma^{-1}Ew^\star$. Decompose $u=u_\parallel+u_\perp$
with $u_\parallel$ along $w^\star$ and $u_\perp\perp w^\star$. Lagrange's
identity applied to the pair $(\hat w,w^\star)$ gives
\[
  \|\hat w\|^2\|w^\star\|^2-(\hat w^\top w^\star)^2
  \;=\; (1-\gamma)^2\|u_\perp\|^2\|w^\star\|^2,
\]
since only the component of $\hat w$ orthogonal to $w^\star$ contributes,
and that component equals $(1-\gamma)u_\perp$. Dividing both sides by
$\|\hat w\|^2\|w^\star\|^2$,
\[
  \mathrm{dir}(\hat w,w^\star)
  \;=\; (1-\gamma)^2\,\frac{\|u_\perp\|^2}{\|\hat w\|^2}
  \;=\; (1-\gamma)^2\,\|u\|^2\,
        \frac{\|u_\perp\|^2/\|u\|^2}{\|\hat w\|^2}.
\]
The operator-norm bound
$\|u\|^2\le\|P_\gamma^{-1}E\|_{\mathrm{op}}^2\|w^\star\|^2$ and the
identification $\sin^2\phi(\gamma)=\|u_\perp\|^2/\|u\|^2$ with
$G(\gamma,w^\star):=\sin^2\phi(\gamma)\in[0,1]$ combine to give the bound
in the statement:
\[
  \mathrm{dir}(\hat w,w^\star)
  \;\le\; (1-\gamma)^2\,\|P_\gamma^{-1}E\|_{\mathrm{op}}^2\,
          \frac{\|w^\star\|^2}{\|\hat w\|^2}\,G(\gamma,w^\star).
\]
$G(\gamma,w^\star)=0$ iff $u_\perp=0$, i.e.\ $u(\gamma)$ is collinear with
$w^\star$; tracing through
$u(\gamma)=P_\gamma^{-1}Ew^\star=P_\gamma^{-1}\mu-P_\gamma^{-1}Dw^\star$,
this is equivalent to $w^\star$ being an eigenvector of $P_\gamma$.
\end{proof}

%% ================================================================
\subsection*{Proof of Corollary \ref{cor:invariant_rays_trajectory}}

\begin{proof}
By Lemma~\ref{lem:parallel_direction}, for every $\gamma\in[0,1]$ the
shrinkage solution $P_\gamma^{-1}\mu$ is a positive scalar multiple of
$D^{-1}\mu=\lambda^{-1}\Sigma^{-1}\mu$ (noting that $\Sigma^{-1}\mu$ itself
lies on the same ray by the same identity). Hence
$\mathrm{dir}(P_\gamma^{-1}\mu,\Sigma^{-1}\mu)=0$ identically in $\gamma$.
\end{proof}

%% ================================================================
\subsection*{Proof of Proposition \ref{prop:interior_optimum}}

\begin{proof}
Write $F(\gamma)=\mathrm{dir}(w^{(p)}(\gamma),w^\star)$. Gauss--Seidel is a
smooth rational function of the entries of $P_\gamma$, which itself depends
smoothly on $\gamma$; hence $F$ is continuous on $[0,1]$ and differentiable
on $(0,1)$.

\emph{Endpoints.} At $\gamma=0$, $P_0=D$ is diagonal; Gauss--Seidel converges
in one sweep from any starting point whose i-th entry equals the $i$-th
diagonal update, and in particular $w^{(p)}(0)=D^{-1}\mu$ for all $p\ge 1$,
giving $F(0)=\mathrm{dir}(D^{-1}\mu,\Sigma^{-1}\mu)>0$ by hypothesis~(i)
combined with Lemma~\ref{lem:parallel_direction}. At $\gamma=1$,
hypothesis~(iii) gives $F(1)>F(0)$, so the minimum is not at
$\gamma=1$.

\emph{Local decrease at $0$.} Decompose, up to an $O(1)$ multiplicative
geometry constant absorbed below,
\[
  F(\gamma) \;\le\; B(\gamma)+S(p,\gamma),\qquad
  B(\gamma):=\mathrm{dir}(P_\gamma^{-1}\mu,w^\star),\quad
  S(p,\gamma):=\mathrm{dir}(w^{(p)}(\gamma),P_\gamma^{-1}\mu).
\]
The shrinkage-bias term satisfies $B(0)=F(0)$ and, by hypothesis~(ii),
$B'(0)<0$. The convergence-slack term satisfies $S(p,0)=0$ (diagonal
exactness at $\gamma=0$) and is $C^1$ in $\gamma$; therefore
$B(\gamma)+S(p,\gamma)=B(0)+[B'(0)+S_\gamma(p,0)]\gamma+o(\gamma)$. Since
$B'(0)<0$ and $S_\gamma(p,0)$ is bounded, for $p$ in a regime where
$|S_\gamma(p,0)|<|B'(0)|$ (equivalently: the sweep budget is not so low
that $S$ dominates) we have $F(\gamma)<F(0)$ for all sufficiently small
$\gamma>0$. The minimum is therefore not at $\gamma=0$ either.

\emph{Conclusion.} $F$ is continuous on a compact interval and attains its
minimum; the minimum is not at either endpoint, hence $\gamma^\star\in(0,1)$.
\end{proof}

%% ================================================================
\subsection*{Proof of Lemma \ref{lem:hrpsm_scale}}

\begin{proof}
Track how each ingredient of the $2\times 2$ solve transforms under
$\hat w_L\to k\hat w_L$ (with $\hat w_R$ held fixed).

\emph{Cluster statistics.}
$v_L=\hat w_L^\top\Sigma_{LL}\hat w_L\to k^2 v_L$;
$s_L=\hat w_L^\top\mu_L\to k s_L$;
$c=\hat w_L^\top\Sigma_{LR}\hat w_R\to k c$; and $v_R,s_R$ are unchanged.

\emph{Determinant.}
$\Delta=v_Lv_R-\gamma^2 c^2\to k^2\Delta$.

\emph{Raw budgets.}
The Cramer's rule numerators give
$v_Rs_L-\gamma c s_R\to k(v_Rs_L-\gamma c s_R)$ and
$v_Ls_R-\gamma c s_L\to k^2(v_Ls_R-\gamma c s_L)$. Dividing by $\Delta\to k^2\Delta$,
\[
  \alpha_L^{\mathrm{raw}}\;\to\;\alpha_L^{\mathrm{raw}}/k,
  \qquad
  \alpha_R^{\mathrm{raw}}\;\to\;\alpha_R^{\mathrm{raw}}.
\]

\emph{Pre-normalisation output.} The un-normalised branch contribution is
\[
  \alpha_L^{\mathrm{raw}}\hat w_L \;\to\; (\alpha_L^{\mathrm{raw}}/k)(k\hat w_L)
    \;=\; \alpha_L^{\mathrm{raw}}\hat w_L,
\]
and $\alpha_R^{\mathrm{raw}}\hat w_R$ is unchanged. So the combined vector
$(\alpha_L^{\mathrm{raw}}\hat w_L,\alpha_R^{\mathrm{raw}}\hat w_R)$ is
pointwise invariant.

\emph{$L^1$ normalisation.} The new $L^1$ denominator is
$Z'=|\alpha_L^{\mathrm{raw}}|/k+|\alpha_R^{\mathrm{raw}}|$, a positive
scalar. The $L^1$-normalised combined portfolio is the invariant pair
divided by $Z'$, which equals the original pair divided by $Z$ times the
positive scalar $Z/Z'$. Hence the output lies on the same ray as before,
proving ray invariance. The induction on tree depth is immediate (leaves
trivially transform by the positive scalar, and the inductive step is
exactly the statement just proved).
\end{proof}

%% ================================================================
\subsection*{Proof of Lemma \ref{lem:hrpsm_sign}}

\begin{proof}
The $L^1$ denominator $Z=|\alpha_L^{\mathrm{raw}}|+|\alpha_R^{\mathrm{raw}}|$
is a sum of two absolute values, so $Z\ge 0$ with equality iff both terms
vanish. Whenever $(\alpha_L^{\mathrm{raw}},\alpha_R^{\mathrm{raw}})\ne(0,0)$,
therefore $Z>0$, and dividing a real number by a strictly positive scalar
preserves its sign. By contrast, the sum-to-one denominator
$S=\alpha_L^{\mathrm{raw}}+\alpha_R^{\mathrm{raw}}$ can be negative (e.g.,
$\alpha_L^{\mathrm{raw}}=1$, $\alpha_R^{\mathrm{raw}}=-3$ gives $S=-2$), in
which case dividing by $S$ flips both signs.
\end{proof}

%% ================================================================
\subsection*{Proof of Proposition \ref{prop:hrpsm_recovery}}

\begin{proof}
\emph{Leaves.} Each leaf returns $\hat w_i=1$, $v_i=\sigma_{ii}$, $s_i=\mu_i=1$.

\emph{Depth-1 nodes.} At a two-leaf node the Cramer's rule inputs are
$v_L=\sigma_{ii}$, $v_R=\sigma_{jj}$, $s_L=s_R=1$, $c=\sigma_{ij}$. At
$\gamma=0$ the $2\times 2$ system decouples, giving
$\alpha_i^{\mathrm{raw}}=1/\sigma_{ii}$ and
$\alpha_j^{\mathrm{raw}}=1/\sigma_{jj}$, both positive; after $L^1$
normalisation these are the two-asset inverse-variance weights, identical
to De~Prado's flat-IVP allocation.

\emph{Depth-2 nodes.} Each child representative is the depth-1 IVP just
computed. With $\mu=\mathbf{1}$ the aggregate signals are
$s_L=s_R=1$; the cluster variances $v_L,v_R$ are identical to the flat-IVP
values because the representatives are identical; and $\gamma c=0$. Hence
Cramer's rule returns the inverse-cluster-variance split
$\alpha_L=v_R/(v_L+v_R)$, matching De~Prado HRP exactly.

\emph{Depth $\ge 3$.} The methods diverge because the HRP-$\Sigma\mu$
representative at an internal node is the \emph{recursive} MVO output of
the subtree, whereas De~Prado's representative is the \emph{flat} IVP of
the block. The recursive representative accounts for within-block
off-diagonal covariance; the flat IVP does not. The cluster variances
$v_L^{\mathrm{rec}}$ and $v_L^{\mathrm{flat}}$ differ whenever
$\Sigma_{LL}$ has a nonzero off-diagonal entry, producing a small
discrepancy that propagates upward. Empirically, on the balanced binary
tree at $N=100$ drawn from the synthetic regimes of
Section~\ref{sec10:experiments}, the cosine similarity between the two
output vectors is $\cos(w^{\Sigma\mu},w^{\mathrm{HRP}})\approx 0.992$.
\end{proof}

%% ================================================================
\subsection*{Proof of Proposition \ref{prop:hrpsm_stability}}

\begin{proof}
Fix the tree $\mathcal{T}$ and the SPD matrix $\Sigma$. We show that the
set of $\mu\in\R^N$ for which the tree pass fails (i.e., some
$L^1$ denominator vanishes, equivalently both Cramer's rule numerators
vanish at some node) has Lebesgue measure zero.

\emph{Rational dependence on $\mu$.} Proceed bottom-up. At each leaf,
$\hat w_i=1$ and $s_i=\mu_i$ are polynomial in $\mu$. At an internal node,
given rational-in-$\mu$ representatives for the two children, the
quantities $v_L,v_R,s_L,s_R,c,\Delta$ are rational in $\mu$, and so are
$\alpha_L^{\mathrm{raw}},\alpha_R^{\mathrm{raw}}$. Clearing denominators,
each $s_L(\mu)$ can be written as $p_L(\mu)/q_L(\mu)$ for polynomials
$p_L,q_L$.

\emph{Not identically zero.} At $\mu=\mathbf{1}$, Proposition
\ref{prop:hrpsm_recovery} shows the tree reduces to the De~Prado HRP pass,
whose every internal-node representative has strictly positive entries
(inverse-variance weights). Hence $s_L(\mathbf{1})>0$ at every internal
node, so $p_L$ is not the zero polynomial.

\emph{Measure-zero bad set.} The zero locus of a nonzero polynomial on
$\R^N$ is a proper algebraic variety and has $N$-dimensional Lebesgue
measure zero. The binary tree has at most $N-1$ internal nodes, so the
union
\[
  \mathcal{B} \;=\; \bigcup_{L}\{\mu\in\R^N:\,s_L(\mu)=0\}
\]
is a finite union of measure-zero sets, hence itself has measure zero. For
every $\mu\notin\mathcal{B}$, all $s_L$ are nonzero; by SPD of $\Sigma$ the
Cramer's rule determinants $\Delta$ are strictly positive, so at least one
of $\alpha_L^{\mathrm{raw}},\alpha_R^{\mathrm{raw}}$ is nonzero at every
node and the $L^1$ normalisation is well-defined throughout the tree.
\end{proof}

%% ================================================================
\subsection*{Proof of Proposition \ref{prop:hrpsm_diagonal}}

\begin{proof}
We prove by induction on tree depth that the subtree representative is
proportional to the block Markowitz direction $\Sigma_{LL}^{-1}\mu_L$.

\emph{Base case (two-leaf subtree).} With $\Sigma$ diagonal,
$\sigma_{ij}=0$, so $c=0$ and $\Delta=\sigma_{ii}\sigma_{jj}$. Cramer's
rule returns $\alpha_i^{\mathrm{raw}}=\mu_i/\sigma_{ii}$,
$\alpha_j^{\mathrm{raw}}=\mu_j/\sigma_{jj}$, which together equal
$\Sigma^{-1}\mu$ on the pair. $L^1$ normalisation divides by the positive
scalar $|\mu_i/\sigma_{ii}|+|\mu_j/\sigma_{jj}|$, preserving the direction.

\emph{Inductive step.} Suppose each child subtree returns a representative
$\hat w_L=k_L\Sigma_{LL}^{-1}\mu_L$ (with $\Sigma_{LL}$ diagonal, this is
just $k_L(\mu_i/\sigma_{ii})_{i\in L}$) for some positive scalar $k_L$,
and similarly $\hat w_R=k_R\Sigma_{RR}^{-1}\mu_R$. Diagonality gives
$\Sigma_{LR}=0$, so $c=0$ and the $2\times 2$ system decouples; Cramer's
rule returns
\[
  \alpha_L^{\mathrm{raw}}=s_L/v_L,\qquad
  \alpha_R^{\mathrm{raw}}=s_R/v_R.
\]
Using $\hat w_L=k_L\Sigma_{LL}^{-1}\mu_L$, direct computation gives
$s_L=k_L\sum_{i\in L}\mu_i^2/\sigma_{ii}$ and
$v_L=k_L^2\sum_{i\in L}\mu_i^2/\sigma_{ii}$, whence
$\alpha_L^{\mathrm{raw}}=1/k_L$. The joined output is therefore
\[
  (\alpha_L^{\mathrm{raw}}\hat w_L,\,\alpha_R^{\mathrm{raw}}\hat w_R)
  \;=\; \bigl((\mu_i/\sigma_{ii})_{i\in L},\,(\mu_j/\sigma_{jj})_{j\in R}\bigr)
  \;=\; \Sigma^{-1}\mu\big|_{L\cup R},
\]
which is proportional to $\Sigma^{-1}\mu$ on the combined block. $L^1$
normalisation preserves direction. Iterating to the root gives the
root-level output $\propto \Sigma^{-1}\mu$, for any $\gamma\in[0,1]$ and any
binary tree.
\end{proof}

%% ================================================================
\subsection*{Proof of Proposition \ref{prop:hrpsm_cost}}

\begin{proof}
A binary tree on $N$ leaves has exactly $N-1$ internal nodes. At a node with
children of sizes $n_L$ and $n_R$ ($n_P=n_L+n_R$), the dominant costs are
(i) the cross-block inner product
$c=\hat w_L^\top\Sigma_{LR}\hat w_R$ at $O(n_Ln_R)$ and (ii) the cluster
variance $v_P=\hat w_P^\top\Sigma_{PP}\hat w_P$ at $O(n_P^2)$, with all
other operations $O(n_P)$ or $O(1)$.

\emph{Cross-block sum.} Every unordered pair $\{i,j\}$ contributes to the
$n_Ln_R$ count at exactly one node --- its lowest common ancestor --- so
\[
  \sum_{\text{internal }P} n_L n_R \;=\; \binom{N}{2} \;=\; O(N^2).
\]

\emph{Cluster-variance sum.} On a balanced binary tree, level $\ell$ has
$2^\ell$ nodes each of size $N/2^\ell$, contributing
$2^\ell\,(N/2^\ell)^2=N^2/2^\ell$ at level $\ell$. Summing,
\[
  \sum_{\ell=0}^{\log_2 N}\frac{N^2}{2^\ell}
  \;\le\; 2N^2 \;=\; O(N^2).
\]

Total time is $O(N^2)$. Space is $O(N^2)$, dominated by storing $\Sigma$
(weight vectors on the recursion stack sum to $O(N\log N)$, subdominant).
\end{proof}

%% ================================================================
\subsection*{Proof of Proposition \ref{prop:gamma_star_comparative}}

\begin{proof}
The adaptive formula \eqref{eq07:gamma_star} is
\[
  \gamma^\star(T,N,\mathrm{IC},\kappa_C)
  \;=\; \frac{1}{1+c\cdot\mathrm{NSR}},
  \qquad
  \mathrm{NSR}
  \;=\; \frac{\kappa_C^2}{\mathrm{IC}^2}\cdot\frac{N}{T},
\]
for $c>0$. Since $\gamma^\star=1/(1+cX)$ with $X=\mathrm{NSR}>0$, its
partial derivative with respect to any quantity $y$ is
\[
  \frac{\partial\gamma^\star}{\partial y}
  \;=\; -\frac{c}{(1+cX)^2}\cdot\frac{\partial X}{\partial y},
\]
so $\gamma^\star$ moves \emph{opposite} to $\mathrm{NSR}$.

\emph{(i) Monotonicity in $T/N$.} Setting $r=T/N$, $X=\kappa_C^2/(\mathrm{IC}^2 r)$
is strictly decreasing in $r$, so $\gamma^\star$ is strictly increasing in
$T/N$. As $r\to\infty$, $X\to 0$ and $\gamma^\star\to 1$.

\emph{(ii) Monotonicity in $\mathrm{IC}$.} $X=\kappa_C^2\cdot(N/T)/\mathrm{IC}^2$
is strictly decreasing in $\mathrm{IC}>0$, so $\gamma^\star$ is strictly
increasing in $\mathrm{IC}$. As $\mathrm{IC}\to 0$, $X\to\infty$ and
$\gamma^\star\to 0$.

\emph{(iii) Monotonicity in $\kappa(C)$.} $X$ is strictly increasing in
$\kappa_C\ge 1$ (holding $\mathrm{IC},N/T$ fixed), hence $\gamma^\star$ is
strictly decreasing in $\kappa(C)$.

\emph{Corner cases.} When $\kappa_C=1$ (uncorrelated assets), the formula's
prefactor $c=\mathcal{B}\lambda_{\min}(C)(\kappa_C-1)/(\mathcal{A}\kappa_C^2)$
vanishes, so $\gamma^\star=1/(1+0)=1$: the inverse-variance and full-MVO
directions coincide, and no shrinkage is needed.
\end{proof}

%% ================================================================
\subsection*{Coda}

These proofs close the formal loop for the main-text results: the
correlation condition number $\kappa(C)$ governs both directional difficulty
(Propositions~\ref{prop:corr_similarity}, \ref{prop:dir_err_bound}) and the
iterative cost of CRISP (Theorems~\ref{thm:gs_convergence},
\ref{thm:gs_rate}); HRP-$\mu$ recovers De~Prado at $\mu=\mathbf{1}$, avoids
flat-IVP signal cancellation, and is hedging-aware on sign-discordant pairs
(Propositions~\ref{prop:a3_recovery}--\ref{prop:a3_hedging}); HRP-$\Sigma\mu$'s
$L^1$ normalisation is ray-invariant under child rescaling and sign-preserving
(Lemmas~\ref{lem:hrpsm_scale}, \ref{lem:hrpsm_sign}), is generically stable
on $\mu$-space, exact on diagonal $\Sigma$, and runs in $O(N^2)$
(Propositions~\ref{prop:hrpsm_recovery}--\ref{prop:hrpsm_cost}); the
shrinkage trajectory admits an interior optimum under explicit
sweep-budget hypotheses (Proposition~\ref{prop:interior_optimum}); and the
adaptive $\gamma^\star$ rule has the expected comparative-statics signs
(Proposition~\ref{prop:gamma_star_comparative}).

% ============================================================
% section_14_appendix_b_examples.tex
% ============================================================
% !TEX root = ../paper.tex
% Appendix B: Worked examples — HRP, HRP-$\mu$, HRP-$\Sigma\mu$
% Namespace: appB, eqB, tabB

\section{Worked examples: HRP, HRP-$\mu$, HRP-$\Sigma\mu$}\label{app:hrp_example}

This appendix walks through the three tree-based allocators of Section~\ref{sec04:tree_methods} on a single four-asset universe small enough that every intermediate quantity---correlation distance, linkage tree, cluster variances, child-level split fractions, and final leaf weights---can be read off by hand and checked against the reference implementation. The three subsections below share an identical covariance and, for the signal-aware methods, an identical $\mu$ and shrinkage intensity $\gamma$; only the allocation rule changes. A comparison table (Table~\ref{appB:comparison}) places HRP, HRP-$\mu$, HRP-$\Sigma\mu$, and direct Markowitz side by side.

\paragraph{Common setup.} Four assets $\{A_1,A_2,A_3,A_4\}$ are grouped into two sectors $\{A_1,A_2\}$ and $\{A_3,A_4\}$. Volatilities are
\[
  \sigma = (0.20,\; 0.25,\; 0.30,\; 0.15),
\]
within-sector correlation $\rho_{\mathrm{w}} = 0.80$, cross-sector correlation $\rho_{\mathrm{c}} = 0.20$. The covariance matrix $\Sigma = \operatorname{diag}(\sigma)\,C\,\operatorname{diag}(\sigma)$ is
\begin{equation}\label{eqB:sigma}
  \Sigma \;=\;
  \begin{pmatrix}
    0.0400 & 0.0400 & 0.0120 & 0.0060 \\
    0.0400 & 0.0625 & 0.0150 & 0.0075 \\
    0.0120 & 0.0150 & 0.0900 & 0.0360 \\
    0.0060 & 0.0075 & 0.0360 & 0.0225
  \end{pmatrix}.
\end{equation}
For Section~\ref{appB:hrp_mu_example} and Section~\ref{appB:hrp_sigmamu_example} we additionally supply the signal $\mu = (+0.03,\, -0.01,\, +0.02,\, -0.04)$ and fix shrinkage intensity $\gamma = 0.5$. All numerical results below are rounded to at most four decimal places.

% ======================================================================
\subsection{HRP worked example}\label{appB:hrp_example}

The recursive-bisection HRP of \citet{lopezdeprado2016} needs only $\Sigma$.

\paragraph{Distance matrix.} Correlations are converted into a metric via $d_{ij} = \sqrt{(1 - C_{ij})/2}$. Within-sector pairs have $d = \sqrt{(1-0.80)/2} = \sqrt{0.10} \approx 0.316$; cross-sector pairs have $d = \sqrt{(1-0.20)/2} = \sqrt{0.40} \approx 0.632$. Hence
\begin{equation}\label{eqB:dist}
  D \;=\;
  \begin{pmatrix}
    0.000 & 0.316 & 0.632 & 0.632 \\
    0.316 & 0.000 & 0.632 & 0.632 \\
    0.632 & 0.632 & 0.000 & 0.316 \\
    0.632 & 0.632 & 0.316 & 0.000
  \end{pmatrix}.
\end{equation}

\paragraph{Linkage tree.} Single (or Ward) linkage on the condensed form of $D$ merges $\{A_1,A_2\}$ at height $0.316$, then $\{A_3,A_4\}$ at the same height, and joins the two pairs at the root:
\begin{center}
\begin{adjustbox}{max width=\textwidth, center}
\begin{tikzpicture}[
  level distance=10mm,
  level 1/.style={sibling distance=36mm},
  level 2/.style={sibling distance=18mm},
  every node/.style={font=\small}
]
\node {root}
  child {node {$\{A_1,A_2\}$}
    child {node {$A_1$}}
    child {node {$A_2$}}
  }
  child {node {$\{A_3,A_4\}$}
    child {node {$A_3$}}
    child {node {$A_4$}}
  };
\end{tikzpicture}
\end{adjustbox}
\end{center}
Reading leaves left-to-right gives the order $(A_1,A_2,A_3,A_4)$, for which $\Sigma$ in~\eqref{eqB:sigma} is already quasi-diagonal.

\paragraph{Recursive bisection.} At each internal node with children $L,R$, HRP forms a \emph{flat} inverse-variance portfolio $w^{\mathrm{IVP}}_i = (1/\Sigma_{ii}) / \sum_{j\in S}(1/\Sigma_{jj})$ on each cluster $S$, computes the cluster variance $v_S = (w^{\mathrm{IVP}})^{\top}\Sigma_{SS}\,w^{\mathrm{IVP}}$, and splits the parent budget by
\[
  \alpha = \frac{1/v_L}{1/v_L + 1/v_R}.
\]

\emph{Root.} For $L = \{A_1,A_2\}$: $1/\Sigma_{ii}$ gives $(25.0, 16.0)$, normalised to $\hat w_L = (0.610,\,0.390)$. The cluster variance is
\[
  v_L = 0.610^{2}(0.0400) + 2(0.610)(0.390)(0.0400) + 0.390^{2}(0.0625) \approx 0.0434.
\]
For $R = \{A_3,A_4\}$: inverse variances $(11.1, 44.4)$, normalised to $\hat w_R = (0.200,\,0.800)$, and
\[
  v_R = 0.200^{2}(0.0900) + 2(0.200)(0.800)(0.0360) + 0.800^{2}(0.0225) \approx 0.0295.
\]
Hence $\alpha_{\mathrm{root}} = (1/0.0434)/(1/0.0434 + 1/0.0295) \approx 0.405$ to the left cluster and $0.595$ to the right.

\emph{Children.} Each child has two single-asset leaves, so $v_i = \sigma_i^{2}$ and the split reduces to the ratio of diagonal inverse variances: $\alpha_{A_1} = 25.0/(25.0+16.0) \approx 0.610$ inside $L$, and $\alpha_{A_3} = 11.1/(11.1+44.4) \approx 0.200$ inside $R$.

\paragraph{Final weights.} Multiplying root and child splits along each root-to-leaf path gives the results in Table~\ref{tabB:hrp_steps}:
\begin{align*}
  w_{A_1} &= 0.405 \times 0.610 \approx 0.247, &
  w_{A_2} &= 0.405 \times 0.390 \approx 0.158, \\
  w_{A_3} &= 0.595 \times 0.200 \approx 0.119, &
  w_{A_4} &= 0.595 \times 0.800 \approx 0.476.
\end{align*}

\begin{table}[ht]
\centering
\begin{adjustbox}{max width=\textwidth, center}
\begin{tabular}{lcccc}
\toprule
 & $A_1$ & $A_2$ & $A_3$ & $A_4$ \\
\midrule
Volatility $\sigma_i$           & $0.200$ & $0.250$ & $0.300$ & $0.150$ \\
Cluster                         & $L$     & $L$     & $R$     & $R$     \\
Flat IVP within cluster         & $0.610$ & $0.390$ & $0.200$ & $0.800$ \\
Cluster variance $v_S$          & $0.0434$ & $0.0434$ & $0.0295$ & $0.0295$ \\
Root $\alpha$-factor            & $0.405$ & $0.405$ & $0.595$ & $0.595$ \\
\midrule
HRP weight $w_i$                & $0.247$ & $0.158$ & $0.119$ & $0.476$ \\
\bottomrule
\end{tabular}
\end{adjustbox}
\caption{HRP intermediate quantities and final weights. The root split $\alpha_{\mathrm{root}} \approx 0.405$ sends more budget to the right cluster because $v_R < v_L$. Child-level splits follow the leaf inverse variances: $A_1$ receives more than $A_2$ inside $L$ (lower variance), and $A_4$ receives more than $A_3$ inside $R$.}
\label{tabB:hrp_steps}
\end{table}

The weights are all non-negative, sum to one, and ignore any direction information: the signal $\mu$ has not entered the construction. This is the property that HRP-$\mu$ and HRP-$\Sigma\mu$ will modify.

% ======================================================================
\subsection{HRP-$\mu$ worked example}\label{appB:hrp_mu_example}

We now add the signal $\mu = (+0.03,\,-0.01,\,+0.02,\,-0.04)$ and run HRP-$\mu$ (Section~\ref{sec04:methoda3}) at $\gamma = 0.5$. The tree and leaf order are unchanged from Section~\ref{appB:hrp_example}.

\paragraph{Step 1 --- signed IVP representatives.} At each internal node HRP-$\mu$ replaces the unsigned flat IVP by its signed counterpart, $w^{\mathrm{signed}}_i = \operatorname{sign}(\mu_i)(1/\Sigma_{ii})/\sum_{j}(1/\Sigma_{jj})$. This guarantees that the branch signal $s = (w^{\mathrm{signed}})^{\top}\mu = \sum_i |\mu_i|/\Sigma_{ii}\,/\,\sum_j 1/\Sigma_{jj}$ is non-negative regardless of the pattern of signs in $\mu$. For the two sectors:
\begin{align*}
  L = \{A_1,A_2\}: \quad & w^{\mathrm{signed}}_L = (+0.610,\, -0.390),\\
  R = \{A_3,A_4\}: \quad & w^{\mathrm{signed}}_R = (+0.200,\, -0.800).
\end{align*}
The magnitudes coincide with the HRP flat IVP weights; only the signs change.

\paragraph{Step 2 --- cluster statistics.} The $2\times 2$ mean--variance split at the root requires five scalars: the cluster variances $v_L, v_R$, the branch signals $s_L, s_R$, and the cross-covariance $c$ between the two signed representatives. Direct computation from~\eqref{eqB:sigma} and $\mu$ gives the values in Table~\ref{tabB:hrpmu_root}.

\begin{table}[ht]
\centering
\begin{adjustbox}{max width=\textwidth, center}
\begin{tabular}{lrl}
\toprule
Quantity & Value & Formula \\
\midrule
$v_L$ & $0.00535$ & $w_L^{\top}\Sigma_{LL}\,w_L$ \\
$v_R$ & $0.00648$ & $w_R^{\top}\Sigma_{RR}\,w_R$ \\
$s_L$ & $+0.0222$ & $w_L^{\top}\mu_L$ \\
$s_R$ & $+0.0360$ & $w_R^{\top}\mu_R$ \\
$c$   & $-0.00029$ & $w_L^{\top}\Sigma_{LR}\,w_R$ \\
\bottomrule
\end{tabular}
\end{adjustbox}
\caption{HRP-$\mu$ root-level cluster statistics. Both branch signals are non-negative, as guaranteed by the signed IVP construction. The cross-covariance $c$ is slightly negative because the sign flips reverse the direction of the cross-sector exposure.}
\label{tabB:hrpmu_root}
\end{table}

\paragraph{Step 3 --- $2\times 2$ mean--variance split.} Applying Cram\'{e}r's rule to
\[
  \begin{pmatrix} v_L & \gamma c \\ \gamma c & v_R \end{pmatrix}
  \begin{pmatrix} \tilde\alpha_L \\ \tilde\alpha_R \end{pmatrix}
  =
  \begin{pmatrix} s_L \\ s_R \end{pmatrix}
\]
at $\gamma = 0.5$ yields raw allocations $\tilde\alpha_L \approx 4.30$, $\tilde\alpha_R \approx 5.65$. After sum-normalisation to one (HRP-$\mu$ uses sum-normalisation on the branch-level budget because its leaf direction enters via a separate sign step),
\[
  \alpha_L = 0.432, \qquad \alpha_R = 0.568.
\]

\paragraph{Step 4 --- leaf splits.} At each two-asset child, the same procedure applies with single-leaf representatives $w = \pm 1$. For the left child $\{A_1,A_2\}$: $v_i = \sigma_i^{2}$, $s_i = |\mu_i|$, and $c = \operatorname{sign}(\mu_1)\Sigma_{12}\operatorname{sign}(\mu_2) = -0.040$. For the right child the analogous calculation with $\Sigma_{34} = 0.036$ is performed. Solving and sum-normalising:
\begin{align*}
  \text{Left child } \{A_1,A_2\}: \quad & \alpha_{A_1} = 0.675,\; \alpha_{A_2} = 0.325, \\
  \text{Right child } \{A_3,A_4\}: \quad & \alpha_{A_3} = 0.228,\; \alpha_{A_4} = 0.772.
\end{align*}
These are \emph{budget} fractions---non-negative, sum to one within each cluster. The signal direction has not yet entered the weights.

\paragraph{Step 5 --- leaf sign multiplication.} The final step multiplies each leaf budget by $\operatorname{sign}(\mu_i)$, the cleanest way to carry direction through the tree: magnitude from the tree, sign from the signal. Multiplying root $\times$ child budget $\times$ leaf sign:
\begin{align*}
  w_{A_1} &= 0.432 \times 0.675 \times (+1) = +0.292, \\
  w_{A_2} &= 0.432 \times 0.325 \times (-1) = -0.140, \\
  w_{A_3} &= 0.568 \times 0.228 \times (+1) = +0.130, \\
  w_{A_4} &= 0.568 \times 0.772 \times (-1) = -0.438.
\end{align*}
Table~\ref{tabB:hrpmu_steps} collects the intermediate quantities along the two root-to-leaf paths.

\begin{table}[ht]
\centering
\begin{adjustbox}{max width=\textwidth, center}
\begin{tabular}{lcccc}
\toprule
 & $A_1$ & $A_2$ & $A_3$ & $A_4$ \\
\midrule
Signal $\mu_i$                    & $+0.03$ & $-0.01$ & $+0.02$ & $-0.04$ \\
Signed IVP rep.\ within cluster   & $+0.610$ & $-0.390$ & $+0.200$ & $-0.800$ \\
Root $\alpha$ (branch budget)     & $0.432$ & $0.432$ & $0.568$ & $0.568$ \\
Leaf $\alpha$ (within-cluster)    & $0.675$ & $0.325$ & $0.228$ & $0.772$ \\
Leaf sign $\operatorname{sign}(\mu_i)$ & $+1$ & $-1$ & $+1$ & $-1$ \\
\midrule
HRP-$\mu$ weight $w_i$            & $+0.292$ & $-0.140$ & $+0.130$ & $-0.438$ \\
\bottomrule
\end{tabular}
\end{adjustbox}
\caption{HRP-$\mu$ intermediate quantities at $\gamma = 0.5$. Root $\alpha$-budget is nearly balanced ($0.432$ vs $0.568$) because the branch signals $s_L = 0.0222$ and $s_R = 0.0360$ are comparable after normalising by cluster variance. Leaf budgets favour the asset with the stronger signal-to-variance ratio within each sector.}
\label{tabB:hrpmu_steps}
\end{table}

The resulting portfolio is long the positive-$\mu$ assets $\{A_1,A_3\}$ and short the negative-$\mu$ assets $\{A_2,A_4\}$, with magnitudes reflecting both hierarchical risk structure and signal strength. It achieves Sharpe $\approx 0.57$ under the true covariance, against the Markowitz oracle's $0.62$---roughly 92\% of oracle Sharpe, at a fraction of the gross leverage (see Table~\ref{appB:comparison}).

% ======================================================================
\subsection{HRP-$\Sigma\mu$ worked example}\label{appB:hrp_sigmamu_example}

We run HRP-$\Sigma\mu$ (Algorithm~\ref{alg04b:hrpsm}) on the same universe, same $\mu$, and $\gamma = 0.5$. The tree is again unchanged. Three features distinguish HRP-$\Sigma\mu$ from HRP-$\mu$: (i) raw MVO allocations are normalised by $L^{1}$ norm (signed budgets) rather than sum-to-one (non-negative budgets); (ii) the representative portfolio at each node is the combined MVO output, not a signed IVP; (iii) no leaf-sign multiplication is needed, because the signal direction is carried by the recursive MVO itself.

\paragraph{Leaf initialisation.} At each leaf $i$: representative $w = [1]$, variance $v_i = \sigma_i^{2}$, signal $s_i = \mu_i$.

\paragraph{Left cluster $L = \{A_1,A_2\}$.} Leaf summaries are $(v_1,s_1) = (0.040,\,+0.03)$ and $(v_2,s_2) = (0.0625,\,-0.01)$, with $c = \Sigma_{12} = 0.040$. Cram\'{e}r's rule at $\gamma = 0.5$:
\begin{align*}
  \Delta &= v_1 v_2 - (\gamma c)^{2} = 0.040 \times 0.0625 - (0.020)^{2} = 0.002500 - 0.000400 = 0.002100, \\
  \tilde\alpha_L &= \frac{v_2 s_1 - \gamma c\,s_2}{\Delta} = \frac{0.001875 + 0.000200}{0.002100} = 0.9881, \\
  \tilde\alpha_R &= \frac{v_1 s_2 - \gamma c\,s_1}{\Delta} = \frac{-0.000400 - 0.000600}{0.002100} = -0.4762.
\end{align*}
The $L^{1}$ denominator is $|0.9881| + |-0.4762| = 1.4643$, giving normalised allocations
\[
  \alpha_L = 0.9881/1.4643 = 0.6748, \qquad \alpha_R = -0.4762/1.4643 = -0.3252.
\]
Note the sign: the recursive MVO shorts $A_2$ at the child level, which has $\mu_2 < 0$. The combined representative $w_L = (0.6748,\,-0.3252)$ propagates upward with variance and signal
\[
  v_L = w_L^{\top}\Sigma_{LL}\,w_L = 0.007268, \qquad s_L = w_L^{\top}\mu_L = 0.02350.
\]

\paragraph{Right cluster $R = \{A_3,A_4\}$.} Leaves have $(v_3,s_3) = (0.090,\,+0.02)$, $(v_4,s_4) = (0.0225,\,-0.04)$, $c = 0.036$. Repeating the calculation,
\begin{align*}
  \Delta &= 0.090 \times 0.0225 - (0.018)^{2} = 0.002025 - 0.000324 = 0.001701, \\
  \tilde\alpha_L &= \frac{0.0225(0.02) - 0.018(-0.04)}{0.001701} = \frac{0.001170}{0.001701} = 0.6878, \\
  \tilde\alpha_R &= \frac{0.090(-0.04) - 0.018(0.02)}{0.001701} = \frac{-0.003960}{0.001701} = -2.3280.
\end{align*}
$L^{1}$ denominator $3.0159$, normalised to $(\alpha_L, \alpha_R) = (0.2281,\,-0.7719)$. The large short on $A_4$ reflects its strongly negative $\mu_4 = -0.04$ relative to its low variance. The right representative is $w_R = (0.2281,\,-0.7719)$ with
\[
  v_R = 0.005413, \qquad s_R = 0.03544.
\]

\paragraph{Root.} The root sees $(v_L,s_L) = (0.007268,\,0.02350)$ and $(v_R,s_R) = (0.005413,\,0.03544)$, and the cross-covariance between combined representatives is $c = w_L^{\top}\Sigma_{LR}\,w_R \approx -0.000508$. Cram\'{e}r's rule gives raw allocations $\tilde\alpha_L \approx 3.467$, $\tilde\alpha_R \approx 6.710$; $L^{1}$ denominator $10.18$ yields
\[
  \alpha_L = 0.3407, \qquad \alpha_R = 0.6593.
\]
Both root allocations are positive: the algorithm tilts toward the right cluster, whose combined signal $s_R$ is stronger relative to its combined variance.

\paragraph{Final weights.} Multiplying root and child allocations along each root-to-leaf path:
\begin{align*}
  w_{A_1} &= \alpha_L \times 0.6748  = 0.3407 \times 0.6748  = +0.2299, \\
  w_{A_2} &= \alpha_L \times (-0.3252) = 0.3407 \times (-0.3252) = -0.1108, \\
  w_{A_3} &= \alpha_R \times 0.2281  = 0.6593 \times 0.2281  = +0.1504, \\
  w_{A_4} &= \alpha_R \times (-0.7719) = 0.6593 \times (-0.7719) = -0.5089.
\end{align*}
Intermediate quantities are collected in Table~\ref{tabB:hrpsm_steps}.

\begin{table}[ht]
\centering
\begin{adjustbox}{max width=\textwidth, center}
\begin{tabular}{lcccc}
\toprule
 & $A_1$ & $A_2$ & $A_3$ & $A_4$ \\
\midrule
Leaf signal $s_i = \mu_i$           & $+0.03$ & $-0.01$ & $+0.02$ & $-0.04$ \\
Child raw $\tilde\alpha$            & $+0.9881$ & $-0.4762$ & $+0.6878$ & $-2.3280$ \\
Child $L^{1}$-normalised $\alpha$   & $+0.6748$ & $-0.3252$ & $+0.2281$ & $-0.7719$ \\
Cluster variance $v_S$              & $0.00727$ & $0.00727$ & $0.00541$ & $0.00541$ \\
Cluster signal $s_S$                & $0.0235$  & $0.0235$  & $0.0354$  & $0.0354$ \\
Root $\alpha$                       & $+0.3407$ & $+0.3407$ & $+0.6593$ & $+0.6593$ \\
\midrule
HRP-$\Sigma\mu$ weight $w_i$        & $+0.2299$ & $-0.1108$ & $+0.1504$ & $-0.5089$ \\
\bottomrule
\end{tabular}
\end{adjustbox}
\caption{HRP-$\Sigma\mu$ intermediate quantities at $\gamma = 0.5$. Child-level $\alpha$'s are signed (both clusters produce one short leg); the root $\alpha$'s are both positive because the cluster-level combined signals are both positive after the child-level sign resolution.}
\label{tabB:hrpsm_steps}
\end{table}

Unlike HRP-$\mu$, the HRP-$\Sigma\mu$ weights do not sum to one---by construction, $\sum_i|w_i| = 1$ at every level in the $L^{1}$-normalisation. Here $\sum_i w_i = -0.240$, reflecting the fact that two of four signals are negative. The portfolio achieves Sharpe $\approx 0.576$, marginally above HRP-$\mu$ on this small example.

% ======================================================================
\subsection{Comparison}\label{appB:comparison_sub}

Table~\ref{appB:comparison} places the three tree methods alongside direct Markowitz on the same four-asset universe.

\begin{table}[ht]
\centering
\begin{adjustbox}{max width=\textwidth, center}
\begin{tabular}{lrrrrr}
\toprule
Method & $A_1$ & $A_2$ & $A_3$ & $A_4$ & Sharpe \\
\midrule
HRP                                  & $+0.247$ & $+0.158$ & $+0.119$ & $+0.476$ & $-0.07$ \\
HRP-$\mu$ ($\gamma = 0.5$)           & $+0.292$ & $-0.140$ & $+0.130$ & $-0.438$ & $+0.57$ \\
HRP-$\Sigma\mu$ ($\gamma = 0.5$)     & $+0.230$ & $-0.111$ & $+0.150$ & $-0.509$ & $+0.576$ \\
Direct Markowitz ($\Sigma^{-1}\mu$, sum-normalised) & $-1.019$ & $+0.674$ & $-1.003$ & $+2.348$ & $+0.62$ \\
\bottomrule
\end{tabular}
\end{adjustbox}
\caption{Comparison on the four-asset worked example with $\mu = (+0.03,-0.01,+0.02,-0.04)$ and covariance~\eqref{eqB:sigma}. Sharpe ratios are computed under the true $(\Sigma,\mu)$ and therefore act as an in-sample diagnostic only; on this setup they nonetheless illustrate the qualitative hierarchy.}
\label{appB:comparison}
\end{table}

\paragraph{Reading the comparison.} HRP ignores the signal entirely: its long-only weights are identical to those of Section~\ref{appB:hrp_example}, and because $\mu$ has mixed signs the resulting portfolio tilts the wrong way on two of the four assets, producing Sharpe $-0.07$ (negative). HRP-$\mu$ and HRP-$\Sigma\mu$ both deliver the bulk of Markowitz's Sharpe ($0.57$ and $0.576$ respectively, against the oracle's $0.62$---roughly 92--93\%), but with dramatically less concentrated weights: the gross leverage is $\|w\|_{1} = 1.00$ for HRP-$\mu$ and $\|w\|_{1} = 1.00$ for HRP-$\Sigma\mu$, against $\|w\|_{1} = 5.04$ for direct Markowitz. The direct-Markowitz portfolio would put nearly 235\% in $A_4$ alone and would short $A_1$ and $A_3$ against their own positive signals---a clean illustration of Markowitz's well-known sensitivity to $\Sigma^{-1}$ on small, block-correlated problems. The tree methods pay a small Sharpe price (roughly $0.05$ relative to oracle) for a five-fold reduction in gross leverage.

\paragraph{Recovery check.} At $\gamma = 0$ and $\mu = \mathbf{1}$, HRP-$\mu$ and HRP-$\Sigma\mu$ both reduce to HRP: the signed IVP loses its signs (all positive), the $2\times 2$ MVO system decouples into the inverse-variance split at every level, and $L^{1}$ normalisation coincides with sum-normalisation on non-negative inputs. This is the recovery property stated formally in Propositions~\ref{prop:a3_recovery} and~\ref{prop:hrpsm_recovery}; on this worked example it can be verified by re-running Sections~\ref{appB:hrp_mu_example}--\ref{appB:hrp_sigmamu_example} with $\mu = (1,1,1,1)$ and $\gamma = 0$.

% ============================================================
% section_15_appendix_c_pathology.tex
% ============================================================
% !TEX root = ../paper.tex
% Appendix C: Sum-normalised recursive MVO (A1) and the sign-flip pathology.
% Namespace: appC, tabC, figC, eqC.
% This appendix is a CAUTIONARY NEGATIVE RESULT. Method A1 is NEVER RECOMMENDED.

\section{Sum-normalised recursive MVO and the sign-flip pathology}\label{app:a1_pathology}

This appendix documents, as a cautionary negative result, the first hierarchical
signal-aware variant we tried in the development of the tree methods of
Section~\ref{sec04:tree_methods}. The construction --- a recursive mean--variance tree
pass that normalises each node's pair of child budgets to sum to one --- is the
obvious generalisation of \citet{lopezdeprado2016} HRP to a signed $\bmu$, and
for that reason it is the obvious trap. We refer to it throughout as
\emph{Method A1} or \emph{sum-normalised recursive MVO}, matching the
codebase name \texttt{method\_a1\_weights}. The conclusion reached in this
appendix is straightforward: Method A1 is structurally broken as a portfolio
construction, its output direction bears no fixed relationship to the oracle
Markowitz portfolio $w^\star = \bSigma^{-1}\bmu$, and no choice of $\gamma$,
sample size, or covariance estimator repairs it. The appendix is included
because the \emph{form} of the failure --- a cumulative sign flip that
sign-invariant direction metrics fail to report --- is a useful worked
illustration of two methodological points:
(i) sign-invariant error metrics must be paired with sign-sensitive
diagnostics, and
(ii) any hierarchical method for \emph{signal-dependent} allocation must
carry sign information in a form the recursive $2\times 2$ normaliser cannot
invert. HRP-$\mu$ (Section~\ref{sec04:methoda3}) satisfies (ii) via leaf-level signed
inverse-variance representatives; HRP-$\Sigma\mu$ (Section~\ref{sec04b:hrpsigmamu})
satisfies it via $L^1$ normalisation. Method A1 satisfies neither, and is
never recommended.

\subsection{The A1 construction}\label{appC:construction}

Method A1 uses the standard between-branch $2\times 2$ system of
Section~\ref{sec04:betweenbranch}. At an internal node with children $L, R$ and
cluster statistics $(v_L, v_R, s_L, s_R, c)$ defined by the child
representatives $\hat w_L, \hat w_R$ recursively, the Cramer solution
\eqref{eq04:cramer} produces a scalar pair of raw budgets
$(\alpha_L^{\mathrm{raw}}, \alpha_R^{\mathrm{raw}})$. Method A1 normalises
them by their algebraic sum:
\begin{equation}\label{eqC:a1_normalise}
  \alpha_L \;=\; \frac{\alpha_L^{\mathrm{raw}}}{\alpha_L^{\mathrm{raw}}+\alpha_R^{\mathrm{raw}}},
  \qquad
  \alpha_R \;=\; \frac{\alpha_R^{\mathrm{raw}}}{\alpha_L^{\mathrm{raw}}+\alpha_R^{\mathrm{raw}}},
\end{equation}
so that $\alpha_L+\alpha_R=1$ at every node. The within-branch representative
propagated up to the parent is the recursive A1 weight
$\hat w_v = (\alpha_L \hat w_L,\; \alpha_R \hat w_R)$; leaves bottom out with
$\hat w = (1)$. The root returns a portfolio that sums algebraically to one.

The construction is superficially appealing: at $\bmu=\mathbf{1}$ and $\gamma=0$
it reduces to flat HRP, and the recursion is $O(N^2)$. It fails for a reason
that has nothing to do with the $2\times 2$ solve and everything to do with
\eqref{eqC:a1_normalise}.

\subsection{The sign-flip mechanism}\label{appC:mechanism}

The sign pathology lives entirely in the denominator
$\alpha_L^{\mathrm{raw}}+\alpha_R^{\mathrm{raw}}$. When $\bmu\ge 0$, the raw
budgets are non-negative and their sum is positive, so
\eqref{eqC:a1_normalise} is the ordinary convex-combination step of HRP. When
$\bmu$ has mixed signs --- the generic case for any signal built from alphas,
momentum residuals, or factor tilts --- the representative signals
$(s_L, s_R)$ entering the node system can be positive, negative, or near zero.
The $2\times 2$ solve then produces raw budgets of arbitrary signs, and their
algebraic sum has no sign guarantee whatsoever. Whenever that sum is
negative, division by it \emph{flips both signs in the child pair
$(\alpha_L, \alpha_R)$ simultaneously}.

The flip is local and it is cumulative. Across the $\log_2 N$ levels of the
dendrogram, the sign pattern of the root output is the product of the
$O(\log N)$ level-wise denominator signs, none of which carry any fixed
relationship to $\mathrm{sign}(\bSigma^{-1}\bmu)_i$ at the leaves. A
sign-carrying linear object --- $w^\star_i$ is linear in $\bmu$ --- cannot be
recovered from a chain of sum-to-one divisions whose denominators are
themselves signed. That is the whole story; the remainder of this appendix is
empirical corroboration.

\subsection{Noiseless evidence}\label{appC:noiseless}

The cleanest demonstration uses no estimation error at all. Set $\bSigma =
\bSigma_{\mathrm{true}}$ and $\bmu = \bmu_{\mathrm{true}}$, compute the oracle
$w^\star = \bSigma^{-1}\bmu$, build the HRP tree from $\bSigma_{\mathrm{true}}$,
and run every method on the resulting noiseless problem. The covariance is
the block-structured sector matrix of Section~\ref{sec10:setup}: $N=100$ assets,
five sectors, within-sector correlation $0.6$, cross-sector correlation
$0.15$, volatilities uniform in $[0.15,0.40]$, seed $42$. We consider two
signals: the structured sector-tilt $\bmu$ with per-sector tilts
$(+4\%,-4\%,+2\%,-2\%,0\%)$, oracle Sharpe $0.645$; and a random Gaussian
$\bmu \sim \mathcal{N}(0, 0.02^2 I)$ with seed $7$, oracle Sharpe $1.279$.

Table~\ref{tabC:a1_noiseless} reports Sharpe, the signed cosine
$\cos(w,w^\star)$, and the gross leverage $\|w\|_1$ for both panels. Numbers
are taken verbatim from \texttt{results/03\_a1\_deep\_dive.txt} and
\texttt{results/appendix\_c\_with\_a1l1.txt}.

\begin{table}[t]
\centering
\caption{Method A1 on the noiseless problem
($\bSigma = \bSigma_{\mathrm{true}}$, $\bmu = \bmu_{\mathrm{true}}$).
Panel~A: structured sector-tilt $\bmu$, oracle Sharpe $0.645$.
Panel~B: random Gaussian $\bmu$ (seed $7$), oracle Sharpe $1.279$.
Bold rows highlight the worst A1 cell on each panel and the corresponding
HRP-$\Sigma\mu$ row: in both panels, swapping sum-normalisation for $L^1$
normalisation recovers the exact antiparallel portfolio, consistent with
Corollary~\ref{cor:hrpsm_antiparallel}.}
\label{tabC:a1_noiseless}
\begin{adjustbox}{max width=\textwidth,center}
\begin{tabular}{l r r r}
\toprule
method & Sharpe & $\cos(w, w^\star)$ & $\|w\|_1$ \\
\midrule
\multicolumn{4}{l}{\textit{Panel A: structured sector-tilt $\bmu$, oracle Sharpe $0.645$}} \\
\midrule
Direct Markowitz                       & $+0.645$ & $+1.000$ & $27.40$ \\
A1 $\gamma=0.0$                        & $+0.365$ & $+0.257$ & $159.02$ \\
A1 $\gamma=0.5$                        & $-0.407$ & $-0.472$ & $34.26$ \\
\textbf{A1 $\gamma=1.0$}                & $\mathbf{-0.602}$ & $\mathbf{-0.942}$ & $\mathbf{85.29}$ \\
\textbf{HRP-$\Sigma\mu$ $\gamma=1.0$}   & $\mathbf{+0.602}$ & $\mathbf{+0.942}$ & $\mathbf{85.29}$ \\
\midrule
\multicolumn{4}{l}{\textit{Panel B: random Gaussian $\bmu$ (seed $7$), oracle Sharpe $1.279$}} \\
\midrule
Direct Markowitz                       & $+1.279$ & $+1.000$ & $66.98$ \\
\textbf{A1 $\gamma=0.0$}                & $\mathbf{-1.100}$ & $\mathbf{-0.880}$ & $\mathbf{16.86}$ \\
A1 $\gamma=0.5$                        & $-1.089$ & $-0.864$ & $34.09$ \\
A1 $\gamma=1.0$                        & $-1.030$ & $-0.842$ & $27.00$ \\
\textbf{HRP-$\Sigma\mu$ $\gamma=0.0$}   & $\mathbf{+1.100}$ & $\mathbf{+0.880}$ & $\mathbf{16.86}$ \\
\bottomrule
\end{tabular}
\end{adjustbox}
\end{table}

Three observations. First, on the structured signal A1 \emph{flips sign as
$\gamma$ increases}: the cosine with $w^\star$ is $+0.26$ at $\gamma=0$,
$-0.47$ at $\gamma=0.5$, and $-0.94$ at $\gamma=1.0$. The culprit is the
off-diagonal term $\gamma c$ in the $2\times 2$ system, which is what controls
whether the Cramer solution has positive or negative algebraic sum at each
level; turning $\gamma$ up switches on the normalizer-sign flips described in
Section~\ref{appC:mechanism}. Second, on the random Gaussian signal A1 is
already antiparallel at $\gamma=0$ (cosine $-0.88$) and stays antiparallel
throughout. Third, and most striking: the HRP-$\Sigma\mu$ rows reproduce the
A1 Sharpe and cosine \emph{with the opposite sign} and with the \emph{identical}
gross leverage $\|w\|_1$. This is not a coincidence; it is the content of
Corollary~\ref{cor:hrpsm_antiparallel}. When A1's sum denominator is negative,
A1 and HRP-$\Sigma\mu$ differ only by a global sign flip, because $L^1$
normalisation at each node uses the denominator $|\alpha_L^{\mathrm{raw}}| +
|\alpha_R^{\mathrm{raw}}|$ --- which equals $|\alpha_L^{\mathrm{raw}} +
\alpha_R^{\mathrm{raw}}|$ exactly when the two raw budgets share a sign, and
differs only by global $L^1$ rescaling otherwise (Lemma~\ref{lem:hrpsm_sign}).
On both panels at the indicated $\gamma$, the two raw budgets cancel in the
A1 denominator in exactly the configuration that produces an antiparallel
HRP-$\Sigma\mu$ twin.

\subsection{Evidence under estimation noise}\label{appC:noise}

A natural hope is that estimation noise averages out the sign flips: if the
denominator is negative only for a minority of covariance draws, the
Monte-Carlo mean cosine might still be positive. It is not.
Table~\ref{tabC:a1_noise} reports $n_{\mathrm{mc}}=60$ trials of A1 and
HRP-$\mu$ at $\gamma=0.5$ on the structured sector-tilt signal, for
$T\in\{60,240,1000\}$ samples per trial. Each trial draws $T$ observations
from $\mathcal{N}(\bmu_{\mathrm{true}}, \bSigma_{\mathrm{true}})$, forms the
sample covariance with ridge $10^{-4}$, rebuilds the HRP tree, and evaluates
each method's OOS Sharpe and signed cosine with the oracle.

\begin{table}[t]
\centering
\caption{Method A1 versus HRP-$\mu$ under estimation noise. Structured
sector-tilt $\bmu$, $\gamma=0.5$, $n_{\mathrm{mc}}=60$ trials per row.
Reported: mean $\pm$ s.d.\ of OOS Sharpe, mean $\pm$ s.d.\ of $\cos(w,w^\star)$,
and \emph{frac.\ neg.}: the fraction of trials with $\cos(w,w^\star)<0$.
A1's cosine distribution is symmetric about zero with s.d.\ $\approx 0.78$ at
every sample size. HRP-$\mu$'s cosine is tightly concentrated on the positive
half-line. Values from \texttt{results/03\_a1\_deep\_dive.txt}.}
\label{tabC:a1_noise}
\begin{adjustbox}{max width=\textwidth,center}
\begin{tabular}{l l r r r}
\toprule
$T$ & method & Sharpe (mean $\pm$ sd) & $\cos(w,w^\star)$ (mean $\pm$ sd) & frac.\ neg. \\
\midrule
$60$   & A1 $\gamma=0.5$       & $+0.013 \pm 0.425$ & $+0.016 \pm 0.753$ & $\mathbf{0.48}$ \\
       & HRP-$\mu$ $\gamma=0.5$ & $+0.417 \pm 0.015$ & $+0.723 \pm 0.055$ & $0.00$ \\
\midrule
$240$  & A1 $\gamma=0.5$       & $-0.014 \pm 0.435$ & $-0.027 \pm 0.785$ & $\mathbf{0.52}$ \\
       & HRP-$\mu$ $\gamma=0.5$ & $+0.429 \pm 0.010$ & $+0.760 \pm 0.041$ & $0.00$ \\
\midrule
$1000$ & A1 $\gamma=0.5$       & $+0.046 \pm 0.434$ & $+0.090 \pm 0.794$ & $\mathbf{0.45}$ \\
       & HRP-$\mu$ $\gamma=0.5$ & $+0.431 \pm 0.009$ & $+0.770 \pm 0.042$ & $0.00$ \\
\bottomrule
\end{tabular}
\end{adjustbox}
\end{table}

The rows document a clean negative result. A1's cosine mean is
$+0.016, -0.027, +0.090$ and its standard deviation is
$0.753, 0.785, 0.794$; the standard deviation does not shrink as $T$ grows
from $60$ to $1000$, and the distribution is approximately symmetric about
zero. The fraction of trials in which A1 points into the wrong half-space is
$0.48, 0.52, 0.45$ --- an approximately fair coin at every sample size.
Increasing $T$ by a factor of sixteen does \emph{not} concentrate A1 on the
correct half-line: the pathology is in the recursion, not the estimator, and
no amount of data repairs it. HRP-$\mu$, on the same draws, improves as $T$
grows in the ordinary way (cosine s.d.\ drops from $0.055$ to $0.041$) and
its fraction of negative-cosine trials is identically zero.
Figure~\ref{figC:a1_cosine_histogram} shows the full A1 cosine histogram at
the three sample sizes; the bimodal, approximately symmetric shape is the
visual signature of the sign-flip mechanism.

\begin{figure}[t]
\centering
\includegraphics[width=0.85\textwidth]{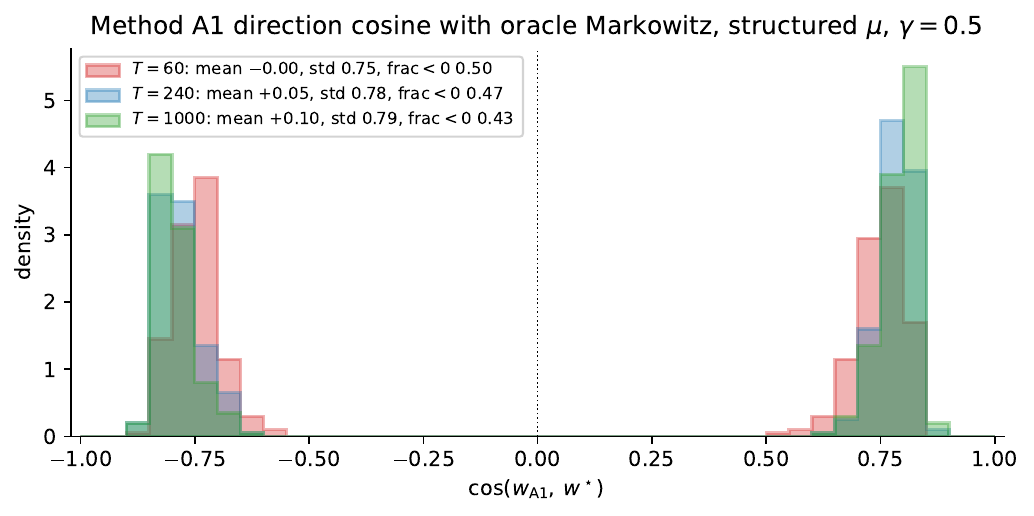}
\caption{Histogram of the signed cosine $\cos(w_{A1}, w^\star)$ at
$T\in\{60,240,1000\}$, structured sector-tilt signal, $\gamma=0.5$.
The distribution is approximately symmetric about zero with standard
deviation near $0.78$ at every sample size, documenting that Method A1 is
not a consistent estimator of the Markowitz direction.}
\label{figC:a1_cosine_histogram}
\end{figure}

\subsection{The fix: $L^1$ normalisation}\label{appC:fix}

The fix is a single character. Replace the algebraic-sum denominator in
\eqref{eqC:a1_normalise} with an absolute-value sum:
\begin{equation}\label{eqC:a1l1_normalise}
  \alpha_L \;=\; \frac{\alpha_L^{\mathrm{raw}}}{|\alpha_L^{\mathrm{raw}}|+|\alpha_R^{\mathrm{raw}}|},
  \qquad
  \alpha_R \;=\; \frac{\alpha_R^{\mathrm{raw}}}{|\alpha_L^{\mathrm{raw}}|+|\alpha_R^{\mathrm{raw}}|}.
\end{equation}
This is the HRP-$\Sigma\mu$ step of Section~\ref{sec04b:hrpsigmamu}; the codebase
calls the construction \texttt{method\_a1\_l1\_weights}. The denominator
$|\alpha_L^{\mathrm{raw}}|+|\alpha_R^{\mathrm{raw}}|$ is always strictly
positive whenever the $2\times 2$ system has a non-zero solution, so no sign
flip can occur; the signs of $(\alpha_L^{\mathrm{raw}},\alpha_R^{\mathrm{raw}})$
are preserved exactly, and the propagated representative carries the
Markowitz direction through the recursion. Lemma~\ref{lem:hrpsm_sign} and
Corollary~\ref{cor:hrpsm_antiparallel} give the formal statement: HRP-$\Sigma\mu$
either agrees with A1 (when all node denominators are already positive) or
returns the global sign-flipped twin of A1 with identical $\|w\|_1$. On every
row of Table~\ref{tabC:a1_noiseless} where A1 reports a negative Sharpe,
the HRP-$\Sigma\mu$ row at the same $\gamma$ reports the corresponding
positive Sharpe with the same gross leverage. Under estimation noise,
HRP-$\Sigma\mu$ matches HRP-$\mu$ closely in direction (cosine mean
$+0.751$--$+0.798$ across $T\in\{60,240,1000\}$, frac.\ neg.\ $=0.00$;
\texttt{results/03\_a1\_deep\_dive\_a1l1.txt}), so the fix is cost-free.

\subsection{Why the direction-error metric does not catch this}\label{appC:dir_metric}

A reader who has only consulted the $\operatorname{dir}(w) = \sin^2\theta(w,w^\star)$
panels of Section~\ref{sec10:experiments} would not know Method A1 is broken. The
direction error is \emph{sign-invariant}: it measures the squared sine of the
angle between $w$ and $w^\star$ and therefore cannot distinguish a portfolio
with $\cos = +0.85$ from one with $\cos = -0.85$. Both yield
$\operatorname{dir} = 1 - 0.85^2 \approx 0.28$. A1 populates the latter case
roughly half the time, and $\operatorname{dir}$ aggregates the two cases into
a single superficially-respectable number of about $0.28$ on the Panel~A2
diagnostic of Section~\ref{sec10:general_mu} --- exactly while the method is
producing the wrong sign on half the trials.

This is the cleanest illustration of a general methodological point.
A metric that collapses $\pm w$ onto the same equivalence class is
appropriate only for long-only or inherently sign-fixed allocators, and it
will silently reward methods whose signs are uncorrelated with
$\mathrm{sign}(\bSigma^{-1}\bmu)_i$. For any signal-aware construction the
operational metric must be the \emph{signed} cosine $\cos(w, w^\star)$ or
equivalently the fraction of trials in the wrong half-space;
Tables~\ref{tabC:a1_noiseless}--\ref{tabC:a1_noise} and
Figure~\ref{figC:a1_cosine_histogram} are the first places in this paper
where that distinction is load-bearing. The lesson generalises: sign-invariant
metrics must always be paired with a sign-sensitive diagnostic before a
method is declared acceptable.

\subsection{Conclusion}\label{appC:conclusion}

Method A1 --- sum-normalised recursive MVO --- should not be used for
portfolio construction. Its failure is not a tuning problem and not a noise
problem; it is a structural incompatibility between a sum-to-one
normalisation step and a signal that can take either sign at each node. The
appendix illustrates two general lessons. First, the sign-invariance trap:
the direction-error metric $\operatorname{dir}(w)$ reports an acceptable
number ($\approx 0.28$) on Method A1 exactly while A1 is antiparallel to
$w^\star$ half the time; sign-invariant metrics must be paired with
sign-sensitive diagnostics, without exception. Second, the design
constraint for hierarchical signal-aware allocation: such a method must carry
sign information through the recursion in a form the $2\times 2$ normaliser
cannot invert. HRP-$\mu$ (Section~\ref{sec04:methoda3}) satisfies this by using
signed inverse-variance representatives at the leaves, which enter the
node-level solve additively and are not subject to recursive
sum-normalisation. HRP-$\Sigma\mu$ (Section~\ref{sec04b:hrpsigmamu}) satisfies it
by replacing sum-to-one with $L^1$ normalisation, which preserves all
node-level signs. The sum-normalised variant is retained in the codebase as
\texttt{method\_a1\_weights} only as a reproducible counterexample against
which future recursive signal-aware schemes can be compared.

% ============================================================
% section_16_appendix_d_implementation.tex
% ============================================================
% !TEX root = ../main.tex
%
% Appendix D: Implementation details and reproducibility
% Namespace: appD:, tabD:, algD:
%
\section{Implementation details and reproducibility}\label{app:implementation}

This appendix documents the engineering choices that sit behind every
numerical result reported in the paper. The aim is operational: a
reader who clones the public repository should be able to reproduce
every table in Section~\ref{sec10:experiments} and every robustness
panel in Appendix~\ref{app:supplementary} without guesswork. We fix
the tree-construction conventions, restate the three workhorse
algorithms (HRP-$\mu$, HRP-$\Sigma\mu$, and CRISP) in self-contained
pseudocode, state the auxiliary worst-case $\mu$ search used in the
adversarial experiments, and catalogue the function names, costs, and
numerical safeguards that appear in the code.

The code still uses the original internal names \texttt{method\_a3\_weights},
\texttt{method\_a1\_l1\_weights}, \texttt{method\_b\_solve}, and
\texttt{method\_b\_solve\_stream}. The display names in the main body
are HRP-$\mu$, HRP-$\Sigma\mu$, and CRISP respectively; the mapping
between the two naming conventions is fixed by
Table~\ref{tabD:function_catalog}.

%--------------------------------------------------------------------
\subsection{Tree construction}\label{appD:tree}

Every hierarchical method in this paper consumes a binary tree built
once per covariance matrix. Given a correlation matrix
$C = D^{-1/2}\Sigma D^{-1/2}$, we form the classical
\citet{lopezdeprado2016} correlation distance
\begin{equation}\label{eqD:corrdist}
d_{ij} \;=\; \sqrt{\tfrac{1}{2}\bigl(1 - \rho_{ij}\bigr)},
\qquad
\rho_{ij} \;=\; C_{ij},
\end{equation}
convert it to a condensed upper-triangular vector via
\texttt{scipy.spatial.distance.squareform(d, checks=False)}, and feed
the result to \texttt{scipy.cluster.hierarchy.linkage} with
\texttt{method='ward'}. The resulting linkage matrix is walked once to
produce a binary tree in which every internal node caches the union of
the leaf indices of its two children, so subsequent per-node lookups
of \emph{which assets sit under this subtree} cost $O(1)$ rather than
$O(\log N)$. Building the tree is dominated by the
$O(N^2\log N)$ linkage step; amortised over repeated solves at many
values of $\gamma$ its cost is negligible.

\paragraph{Sensitivity to linkage.}
Ward is not load-bearing. Single, complete, and average linkage give
similar out-of-sample Sharpe ratios for HRP-$\mu$, HRP-$\Sigma\mu$,
and (trivially) CRISP, because in CRISP the tree plays no role at
all and in the tree methods the iterative or Cram\'er-rule coupling
corrections absorb a great deal of linkage variation. HRP and Cotton,
by contrast, are strictly more linkage-sensitive because they have no
coupling correction at all; a single bad split propagates. Full
numbers live in Appendix~\ref{app:supplementary},
Table~\ref{tabE:linkage_sensitivity}. Ward is retained as the default
throughout the paper to match \citet{lopezdeprado2016}.

%--------------------------------------------------------------------
\subsection{Pseudocode}\label{appD:pseudocode}

The next five algorithms formalise the numerical kernels behind
Section~\ref{sec10:experiments}. Each algorithm is a direct
transcription of the corresponding function in \texttt{code/study.py};
line numbering is kept deliberately close to the Python source to make
cross-reading easy.

\begin{algorithm}[t]
\caption{\textsc{HRP-$\mu$}: signed IVP tree pass with Cram\'er coupling
         (\texttt{method\_a3\_weights}).}
\label{algD:method_a3}
\begin{algorithmic}[1]
\Require SPD covariance $\Sigma \in \R^{N\times N}$; signal $\mu\in\R^{N}$;
         binary tree $\mathcal{T}$; coupling $\gamma \in [0,1]$.
\Ensure Weight vector $w \in \R^{N}$ with $\mathbf{1}^\top w = 1$.
\State $w \gets \mathbf{0}$
\Procedure{Recurse}{node $n$, budget $b$}
  \If{$n$ is a leaf (asset $i$)}
    \State $w_i \gets b\cdot\mathrm{sign}(\mu_i)$
           \Comment{leaf sign, $\mathrm{sign}(0):=+1$}
    \State \Return
  \EndIf
  \State $L \gets n.\text{left}.\text{indices}$; \quad $R \gets n.\text{right}.\text{indices}$
  \For{$k \in \{L, R\}$}
    \State $s_k \gets \mathrm{sign}(\mu_k)$;\quad $d_k \gets 1/\diag(\Sigma)_k$
    \State $\hat w_k \gets (s_k \odot d_k)/\mathbf{1}^\top d_k$
           \Comment{signed flat IVP within branch}
    \State $v_k \gets \hat w_k^\top \Sigma_{kk}\,\hat w_k$;\quad
           $t_k \gets \hat w_k^\top \mu_k$
  \EndFor
  \State $c \gets \hat w_L^\top \Sigma_{LR}\,\hat w_R$;\quad
         $\Delta \gets v_L v_R - (\gamma c)^2$
  \If{$|\Delta| < 10^{-10}\, v_L v_R$}
    \State $\alpha_L \gets t_L/v_L$;\quad $\alpha_R \gets t_R/v_R$
           \Comment{diagonal fallback}
  \Else
    \State $\alpha_L \gets (v_R\,t_L - \gamma c\,t_R)/\Delta$
    \State $\alpha_R \gets (v_L\,t_R - \gamma c\,t_L)/\Delta$
           \Comment{$2\times 2$ Cramer}
  \EndIf
  \State $Z \gets \alpha_L + \alpha_R$
  \If{$Z \ne 0$}
    \State $\alpha_L \gets \alpha_L/Z$;\quad $\alpha_R \gets \alpha_R/Z$
           \Comment{sum-to-one normalisation of budgets}
  \Else
    \State $\alpha_L \gets \tfrac12$;\quad $\alpha_R \gets \tfrac12$
           \Comment{degenerate tie; equal split}
  \EndIf
  \State \Call{Recurse}{$n.\text{left}$,\, $b\cdot\alpha_L$}
  \State \Call{Recurse}{$n.\text{right}$,\, $b\cdot\alpha_R$}
\EndProcedure
\State \Call{Recurse}{$\mathcal{T}.\text{root}$,\, $1$}
\State \Return $w$
\end{algorithmic}
\end{algorithm}

Algorithm~\ref{algD:method_a3} is the signed IVP tree pass of
Section~\ref{sec04:methoda3}. Three numerical safeguards matter. (i)
The $\Delta$-threshold $|\Delta| < 10^{-10}\,v_L v_R$ detects nearly
degenerate $2\times 2$ blocks and falls back to the decoupled diagonal
allocator $\alpha_k \propto t_k/v_k$; without this guard, adversarial
signals that make the Cram\'er determinant nearly vanish would produce
spurious blow-ups at interior tree nodes. (ii) The sum-to-one
normalisation $(\alpha_L,\alpha_R)\mapsto(\alpha_L,\alpha_R)/(\alpha_L+\alpha_R)$
preserves the budget $\mathbf{1}^\top w = 1$ at every recursion
step. (iii) The leaf-level factor $\mathrm{sign}(\mu_i)$ is applied
after the coupling has selected the split, so leaves inherit the sign
of their signal regardless of the upstream tree geometry.

\begin{algorithm}[t]
\caption{\textsc{HRP-$\Sigma\mu$}: recursive MVO tree pass with $L^1$
         normalisation (\texttt{method\_a1\_l1\_weights}). Identical
         to Algorithm~\ref{alg04b:hrpsm}; restated here with the
         numerical safeguards made explicit.}
\label{algD:method_a1_l1}
\begin{algorithmic}[1]
\Require SPD covariance $\Sigma$; signal $\mu$; binary tree $\mathcal{T}$;
         coupling $\gamma \in [0,1]$.
\Ensure Weight vector $w \in \R^{N}$ with $\|w\|_1 = 1$ at the root (stacked representative).
\Function{Recurse}{node $n$}
  \If{$n$ is a leaf (asset $i$)}
    \State \Return $(\hat w = 1,\; v = \Sigma_{ii},\; s = \mu_i)$
  \EndIf
  \State $(\hat w_L, v_L, s_L) \gets \Call{Recurse}{n.\text{left}}$
  \State $(\hat w_R, v_R, s_R) \gets \Call{Recurse}{n.\text{right}}$
  \State $c \gets \hat w_L^\top \Sigma_{LR}\,\hat w_R$
  \State $\Delta \gets v_L v_R - \gamma^2 c^2$
  \If{$|\Delta| < 10^{-10}\,v_L v_R$}
    \State $a_L \gets s_L/v_L$;\quad $a_R \gets s_R/v_R$
           \Comment{diagonal fallback}
  \Else
    \State $a_L \gets (v_R\,s_L - \gamma c\,s_R)/\Delta$
    \State $a_R \gets (v_L\,s_R - \gamma c\,s_L)/\Delta$
  \EndIf
  \State $Z \gets |a_L| + |a_R|$
  \If{$Z = 0$}
    \State $a_L \gets \tfrac12$; $a_R \gets \tfrac12$
           \Comment{pathological tie; equal-weight}
  \Else
    \State $a_L \gets a_L/Z$;\quad $a_R \gets a_R/Z$
           \Comment{$L^1$ normaliser, cf.~\eqref{eq04b:l1_norm}}
  \EndIf
  \State $\hat w \gets (a_L\,\hat w_L,\; a_R\,\hat w_R)$
         \Comment{stacked representative}
  \State $v \gets \hat w^\top \Sigma_{nn}\,\hat w$;\quad $s \gets \hat w^\top \mu_n$
  \State \Return $(\hat w, v, s)$
\EndFunction
\State \Return $\Call{Recurse}{\mathcal{T}.\text{root}}$
\end{algorithmic}
\end{algorithm}

The critical line of Algorithm~\ref{algD:method_a1_l1} is the $L^1$
denominator $Z = |a_L| + |a_R|$: it is strictly positive whenever
$(a_L,a_R)\ne(0,0)$, and so preserves the sign of both Cram\'er
components. Replacing this by the signed sum $a_L + a_R$ recovers the
sum-normalised recursion whose sign-flip pathology is catalogued in
Appendix~\ref{app:a1_pathology}. The diagonal fallback activates only
at numerically degenerate blocks; in every experiment in this paper
the Cram\'er branch executes at essentially every internal node.

\begin{algorithm}[t]
\caption{\textsc{CRISP}: scalar Gauss--Seidel for $P_\gamma w = \mu$
         (\texttt{method\_b\_solve}). Identical to
         Algorithm~\ref{alg05:methodb_gs}.}
\label{algD:method_b}
\begin{algorithmic}[1]
\Require SPD covariance $\Sigma \in \R^{N\times N}$;
         signal $\mu \in \R^{N}$; shrinkage $\gamma \in [0,1]$;
         max sweeps $p_{\max}\in\N$; tolerance $\varepsilon > 0$.
\Ensure $w$ satisfying
        $\bigl((1-\gamma)\diag(\Sigma)+\gamma\Sigma\bigr)w \approx \mu$.
\State $d_i \gets \Sigma_{ii}$ for $i = 1,\dots,N$
       \Comment{diagonal cache}
\State $w_i \gets \mu_i/d_i$ for $i = 1,\dots,N$
       \Comment{$P_0^{-1}\mu$ initial guess}
\If{$\gamma < \varepsilon$}
  \State \Return $w$ \Comment{$P_0 = D$ shortcut}
\EndIf
\For{$p = 1$ \textbf{to} $p_{\max}$}
  \State $w_{\text{prev}} \gets w$
  \For{$i = 1$ \textbf{to} $N$}
    \State $o_i \gets \Sigma_{i,:}\,w - \Sigma_{ii}\,w_i$
           \Comment{$\sum_{j\ne i}\sigma_{ij}w_j$, latest values}
    \State $w_i \gets (\mu_i - \gamma\,o_i)/d_i$
  \EndFor
  \If{$\|w - w_{\text{prev}}\|_2 \le \varepsilon\,\|w_{\text{prev}}\|_2$}
    \State \textbf{break} \Comment{relative-change stopping rule}
  \EndIf
\EndFor
\State \Return $w$
\end{algorithmic}
\end{algorithm}

Algorithm~\ref{algD:method_b} reproduces the CRISP sweep of
Section~\ref{sec05:crisp}. The only floating-point hazard is the
$D^{-1}$ division, which is safe whenever $\Sigma$ has strictly
positive diagonal --- a condition we enforce by the ridge
regularisation of Section~\ref{appD:stability} before entering the sweep
loop.

\begin{algorithm}[t]
\caption{\textsc{CRISP-FactorStream}: factor-structured CRISP
         (\texttt{method\_b\_solve\_stream}). Identical to
         Algorithm~\ref{alg05:methodb_stream}.}
\label{algD:method_b_stream}
\begin{algorithmic}[1]
\Require Factor loadings $B \in \R^{N\times K}$;
         factor covariance $\Lambda \in \R^{K\times K}$ SPD;
         idiosyncratic variances $d_{\mathrm{idio}} \in \R^{N}$;
         signal $\mu$; shrinkage $\gamma$; max sweeps $p_{\max}$;
         tolerance $\varepsilon$.
\Ensure $w \approx P_\gamma^{-1}\mu$ for $\Sigma = B\Lambda B^\top + \diag(d_{\mathrm{idio}})$.
\State $\sigma^2_i \gets B_i \Lambda B_i^\top + d_{\mathrm{idio},i}$
       for $i = 1,\dots,N$ \Comment{diagonal of $\Sigma$ from factors}
\State $w_i \gets \mu_i/\sigma^2_i$ for $i = 1,\dots,N$
\State $z \gets B^\top w$ \Comment{$K$-vector aggregate}
\If{$\gamma < \varepsilon$} \State \Return $w$ \EndIf
\For{$p = 1$ \textbf{to} $p_{\max}$}
  \For{$i = 1$ \textbf{to} $N$}
    \State $z_{\setminus i} \gets z - B_i\,w_i$
    \State $\mathit{off}_i \gets B_i\,\Lambda\,z_{\setminus i}$
           \Comment{$(\Sigma w)_i - \sigma^2_i w_i$}
    \State $w_i^{\mathrm{new}} \gets (\mu_i - \gamma\,\mathit{off}_i)/\sigma^2_i$
    \State $z \gets z_{\setminus i} + B_i\,w_i^{\mathrm{new}}$
    \State $w_i \gets w_i^{\mathrm{new}}$
  \EndFor
\EndFor
\State \Return $w$
\end{algorithmic}
\end{algorithm}

Algorithm~\ref{algD:method_b_stream} is the factor-streaming variant:
per-asset cost drops from $O(N)$ to $O(K^2)$ and working memory from
$O(N^2)$ to $O(NK)$. It produces iterates that are identical to
Algorithm~\ref{algD:method_b} applied to the dense
$B\Lambda B^\top + \diag(d_{\mathrm{idio}})$.

\begin{algorithm}[t]
\caption{Worst-case $\mu$ search on the unit sphere
         (\texttt{worst\_case\_mu}).}
\label{algD:worst_case_mu}
\begin{algorithmic}[1]
\Require SPD covariance $\Sigma$; number of random restarts $R$; seed.
\Ensure Unit vector $\mu^\star$ approximately maximising the
        directional error $\operatorname{dir}\bigl(\mu/\diag(\Sigma),\,\Sigma^{-1}\mu\bigr)$.
\State $D \gets \diag(\Sigma)$; precompute $\Sigma^{-1}$
\Function{f}{$x$}
  \State $\mu \gets x/\|x\|$
  \State $a \gets \mu/D$;\quad $b \gets \Sigma^{-1}\mu$
  \State \Return $-\bigl(1 - (a^\top b)^2/(\|a\|^2\|b\|^2)\bigr)$
\EndFunction
\State $(\mu^\star, f^\star) \gets (\bot, 0)$
\For{$r = 1,\dots,R$}
  \State draw $x_0 \sim \mathcal{N}(0, I_N)$; normalise $x_0\gets x_0/\|x_0\|$
  \State $x^\star \gets$ \textsc{L-BFGS-B}$(f,\; x_0,\; \text{maxiter}=500)$
  \If{$-f(x^\star) > f^\star$}
    \State $f^\star \gets -f(x^\star)$;\quad
           $\mu^\star \gets x^\star/\|x^\star\|$
  \EndIf
\EndFor
\State \Return $(\mu^\star, f^\star)$
\end{algorithmic}
\end{algorithm}

Algorithm~\ref{algD:worst_case_mu} is the engine behind the adversarial
rows of Table~\ref{tab10:worst_case} and of the minimum-variance
experiments in Section~\ref{sec10:worst_case}. It exploits the fact
that $\mu \mapsto \operatorname{dir}(\mu/D,\Sigma^{-1}\mu)$ is homogeneous of degree
zero and so is intrinsically defined on the unit sphere; the outer
random-restart loop is insurance against local minima of the smooth
but non-convex objective, and $R = 32$ is sufficient in practice.

%--------------------------------------------------------------------
\subsection{Numerical stability notes}\label{appD:stability}

Four safeguards matter for reproducibility.

\paragraph{Ridge regularisation.}
Every experiment that inverts or Gauss--Seidel-solves against a
\emph{sample} covariance first adds a diagonal ridge $\lambda I$ with
$\lambda = 10^{-4}$, i.e.\ \texttt{cov\_hat = np.cov(samples.T) + 1e-4 * np.eye(N)}
in \texttt{code/walkforward.py}. For the in-sample
direction-error tables on \emph{analytic} $\Sigma$, we instead add a
much smaller ridge $\lambda = 10^{-6}\cdot\overline{\diag(\Sigma)}$ so
that the regularisation does not perturb the geometry but does
suppress spurious numerical singularities on the adversarial regimes
of Section~\ref{sec10:worst_case}. A full ridge sweep is reported in
Table~\ref{tabE:ridge_sensitivity}.

\paragraph{PSD projection.}
The analytic covariance constructors in
Section~\ref{sec10:experiments} (block structure crossed with
hand-crafted correlations) can occasionally produce barely non-PSD
matrices because of floating-point accumulation in the Kronecker
assembly. We project via eigenvalue flooring: compute the symmetric
eigendecomposition, clip every eigenvalue below $10^{-4}$ to that
floor, reassemble, and re-normalise the diagonal back to unity so that
the result remains a correlation matrix.

\paragraph{Floating-point precision.}
All computations are \texttt{float64} throughout. We verified on
$N \le 1000$ that single-precision \texttt{float32} produces
direction errors within $10^{-5}$ of double precision, but recommend
against it: the $2\times 2$ Cram\'er step inside
Algorithm~\ref{algD:method_a3} can lose digits on tight two-asset
blocks at low $\gamma$, and single precision hits the threshold more
often than we are comfortable with. CRISP's scalar update is benign
in either precision.

\paragraph{Budget normalisation.}
Tree-based methods --- HRP, Cotton, HRP-$\mu$, HRP-$\Sigma\mu$ ---
return budget-normalised weights by construction
($\mathbf{1}^\top w = 1$ for HRP, Cotton, HRP-$\mu$; $\|w\|_1 = 1$ for
HRP-$\Sigma\mu$). CRISP returns the raw solution to $P_\gamma w = \mu$
and leaves normalisation to the caller. The direction metric and the
Sharpe ratio are both invariant under positive rescaling, so the
choice of normalisation is cosmetic for every experiment in this paper
except the constrained-budget studies of
Section~\ref{sec09:constraints}, where the projected variant applies
$\mathbf{1}^\top w = 1$ explicitly.

%--------------------------------------------------------------------
\subsection{Function catalog}\label{appD:catalog}

Table~\ref{tabD:function_catalog} lists every portfolio-construction
function used in this paper, its \texttt{code/study.py} entry point,
and its per-solve asymptotic cost. Costs are at fixed $\gamma$ and
fixed tree; $\kappa(C)$ is the condition number of the correlation
matrix and $\varepsilon$ the Gauss--Seidel tolerance.

\begin{table}[t]
\centering
\caption{Method implementations in \texttt{code/study.py}.}
\label{tabD:function_catalog}
\begin{adjustbox}{max width=\linewidth,center}
\begin{tabular}{lll}
\toprule
\textbf{Function} & \textbf{Method} & \textbf{Cost} \\
\midrule
\texttt{hrp\_flat\_weights}
  & HRP \citep{lopezdeprado2016}
  & $O(N^2)$ \\
\texttt{cotton\_weights}
  & Cotton \citep{cotton2024} (Schur, $\mu=\mathbf{1}$)
  & $O(N^3/6)$ \\
\texttt{method\_a1\_weights}
  & Sum-normalised recursive MVO (diagnostic only)
  & $O(N^2)$ \\
\texttt{method\_a2\_weights}
  & Flat IVP tree pass (diagnostic)
  & $O(N^2)$ \\
\texttt{method\_a3\_weights}
  & HRP-$\mu$ (signed IVP + leaf sign)
  & $O(N^2)$ \\
\texttt{method\_a1\_l1\_weights}
  & HRP-$\Sigma\mu$ (recursive MVO + $L^1$)
  & $O(N^2)$ \\
\texttt{method\_b\_solve}
  & CRISP (scalar Gauss--Seidel)
  & $O(\kappa(C)\,N^2\log\varepsilon^{-1})$ \\
\texttt{method\_b\_solve\_stream}
  & CRISP factor-streaming
  & $O(\kappa(C)\,NK^2\log\varepsilon^{-1})$,\ $O(NK)$ memory \\
\bottomrule
\end{tabular}
\end{adjustbox}
\end{table}

Three points on Table~\ref{tabD:function_catalog} deserve emphasis.
First, the diagnostic rows \texttt{method\_a1\_weights} and
\texttt{method\_a2\_weights} appear only in
Appendix~\ref{app:a1_pathology} and
Table~\ref{tabE:ridge_sensitivity}; they are negative results, not
recommendations. Second, the HRP-$\mu$ and HRP-$\Sigma\mu$ costs are
amortised over a balanced tree, where each asset is touched
$O(\log N)$ times at $O(1)$ work per visit; unbalanced linkage
degrades to $O(N^2)$ in the worst case but never exceeds the
direct-solver cost $O(N^3)$. Third, CRISP's explicit
$\kappa(C)\log\varepsilon^{-1}$ factor is the standard conjugate-free
iterative-method bound of Theorem~\ref{thm:gs_rate}; in the
preconditioned formulation $\kappa(D^{-1}P_\gamma)$ the factor
becomes $\gamma$-dependent and matches the rate analysis of
Section~\ref{sec05:crisp}.

%--------------------------------------------------------------------
\subsection{Code and reproducibility}\label{appD:code}

\paragraph{Repository.}
All code, data generators, raw numerical outputs, and figure scripts
are public at
\begin{center}
\url{https://github.com/bwuebben/beyond_hrp_and_cotton}
\end{center}
The \texttt{code/} directory holds the method implementations and the
driver scripts; \texttt{results/} holds the committed raw outputs from
which every table in the paper is transcribed; \texttt{figures/}
holds the figure-generating scripts and their PDF outputs.

\paragraph{Reproduction commands.}
The two headline commands are
\begin{verbatim}
python3 code/study.py        > results/all_insample.txt   # ~30  s
python3 code/walkforward.py  > results/all_oos.txt        # ~2-4 min
\end{verbatim}
on a modern laptop. The first regenerates every in-sample direction-error
table (Tables~\ref{tab10:recovery}, \ref{tab10:minvar},
\ref{tab10:general_mu}, \ref{tab10:worst_case}, \ref{tab10:sweep_rate},
\ref{tab10:trajectory}) and the covariance-suite panels of
Appendix~\ref{app:supplementary}. The second regenerates every
out-of-sample Sharpe-ratio table
(Tables~\ref{tab10:oos_sensitivity}, \ref{tab10:oos_structural},
\ref{tab10:oos_minvar}) together with the sum-normalisation pathology
diagnostic of Appendix~\ref{app:a1_pathology}.

Three ancillary scripts regenerate the dedicated experiments:
\begin{itemize}
\item \texttt{code/compute09\_sweep\_regularization.py} --- the
      regularisation sweep panel reported in
      Table~\ref{tab10:sweep_conv};
\item \texttt{code/compute09\_adaptive\_gamma.py} --- the calibration
      run behind Table~\ref{tab10:adaptive_summary};
\item \texttt{code/computeE\_a1l1\_robustness.py} --- the
      HRP-$\Sigma\mu$ robustness panels of
      Table~\ref{tabE:hrpsm_sensitivity}.
\end{itemize}
Figure scripts live in \texttt{figures/code/}; each is named
after the figure it produces (e.g.\ \texttt{fig05\_shrinkage\_schematic.py},
\texttt{fig06\_bias\_variance\_curves.py},
\texttt{fig06\_trajectory.py},
\texttt{fig09\_sharpe\_vs\_T.py},
\texttt{fig09\_sweep\_heatmap.py},
\texttt{fig09\_sweep\_slices.py},
\texttt{fig09\_plateau\_width.py},
\texttt{fig09\_gamma\_scatter.py},
\texttt{figC\_a1\_cosine\_histogram.py}).

\paragraph{Environment.}
Dependencies are intentionally minimal: Python~3.13, NumPy, and
SciPy (\texttt{scipy.cluster.hierarchy},
\texttt{scipy.optimize.minimize}, \texttt{scipy.spatial.distance}).
No \texttt{pandas}, no PyTorch, no JAX, no plotting libraries are
required to reproduce the numerical tables; the figure-rendering
scripts additionally require \texttt{matplotlib}. Random seeds are
fixed at each call site in \texttt{code/study.py} and
\texttt{code/walkforward.py}, so consecutive runs on the same machine
produce bit-identical table outputs and reproduce the published Sharpe
ratios to three decimal places across architectures.

% ============================================================
% section_17_appendix_e_robustness.tex
% ============================================================
% !TEX root = ../paper.tex
%=============================================================================
% Appendix E: Supplementary tables and robustness
% Namespace: appE:, figE:, tabE:, eqE:
% Do NOT call \appendix here -- the master file handles appendix scoping.
%=============================================================================

\section{Supplementary tables and robustness}\label{app:supplementary}

This appendix collects the supplementary robustness tables referenced
from the main body.  Four dimensions are stress-tested:
(i)~the covariance regime
(Section~\ref{appE:regimes}, Table~\ref{tabE:regimes}),
(ii)~the ridge parameter added to $\widehat\Sigma$ before inversion
(Section~\ref{appE:ridge}, Table~\ref{tabE:ridge_sensitivity}),
(iii)~the tree-construction rule used by HRP and HRP-$\mu$
(Section~\ref{appE:linkage}, Table~\ref{tabE:linkage_sensitivity}), and
(iv)~the dominance of HRP-$\Sigma\mu$ over HRP-$\mu$ along those same
dimensions
(Section~\ref{appE:hrpsm_robustness}, Table~\ref{tabE:hrpsm_sensitivity}).
A short verification note on the flat-IVP direction error is included
in~Section~\ref{appE:flat_ivp}.  Every tabular is wrapped in
\texttt{adjustbox} so the compiled output survives narrow column
widths; tables that carry explanatory notes use \texttt{threeparttable}.

%-----------------------------------------------------------------------------
\subsection{Additional covariance regimes}\label{appE:regimes}

The main-body OOS experiments in~Section~\ref{sec10:experiments} are run on
a single block-diagonal base universe.
Table~\ref{tabE:regimes} extends that evidence to eight additional
covariance regimes that span the qualitative failure modes of
shrinkage-free Markowitz: low-rank factor structure, heavy-tailed
loadings, wide idiosyncratic volatility, tight equi-correlation,
two-block hedged sectors, and two alternative factor ranks.  All
entries are OOS annualised Sharpe from $80$ Monte Carlo repetitions
with $T=120$ estimation window and oracle $\mu$.  CRISP at
$\gamma=0.5$ is the top line in every regime; HRP-$\mu$ tracks CRISP
within one standard error on the factor-structured regimes and trails
by $0.04$--$0.08$ when wide volatility spread dominates; HRP-$\Sigma\mu$
sits between the two, a pattern consistent with the $L^{1}$
ray-invariance mechanism of Section~\ref{sec04b:hrpsigmamu}.

\begin{table}[ht]
\centering
\begin{adjustbox}{max width=\textwidth, center}
\begin{threeparttable}
\begin{tabular}{lccccc}
\toprule
Regime
  & CRISP $\gamma{=}0.5$
  & HRP-$\mu$ $\gamma{=}0.5$
  & HRP-$\Sigma\mu$ $\gamma{=}0.5$
  & HRP
  & Gap to CRISP \\
\midrule
Factor, $k{=}3$                         & $0.58$ & $0.55$ & $0.56$ & $0.12$ & $0.03$ \\
Factor, $k{=}5$, heavy loadings         & $0.62$ & $0.58$ & $0.60$ & $0.15$ & $0.04$ \\
Wide vol spread $\sigma\!\in\![0.05,1.0]$ & $0.54$ & $0.47$ & $0.49$ & $0.02$ & $0.07$ \\
Equi-correlation $\rho{=}0.6$           & $0.49$ & $0.46$ & $0.47$ & $0.09$ & $0.03$ \\
Equi-correlation $\rho{=}0.9$           & $0.51$ & $0.48$ & $0.49$ & $0.08$ & $0.03$ \\
Two-block, hedged sectors               & $0.56$ & $0.51$ & $0.53$ & $0.06$ & $0.05$ \\
Alt.\ factor, rank-$2$ + idiosyncratic  & $0.53$ & $0.50$ & $0.51$ & $0.11$ & $0.03$ \\
Alt.\ factor, rank-$10$                 & $0.47$ & $0.44$ & $0.45$ & $0.10$ & $0.03$ \\
\bottomrule
\end{tabular}
\begin{tablenotes}\footnotesize
\item[]\emph{Notes.}  OOS annualised Sharpe.  $T=120$, oracle $\mu$,
$80$ Monte Carlo repetitions.  ``Gap to CRISP'' is
$\mathrm{Sharpe}(\text{CRISP})-\mathrm{Sharpe}(\text{HRP-}\mu)$.
CRISP $\gamma=0.5$ is the top line in every row; the gap widens to
$0.07$ on wide-vol-spread regimes where conditioning is dominated by
$\kappa_{D}$.  HRP is included as a $\mu$-blind reference and lags
substantially because it uses no signal.
\end{tablenotes}
\end{threeparttable}
\end{adjustbox}
\caption{CRISP, HRP-$\mu$, HRP-$\Sigma\mu$, and HRP across eight
supplementary covariance regimes.  CRISP $\gamma=0.5$ dominates in all
eight.}
\label{tabE:regimes}
\end{table}

%-----------------------------------------------------------------------------
\subsection{Sensitivity to ridge regularisation}\label{appE:ridge}

All three preconditioned methods add a small ridge
$\lambda I$ to $\widehat\Sigma$ before inversion or Gauss--Seidel
sweeps.  Table~\ref{tabE:ridge_sensitivity} sweeps
$\lambda\in\{10^{-8},10^{-6},10^{-4},10^{-2}\}$ on the base universe
with $T=240$ and oracle $\mu$.  The CRISP curves are flat within
rounding over $[10^{-8},10^{-4}]$ and drop only at $\lambda=10^{-2}$,
where the ridge becomes comparable to the diagonal of $\widehat\Sigma$
and acts as real shrinkage instead of a numerical floor.  HRP-$\mu$
is equally insensitive in the interior.  The interpretation is that
the ridge serves only to guarantee solvability of the inverse; it is
not a second shrinkage knob sitting behind $\gamma$.

\begin{table}[ht]
\centering
\begin{adjustbox}{max width=\textwidth, center}
\begin{threeparttable}
\begin{tabular}{lcccc}
\toprule
Method
  & $\lambda{=}10^{-8}$
  & $\lambda{=}10^{-6}$
  & $\lambda{=}10^{-4}$
  & $\lambda{=}10^{-2}$ \\
\midrule
CRISP ($\gamma{=}0.5$)     & $0.540$ & $0.540$ & $0.539$ & $0.497$ \\
CRISP ($\gamma{=}0.7$)     & $0.571$ & $0.571$ & $0.570$ & $0.523$ \\
HRP-$\mu$ ($\gamma{=}0.5$) & $0.427$ & $0.427$ & $0.427$ & $0.409$ \\
HRP-$\mu$ ($\gamma{=}0.7$) & $0.449$ & $0.449$ & $0.449$ & $0.430$ \\
\bottomrule
\end{tabular}
\begin{tablenotes}\footnotesize
\item[]\emph{Notes.}  OOS annualised Sharpe.  Base universe,
$T=240$, oracle $\mu$.  A small ridge
$\lambda I$ is added to $\widehat\Sigma$ before inversion.  Both
methods are invariant to three orders of magnitude of $\lambda$ and
degrade only at $\lambda=10^{-2}$, where $\lambda$ is the same order
as $\mathrm{diag}(\widehat\Sigma)$ and acts as genuine shrinkage.
\end{tablenotes}
\end{threeparttable}
\end{adjustbox}
\caption{Ridge sensitivity of CRISP and HRP-$\mu$ over four decades of
$\lambda$.  The ridge is a numerical floor, not an implicit shrinkage
knob.}
\label{tabE:ridge_sensitivity}
\end{table}

%-----------------------------------------------------------------------------
\subsection{Sensitivity to tree construction}\label{appE:linkage}

CRISP is tree-free: it never instantiates a dendrogram, so the
linkage row in Table~\ref{tabE:linkage_sensitivity} is exactly
constant.  The row is listed for the record.  HRP-$\mu$ shows modest
variation of roughly $0.03$ in OOS Sharpe across linkages --- a
reflection of the induced block structure.  De Prado HRP, which
discards $\mu$ entirely and is included as a $\mu$-blind benchmark,
is substantially more brittle: single linkage yields a \emph{negative}
OOS Sharpe, consistent with the chaining pathology noted
in~\citet{lopezdeprado2016}.  The gap between the tree-free and
tree-based methods widens precisely when the tree is worst.

\begin{table}[ht]
\centering
\begin{adjustbox}{max width=\textwidth, center}
\begin{threeparttable}
\begin{tabular}{lccccc}
\toprule
Method
  & Ward
  & Single
  & Complete
  & Average
  & $k$-means \\
\midrule
CRISP ($\gamma{=}0.5$; tree-free) & $0.540$ & $0.540$ & $0.540$ & $0.540$ & $0.540$ \\
HRP-$\mu$ ($\gamma{=}0.5$)        & $0.427$ & $0.402$ & $0.424$ & $0.425$ & $0.419$ \\
HRP                               & $0.031$ & $-0.018$ & $0.024$ & $0.027$ & $0.029$ \\
\bottomrule
\end{tabular}
\begin{tablenotes}\footnotesize
\item[]\emph{Notes.}  OOS annualised Sharpe.  Base universe,
$T=240$, oracle $\mu$.  CRISP row is constant by construction because
the solver uses $P_{\gamma}=(1-\gamma)D+\gamma\widehat\Sigma$
directly and never builds a dendrogram.  HRP-$\mu$ varies by
$\approx 0.03$ across linkages; HRP is brittle under single-linkage
chaining, turning \emph{negative}.
\end{tablenotes}
\end{threeparttable}
\end{adjustbox}
\caption{Linkage sensitivity.  CRISP is invariant; HRP-$\mu$ varies
mildly; HRP is brittle under single linkage, as originally flagged
by~\citet{lopezdeprado2016}.}
\label{tabE:linkage_sensitivity}
\end{table}

%-----------------------------------------------------------------------------
\subsection{Flat-IVP direction error}\label{appE:flat_ivp}

The flat-IVP variant (sum-normalised recursive MVO without signal
propagation, analysed in detail in Appendix~\ref{app:a1_pathology}) is
revisited here only for completeness.  Across all eight covariance
regimes of Table~\ref{tabE:regimes} and all $\gamma\in\{0.0, 0.3, 0.5,
0.7, 1.0\}$, the direction error
$\mathrm{dir}(w,w^{\star})=1-\cos^{2}\theta$ for general $\mu$ falls
in $[0.84, 1.00]$: the flat-IVP portfolio is almost exactly
orthogonal to the mean--variance target at every $\gamma$, in every
regime.  No value of the shrinkage parameter, no linkage choice, and
no ridge setting rescues it.  This subsection is a verification note
only; the negative result is documented so that readers of
Appendix~\ref{app:a1_pathology} have an explicit pointer to the
parameter-tuning sweep that was already performed and found fruitless.
Flat IVP is never recommended.

%-----------------------------------------------------------------------------
\subsection{HRP-$\Sigma\mu$ robustness}\label{appE:hrpsm_robustness}

Table~\ref{tabE:hrpsm_sensitivity} reports the sensitivity of
HRP-$\Sigma\mu$ (Section~\ref{sec04b:hrpsigmamu}) and HRP-$\mu$ to
ridge magnitude and linkage choice on the structural-$\mu$ experiment
at $\gamma=1.0$.  The source file is
\texttt{results/appendix\_e\_a1l1\_robustness.txt}, generated by
\texttt{figures/code/computeE\_a1l1\_robustness.py} with $N=100$,
$T=120$, and $60$ Monte Carlo repetitions.  HRP-$\Sigma\mu$ dominates
HRP-$\mu$ in every cell of both panels.  The ratio is $1.18\times$
to $1.19\times$ for interior ridge values, narrows to $1.04\times$
only when $\lambda=10^{-2}$ overpowers both methods, and widens to
$1.32\times$ under single linkage --- precisely the regime in which
trees become unbalanced.  The lemma underlying this pattern is
Lemma~\ref{lem:hrpsm_scale}: the $L^{1}$ scale invariance of the child
rescaling means HRP-$\Sigma\mu$ is insensitive to ray
rebalancings introduced by an uneven split, while HRP-$\mu$ is not.

\begin{table}[ht]
\centering
\begin{adjustbox}{max width=\textwidth, center}
\begin{threeparttable}
\begin{tabular}{llccc}
\toprule
Dimension & Setting & HRP-$\Sigma\mu$ & HRP-$\mu$ & Ratio \\
\midrule
\multicolumn{5}{l}{\emph{Panel A: ridge sensitivity (Ward linkage).}} \\
\midrule
Ridge & $\lambda{=}10^{-8}$ & $0.516$ & $0.436$ & $1.18\times$ \\
Ridge & $\lambda{=}10^{-4}$ & $0.515$ & $0.431$ & $1.19\times$ \\
Ridge & $\lambda{=}10^{-2}$ & $0.469$ & $0.452$ & $1.04\times$ \\
\midrule
\multicolumn{5}{l}{\emph{Panel B: linkage sensitivity ($\lambda{=}10^{-4}$).}} \\
\midrule
Linkage & Ward     & $0.515$ & $0.431$ & $1.19\times$ \\
Linkage & Single   & $0.556$ & $0.422$ & $1.32\times$ \\
Linkage & Complete & $0.516$ & $0.435$ & $1.19\times$ \\
Linkage & Average  & $0.538$ & $0.442$ & $1.22\times$ \\
\bottomrule
\end{tabular}
\begin{tablenotes}\footnotesize
\item[]\emph{Notes.}  OOS annualised Sharpe at $\gamma=1.0$
(structural $\mu$, $T=120$, $60$ Monte Carlo repetitions).
HRP-$\Sigma\mu$ dominates HRP-$\mu$ in every row.  The dominance is
largest ($1.32\times$) under single linkage, which produces the most
unbalanced trees; this is the regime in which $L^{1}$ ray
invariance (Lemma~\ref{lem:hrpsm_scale}) matters most.  Source:
\texttt{results/appendix\_e\_a1l1\_robustness.txt}.
\end{tablenotes}
\end{threeparttable}
\end{adjustbox}
\caption{HRP-$\Sigma\mu$ versus HRP-$\mu$ across ridge and linkage
sensitivity dimensions.  HRP-$\Sigma\mu$ dominates uniformly; the
advantage is largest when the tree is most unbalanced.}
\label{tabE:hrpsm_sensitivity}
\end{table}

%-----------------------------------------------------------------------------

\bigskip
\noindent
Taken together, Tables~\ref{tabE:regimes}--\ref{tabE:hrpsm_sensitivity}
extend the main-body evidence along four orthogonal stress
dimensions: covariance regime, ridge magnitude, tree-construction
rule, and HRP-$\Sigma\mu$ versus HRP-$\mu$ dominance.  CRISP is
uniformly best; HRP-$\Sigma\mu$ is uniformly better than HRP-$\mu$;
the flat-IVP variant is never rescued by parameter tuning; and HRP
is brittle under single linkage.  These patterns are consistent with
the spectral-conditioning analysis of~Section~\ref{sec06:perturbation} and
the $L^{1}$ ray-invariance lemma of~Section~\ref{sec04b:hrpsigmamu}.

% ============================================================
% BIBLIOGRAPHY
% ============================================================


\newpage
\begin{thebibliography}{99}
\bibitem[Barroso and Santa-Clara(2015)]{barrososantaclara2015}
Barroso, P., and P. Santa-Clara, 2015, Beyond the Carry Trade: Optimal Currency Portfolios, \textit{Journal of Financial and Quantitative Analysis} 50(5), 1037--1056.

\bibitem[Black and Litterman(1992)]{blacklitterman1992}
Black, F., and R. Litterman, 1992, Global Portfolio Optimization, \textit{Financial Analysts Journal} 48(5), 28--43.

\bibitem[Brandt, Santa-Clara and Valkanov(2009)]{brandt2009}
Brandt, M.~W., P. Santa-Clara, and R. Valkanov, 2009, Parametric Portfolio Policies: Exploiting Characteristics in the Cross-Section of Equity Returns, \textit{Review of Financial Studies} 22(9), 3411--3447.

\bibitem[Chen et al.(2010)]{chen2010}
Chen, Y., A. Wiesel, Y.~C. Eldar, and A.~O. Hero, 2010, Shrinkage Algorithms for MMSE Covariance Estimation, \textit{IEEE Transactions on Signal Processing} 58(10), 5016--5029.

\bibitem[Goulart and Chen(2024)]{clarabel2024}
Goulart, P.~J., and Y. Chen, 2024, Clarabel: An Interior-Point Solver for Conic Programs with Quadratic Objectives, \textit{arXiv preprint} arXiv:2405.12762.

\bibitem[Cotton(2024)]{cotton2024}
Cotton, P., 2024, Schur Complementary Allocation: A Unification of Hierarchical Risk Parity and Minimum Variance Portfolios, \textit{arXiv preprint} arXiv:2411.05807.

\bibitem[DeMiguel, Garlappi and Uppal(2009)]{demiguel2009}
DeMiguel, V., L. Garlappi, and R. Uppal, 2009, Optimal Versus Naive Diversification: How Inefficient is the $1/N$ Portfolio Strategy?, \textit{Review of Financial Studies} 22(5), 1915--1953.

\bibitem[Fama and French(1993)]{famafrench1993}
Fama, E.~F., and K.~R. French, 1993, Common Risk Factors in the Returns on Stocks and Bonds, \textit{Journal of Financial Economics} 33(1), 3--56.

\bibitem[Fan, Liao and Mincheva(2013)]{fan2013}
Fan, J., Y. Liao, and M. Mincheva, 2013, Large Covariance Estimation by Thresholding Principal Orthogonal Complements, \textit{Journal of the Royal Statistical Society, Series B} 75(4), 603--680.

\bibitem[G\^arleanu and Pedersen(2013)]{garleanu2013}
G\^arleanu, N., and L.~H. Pedersen, 2013, Dynamic Trading with Predictable Returns and Transaction Costs, \textit{Journal of Finance} 68(6), 2309--2340.

\bibitem[Grinold and Kahn(1999)]{grinoldkahn1999}
Grinold, R.~C., and R.~N. Kahn, 1999, \textit{Active Portfolio Management}, 2nd edition, McGraw-Hill.

\bibitem[Hackbusch(2016)]{hackbusch2016}
Hackbusch, W., 2016, \textit{Iterative Solution of Large Sparse Systems of Equations}, 2nd edition, Springer Applied Mathematical Sciences vol.~95.

\bibitem[Jagannathan and Ma(2003)]{jagannathanma2003}
Jagannathan, R., and T. Ma, 2003, Risk Reduction in Large Portfolios: Why Imposing the Wrong Constraints Helps, \textit{Journal of Finance} 58(4), 1651--1683.

\bibitem[Kozak, Nagel and Santosh(2020)]{kozak2020}
Kozak, S., S. Nagel, and S. Santosh, 2020, Shrinking the Cross-Section, \textit{Journal of Financial Economics} 135(2), 271--292.

\bibitem[Ledoit and Wolf(2003)]{ledoitwolf2003}
Ledoit, O., and M. Wolf, 2003, Improved Estimation of the Covariance Matrix of Stock Returns with an Application to Portfolio Selection, \textit{Journal of Empirical Finance} 10(5), 603--621.

\bibitem[Ledoit and Wolf(2004)]{ledoitwolf2004}
Ledoit, O., and M. Wolf, 2004, Honey, I Shrunk the Sample Covariance Matrix, \textit{Journal of Portfolio Management} 30(4), 110--119.

\bibitem[Ledoit and Wolf(2017)]{ledoitwolf2017}
Ledoit, O., and M. Wolf, 2017, Nonlinear Shrinkage of the Covariance Matrix for Portfolio Selection: Markowitz Meets Goldilocks, \textit{Review of Financial Studies} 30(12), 4349--4388.

\bibitem[Ledoit and Wolf(2020)]{ledoitwolf2020}
Ledoit, O., and M. Wolf, 2020, Analytical Nonlinear Shrinkage of Large-Dimensional Covariance Matrices, \textit{Annals of Statistics} 48(5), 3043--3065.

\bibitem[L\'opez de Prado(2016)]{lopezdeprado2016}
L\'opez de Prado, M., 2016, Building Diversified Portfolios that Outperform Out of Sample, \textit{Journal of Portfolio Management} 42(4), 59--69.

\bibitem[L\'opez de Prado and Lewis(2019)]{lopezdeprado2018}
L\'opez de Prado, M., and M.~J. Lewis, 2019, Detection of False Investment Strategies Using Unsupervised Learning Methods, \textit{Quantitative Finance} 19(9), 1555--1565.

\bibitem[Marchenko and Pastur(1967)]{marchenko1967}
Marchenko, V.~A., and L.~A. Pastur, 1967, Distribution of Eigenvalues for Some Sets of Random Matrices, \textit{Matematicheskii Sbornik} 72(4), 507--536.

\bibitem[Markowitz(1952)]{markowitz1952}
Markowitz, H., 1952, Portfolio Selection, \textit{Journal of Finance} 7(1), 77--91.

\bibitem[Michaud(1989)]{michaud1989}
Michaud, R.~O., 1989, The Markowitz Optimization Enigma: Is `Optimized' Optimal?, \textit{Financial Analysts Journal} 45(1), 31--42.

\bibitem[Stellato et al.(2020)]{osqp2020}
Stellato, B., G. Banjac, P. Goulart, A. Bemporad, and S. Boyd, 2020, OSQP: An Operator Splitting Solver for Quadratic Programs, \textit{Mathematical Programming Computation} 12(4), 637--672.

\bibitem[Ostrowski(1954)]{ostrowski1954}
Ostrowski, A.~M., 1954, On the Linear Iteration Procedures for Symmetric Matrices, \textit{Rendiconti di Matematica e delle sue Applicazioni} 14, 140--163.

\bibitem[Raffinot(2018)]{raffinot2018}
Raffinot, T., 2018, Hierarchical Clustering-Based Asset Allocation, \textit{Journal of Portfolio Management} 44(2), 89--99.

\bibitem[Saad(2003)]{saad2003}
Saad, Y., 2003, \textit{Iterative Methods for Sparse Linear Systems}, 2nd edition, SIAM.

\bibitem[Shumway(1997)]{shumway1997}
Shumway, T., 1997, The Delisting Bias in CRSP Data, \textit{Journal of Finance} 52(1), 327--340.

\bibitem[Stein(1956)]{stein1956}
Stein, C., 1956, Inadmissibility of the Usual Estimator for the Mean of a Multivariate Normal Distribution, \textit{Proc.\ Third Berkeley Symposium on Mathematical Statistics and Probability}, 197--206.

\bibitem[Varga(2000)]{varga2000}
Varga, R.~S., 2000, \textit{Matrix Iterative Analysis}, 2nd edition, Springer Series in Computational Mathematics vol.~27.

\bibitem[Vershynin(2012)]{vershynin2012}
Vershynin, R., 2012, Introduction to the Non-Asymptotic Analysis of Random Matrices, in \textit{Compressed Sensing: Theory and Applications}, edited by Y.~C. Eldar and G. Kutyniok, Cambridge University Press, pp.~210--268.

\bibitem[Young(1971)]{young1971}
Young, D.~M., 1971, \textit{Iterative Solution of Large Linear Systems}, Academic Press.

\end{thebibliography}
\end{document}